\newtheorem{claim}{Claim}
\newtheorem{defn}{Definition}
\newtheorem{lemma}{Lemma}
\newtheorem{theorem}{Theorem}
\newtheorem{corollary}{Corollary}
\newtheorem{prop}{Proposition}	
\DeclareMathOperator*{\argmax}{argmax}
\newcommand{\PCSMExtended}{\textsc{Packing-Covering Submodular Maximization}\xspace}
\newcommand{\MPCSM}{\textsc{(MatroidPCSM)}\xspace}
\newcommand{\MPCSMExtended}{\textsc{Matroid Packing-Covering Submodular Maximization}\xspace}
\newcommand{\PCMSM}{\textsc{(MultiPCSM)}\xspace}
\newcommand{\PCSM}{\textsc{(PCSM)}\xspace}
\newcommand{\PCMSMExtended}{\textsc{Packing-Covering Multiple Submodular Maximization}\xspace}
\newcommand{\zero}{\mathbf{0}}
\newcommand{\one}{\mathbf{1}}
\newcommand{\x}{\mathbf{x}}
\newcommand{\bc}{\mathbf{c}}
\newcommand{\p}{\mathbf{p}}
\newcommand{\E}{\mathbb{E}}
\newcommand{\bP}{\mathbf{P}}
\newcommand{\tP}{\widetilde{\bP}}
\newcommand{\e}{\mathrm{e}}
\newcommand{\eps}{\varepsilon}
\newcommand{\opt}{f(O)}
\newcommand{\cN}{\mathcal{N}}
\newcommand{\cM}{\mathcal{M}}
\newcommand{\cI}{\mathcal{I}}
\newcommand{\cP}{\mathcal{P}}
\newcommand{\cC}{\mathcal{C}}
\newcommand{\cL}{\mathcal{L}}
\newcommand{\R}{\mathbb{R}}
\newcommand{\N}{\mathbb{N}}
\newcommand{\bC}{\mathbf{C}}
\newcommand{\bV}{\mathbf{V}}
\newcommand{\tC}{\widetilde{\bC}}
\newcommand{\bx}{\mathbf{x}}
\newcommand{\by}{\mathbf{y}}
\newcommand{\ie}{{\em i.e.}}
\newcommand{\br}{\mathbf{r}}
\newcommand{\bs}{\mathbf{s}}
\newcommand{\bw}{\mathbf{w}}
\newcommand{\residual}{\cN \setminus (E_0 \cup E_1)}
\newcommand{\sol}{E_1 \cup R_D}
\newcommand{\rd}{\br_D}
\newcommand{\rdi}{(\rd)_i}
\newcommand{\sd}{\bs_D}
\newcommand{\sdj}{(\sd)_j}
\newcommand{\tildx}{\widetilde{\bx}^*}
\newcommand{\ind}{q}
\title{A Tight Approximation for Submodular Maximization with Mixed Packing and Covering Constraints}
\author{Eyal Mizrachi \thanks{Computer Science Department, Technion, Haifa 32000, Israel. \tt{eyalmiz@cs.technion.ac.il}} \and
	    Roy Schwartz \thanks{Computer Science Department, Technion, Haifa 32000, Israel. \tt{schwartz@cs.technion.ac.il}} \and
	    Joachim Spoerhase \thanks{Department of Computer Science, Aalto University, Espoo, Finland \& University of W{\"u}rzburg, Germany. \tt{joachim.spoerhase@aalto.fi}} \and
	    Sumedha Uniyal \thanks{Department of Computer Science, Aalto University, Espoo, Finland. \tt{sumedha.uniyal@aalto.fi}}}
\date{}
\begin{document}
\pagenumbering{arabic}
\maketitle

\begin{abstract}

Motivated by applications in machine learning, such as subset selection and data summarization, we consider the problem of maximizing a monotone submodular function subject to mixed packing and covering constraints.
We present a tight approximation algorithm that for any constant $\eps >0$ achieves a guarantee of $1-\nicefrac[]{1}{\e}-\eps$ while violating only the covering constraints by a multiplicative factor of $1-\eps$.
Our algorithm is based on a novel enumeration method, which unlike previous known enumeration techniques, can handle both packing and covering constraints.
We extend the above main result by additionally handling a matroid independence constraints as well as finding (approximate) pareto set optimal solutions when multiple submodular objectives are present. Finally, we propose a novel and purely combinatorial dynamic programming approach that can be applied to several special cases of the problem yielding not only {\em deterministic} but also considerably faster algorithms.
For example, for the well studied special case of only packing constraints (Kulik {\em et. al.} [Math. Oper. Res. `13] and Chekuri {\em et. al.} [FOCS `10]), we are able to present the first deterministic non-trivial approximation algorithm.
We believe our new combinatorial approach might be of independent interest.

\end{abstract}


\section{Introduction}\label{sec:Introduction}
The study of combinatorial optimization problems with a submodular objective has attracted much attention in the last decade.
A set function $f:2^{\cN}\rightarrow \R_+$ over a ground set $\cN$ is called {\em submodular} if it has the {\em diminishing returns} property:
$f(A\cup \left\{ i\right\}) - f(A) \geq f(B\cup \left\{ i\right\}) - f(B)$ for every $A\subseteq B\subseteq \cN$ and $i\in \cN\setminus B$.\footnote{An equivalent definition is: $ f(A)+f(B)\geq f(A\cup B)+f(A\cap B)$ for every $ A,B\in \cN$.}
Submodular functions capture the principle of economy of scale, prevalent in both theory and real world applications.
Thus, it is no surprise that combinatorial optimization problems with a submodular objective arise in numerous disciplines, {\em e.g.}, machine learning and data mining \cite{B13,BHK14}, algorithmic game theory and social networks \cite{DRS12,HMS08,HK14,KKT15,SU13}, and economics \cite{SA11}. 
Additionally, many classical problems in combinatorial optimization are in fact submodular in nature, {\em e.g.}, maximum cut and maximum directed cut \cite{GW95,HZ01,H01,K72,KKMO07}, maximum coverage \cite{F98,KMN99}, generalized assignment problem \cite{CK05,CKR06,FV06,FGMS06}, maximum bisection \cite{ABG16,FJ95}, and facility location \cite{AS99,CFN77b,CFN77a}.

In this paper we consider the problem of maximizing a monotone\footnote{$f$ is monotone if $f(S)\leq f(T)$ for every $ S\subseteq T\subseteq \cN$.} submodular function given mixed packing and covering constraints.
In addition to being a natural problem in its own right, it has further real world applications.

As a motivating example consider the subset selection task in machine learning \cite{G14,GKT12,KT11} (also refer to Kulesza and Taskar \cite{KT12} for a thorough survey).
In the subset selection task the goal is to select a diverse subset of elements from a given collection.
One of the prototypical applications of this task is the document summarization problem \cite{KT11,HL10,HL11}: given textual units the objective is to construct a short summary by selecting a subset of the textual units that is both representative and diverse.
The former requirement, representativeness, is commonly achieved by maximizing a submodular objective function, {\em e.g.}, graph based \cite{HL10,HL11} or log subdeterminant \cite{KT11}.
The latter requirement, diversity, is typically tackled by penalizing the submodular objective for choosing similar textual units (this is the case for both of the above two mentioned submodular objectives).
However, such an approach results in a submodular objective which is not necessarily non-negative thus making it extremely hard to cope with.
As opposed to penalizing the objective, a remarkably simple and natural approach to tackle the diversity requirement is by the introduction of covering constraints.
For example, one can require that for each topic that needs to appear in the summary, a sufficient number of textual units that refer to it are chosen.
Unfortunately, to the best of our knowledge there is no previous work in the area of submodular maximization that incorporates general covering constraints.\footnote{There are works on exact cardinality constraints for non-monotone submodular functions, which implies a special, uniform covering constraint~\cite{bfns14,LMNS09,v09}.}

Let us now formally define the main problem considered in this paper.
We are given a monotone submodular function $f:2^{\cN}\rightarrow \R_+$ over a ground set $\cN=\left\{ 1,2,\ldots,n\right\}$. 
Additionally, there are $p$ packing constraints given by $\bP\in \R_+^{p\times n}$, and $c$ covering constraints given by $\bC\in \R_+^{c\times n}$ (all entries of $\bP$ and $\bC$ are non-negative).
Our goal is to find a subset $S\subseteq \cN$ that satisfies all packing and covering constraints that maximizes the value of $f$:
\begin{align}
\max \left\{ f(S):S\subseteq \cN, \bP \mathbf{1} _{S}\leq \mathbf{1}_p, \bC \mathbf{1} _{S}\geq \mathbf{1}_c\right\}. \label{Objective}
\end{align}
In the above $\mathbf{1}_S\in \R^n$ is the indicator vector for $S\subseteq \cN$ and $\one_k\in \R^k$ is a vector of dimension~$k$ whose coordinates are all $1$.
We denote this problem as \PCSMExtended \PCSM. It is assumed we are given a feasible instance, \ie, there exists $S\subseteq \cN$ such that $\bP \mathbf{1} _{S}\leq \mathbf{1}_p$ and $\bC \mathbf{1} _{S}\geq \mathbf{1}_c $.

As previously mentioned, \PCSM captures several well known problems as a special case when only a single packing constraint is present ($p=1$ and $c=0$): maximum coverage \cite{KMN99}, and maximization of a monotone submodular function given a knapsack constraint \cite{s04,W82} or a cardinality constraint \cite{NWF78}.
For all of these special cases an approximation of $\left( 1-\nicefrac[]{1}{\e}\right)$ is achievable and known to be tight \cite{NW78} (even for the special case of a coverage function \cite{F98}).
When a constant number of knapsack constraints is given ($p=O(1)$ and $c=0$) Kulik {\em et al.} \cite{KST13} presented a tight $\left( 1-\nicefrac[]{1}{\e}-\eps \right)$-approximation for any constant $\eps >0$.
An alternative algorithm with the same guarantee was given by Chekuri {\em et al.} \cite{chekuriVZ10-rand-exch}.

\vspace{5pt}
\noindent {\bf{{Our Results:}}}
We present a tight approximation guarantee for \PCSM when the number of constraints is constant.
Recall that we assume we are given a feasible instance, {\em i.e.}, there exists $S\subseteq \cN$ such that $\bP \mathbf{1} _{S}\leq \mathbf{1}_p$ and $\bC \mathbf{1} _{S}\geq \mathbf{1}_c $.
The following theorem summarizes our main result.
From this point onwards we denote by $O$ some fixed optimal solution to the problem at hand.
\begin{theorem}\label{thrm:MainApproximation}
	For every constant $\eps >0$, assuming $p$ and $c$ are constants, there exists a randomized polynomial time algorithm for \PCSM running in time $n^{\text{poly}(1/\varepsilon)}$ that outputs a solution $S\subseteq \cN$ that satisfies: $(1)$ $f(S) \geq \left(1-\nicefrac[]{1}{\e}-\eps\right) f(O)$; and $(2)$ $\bP \mathbf{1}_S\leq \mathbf{1}_p$ and $\bC \mathbf{1}_S\geq (1-\eps) \mathbf{1}_c $.
\end{theorem}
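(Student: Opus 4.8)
The plan is to combine a continuous relaxation of \PCSM via the multilinear extension with a careful enumeration step that pre-commits a constant number of ``heavy'' elements, thereby reducing to an instance where every element contributes only a tiny fraction to each constraint and to the objective. Concretely, let $F$ denote the multilinear extension of $f$, and consider the polytope $\mathcal{Q}=\{\x\in[0,1]^n:\bP\x\le\one_p,\ \bC\x\ge(1-\eps/2)\one_c\}$. The natural continuous target is $\max\{F(\x):\x\in\mathcal{Q}\}$, which by monotonicity and submodularity is at least $f(O)$. The measured/continuous greedy algorithm, run over a \emph{down-closed} relaxation, yields a fractional point of value at least $(1-1/\e)f(O)$; the obstacle is that $\mathcal{Q}$ is not down-closed because of the covering constraints. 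This is where the novel enumeration promised in the abstract enters: I would first guess, by enumeration over all $n^{O(1/\eps)}$ subsets of size $O(1/\eps)$, a set $E_1$ of the ``large'' elements of $O$ (those contributing more than an $\eps$-fraction to some packing row, covering row, or to the marginal-value profile), and correspondingly guess a set $E_0$ of forbidden elements. After conditioning $f$ on $E_1$ and restricting the ground set to $\cN\setminus(E_0\cup E_1)$, all remaining elements are ``small'', the residual packing budgets are $\one_p-\bP\one_{E_1}$, and the residual covering requirement can be met almost entirely by the small elements --- crucially, once we know $E_1\subseteq O$, the covering constraint restricted to small elements can be \emph{relaxed downward} by an $\eps$-fraction and still be satisfied by $O\setminus E_1$, which effectively converts it into a constraint we can round with only small violation.

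The key steps, in order: (1) Set up the multilinear relaxation and argue $F(\one_O)=f(O)$; enumerate $E_1$ (the top $O(1/\eps)$ elements of $O$ under a suitable scoring) and $E_0$, and condition on them, so henceforth all elements are $\eps$-small with respect to every packing and covering row and with respect to $f_{E_1}$. (2) Run continuous greedy on the down-closed polytope $\{\x:\bP\x\le\text{(residual budget)}\}$ intersected with the \emph{box} constraints coming from a fractional covering LP --- more precisely, solve an LP to get a fractional point $\x^*$ that satisfies the residual packing constraints exactly and the residual covering constraints with slack, use it to bound the guess, and run continuous greedy to produce $\y$ with $F(\y)\ge(1-1/\e)(f(O)-f(E_1))+f(E_1)$ roughly, while $\y$ lies in a slightly scaled packing polytope. (3) Round $\y$ to an integral set $R_D$ using randomized (pipage or swap) rounding; since all elements are $\eps$-small, standard concentration (Chernoff/Hoeffding on the linear packing and covering sums, plus the Lemma-style bound on the submodular value) shows that with constant probability $\bP\one_{R_D}\le(1+\eps)\cdot\text{budget}$, $\bC\one_{R_D}\ge(1-\eps)\cdot\text{requirement}$, and $f(E_1\cup R_D)\ge(1-1/\e-O(\eps))f(O)$. (4) Fix up the $(1+\eps)$ packing violation by a final scaling/discarding argument that removes an $\eps$-fraction of fractional mass before rounding (so the packing constraint is satisfied exactly), absorbing the loss into the $\eps$ in the approximation ratio; re-parameterize $\eps$ to get the clean statement.

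The main obstacle, and the technically novel part, is step (2)–(3): handling packing and covering \emph{simultaneously} under rounding. Previous enumeration techniques (Kulik et al., Chekuri et al.) exploit that packing constraints are down-closed, so one can safely round down; covering constraints need rounding \emph{up}, and the two pull in opposite directions. The fix is that after the enumeration of $E_1$ all surviving elements are $\eps$-small, so a single rounding of the continuous-greedy output has, with constant probability, only $\pm\eps$ multiplicative deviation on \emph{every} linear constraint at once; the covering constraints are then satisfied up to $(1-\eps)$ as promised, and the packing overflow is repaired deterministically by the pre-scaling in step (4). A secondary subtlety is that continuous greedy must be run over a down-closed set, so the covering constraint cannot be imposed during the greedy phase --- instead it is enforced only at the LP/guessing stage and recovered by concentration at rounding time; verifying that the greedy fractional solution, when intersected with the covering-feasible region guaranteed by $O\setminus E_1$, still has $F$-value $(1-1/\e)f(O)$ requires the standard ``$F$ is concave along nonnegative directions toward $\one_O$'' argument applied to the conditioned function $f_{E_1}$.
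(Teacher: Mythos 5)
Your high-level plan (enumerate heavy elements, condition on them, run continuous greedy, round with concentration, absorb losses into $\eps$) matches the paper's skeleton, but two of your key steps break down exactly at the points the paper identifies as the real difficulty. First, your enumeration step is not well-founded for the covering rows: you define $E_1$ as \emph{all} elements of $O$ contributing more than an $\eps$-fraction to some packing row, covering row, or marginal profile, and claim $|E_1|=O(1/\eps)$. For packing rows and marginals this count is bounded because the row sums over $O$ are bounded by $1$ (resp.\ by $f(O)$), but a covering row only has a \emph{lower} bound: $\bC_j\mathbf{1}_O$ can be arbitrarily large, so $O$ may contain up to $n$ elements each with $\bC_{j,\ell}\geq\eps$, and your $n^{O(1/\eps)}$ enumeration cannot capture them. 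You cannot simply discard the un-guessed ones either, since they may belong to $O$, which destroys the value of the relaxation; and you cannot keep them, since a single element with covering value comparable to the residual requirement kills the lower-tail concentration you rely on. The paper resolves this precisely by additionally guessing a vector $\bc'$ with $\bc'\leq\bC\mathbf{1}_O\leq(1+\delta)\bc'$ and defining largeness relative to $\bc'$ (and to \emph{residual} requirements), which bounds the number of large elements of $O$ by $(p+c)/(\alpha\delta)$; this guess is absent from your proposal and is not a cosmetic detail.

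Second, your step (4) assumes that once all surviving elements are $\eps$-small with respect to the \emph{original} rows, a pre-rounding scaling by $1/(1+\eps)$ plus Chernoff yields exact satisfaction of the packing constraints. This fails when the committed set $E_1$ nearly exhausts a packing budget: the residual budget $1-\bP_i\mathbf{1}_{E_1}$ can be far smaller than $\eps$, so elements that are $\eps$-small in your sense are huge relative to the residual, and neither scaling nor concentration prevents violation with constant probability. The paper's machinery of \emph{critical} constraints, residual-relative largeness, and the post-rounding discard of the set $L_D$ (together with the Markov-type bound on $|R_D\cap L_D|$ and the argument that this discard costs only $O(\delta)$ in coverage and objective, using that discarded elements have marginal value at most $\gamma^{-1}f(E_1)$) is exactly what handles this case, and your proposal has no substitute for it. A smaller point: your worry that continuous greedy needs a down-closed polytope, and your resulting plan to drop the covering constraint from the continuous phase and ``recover it by concentration,'' is both unnecessary and unsound --- unnecessary because for monotone $f$ the continuous greedy guarantee holds over any solvable convex body (the paper runs it on $\cP\cap\cC$ directly), and unsound because if the fractional solution is not forced to have coverage near the residual requirement, there is no correct mean for the concentration bound to concentrate around.
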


\noindent We note four important remarks regarding the tightness of Theorem \ref{thrm:MainApproximation}:
\begin{enumerate}
\item The loss of $1-\nicefrac[]{1}{\e}$ in the approximation cannot be avoided, implying that our approximation guarantee is (virtually) tight.
The reason is that no approximation better than $1-\nicefrac[]{1}{\e}$ can be achieved even for the case where only a single packing constraint is present \cite{NW78}.
\item The assumption that the number of constraints is constant is unavoidable.
The reason is that if the number of constraints is not assumed to be constant, then even with a linear objective \PCSM captures the maximum independent set problem.
Hence, no approximation better than $n^{-(1-\eps)},$ for any constant $\eps > 0$, is possible \cite{H96}.\footnote{If the number of packing constraints $p$ is super-constant then approximations are known only for special cases with ``loose'' packing constraints, {\em i.e.}, $P_{i,\ell}\leq O(\varepsilon ^2/\ln{p})$ (see, {\em e.g.}, \cite{chekuriVZ10-rand-exch}).}  
\item No {\em true} approximation with a finite approximation guarantee is possible, \ie, finding a solution $S\subseteq \cN$ such that $\bP \mathbf{1} _{S}\leq \mathbf{1}_p$ and $\bC \mathbf{1} _{S}\geq \mathbf{1}_c $ with no violation of the constraints. 
The reason is that one can easily encode the subset sum problem using a single packing and a single covering constraint.
Thus, just deciding whether a feasible solution exists, regardless of its cost, is already NP-hard.
\item Guaranteeing one-sided feasibility, {\em i.e.}, finding a solution which does not violate the packing constraints and a violates the covering constraint only by a factor of $\varepsilon$, cannot be achieved in time $n^{o(1/\varepsilon)}$ unless the exponential time hypothesis fails (see Appendix~\ref{apx-sec:lower-bound-reduction}). 
\end{enumerate}
Therefore, we can conclude that our main result (Theorem \ref{thrm:MainApproximation}) provides the best possible guarantee for the \PCSM problem.
We also note that all previous work on the special case of only packing constraints \cite{chekuriVZ10-rand-exch,KST13} have the same running time of $n^{\text{poly}(1/\varepsilon)}$.

We present additional extensions of the above main result.
The first extension deals with \PCSM where we are also required that the output is an independent set in a given matroid $\cM = (\cN,\cI)$.
We denote this problem by \MPCSMExtended \MPCSM, and it is defined as follows:
$\max \left\{ f(S):S\subseteq \cN, \bP \mathbf{1} _{S}\leq \mathbf{1}_p, \bC \mathbf{1} _{S}\geq \mathbf{1}_c,S\in \cI\right\}$ . 
As in \PCSM, we assume we are given a feasible instance, {\em i.e.}, there exists $S\subseteq \cN$ such that $ \bP \mathbf{1} _{S}\leq \mathbf{1}_p$, $ \bC \mathbf{1} _{S}\geq \mathbf{1}_c$, and $S\in \cI $.
Our result is summarized in the following theorem.
\begin{theorem}\label{thrm:MatroidExtend}
For every constant $\eps > 0$, assuming $p$ and $c$ are constants, there exists a randomized polynomial time algorithm for \MPCSM that outputs a solution $S\in \cI$ that satisfies: $(1)$ $ f(S)\geq \left( 1-\nicefrac[]{1}{\e}-\eps\right) f(O)$; and $(2)$ $\bP \mathbf{1}_S\leq \mathbf{1}_p$ and $\bC \mathbf{1}_S\geq (1-\eps) \mathbf{1}_c$.
\end{theorem}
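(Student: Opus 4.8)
\textbf{Proof plan for Theorem \ref{thrm:MatroidExtend}.}
The plan is to run the algorithm behind Theorem \ref{thrm:MainApproximation} essentially unchanged and to augment only the two places where the matroid $\cM=(\cN,\cI)$ interacts with it: the polytope over which the continuous phase operates, and the rounding phase. Recall that the algorithm of Theorem \ref{thrm:MainApproximation} consists of (i) an enumeration phase that guesses a constant-size partial solution $G\subseteq O$ --- this is the phase that takes care of the covering constraints, by fixing the ``heavy'' part of $O$ with respect to the rows of $\bC$ and of $\bP$ simultaneously; (ii) a continuous phase that applies continuous greedy to the multilinear extension $F$ of the residual function $g(\cdot)=f(G\cup\cdot)$ over the down-closed polytope $Q$ cut out by the remaining (scaled) packing constraints and the box $[0,1]^{\cN\setminus G}$; and (iii) a rounding phase. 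For the matroid version, $g$ is again monotone submodular, so the only structural change is to additionally require $G\in\cI$ during enumeration (the number of candidate sets $G$ is still $n^{\mathrm{poly}(1/\eps)}$) and, once $G$ is fixed, to replace $\cM$ by the contracted matroid $\cM/G$ on ground set $\cN\setminus G$. A feasible fractional optimum of the residual problem still lies in $Q\cap\cP(\cM/G)$, where $\cP(\cM/G)$ is the matroid polytope of $\cM/G$, because $O\setminus G$ is independent in $\cM/G$ whenever $G\subseteq O$, $G\in\cI$ and $O\in\cI$.

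I would rerun continuous greedy, now over the down-closed polytope $Q'=Q\cap\cP(\cM/G)$; continuous greedy needs only a polynomial-time linear optimization oracle over a down-closed feasible region, both of which $Q'$ provides, so it outputs $\bx\in Q'$ with $F(\bx)\ge\left(1-\nicefrac{1}{\e}\right)g(O\setminus G)$ up to the usual $o(1)$ discretization error, and the packing rows of $Q$ remain satisfied by $\bx$ at the scaled-down level exactly as in Theorem \ref{thrm:MainApproximation}. The only genuinely new ingredient is the rounding: instead of the independent randomized rounding with alteration used for pure packing constraints, I would round $\bx$ with randomized swap rounding (Chekuri, Vondr\'{a}k, and Zenklusen) against $\cM/G$. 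This produces an integral $S'\subseteq\cN\setminus G$ with $S'$ independent in $\cM/G$ \emph{deterministically} (hence $G\cup S'\in\cI$), with $\E[g(S')]\ge F(\bx)$ by the standard property that swap rounding does not decrease the multilinear extension in expectation, and --- crucially --- with Chernoff-type concentration for every fixed nonnegative linear function, in particular for each residual packing row. Combining this concentration with the constant number $p$ of packing rows, the scaling slack already built into $Q$, and the fact (guaranteed by the enumeration of Theorem \ref{thrm:MainApproximation}) that all packing coefficients surviving into $\cN\setminus G$ are $\eps$-small, we get $\bP\one_{G\cup S'}\le\one_p$ with high probability; the bound $\bC\one_{G\cup S'}\ge(1-\eps)\one_c$ is inherited verbatim, since it is witnessed by $G$ alone.

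The value bound then chains: $\E[f(G\cup S')]=\E[g(S')]\ge F(\bx)\ge\left(1-\nicefrac{1}{\e}-o(1)\right)g(O\setminus G)$, and for the correct guess $g(O\setminus G)=f(O)$, yielding $\left(1-\nicefrac{1}{\e}-\eps\right)f(O)$ after absorbing lower-order terms and boosting the success probability in the standard way; the running time stays $n^{\mathrm{poly}(1/\eps)}$ since enumeration dominates and continuous greedy together with swap rounding are polynomial. I expect the main obstacle to be bookkeeping in the reduction rather than any new idea: one must verify that the enumeration of Theorem \ref{thrm:MainApproximation}, designed to make residual packing coefficients small while pinning down enough covering mass, still works when the guessed set is additionally forced to lie in $\cI$ --- i.e.\ that an optimal $O$ still decomposes as $O=G^\star\uplus(O\setminus G^\star)$ with $G^\star\in\cI$ the ``heavy'' part and $O\setminus G^\star$ independent in $\cM/G^\star$ --- and that the concentration of swap rounding is quantitatively strong enough to survive the union bound over the $p$ packing rows at the chosen scaling. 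Both hold, but checking the constants is where the real work lies.
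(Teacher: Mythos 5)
Your high-level plan — reuse the enumeration and post-processing of Theorem~\ref{thrm:MainApproximation}, run continuous greedy over a polytope that additionally intersects the contracted matroid polytope $\cP(\cM/E_1)$, and then replace independent rounding by swap rounding so that the output is automatically independent — is indeed the route the paper takes. But your description of the polytope contains a genuine gap that the rest of the argument cannot repair.

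You describe the continuous phase as operating ``over the down-closed polytope $Q$ cut out by the remaining (scaled) packing constraints and the box,'' and later assert that the covering bound $\bC\one_{G\cup S'}\geq(1-\eps)\one_c$ is ``inherited verbatim, since it is witnessed by $G$ alone.'' Both claims are incorrect. The paper's continuous greedy runs over $\cP\cap\cC$ (and here additionally $\cP(\cM/E_1)$), which is \emph{not} down-closed, precisely because the covering constraints must be enforced at the fractional level; the acknowledgement of the paper explicitly credits a reviewer for pointing out that the continuous greedy framework applies to non-down-closed polytopes for exactly this purpose. The guessed set $E_1=O_\gamma\cup O_L$ witnesses the \emph{critical} covering constraints $j\in Z_D$ (where $(\bs_D)_j\leq\delta\bc'_j$), but for a non-critical covering constraint the residual coverage $(\bs_D)_j$ that must come from $R_D$ can be an arbitrarily large fraction of $\bc'_j$ — e.g.\ a constraint whose $n$ entries are all $\Theta(1/n)$ has $O_L=\varnothing$, so $E_1$ contributes only $\gamma/n\to 0$. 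If you drop $\cC$ from the polytope, continuous greedy may return a fractional point with zero residual coverage on such a constraint, and no amount of concentration during rounding will recover it. You therefore must include the covering constraints in the polytope, apply the same $(1+\delta)^{-1}$ scaling to all coordinates, and then argue (via lower-tail concentration on the covering rows plus the ``critical constraint'' case analysis) that all covering rows are met up to a $(1-\eps)$ factor.

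A second, smaller omission: you invoke randomized swap rounding ``against $\cM/G$'' and cite its Chernoff-type concentration, but the Chekuri--Vondr\'ak--Zenklusen guarantees are stated for convex combinations of \emph{bases} of the matroid, whereas the continuous greedy output is only a convex combination of \emph{independent} sets (it lies in the matroid polytope, not the base polytope, and extending each independent set to a base with real elements could blow up the packing constraints). The paper resolves this by padding the ground set with zero-weight, zero-value dummy elements, constructing an auxiliary matroid $\cM'$ whose bases are exactly padded independent sets, and verifying that this extended swap rounding inherits the same upper- and lower-tail bounds and the submodular lower tail. Your proposal skips this step; it is not fatal to the plan, but it is a non-trivial technical lemma you would have to prove rather than cite.
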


The second extension deals with the multi-objective variant of \PCSM where we wish to optimize over several monotone submodular objectives.
We denote this problem by \PCMSMExtended \PCMSM.
Its input is identical to that of \PCSM except that instead of a single objective $f$ we are given $t$ monotone submodular functions $f_1,\ldots,f_t:2^{\cN}\rightarrow \mathbb{R}_+$.
As before, we assume we are given a feasible instance, {\em i.e.}, there exists $S\subseteq \cN$ such that $ \bP \mathbf{1} _{S}\leq \mathbf{1}_p$ and $ \bC \mathbf{1} _{S}\geq \mathbf{1}_c$.
Our goal is to find pareto set solutions considering the $t$ objectives.
To this end we prove the following theorem.
\begin{theorem}\label{thrm:MultiExtend}
For every constant $\eps > 0$, assuming $p$, $c$ and $t$ are constants, there exists a randomized polynomial time algorithm for \PCMSM that for every target values $v_1,\ldots,v_t$ either: $(1)$ finds a solution $S\subseteq \cN$ where $ \bP \mathbf{1} _{S}\leq \mathbf{1}_p$ and $ \bC \mathbf{1} _{S}\geq (1-\eps)\mathbf{1}_c$ such that for every $1\leq i\leq t$: $f_i(S)\geq \left( 1-\nicefrac[]{1}{\e}-\eps\right) v_i$; or $(2)$ returns a certificate that there is no solution $S\subseteq \cN$, where $ \bP \mathbf{1} _{S}\leq \mathbf{1}_p$ and $ \bC \mathbf{1} _{S}\geq \mathbf{1}_c$ such that for every $1\leq i\leq t$: $f_i(S)\geq v_i$.
\end{theorem}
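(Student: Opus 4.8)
The plan is to build on the algorithm underlying Theorem~\ref{thrm:MainApproximation}, whose core consists of an enumeration step that ``guesses'' a constant‑size subset $A$ of the optimal solution, followed by the construction of a point in a continuous relaxation and a rounding step that returns an integral solution violating the covering constraints by at most a $(1-\eps)$ factor and the packing constraints not at all. To handle $t$ objectives we make two changes. First, in the enumeration step we additionally guess, for every $i\in\{1,\dots,t\}$, the $\mathrm{poly}(1/\eps)$ elements of $O$ with the largest marginal contribution to $f_i$; since $t$ is constant, the resulting family of guesses still has size $n^{\mathrm{poly}(1/\eps)}$ and the running time is preserved. Second, we replace the single‑objective continuous greedy by a \emph{simultaneous} variant described next.

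Fix a guess $A$ and let $g_i(\cdot)=f_i(A\cup\cdot)-f_i(A)$ denote the residual objectives, with multilinear extensions $G_i$, and let $P=\{\,\y\in[0,1]^{\cN\setminus A} : \bP(\mathbf{1}_A+\y)\le\mathbf{1}_p,\ \bC(\mathbf{1}_A+\y)\ge\mathbf{1}_c\,\}$ be the (relaxed) residual polytope. We run continuous greedy in $P$, but at each step, instead of maximizing a single gradient, we solve the linear feasibility problem of finding $\y\in P$ with $\langle\nabla G_i(\x),\y\rangle\ge v_i-f_i(A)-G_i(\x)$ for all $i$ simultaneously. If $O$ is feasible, consistent with the guess $A$, and satisfies $f_i(O)\ge v_i$ for every $i$, then $\y=\mathbf{1}_{O\setminus A}$ is feasible for this LP: by monotonicity and submodularity $\langle\nabla G_i(\x),\mathbf{1}_{O\setminus A}\rangle\ge G_i(\x\vee\mathbf{1}_{O\setminus A})-G_i(\x)\ge g_i(O\setminus A)-G_i(\x)\ge f_i(O)-f_i(A)-G_i(\x)\ge v_i-f_i(A)-G_i(\x)$. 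Integrating the resulting differential inequality $\tfrac{d}{d\tau}G_i(\x(\tau))\ge v_i-f_i(A)-G_i(\x(\tau))$ yields $G_i(\x(1))\ge(1-\nicefrac{1}{\e})(v_i-f_i(A))$ for all $i$ at once. Conversely, if for the correct guess the LP stays feasible at every step and the rounding (below) meets every target with positive probability, then --- after standard amplification --- failing to produce such a solution for \emph{every} guess certifies that no feasible $S$ with $f_i(S)\ge v_i$ for all $i$ exists, which we output as the certificate required by case $(2)$.

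Finally, we feed $\x(1)\in P$ (a convex combination of the chosen directions, hence a legitimate input) to the same rounding procedure used for Theorem~\ref{thrm:MainApproximation}. Its guarantees on the constraints are oblivious to the objective and carry over verbatim: the returned set satisfies $\bP\mathbf{1}_S\le\mathbf{1}_p$ and $\bC\mathbf{1}_S\ge(1-\eps)\mathbf{1}_c$. For the objective values, the rounding does not decrease $G_i$ in expectation and concentrates around it for each $i$; a union bound over the constant number of objectives shows that with constant probability $f_i(A\cup S)\ge f_i(A)+(1-\eps)G_i(\x(1))\ge(1-\nicefrac{1}{\e}-O(\eps))v_i$ holds for all $i$ simultaneously, and repetition amplifies this to high probability (after which we rescale $\eps$).

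The main obstacle is not the continuous phase --- simultaneous continuous greedy via a feasibility LP is by now standard --- but verifying that the enumeration‑and‑rounding machinery of Theorem~\ref{thrm:MainApproximation}, engineered to trade off a \emph{single} submodular objective against the packing and covering constraints, still works when it must preserve $t$ objectives at once. Two points require care: (i) the enumeration must simultaneously capture the high‑contribution elements of $O$ for all $f_i$; and (ii) the rounding must preserve each $f_i$ with high enough probability for the union bound to close. If Theorem~\ref{thrm:MainApproximation}'s rounding already gives such concentration this is immediate, and otherwise one substitutes a rounding with this property without any loss in the approximation ratio or in the constraint guarantees.
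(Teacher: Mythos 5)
Your proposal follows essentially the same route as the paper's proof sketch: extend the enumeration to guess the top-$\gamma$ marginal elements of $O$ separately for each of the $t$ objectives, replace single-objective continuous greedy by a multi-objective variant that simultaneously tracks target values (and certifies infeasibility when it fails), and then feed the resulting fractional point into the same independent rounding with a union bound over the constant number of objectives. The only cosmetic difference is that the paper simply cites Lemma~7.3 of Chekuri~et~al.~\cite{chekuriVZ10-rand-exch} for the simultaneous continuous greedy step, whereas you rederive that lemma from scratch via the feasibility-LP/differential-inequality argument; both are correct, and the certificate in case~$(2)$ should be understood as coming from the deterministic infeasibility of that feasibility LP over all enumerated guesses rather than from a failed rounding.
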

We also note that Theorems \ref{thrm:MatroidExtend} and \ref{thrm:MultiExtend} can be combined such that we can handle \PCMSM where a matroid independence constraint is present, in addition to the given packing and covering constraints, achieving the same guarantees as in Theorem \ref{thrm:MultiExtend}. 

All our previously mentioned results employ a continuous approach and are based on the multilinear relaxation, and thus are inherently {\em randomized}.\footnote{
Known techniques to efficiently evaluate the multilinear extension are randomized, {\em e.g.}, \cite{CCPV11}.}
We present a new combinatorial greedy-based dynamic programming approach for submodular maximization that enables us, for several well studied special cases of \PCSM, to obtain deterministic and considerably faster algorithms.
Perhaps the most notable result is the first deterministic non-trivial algorithm for maximizing a monotone submodular function subject to a constant number of packing constraints (previous works \cite{chekuriVZ10-rand-exch,KST13} are randomized).
\begin{theorem}
\label{thm:submodular-order1-pack}
For every constants $\eps >0$ and $p\in \mathbb{N}$, there exists a deterministic algorithm for maximizing a monotone submodular function subject to $p$ packing constraints, that runs in time $O(n^{\textnormal{poly}(1/\eps)})$ and achieves an approximation of $\nicefrac[]{1}{e}-\varepsilon$. 
\end{theorem}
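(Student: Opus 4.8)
The plan is to combine \emph{partial enumeration} with a \emph{greedy-driven dynamic program} over the space of packing-budget profiles, avoiding the multilinear relaxation entirely. First I would do the enumeration: iterate over all subsets $B\subseteq\cN$ with $|B|\le O(1/\eps)$ that are feasible for the $p$ packing constraints (there are only $n^{O(1/\eps)}$ of them), and, again by enumeration, also guess the at most $p/\eps$ elements of $O$ whose coefficient exceeds $\eps$ in some constraint. In the branch where $B$ equals the $\lceil 1/\eps\rceil$ highest-marginal elements of $O$ (taken in greedy order) together with these heavy elements, submodularity gives $f(B\cup\{e\})-f(B)\le\eps f(O)$ and $P_{\ell,e}\le\eps$ for every remaining element $e$ and constraint $\ell$, while $O\setminus B$ is feasible for the residual budget $\one_p-\bP\one_B$. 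So it suffices, working with $g(S):=f(B\cup S)-f(B)$ on the residual elements, to find a feasible $S$ with $f(B)+g(S)\ge(1/\e-\eps)f(O)$. A routine preprocessing --- add every element unconstrained by all $p$ constraints, discard globally negligible elements, and round the surviving coefficients up to a common granularity --- lets us assume the reachable rounded consumption profiles form a grid $G$ of polynomially many points, with every element strictly advancing the profile.

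The dynamic program has one cell per profile $\beta\in G$, and the value stored in cell $\beta$ is an \emph{actual} subset $S_\beta$ of the residual elements whose rounded consumption is coordinate-wise at most $\beta$. It is filled by a greedy transition: for each grid-predecessor $\beta'$ of $\beta$ and each residual element $e$ with $S_{\beta'}\cup\{e\}$ still lying below $\beta$, form $S_{\beta'}\cup\{e\}$ and keep as $S_\beta$ the one of largest value $g(\cdot)$. Processing cells in an order consistent with coordinate-wise $\le$, this runs in polynomial time per enumeration branch and is deterministic; together with the enumeration the total running time is $n^{\mathrm{poly}(1/\eps)}$. The reason to store the realized set $S_\beta$ rather than only its value is so that the transition can evaluate the marginals $g_{S_{\beta'}}(e)$ exactly --- this is precisely where a naive dynamic program for a submodular objective breaks down, and confining attention to sets obtained as greedy augmentations along a monotone path in $G$ is what keeps the state space polynomial while keeping marginals well defined.

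For the analysis I would exhibit a single good path in $G$: the one that, starting from the empty set, repeatedly augments by the residual element of largest marginal-to-size ratio that keeps the running profile at most $\beta_O$, the rounded consumption profile of $O\setminus B$. Such a path is among those the dynamic program considers, so the DP's output is at least as good. Along this path the greedy behaves like the classical density greedy for a single knapsack, and as long as it is never \emph{blocked} --- i.e., at every step all still-unselected elements of $O\setminus B$ keep the profile at most $\beta_O$ --- the standard telescoping estimate together with $g_S(e)\le\eps f(O)$ yields $f(B)+g(S)\ge(1-1/\e-\eps)f(O)$. The remaining case is that some packing constraint gets exhausted (up to an additive $\eps$, using the smallness of the residual coefficients) before the path reaches $\beta_O$. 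Here I would argue by a density accounting at the stopping configuration: each unselected element of $O\setminus B$ is blocked by a nearly-full constraint and has marginal-to-size ratio no larger than that of the last element the greedy added, and combining this with feasibility of $O\setminus B$ shows the set at the stopping cell already has value at least $(1/\e-O(\eps))f(O)$. Taking the best cell over all profiles and all enumeration branches then gives the claimed $(1/\e-\eps)$-approximation with no packing violation.

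I expect the main obstacle to be exactly this last point --- showing that a packing constraint filling up prematurely costs only a multiplicative slide from $1-1/\e$ down to $1/\e$ plus an additive $\eps$, \emph{uniformly in $p$}, rather than the $1/(p+1)$-type loss one gets from a single fixed greedy. This is where the $p$-dimensional grid earns its keep: ranging over all monotone paths in $G$ implicitly optimizes over how the budgets of the $p$ constraints are apportioned, playing the role that the multilinear relaxation plays in the randomized algorithm. A secondary (but purely technical) point is to make the preliminary rounding of the packing coefficients lossless enough that the comparison remains against an exactly feasible optimum and that the grid $G$ stays polynomial.
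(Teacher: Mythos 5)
Your overall skeleton (partial enumeration of high-marginal/heavy elements plus a greedy-transition dynamic program over packing profiles that stores actual sets, followed by scaling) matches the paper's greedy-DP route to this theorem, but the analysis you sketch has a genuine gap at its center. Your key step is: exhibit a density-greedy path in the grid and claim ``such a path is among those the dynamic program considers, so the DP's output is at least as good.'' This inference fails for a submodular objective. The DP cell at profile $\beta'$ stores \emph{its own} greedy set $S_{\beta'}$, not the set $A$ your idealized path would have built; the transition only guarantees $f(S_\beta)\geq f(S_{\beta'}\cup\{e\}) = f(S_{\beta'})+g_{S_{\beta'}}(e)$, and by submodularity $g_{S_{\beta'}}(e)$ can be far smaller than the marginal $g_{A}(e)$ your path's telescoping estimate uses (indeed it can be $0$). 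So the classical knapsack-greedy accounting along a fixed path does not transfer to a lower bound on the DP value, and both your ``unblocked'' bound of $(1-1/\e-\eps)f(O)$ and your ``blocked'' density-accounting claim of $(1/\e-O(\eps))f(O)$ --- which you yourself flag as the main obstacle, especially its uniformity in $p$ --- are left unestablished.

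The paper's analysis is structured precisely to avoid this trap: instead of fixing a path in advance, it constructs the comparison sequence of optimum elements \emph{adaptively with respect to the DP's stored sets}, in $m$ phases. In phase $i$ it only adds an element $o\in O$ whose marginal with respect to the DP's current entry is at least $(1-i/m)\,g(o)$; when no such element exists, submodularity yields the complementary bound $f(S)\geq \tfrac{i}{m}\,g(O\setminus O_q)$ on the DP entry in terms of the \emph{unprocessed} part of the optimum (Lemma~\ref{apx-lem:submodular-const-multi-phase}). These two families of inequalities are fed into a factor-revealing LP whose optimum is $(1-1/m)^m\to 1/\e$ (Lemmas~\ref{apx-lem:submodular-const-dual} and~\ref{apx-lem:submodular-const-lp}); note that since the comparison cells are indexed by exact (scaled) consumption of prefixes of $O$, no ``blocking'' case ever arises, and the $1/\e$ factor comes out of the LP rather than any density argument. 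Finally, note that the paper removes the $(1+\eps)$ violation introduced by scaling via the pre-/post-processing of Kulik et al., whereas your plan to round coefficients ``up to a common granularity'' while keeping the true optimum exactly feasible and the grid polynomial is exactly the kind of step that needs that machinery; as written it is not yet a proof.
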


The interesting special case of \PCSM is when a single packing and a single covering constraints are present ($p=c=1$) is summarized in the following theorem.
\begin{theorem}
\label{thm:submodular-one}
For every constant $\eps >0$ and $p=c=1$, there exists a deterministic algorithm for \PCSM running in time $O(n^{1/\eps})$ that outputs a solution $S\subseteq \cN$ that satisfies: $(1)$ $f(S) \geq 0.353 f(O)$; and $(2)$ $\bP \mathbf{1}_S\leq (1+\eps)\mathbf{1}_p$ and $\bC \mathbf{1}_S\geq (1-\eps)\mathbf{1}_c$. For the case when the packing constraint is a cardinality constraint, {\em i.e.}, $ \bP=\mathbf{1}_n^{\intercal}/k$, we can further guarantee that $\bP \mathbf{1}_S\leq \mathbf{1}_p$ and a running time of $O(\nicefrac{n^4}{\eps})$.

\end{theorem}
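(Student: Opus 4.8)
After a normalization step the statement is really about a discrete problem that a greedy-flavored dynamic program can attack directly, so the plan is to first scale the single packing row $\p$ and the single covering row $\bc$ so that the constraints read $\p^\top\one_S\le 1$ and $\bc^\top\one_S\ge 1$, and delete every element with $p_i>1$ (it is unusable). On the covering side I would round each $c_i$ down to the nearest multiple of $\eps/n$ and cap the accumulated covering at $1$; since any feasible set uses at most $n$ elements, this underestimates the covering of a set by at most $\eps$ (so a genuinely feasible set still covers $\ge 1-\eps$), and the rounded covering of a chosen set then takes only $O(n/\eps)$ distinct values --- which is what makes it a usable DP coordinate.

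\textbf{The combinatorial greedy DP.} The engine is a deterministic dynamic program whose states are pairs $(j,\ell)$, where $j$ counts selected elements (for cardinality packing $j\in\{0,\dots,k\}$; in general $j$ encodes a bucketed amount of consumed packing budget) and $\ell$ is the bucketed accumulated covering value. Each cell $D[j][\ell]$ stores one candidate set of ``size'' $j$ and covering about $\ell$, and is obtained from the cells $D[j-1][\cdot]$ in the classical greedy manner: one appends the element $i$ of largest marginal value $f(i\mid D[j-1][\ell'])$ among the $i$ whose (bucketed) covering contribution moves $\ell'$ to $\ell$. The algorithm returns the best $f$-value over cells $D[j][\ell]$ with $j$ inside the (for general weights, $(1+\eps)$-relaxed) packing budget and $\ell\ge 1-\eps$. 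For cardinality packing the table is polynomially large and a direct count gives the stated $O(n^4/\eps)$ running time, with exact packing feasibility since the budget coordinate is then exact. For general packing weights I would additionally enumerate, up front, the at most $1/\eps$ elements of $O$ whose packing weight exceeds a $\mathrm{poly}(\eps)$ threshold (there can be only $O(1/\eps)$ of them because $\p^\top\one_O\le 1$), handling those exactly and bucketing the remaining small weights; rounding the small weights down forces the $(1+\eps)$ packing violation, and the enumeration is the source of the $n^{\mathrm{poly}(1/\eps)}=O(n^{1/\eps})$ factor. Treating the large weights exactly is precisely what avoids a bucketing loss on them, so that the general case matches the clean uniform-weight cardinality case.

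\textbf{The analysis.} Fix the optimal solution $O$ and order its elements $o_1,\dots,o_{|O|}$ as a greedy run on $O$ would pick them. The heart of the proof is a ``shadow path'' lemma: the DP contains a path that, after $j$ steps, has selected $j$ elements, has consumed no more (bucketed) packing budget than $\{o_1,\dots,o_j\}$, and has accumulated at least as much (bucketed) covering as $\{o_1,\dots,o_j\}$. Along that path appending $o_{j+1}$ is always a legal move, so the DP appends something at least as good, and the standard submodular greedy recurrence propagates the usual $(1-\nicefrac{1}{\e})$-type bound toward $f(O)$. Two effects degrade this to the stated constant and must be combined carefully: first, the shadow path is only guaranteed to reach covering $\ge 1-\eps$ once it has taken enough elements, and ``finishing'' the covering can cost extra budget; second, the bucketing of the covering coordinate forces small deviations from the trajectory of $O$. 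To absorb the first effect I would also run a companion candidate that prioritizes cheap covering elements and argue, via a submodular exchange inequality, that the better of the two candidates retains a constant fraction of $f(O)$; a careful accounting of the two losses yields $f(S)\ge 0.353\,f(O)$.

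\textbf{Main obstacle.} The delicate point is that a single budgeted run that is \emph{forced} to satisfy a covering constraint can be driven to spend almost all of its budget on covering and collect little value, so one cannot simply invoke greedy with the covering constraint ``bolted on''. The fix is either to fold the covering into the objective --- carrying a guessed Lagrange weight $\lambda\ge 0$ and greedily optimizing $f(S)+\lambda\,\bc^\top\one_S$, which stays monotone submodular --- or to take the best of a small family of runs and bound their maximum by a submodular exchange argument; either way, making the covering loss and the greedy loss compose into a single clean constant, rather than a product of two $(1-\nicefrac{1}{\e})$ factors, is exactly what produces the number $0.353$, and getting the bookkeeping of the bucketed covering (and, for general weights, bucketed packing) coordinates to mesh with the greedy recurrence is the main technical work.
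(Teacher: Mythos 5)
Your algorithmic scaffold is essentially the paper's: guess the at most $1/\eps$ optimal elements of packing weight $\geq \eps\p$ (the $n^{O(1/\eps)}$ factor), bucket/scale the covering (and packing) coordinate, run a deterministic table-based greedy DP indexed by accumulated covering and packing values, and observe that in the cardinality case no guessing or packing violation is needed. The gap is in the analysis. Your ``shadow path'' lemma plus ``the standard submodular greedy recurrence propagates the usual $(1-\nicefrac{1}{\e})$-type bound'' does not work: the set stored in the cell corresponding to the optimal prefix $\{o_1,\dots,o_j\}$ is generally \emph{not} that prefix, and $o_{j+1}$ may have tiny (or zero, if $o_{j+1}$ already lies in the stored set) marginal value with respect to it, so appending $o_{j+1}$ being ``legal'' buys you nothing. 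Moreover, the recurrence one actually gets compares the gain per step to the marginal $g(o_{j+1})$ of the optimum's own ordering, not to $\frac1k\bigl(f(O)-f(S_j)\bigr)$, so even in the clean case the achievable ratio from this trajectory argument is of the form $(1-1/m)^m\to\nicefrac{1}{\e}$, not $1-\nicefrac{1}{\e}$. The paper's proof handles the bad case explicitly: whenever every remaining optimal element has marginal value below a $(1-i/m)$ fraction of its $g$-value, submodularity yields the complementary bound $f(S)\geq \frac{i}{m}\,g(O\setminus O_q)$, and these two families of inequalities, phase by phase, are fed into a factor-revealing LP.

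The second, related gap is the constant itself. In the paper $0.353$ is not the outcome of ``careful accounting'' of a bucketing loss and a covering-completion loss; it is the (numerically evaluated, analytically bounded away from $\nicefrac{1}{\e}$) optimum of a factor-revealing LP that models the interaction of the greedy DP with a reserved \emph{forbidden set} $F_{\p'}$: the maximum-density prefix (sorted by $\bC_\ell/\bP_\ell$) whose elements are excluded from every table entry $T[\cdot,\p']$ precisely so that $T[\bc',\p']\cup F_{\p'}$ can be output at the end with packing at most $(1+\eps)\p$ and full covering, and the LP carries extra variables for how much optimal mass sits inside $F_{\p'}$. Your ``companion run that prioritizes cheap covering elements'' is in the same spirit, but without forbidding those elements from the DP cells you cannot add them at the end without double-counting packing weight, and neither that idea nor the Lagrangian $f+\lambda\,\bc^{\top}\one_S$ sketch is developed far enough to certify \emph{any} concrete ratio, let alone $0.353$; a Lagrangian optimum in particular gives no simultaneous guarantee on $f$ and on the covering without substantial extra work. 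So the theorem as stated is not established by the proposal: the algorithm is right in outline, but the quantitative heart of the claim is missing.
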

\vspace{5pt}
\noindent {\bf{{Our Techniques:}}}
Our main result is based on a continuous approach: first a continuous relaxation is formulated, second it is (approximately) solved, and finally the fractional solution is rounded into an integral solution.
Similarly to the previous works of \cite{chekuriVZ10-rand-exch,KST13}, which focus on the special case of only packing constraints, the heart of the algorithm lies in an enumeration preprocessing phase that chooses and discards some of the elements prior to formulating the relaxation.
The enumeration preprocessing step of \cite{chekuriVZ10-rand-exch,KST13} is remarkably simple and elegant.
It enumerates over all possible collections of large elements the optimal solution chooses, {\em i.e.}, elements whose size exceeds some fixed constant in at least one of the packing constraints and are chosen by the optimal solution.\footnote{An additional part of the preprocessing involves enumerating over collections of elements whose marginal value is large with respect to the objective $f$, however this part of the enumeration is not affected by the presence of covering constraints and thus is ignored in the current discussion.}
All remaining large elements not in the guessed collection are discarded.
This enumeration terminates in polynomial time and ensures that no large elements are left in any of the packing constraints.
Thus, once no large elements remain concentration bounds can be applied.
For the correct guess, any of the several known randomized rounding techniques can be employed (alongside a simple rescaling) to obtain an approximation of $1-\nicefrac[]{1}{e}-\eps$ (here $\eps>0$ is a constant that is used to determine which elements are considered large).
Unfortunately, this approach fails in the presence of covering constraints since an optimal solution can choose {\em many} large elements in any given covering constraint.
One can naturally adapt the above known preprocessing by enumerating over all possible collections of covering constraints that the optimal solution $O$ covers using only large elements.
However, this leads to an approximation of $1-\nicefrac[]{1}{e}-\varepsilon$ while {\em both} packing and covering constraints are violated by a multiplicative factor of $1 \pm \varepsilon$.
We aim to obtain one sided violation of the constraints, {\em i.e.}, only the covering constraints are violated by a factor of $1 - \varepsilon$ whereas the packing constraints are fully satisfied.


Avoiding constraint violation is possible in the presence of pure packing constraints~\cite{chekuriVZ10-rand-exch,KST13}. Known approaches for the latter are crucially based on \emph{removing} elements in a pre-processing and post-processing step in order to guarantee that concentration bounds hold. For mixed constraints, these known removal operations may, however, arbitrarily violate the covering constraints. Our approach aims at pre-processing the input instance via partial enumeration so as to avoid discarding elements by ensuring that the remaining elements are ``locally'' small relatively to the \emph{residual} constraints. If this property would hold scaling down the solution by a factor $1/(1+\eps)$ would be sufficient to avoid violation of the packing constraints. Unfortunately, we cannot guarantee this to hold for all constraints. Rather, for some {\em critical} constraints locally large elements may still be present. We introduce a novel enumeration process that detects these critical constraints, {\em i.e.}, constraints that are prone to violation.
Such constraints are given special attention as the randomized rounding might cause them to significantly deviate from the target value.
Unlike the previously known preprocessing method, our enumeration process handles covering constraints with much care and it takes into account the {\em actual} coverage of the optimal solution $O$ of each of the covering constraints.
Combining the above, alongside a postprocessing phase that discards large elements from critical packing constraints, suffices to yield the desired result.


We also independently present a novel purely combinatorial greedy-based dynamic programming approach that in some cases yields deterministic and considerably faster algorithms.
In our approach we maintain a table that contains greedy approximate solutions for all possible packing and covering values. Using this table we extend the simple greedy process by populating each table entry with the most profitable extension of the previous table entries. In this way we are able to simulate (in a certain sense) \emph{all} possible sequences of packing and covering values for the greedy algorithm, ultimately leading to a good feasible solution. 
To estimate the approximation factor we employ a factor-revealing linear program.
To the best of our knowledge, this is the first time a dynamic programming based approach is used for submodular optimization. 
We believe our new combinatorial dynamic programming approach is of independent interest.

\section{Preliminaries}\label{sec:Preliminaries}
In this paper we assume the standard {\em value oracle} model, where the algorithm can only access the given submodular function $f$ with queries of the form: what is $f(S)$ for a given $S$?
The running time of the algorithm is measured by the total number of value oracle queries and arithmetic operations it performs.
Additionally, let us define $f_A(S) \triangleq f(A \cup S) - f(A)$ for any subsets $A, S \subseteq \cN$.
Furthermore, let $f_A(\ell) \triangleq f_A(\{\ell\})$.

The multiliear extension $F:[0,1]^{\cN}\rightarrow \mathbb{R}_+$ of a given set function $f:2^{\cN}\rightarrow \mathbb{R}_+$ is: 
\begin{align*}
F(\bx) \triangleq \sum _{R\subseteq \cN} f(R)\prod _{\ell \in R}x_{\ell} \prod _{\ell \notin R} \left(1-x_{\ell}\right)~~~~~~~\forall \bx \in [0,1]^{\cN}.
\end{align*}
Additionally, we make use of the following theorem that provides the guarantees of the continuous greedy algorithm of \cite{CCPV11}.\footnote{We note that the actual guarantee of the continuous greedy algorithm is $(1-\nicefrac{1}{e}-o(1))$. However, for simplicity of presentation, we can ignore the $o(1)$ term due to the existence of a loss of $\varepsilon$ (for any constant $\varepsilon$) in all of our theorems.}

\begin{theorem}[Chekuri {\em et al.} \cite{CCPV11}]\label{thrm:ContGreedy}
We are given a ground set $\cN$, a monotone submodular function $f:2^{\cN}\rightarrow \mathbb{R}_+$, and a polytope $\mathcal{P}\subseteq [0,1]^{\cN}$.
If $\mathcal{P}\neq \emptyset$ and one can solve in polynomial time $\text{argmax} \left\{ \bw ^T \bx:\bx\in \mathcal{P}\right\} $ for any $ \bw \in \mathbb{R}^{\cN}$, then there exists a polynomial time algorithm that finds $\bx\in \mathcal{P}$ where $ F(\bx)\geq \left( 1-\nicefrac[]{1}{e}\right)F(\bx^*)$.
Here $\bx^*$ is an optimal solution to the problem: $ \max \left\{ F(\by):\by\in \mathcal{P}\right\}$.
\end{theorem}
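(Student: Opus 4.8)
The plan is to prove the theorem by analyzing the \emph{continuous greedy} process of \cite{CCPV11}. We follow a trajectory $\by(\tau)\in[0,1]^{\cN}$ for $\tau\in[0,1]$, starting from $\by(0)=\zero$. At time $\tau$ we form the vector $\bw(\tau)$ of partial derivatives of $F$ at $\by(\tau)$ (estimated, as discussed below), move in the direction $\mathbf{v}(\tau)\triangleq\argmax\{\bw(\tau)^T\mathbf{v}:\mathbf{v}\in\mathcal{P}\}$, i.e.\ we set $\tfrac{d}{d\tau}\by(\tau)=\mathbf{v}(\tau)$. Each step is polynomial since $\mathcal{P}\neq\emptyset$ and linear optimization over $\mathcal{P}$ is assumed solvable in polynomial time; moreover $\by(1)=\int_0^1\mathbf{v}(\tau)\,d\tau$ is an average of points of $\mathcal{P}$ and hence lies in $\mathcal{P}$, so the output is feasible.

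First I would record two structural properties of $F$. Since $F$ is multilinear, $\partial^2F/\partial x_\ell^2=0$; and submodularity of $f$ gives $\partial^2F/\partial x_\ell\partial x_{\ell'}\le 0$ for $\ell\neq\ell'$ (the mixed partial equals $\E[f(R+\ell+\ell')-f(R+\ell)-f(R+\ell')+f(R)]$ for an appropriate random $R$, which is nonpositive by diminishing returns). Consequently, for any nonnegative direction $\bd\ge\zero$ the map $\xi\mapsto F(\by+\xi\bd)$ is concave. Secondly, monotonicity of $f$ makes every partial derivative of $F$ nonnegative. Now fix $\tau$, abbreviate $\by=\by(\tau)$, and apply concavity along $\bd=(\by\vee\bx^*)-\by$, which satisfies $\zero\le\bd\le\bx^*$ coordinatewise:
\[
F(\bx^*)-F(\by)\;\le\;F(\by\vee\bx^*)-F(\by)\;\le\;\langle\nabla F(\by),\bd\rangle\;\le\;\langle\nabla F(\by),\bx^*\rangle,
\]
using monotonicity of $F$ in the first step, the first-order concavity inequality in the second, and $\nabla F(\by)\ge\zero$ together with $\bd\le\bx^*$ in the third. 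Combining this with the greedy choice $\langle\nabla F(\by),\mathbf{v}(\tau)\rangle\ge\langle\nabla F(\by),\bx^*\rangle$ yields $\tfrac{d}{d\tau}F(\by(\tau))=\langle\nabla F(\by(\tau)),\mathbf{v}(\tau)\rangle\ge F(\bx^*)-F(\by(\tau))$.

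Writing $\phi(\tau)=F(\by(\tau))$, this is the differential inequality $\phi'(\tau)\ge F(\bx^*)-\phi(\tau)$ with $\phi(0)=f(\emptyset)\ge 0$. Multiplying by $e^{\tau}$ and integrating (a Gr\"onwall-type step) gives $\phi(\tau)\ge(1-e^{-\tau})F(\bx^*)$, so at $\tau=1$ we obtain $F(\by(1))\ge(1-\nicefrac[]{1}{\e})F(\bx^*)$, which is the claimed bound.

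It remains to turn this into a genuine polynomial-time algorithm. I would discretize $[0,1]$ into $\text{poly}(n,1/\eps)$ equal steps and, at each step, estimate each partial derivative $\partial F/\partial x_\ell=\E[f(R(\by)+\ell)-f(R(\by))]$ by averaging $\text{poly}(n,1/\eps)$ independent samples $R(\by)$ drawn coordinatewise according to $\by$. The main technical obstacle is controlling the accumulated loss: a Chernoff/Hoeffding bound together with a union bound shows that with high probability all estimates are within an additive $\text{poly}(\eps)\cdot\max_\ell f(\{\ell\})$ of the truth, while multilinearity bounds the per-step discretization error (the second-order Taylor term) by $O(\delta^2)$ times a polynomial in $n$ and $\max_\ell f(\{\ell\})$; summed over all $1/\delta$ steps both effects contribute only an additive $o(1)\cdot F(\bx^*)$ loss, which — as the footnote notes — we absorb into the statement of the theorem.
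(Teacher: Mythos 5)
This statement is quoted from Chekuri {\em et al.}~\cite{CCPV11} and the paper gives no proof of its own, so there is nothing internal to compare against; your argument is precisely the standard continuous greedy analysis from that reference (feasibility of $\by(1)$ by convexity of $\cP$, concavity of $F$ along nonnegative directions, the inequality $\langle\nabla F(\by),\mathbf{v}(\tau)\rangle\geq F(\bx^*)-F(\by)$, and the Gr\"onwall step giving $1-\nicefrac[]{1}{\e}$), and it is correct. Your treatment of the discretization/sampling loss as an absorbed $o(1)$ term also matches the paper's footnote, and your analysis correctly does not require $\cP$ to be down-closed (only solvability and convexity), which is exactly the point the paper's acknowledgement highlights when applying this theorem to $\cP\cap\cC$.
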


\section{Algorithms for the \PCSM Problem}\label{sec:multi-linear-relaxation}
\vspace{5pt}
\noindent {\bf{{Preprocessing -- Enumeration with Mixed Constraints:}}}
We define a {\em guess} $D$ to be a triplet $(E_0,E_1,\mathbf{c}')$, where $E_0\subseteq \mathcal{N}$ denotes elements that are discarded, $E_1\subseteq \mathcal{N}$ denotes elements that are chosen, and $\mathbf{c}'\in \mathbb{R}_+^c$ represents a rough estimate (up to a factor of $1+\varepsilon$) of how much an optimal solution $O$ covers each of the covering constraints, {\em i.e.}, $\mathbf{C} \mathbf{1}_O$.
Let us denote by $\tilde{\mathcal{N}} \triangleq \mathcal{N} \setminus (E_0\cup E_1)$ the remaining undetermined elements with respect to guess $D$.

We would like to define when a given fixed guess $D=(E_0,E_1,\mathbf{c}')$ is {\em consistent}, and to this end we introduce the notion of {\em critical} constraints.
For the $i$\textsuperscript{th} packing constraint the residual value that can still be packed is: $(\mathbf{r}_D)_i\triangleq 1-\sum _{\ell \in E_1}\mathbf{P}_{i,\ell}$, where $\mathbf{r}_D\in \mathbb{R}^p$.
For the $j$\textsuperscript{th} covering constraint the residual value that still needs to be covered is: $(\mathbf{s}_D)_j \triangleq \max \left\{ 0,\mathbf{c}'_j-\sum _{\ell \in E_1}\mathbf{C} _{j,\ell}\right\}$, where $\mathbf{s}_D\in \mathbb{R}^c$.
A packing constraint $i$ is called {\em critical} if $(\mathbf{r}_D)_i\leq \delta$, and a covering constraint $j$ is called {\em critical} if $(\mathbf{s}_D)_j\leq \delta \mathbf{c}' _j$ ($\delta \in (0,1)$ is a parameter to be chosen later).
Thus, the collections of critical packing and covering constraints, for a given guess $D$, are given by:
$ Y_D \triangleq \{ i=1,\ldots,p: (\mathbf{r}_D)_i \leq \delta\}$ and $Z_D \triangleq \{ j=1,\ldots,c: (\mathbf{s}_D)_j\leq \delta \mathbf{c}' _j\}$.
Moreover, elements are considered {\em large} if their size is at least some factor $\alpha$ of the residual value of some non-critical constraint ($\alpha \in (0,1)$ is a parameter to be chosen later).
Formally, the collection of large elements with respect to the packing constraints is defined as $P_D\triangleq \{ \ell \in \tilde{\mathcal{N}}: \exists i\ \notin Y_D \text{ s.t. } \mathbf{P}_{i,\ell} \geq \alpha (\mathbf{r}_D)_i\}$, and the collection of large elements with respect to the covering constraints is defined as $C_D=\{ \ell \in \tilde{\mathcal{N}}:\exists j\notin Z_D \text{ s.t. } \mathbf{C}_{j,\ell}\geq \alpha (\mathbf{s}_D)_j\}$.
It is important to note, as previously mentioned, that the notion of a large element is with respect to the residual constraint, as opposed to previous works \cite{chekuriVZ10-rand-exch,KST13} where the definition is with respect to the original constraint.
Let us now formally define when a guess $D$ is called {\em consistent}.
\begin{defn}\label{Definition:consistent}
A guess $D=(E_0,E_1,\mathbf{c}')$ is {\em consistent} if: $(1)$ $E_0\cap E_1 = \emptyset$; $(2)$ $\mathbf{c}'\geq \mathbf{1}_c$; $(3)$ $\mathbf{P} \mathbf{1}_{E_1} \leq \mathbf{1}_p$; and $(4)$ $P_D=C_D=\emptyset$.
\end{defn}
Intuitively, requirement $(1)$ states that a variable cannot be both chosen and discarded, $(2)$ states that the each covering constraint is satisfied by an optimal solution $O$, $(3)$ states the chosen elements $E_1$ do not violate the packing constraints, and $(4)$ states that no large elements remain in any non-critical constraint.

Finally, we need to define when a consistent guess is {\em correct}.
Assume without loss of generality that $O=\{ o_1,\ldots,o_k\}$ and the elements of $O$ are ordered greedily: $ f_{\{ o_1,\ldots,o_i\}}( o_{i+1}) \leq f_{ \{o_1,\ldots,o_{i-1}\}} ( o_i)$ for every $i=1,\ldots,k-1$. In the following definition $\gamma$ is a parameter to be chosen later.
\begin{defn}\label{Definition:correct}
A consistent guess $D=(E_0,E_1,\mathbf{c}')$ is called {\em correct} with respect to $O$ if: $(1)$ $E_1\subseteq O$; $(2)$ $E_0\subseteq \bar{O}$; $(3)$ $\{ o_1,\ldots,o_{\gamma}\}\subseteq E_1$; and $(4)$ $\mathbf{c}' \leq \mathbf{C} \mathbf{1}_O \leq (1+\delta ) \mathbf{c}'$.
\end{defn}
Intuitively, requirement $(1)$ states that the chosen elements $E_1$ are indeed elements of $O$, $(2)$ states that no element of $O$ is discarded, $(3)$ states that the $\gamma$ elements of largest marginal value are all chosen, and $(4)$ states that $\mathbf{c}'$ represents (up to a factor of $1+\delta$) how much $O$ actually covers each of the covering constraints.

We are now ready to present our preprocessing algorithm (Algorithm \ref{alg:OneSidedGuessesEnumeration}), which produces a list $\mathcal{L}$ of consistent guesses that is guaranteed to contain at least one guess that is also correct with respect to $O$.
Lemma \ref{lem:os-correct-guess} summarizes this, its proof appears in the appendix.

\begin{algorithm*}[H] \label{alg:OneSidedGuessesEnumeration}
	\caption{Preprocessing}
	$\cL \leftarrow \varnothing$ \\
	\ForEach{$j_1,\dots,j_c\in\{0,1,\dots,\lceil\log_{1+\delta} n\rceil\}$}{
		Let $\bc' = ((1+\delta)^{j_1},\dots,(1+\delta)^{j_c})$\\
		\ForEach{$E_1\subseteq \cN$ such that $|E_1| \leq \gamma + \nicefrac{(p+c)}{(\alpha \delta)}$}{
			Let $H = (\varnothing, E_1, \bc')$\\
			Let $E_0 = \{\ell \in \cN \setminus E_1: f_{E_1}(\ell) > (\gamma^{-1}) f(E_1)\} \cup P_H \cup C_H$\label{alg:E_0}\\
		        Set $D=(E_0, E_1, \bc')$\\
		        If $D$ is consistent according to Definition \ref{Definition:consistent} add it to $\cL$.
		}
	}
	Output $\cL$.
\end{algorithm*}
\begin{lemma}\label{lem:os-correct-guess}
The output $\mathcal{L}$ of Algorithm~\ref{alg:OneSidedGuessesEnumeration} contains at least one guess $D$ that is correct with respect to some optimal solution $O$.
\end{lemma}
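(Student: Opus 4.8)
The goal is to exhibit, for a suitable optimal solution $O$ (ordered greedily as in the text), one iteration of Algorithm~\ref{alg:OneSidedGuessesEnumeration} that produces a guess $D$ satisfying Definitions~\ref{Definition:consistent} and~\ref{Definition:correct} simultaneously. The plan is to first pin down the ``target'' guess we want the algorithm to hit, then verify that some loop iteration actually constructs it, and finally check the four consistency conditions.

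\textbf{Step 1: Choosing the target.} Fix an optimal solution $O=\{o_1,\dots,o_k\}$ ordered so that the greedy marginals are non-increasing. For the covering estimate, for each $j$ let $j_\ell$ be the integer with $(1+\delta)^{j_\ell}\le (\mathbf C\mathbf 1_O)_j \le (1+\delta)^{j_\ell+1}$ (using that $(\mathbf C\mathbf 1_O)_j\ge 1$ since $O$ is feasible, and that it is at most $n\cdot\max_\ell \mathbf C_{j,\ell}$, so $j_\ell$ lies in the enumerated range after rescaling — this is where one must be slightly careful about the bound $\lceil\log_{1+\delta}n\rceil$, possibly rescaling the matrix $\mathbf C$ beforehand). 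Set $\mathbf c'=((1+\delta)^{j_1},\dots,(1+\delta)^{j_c})$; this automatically gives condition~(2) of consistency ($\mathbf c'\ge\mathbf 1_c$) and condition~(4) of correctness ($\mathbf c'\le \mathbf C\mathbf 1_O\le(1+\delta)\mathbf c'$).

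\textbf{Step 2: Choosing $E_1$.} The set $E_1$ should consist of (a) the $\gamma$ greedily-largest elements $o_1,\dots,o_\gamma$ of $O$, plus (b) all elements of $O$ that are ``locally large'' with respect to the residual packing or covering constraints. The key combinatorial claim — and I expect this to be the main obstacle — is that the number of such locally-large elements of $O$ is bounded by roughly $(p+c)/(\alpha\delta)$, so that $|E_1|\le\gamma+(p+c)/(\alpha\delta)$ and hence $E_1$ is indeed enumerated. The argument should go: for a non-critical packing constraint $i$, its residual capacity $(\mathbf r_D)_i>\delta$, and $O$ packs total weight at most $1$ into it; each element that is large for constraint $i$ contributes at least $\alpha(\mathbf r_D)_i>\alpha\delta\cdot(\text{something})$... more precisely one wants: $O$ can contain at most $O(1/(\alpha\delta))$ elements that are large relative to residual constraint $i$. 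The subtlety is that ``residual'' depends on $E_1$ itself, so this has to be set up as a fixed-point / iterative selection: start with $E_1^{(0)}=\{o_1,\dots,o_\gamma\}$, repeatedly add any element of $O$ that is currently locally large, and argue the process terminates quickly because each added element consumes a $\ge\alpha\delta$ fraction of some residual budget (for packing) or because only boundedly many covering constraints can still be non-critical. A clean way is to note that once a constraint becomes critical it stays critical as $E_1$ grows, so each constraint ``absorbs'' at most $O(1/(\alpha\delta))$ elements before becoming critical, and there are $p+c$ constraints. One must also confirm condition~(3) of consistency, $\mathbf P\mathbf 1_{E_1}\le\mathbf 1_p$, which is immediate since $E_1\subseteq O$ and $O$ is feasible for the packing constraints.

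\textbf{Step 3: Matching the loop iteration and checking the rest.} With $\mathbf c'$ and $E_1$ as above, consider the loop iteration of Algorithm~\ref{alg:OneSidedGuessesEnumeration} with exactly these values. The algorithm forms $H=(\varnothing,E_1,\mathbf c')$ and sets $E_0=\{\ell\notin E_1: f_{E_1}(\ell)>\gamma^{-1}f(E_1)\}\cup P_H\cup C_H$. Since $E_1$ has no undetermined elements removed, $\tilde{\mathcal N}$ for $H$ is $\mathcal N\setminus E_1$, and $P_H,C_H$ are exactly the locally-large remaining elements; by construction of $E_1$ in Step~2, none of these lies in $O$ (we put every locally-large element of $O$ into $E_1$), so $E_0\cap O=\varnothing$, giving correctness condition~(2). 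For the marginal-value part of $E_0$: by greedy ordering and submodularity, $f_{E_1}(o_{\gamma+1})\le f_{\{o_1,\dots,o_{\gamma-1}\}}(o_\gamma)\le \gamma^{-1}f(\{o_1,\dots,o_\gamma\})\le\gamma^{-1}f(E_1)$ (the middle inequality because the $\gamma$-th largest marginal is at most the average of the first $\gamma$), and the same holds for every $o_i\in O\setminus E_1$ by monotonicity of marginals under the greedy order; hence no element of $O$ is thrown into $E_0$ on that account either. So $E_0\subseteq\bar O$ (condition~(2) of correctness), $E_1\subseteq O$ (condition~(1)), and $\{o_1,\dots,o_\gamma\}\subseteq E_1$ (condition~(3)) all hold. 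Finally, for consistency: $E_0\cap E_1=\varnothing$ by definition of $E_0$ (condition~(1)); $\mathbf c'\ge\mathbf 1_c$ and $\mathbf P\mathbf 1_{E_1}\le\mathbf 1_p$ from Steps~1--2 (conditions~(2),(3)); and $P_D=C_D=\varnothing$ because $D$ shares $E_1,\mathbf c'$ with $H$ and additionally discards $P_H\cup C_H$, so no locally-large element remains (condition~(4)) — here one should double-check that moving elements from $\tilde{\mathcal N}$ into $E_0$ does not shrink any residual and therefore cannot create new large elements, which follows since residuals $\mathbf r_D,\mathbf s_D$ depend only on $E_1$, not $E_0$. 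Therefore $D$ is both consistent and correct, and it is added to $\mathcal L$, completing the proof.
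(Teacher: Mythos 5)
Your proposal is correct in outline and reaches the same endpoints as the paper's proof (the same choice of $\mathbf{c}'$ on the $(1+\delta)$-grid, $E_1$ consisting of the top-$\gamma$ marginals plus the large elements of $O$, the same averaging argument showing the marginal-value part of $E_0$ excludes $O$, and the same observation that the residuals depend only on $E_1$, so $P_D\subseteq P_H\subseteq E_0$ gives consistency property $(4)$), but you handle the step you yourself identify as the main obstacle --- which large elements of $O$ to put into $E_1$ --- by a genuinely different device. You build $E_1$ by an iterative fixed-point process (repeatedly add any element of $O$ that is large relative to the \emph{current} residuals) and bound its size by charging each addition to a constraint that is non-critical at that moment; this can be made rigorous, e.g.\ each charged addition removes at least an $\alpha$-fraction of a residual that is still above its criticality threshold, so a given constraint absorbs at most $\log_{1/(1-\alpha)}(1/\delta)\le 1/(\alpha\delta)$ additions while non-critical, and the final $E_1$ fits the enumerated bound $\gamma+(p+c)/(\alpha\delta)$ (your alternative absolute count is slightly loose on the covering side, where the budget is $(1+\delta)\mathbf{c}'_j$ rather than $\mathbf{c}'_j$ --- a slack the paper's own counting of $O_L$ also glosses over). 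The paper instead sidesteps the fixed-point issue entirely: it defines $O_L$ by \emph{absolute} thresholds, $\mathbf{P}_{i,\ell}\ge\alpha\delta$ or $\mathbf{C}_{j,\ell}\ge\alpha\delta\mathbf{c}'_j$, so $O_L$ does not depend on $E_1$, its cardinality follows in one line from feasibility of $O$ and $\mathbf{C}\mathbf{1}_O\le(1+\delta)\mathbf{c}'$, and the property that your process guarantees by construction ($O\cap P_H=O\cap C_H=\emptyset$) is proved by a short contradiction: an element of $O\setminus E_1$ that is large for a non-critical residual constraint would have absolute size at least $\alpha$ times a residual exceeding $\delta$ (resp.\ $\delta\mathbf{c}'_j$), hence would already belong to $O_L\subseteq E_1$. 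So the paper's static definition buys a one-shot size bound and a shorter verification, while your iterative construction buys a tighter bound on $|E_1|$ at the cost of the termination/charging analysis; both are valid, and the remainder of your verification (the greedy-averaging chain and the four consistency properties) matches the paper's.
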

\begin{proof}
Fix any optimal solution $O$.
At least one of the vectors $\bc'$ enumerated by Algorithm~\ref{alg:OneSidedGuessesEnumeration} satisfies property~$(4)$ in Definition~\ref{Definition:correct} with respect to $O$.
Let us fix an iteration in which such a $\bc'$ is enumerated.
Define the ``large'' elements $O$ has with respect to this $\bc'$:
\begin{align}
O_L \triangleq \left\{ \ell \in O:\exists i \text{ s.t. } \mathbf{P}_{i,\ell}\geq \alpha \delta\right\} \cup \left\{ \ell \in O:\exists j \text{ s.t. } \mathbf{C}_{j,\ell}\geq \alpha \delta\bc' _j\right\} .\label{definition:LargeOPT}
\end{align}
Denote by $O_{\gamma}\triangleq \{ o_1,\ldots,o_{\gamma}\}$ the $\gamma$ elements of $O$ with the largest marginal (recall the ordering of $O$ satisfies: $ f_{\{ o_1,\ldots,o_i\}}( o_{i+1}) \leq f_{ \{o_1,\ldots,o_{i-1}\}} ( o_i)$).
Let us fix $E_1\triangleq O_{\gamma}\cup O_L$ and choose $ H\triangleq (\emptyset, E_1,\bc')$.
Clearly, $|E_1|\leq \gamma + \nicefrac[]{(p+c)}{(\alpha \delta)}$ since $|O_{\gamma}|=\gamma$ and $|O_L|\leq \nicefrac[]{(p+c)}{(\alpha\delta)}$.
Hence, we can conclude that $H$ is considered by Algorithm~\ref{alg:OneSidedGuessesEnumeration}.

We fix the iteration in which the above $H$ is considered and show that the resulting $D=(E_0,E_1,\bc')$ of this iteration is correct and consistent (recall that Algorithm~\ref{alg:OneSidedGuessesEnumeration} chooses $E_0=\{\ell \in \cN \setminus E_1: f_{E_1}(\ell) > (\gamma^{-1}) f(E_1)\} \cup P_H \cup C_H$).
The following two observations suffice to complete the proof:

\noindent {\em Observation 1:} $\forall \ell \in O \cup (\mathcal{N}\setminus E_0)$: $ f_{E_1}(\ell)\leq \gamma ^{-1}f(E_1)$.

\noindent {\em Observation 2:} $ O\cap P_H = \emptyset$ and $O\cap C_H=\emptyset $.

\noindent Clearly properties $(1)$ and $(3)$ of Definition~\ref{Definition:correct} are satisfied by construction of $E_1$, $H$, and subsequently $D$.
Property $(2)$ of Definition~\ref{Definition:correct} requires the above two observations, which together imply that no element of $O$ is added to $E_0$ by Algorithm~\ref{alg:OneSidedGuessesEnumeration}.
Thus, all four properties of Definition~\ref{Definition:correct} are satisfied, and we focus on showing that the above $D$ is consistent according to Definition~\ref{Definition:consistent}.
Property $(1)$ of Definition~\ref{Definition:consistent} follows from properties $(1)$ and $(2)$ of Definition~\ref{Definition:correct}.
Property $(2)$ of Definition~\ref{Definition:consistent} follows from the choice of $\bc'$.
Property $(3)$ of Definition~\ref{Definition:consistent} follows from the feasibility of $O$ and property $(1)$ of Definition~\ref{Definition:correct}.
Lastly, property $(4)$ of Definition~\ref{Definition:consistent} follows from the fact that $P_D\subseteq P_H$ and that $P_H\subseteq E_0$, implying that $P_D=\emptyset$ (the same argument applies to $C_D$).
We are left with proving the above two observations.

We start with proving the first observation.
Let $\ell \in O\cup (\mathcal{N}\setminus E_0)$.
If $ \ell \in \mathcal{N}\setminus E_0$ then the observation follows by the construction of $E_0$ in Algorithm~\ref{alg:OneSidedGuessesEnumeration}.
Otherwise, $\ell \in O$.
If $ \ell \in O_{\gamma}$ then we have that $f_{E_1}(\ell)=0$ since $O_{\gamma}\subseteq E_1$.
Otherwise $\ell \in O\setminus O_{\gamma}$.
Note: $$ f_{E_1}(\ell) \leq f_{O_{\gamma}}(\ell)\leq \gamma ^{-1}f(O_{\gamma})\leq \gamma ^{-1}f(E_1).$$
The first inequality follows from diminishing returns and $O_{\gamma}\subseteq E_1$.
The third and last inequality follows from the monotonicity of $f$ and $O_{\gamma}\subseteq E_1$.
Let us focus on the second inequality, and denote $O=\{ o_1,\ldots,o_k\}$ and the sequence $a_i\triangleq f_{\{ o_1,\ldots,o_{i-1}\}}(o_i)$.
The sequence of $a_i$s is monotone non-increasing by the ordering of $O$ and the monotonicity of $f$ implies that all $a_i$s are non-negative.
Note that $a_1+\ldots+a_{\gamma}=f(O_{\gamma})$, thus implying that $ f_{O_{\gamma}}(\ell)\leq \gamma ^{-1}f(O_{\gamma})$ for every $ \ell \in \{ o_{\gamma +1},\ldots,o_k\}$ (otherwise $a_1+\ldots + a_{\gamma} > f(O_{\gamma})$).
The second inequality above, {\em i.e.}, $ f_{O_{\gamma}}(\ell)\leq \gamma ^{-1}f(O_{\gamma})$, now follows since $\ell \in O\setminus O_{\gamma} = \{ o_{\gamma +1},\ldots,o_k\}$.

Let us now focus on proving the second observation.
Let us assume on the contrary that there is an element $\ell$ such that $\ell\in O\cap P_H$.
Recall that $P_H=\{ \ell\in \mathcal{N}\setminus E_1:\exists i\notin Y_H \text{ s.t. } \mathbf{P}_{i,\ell}\geq \alpha (\mathbf{r}_H)_i\}$ where $Y_H=\{ i:(\mathbf{r}_H)_i\leq \delta\}$.
This implies that $ \ell \in O\setminus E_1$, namely that $\ell \notin O_L$, from which we derive that for {\em all} packing constraint $i$ we have that $\mathbf{P}_{i,\ell}<\alpha \delta$.
Since $\ell \in P_H$ we conclude that there exists a packing constraint $i$ for which $ (\mathbf{r}_H)_i\leq \mathbf{P}_{i,\ell}/\alpha$.
Combining the last two bounds we conclude that $(\mathbf{r}_H)_i < \delta$, which implies that the $i$\textsuperscript{th} packing constraint is critical, {\em i.e.}, $i\in Y_H$.
This is a contradiction, and hence $O\cap P_H=\emptyset$.
A similar proof applies to $C_H$ and the covering constraints. \hfill $\square$
\end{proof}

\vspace{5pt}
\noindent {\bf{{Randomized Rounding:}}}
Before presenting our main rounding algorithm, let us define the residual problem we are required to solve given a consistent guess $D$.
First, the residual objective $g:2^{\tilde{\mathcal{N}}}\rightarrow \mathbb{R}_+$ is defined as: $g(S)\triangleq f(S\cup E_1) -f(E_1)$ for every $S\subseteq \tilde{\mathcal{N}}$.
Clearly, $g$ is submodular, non-negative, and monotone.
Second, let us focus on the feasible domain and denote by $\tilde{\mathbf{P}}$ ($\tilde{\mathbf{C}}$) the submatrix of $\mathbf{P}$ ($\mathbf{C}$) obtained by choosing all the columns in $\tilde{\mathcal{N}}$.
Hence, given $D=(E_0,E_1,\mathbf{c}')$ the residual problem is:
\begin{align}
\max \{ g(S) + f(E_1) : S\subseteq \tilde{\mathcal{N}}, \tilde{\mathbf{P}}\mathbf{1}_S\leq \mathbf{r}_D, \tilde{\mathbf{C}}\mathbf{1}_S\geq \mathbf{s}_D\}.\label{OneSidedObjectiveResidual}
\end{align}
In order to formulate the multilinear relaxation of (\ref{OneSidedObjectiveResidual}), consider the following two polytopes:
$\mathcal{P}\triangleq \{ \bx \in [0,1]^{\tilde{\mathcal{N}}}:\tilde{\mathbf{P}}\bx \leq \mathbf{r}_D\}$ and $ \mathcal{C}\triangleq \{ \bx \in [0,1]^{\tilde{\mathcal{N}}}:\tilde{\mathbf{P}}\bx \geq \mathbf{s}_D\}$.
Let $G:[0,1]^{\tilde{\mathcal{N}}}\rightarrow \mathbb{R}_+$ be the multilinear extension of $g$.
Thus, the continuous multilinear relaxation of (\ref{OneSidedObjectiveResidual}) is:
\begin{align}
\max\left\{ f(E_1)+G(\bx):\bx\in [0,1]^{\tilde{\mathcal{N}}}, \bx \in \cP\cap \cC\right\}.\label{ObjectiveContinuous}
\end{align}

Our algorithm performs randomized rounding of a fractional solution to the above relaxation (\ref{ObjectiveContinuous}).
However, this is not enough to obtain our main result and an additional post-processing step is required in which additional elements are discarded.
Since covering constraints are present, one needs to perform the post-processing step in great care.
To this end we denote by $L_D$ the collection of large elements with respect to some critical packing constraint: $L_D\triangleq \{ \ell \in \tilde{\mathcal{N}}:\exists i\in Y_D \text{ s.t. } \mathbf{P}_{i,\ell}\geq \beta \mathbf{r}_D\}$ ($\beta \in (0,1)$ is a parameter to be chosen later).
Intuitively, we would like to discard elements in $L_D$ since choosing any one of those will incur a violation of a packing constraint.
We are now ready to present our rounding algorithm (Algorithm~\ref{alg:OneSidedMainAlgorithm}).

\begin{algorithm*}[H]
	\caption{$(f,\cN,\bP,\bC)$}\label{alg:OneSidedMainAlgorithm}
	Use Algorithm \ref{alg:OneSidedGuessesEnumeration} to obtain a list of guesses $\cL$. \\
	\ForEach{$D=(E_0, E_1,\bc') \in \cL$}{
          Use Theorem~\ref{thrm:ContGreedy} to compute an approximate solution $\bx^*$ to problem (\ref{ObjectiveContinuous}).\label{alg:OneSidedMainAlgorithm-ContGreedy}\\
		Scale down $\x^*$ to $\bar{\x} = \x^*/(1+\delta)$\\
		Let $R_D$ be such that for every $\ell \in \tilde{\mathcal{N}}$ independently: $\Pr \left[ \ell \in R_D\right] = \bar{\x}_{\ell}$.\\
		Let $R'_D = R_D \setminus L_D$.\\
		$S_D \leftarrow E_1 \cup R'_D$.
		}
	$S_{alg} \leftarrow \argmax \left\{f(S_D) : D\in \cL, \bP \cdot \one_{S_D} \leq \one_p, \bC \cdot \one_{S_D} \geq (1-\eps)\one _c \right\}$
\end{algorithm*}
We note that Line $6$ of Algorithm \ref{alg:OneSidedMainAlgorithm} is the post-processing step where all elements of $L_D$ are discarded.
Our analysis of Algorithm \ref{alg:OneSidedMainAlgorithm} shows that in an iteration a correct guess $D$ is examined, with a constant probability, $S_D$ satisfies the packing constraints, violates the covering constraint by only a fraction of $\varepsilon$, and $f(S_D)$ is sufficiently high.

The following lemma gives a lower bound on the value of the fractional solution $\bar{\mathbf{x}}$ computed by Algorithm \ref{alg:OneSidedMainAlgorithm} (for a full proof refer to Lemma~\ref{apx-lem:os-quality-frac}, Appendix~\ref{apx-subsec:one-sided-algo}).

\begin{lemma}\label{lem:os-quality-frac}
  If $D\in \mathcal{L}$ is correct then in the iteration of Algorithm~\ref{alg:OneSidedMainAlgorithm} it is examined the resulting $\bar{\x}$ satisfies: $ G(\bar{\x})\geq (1-\nicefrac{1}{\e}-\delta)\opt-f(E_1)$.
\end{lemma}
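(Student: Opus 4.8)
The plan is to produce an explicit feasible point of the residual relaxation~(\ref{ObjectiveContinuous}) whose objective value is already $\opt$, to push it through the continuous greedy algorithm, and then to pay only the scaling factor $1/(1+\delta)$ incurred in Algorithm~\ref{alg:OneSidedMainAlgorithm}. The natural candidate is $\x_O \triangleq \one_{O\setminus E_1}$, regarded as a point of $[0,1]^{\tilde{\mathcal{N}}}$; this is legitimate because $D$ being correct gives $E_1\subseteq O$ and $E_0\subseteq\bar{O}$, so $O\cap E_0=\emptyset$ and hence $O\setminus E_1 = O\cap\tilde{\mathcal{N}}\subseteq\tilde{\mathcal{N}}$.

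First I would verify $\x_O\in\cP\cap\cC$. For packing, write $\bP\one_O = \bP\one_{E_1} + \tP\one_{O\setminus E_1}$; feasibility of $O$ gives $\bP\one_O\leq\one_p$, so $\tP\x_O \leq \one_p-\bP\one_{E_1} = \rd$. For covering, write $\bC\one_O = \bC\one_{E_1} + \tC\one_{O\setminus E_1}$; property~$(4)$ of a correct guess gives $\bc'\leq\bC\one_O$, so $\tC\x_O \geq \bc'-\bC\one_{E_1}$, and since trivially $\tC\x_O\geq\zero$ we get $\tC\x_O \geq \max\{\zero,\bc'-\bC\one_{E_1}\} = \sd$. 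Moreover $G(\x_O) = g(O\setminus E_1) = f((O\setminus E_1)\cup E_1)-f(E_1) = \opt - f(E_1)$ using $E_1\subseteq O$. Hence the optimum of~(\ref{ObjectiveContinuous}) is at least $f(E_1) + \bigl(\opt - f(E_1)\bigr) = \opt$.

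Next I would invoke Theorem~\ref{thrm:ContGreedy} with the function $g$ and the polytope $\cP\cap\cC$: the latter is nonempty (it contains $\x_O$) and linear optimization over it is a polynomially sized linear program, so continuous greedy returns the point $\x^*$ of Line~\ref{alg:OneSidedMainAlgorithm-ContGreedy} with $G(\x^*)\geq(1-\nicefrac[]{1}{\e})\max_{\y\in\cP\cap\cC}G(\y)\geq(1-\nicefrac[]{1}{\e})(\opt-f(E_1))$ (the lower-order term of Theorem~\ref{thrm:ContGreedy} is absorbed as in its footnote). Then I would account for the scaling: since $g$ is submodular with $g(\emptyset)=0$, the map $s\mapsto G(s\x^*)$ is concave on $[0,1]$ (the Hessian of a multilinear extension has zero diagonal and nonpositive off-diagonal entries, so its quadratic form on the nonnegative vector $\x^*$ is $\leq 0$) and vanishes at $s=0$, whence $G(\bar{\x})=G\bigl(\x^*/(1+\delta)\bigr)\geq\tfrac{1}{1+\delta}G(\x^*)\geq\tfrac{1}{1+\delta}(1-\nicefrac[]{1}{\e})(\opt-f(E_1))$. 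A short computation using $\tfrac{1}{1+\delta}\geq 1-\delta$, $(1-\delta)(1-\nicefrac[]{1}{\e})\geq 1-\nicefrac[]{1}{\e}-\delta$, $(1-\delta)(1-\nicefrac[]{1}{\e})\leq 1$, and $0\leq f(E_1)\leq f(O)$ (monotonicity, $E_1\subseteq O$) then turns this into $G(\bar{\x})\geq(1-\nicefrac[]{1}{\e}-\delta)\opt - f(E_1)$.

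I expect the only genuine obstacle to be the feasibility check for $\x_O$ on the covering side: it is essential to use property~$(4)$ of a correct guess, i.e. the lower bound $\bc'\leq\bC\one_O$ rather than merely $\bc'\geq\one_c$, and to handle the truncation at $\zero$ in the definition of $\sd$ with care. The continuous-greedy invocation and the scaling arithmetic are routine once the feasible point of value $\opt$ is in hand.
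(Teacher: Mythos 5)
Your proposal is correct and follows essentially the same route as the paper: exhibit $\one_{O\setminus E_1}$ as a feasible point of the residual relaxation of value $f(O)-f(E_1)$ (using properties $(1)$, $(2)$ and, as you rightly stress for the covering side, $(4)$ of a correct guess), invoke Theorem~\ref{thrm:ContGreedy}, and absorb the $1/(1+\delta)$ scaling via concavity of $G$ along the nonnegative direction $\x^*$. The only (harmless) difference is that you state the continuous-greedy guarantee for $G$ alone rather than for $f(E_1)+G$ as in~(\ref{ObjectiveContinuous}), which yields a slightly stronger intermediate bound and the same conclusion after your final arithmetic.
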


Let us now fix an iteration of Algorithm \ref{alg:OneSidedMainAlgorithm} for which $D$ is not only consistent but also correct (the existence of such an iteration is guaranteed by Lemma~\ref{lem:os-correct-guess}).
Intuitively, Algorithm~\ref{alg:OneSidedMainAlgorithm} performs a straightforward randomized rounding where each element $\ell\in \tilde{\mathcal{N}}$ is independently chosen with a probability that corresponds to its fractional value in the solution of the multilinear relaxation (\ref{ObjectiveContinuous}).
However, two key ingredients in Algorithm~\ref{alg:OneSidedMainAlgorithm} are required in order to achieve an $\varepsilon$ violation of the covering constraints and no violation of the packing constraints: $(1)$ {\em scaling}: prior to the randomized rounding $\bx ^*$ is scaled down by a factor $(1+\delta)$ (line $4$ in Algorithm~\ref{alg:OneSidedMainAlgorithm}); and $(2)$ {\em post-processing}: after the randomized rounding all chosen large elements in a critical packing constraint are discarded (line $6$ in Algorithm~\ref{alg:OneSidedMainAlgorithm}).

The first ingredient above (scaling of $\bx ^*$) allows us to prove using standard concentration bounds that with good probability all non-critical packing constraints are not violated.
However, when considering critical packing constraints this does not suffice and the second ingredient above (discarding $L_D$) is required to show that with good probability even the critical packing constraints are not violated.
While discarding $L_D$ is beneficial when considering packing constraints, it might have a destructive effect on both the covering constraints and the value of the objective.
To remedy this we argue that with high probability only few elements in $L_D$ are actually discarded, {\em i.e.}, $|R_D\cap L_D|$ is sufficiently small.
Combining the latter fact with the assumption that the current guess $D$ is not only consistent but also correct, according to Definition~\ref{Definition:correct}, allows us to prove the following lemma (for a full proof refer to Lemma~\ref{apx-lem:MainApproximation}, Appendix~\ref{apx-subsubsec:MainLemma}).

\begin{lemma}\label{lem:RoundingFinal}
For any constant $\eps>0$, choose constants $\alpha=\delta ^3$, $\beta=\nicefrac[]{\delta^2}{(3b)}$, $\gamma = \nicefrac[]{1}{\delta ^3}$, and $\delta < \min \{ \nicefrac[]{1}{(15(p+c))},\nicefrac[]{\eps}{(2+30(p+c)^2)}\}$. With a probability of at least $\nicefrac[]{1}{2}$ Algorithm~\ref{alg:OneSidedMainAlgorithm} outputs a solution $S_{alg} $ satisfying: $(1)$ $\mathbf{P}\mathbf{1}_{S_{alg}}\leq \mathbf{1}_p$; $(2)$ $\mathbf{C}\mathbf{1}_{S_{alg}}\geq (1-\eps)\mathbf{1}_c$; and $(3)$ $ f(S_{alg})\geq (1-\nicefrac[]{1}{e}-\varepsilon)f(O)$.
\end{lemma}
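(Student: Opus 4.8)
The statement of Lemma~\ref{lem:RoundingFinal} decouples into two largely independent tasks: (i) showing that the post-processed random set $S_D = E_1 \cup R'_D$, for a correct guess $D$, simultaneously satisfies all packing constraints exactly and all covering constraints up to a $(1-\eps)$ factor, with probability at least some absolute constant; and (ii) showing that conditioned on (a slightly strengthened version of) these feasibility events, $f(S_D) \ge (1-\nicefrac{1}{e}-\eps)f(O)$. I would fix the iteration of Algorithm~\ref{alg:OneSidedMainAlgorithm} in which a correct guess $D$ (guaranteed by Lemma~\ref{lem:os-correct-guess}) is examined, and argue that in that iteration the constructed $S_D$ has all three properties with probability $\ge 1/2$; since $S_{alg}$ is selected as the best feasible $S_D$ over all $D \in \cL$, it inherits the guarantee. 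The free parameters $\alpha = \delta^3$, $\beta = \delta^2/(3b)$ (I'd read $b = p+c$), $\gamma = 1/\delta^3$, and the upper bound on $\delta$ are chosen precisely so that the union bound over all constraint-violation bad events closes with room to spare; I would keep them symbolic as long as possible and only plug in at the end.

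**Feasibility of the packing constraints.** Split the packing constraints into non-critical ($i \notin Y_D$) and critical ($i \in Y_D$). For non-critical $i$: since $D$ is consistent, $P_D = \emptyset$, so every remaining element $\ell \in \tilde\cN$ has $\bP_{i,\ell} < \alpha (\rd)_i$, i.e. the per-element load is tiny relative to the residual budget $(\rd)_i$. The rounded vector $\bar\x = \x^*/(1+\delta)$ satisfies $\E[\tilde\bP \one_{R_D}]_i = (\tilde\bP \bar\x)_i \le (\rd)_i/(1+\delta)$, so a Chernoff/Bernstein-type concentration bound (scaled by $(\rd)_i$, using the $\alpha$-smallness to control the variance term) gives $\Pr[(\tilde\bP\one_{R_D})_i > (\rd)_i] \le$ something like $\exp(-\Omega(\delta^2/\alpha)) = \exp(-\Omega(1/\delta))$, which is $\le$ (say) $\delta/(4p)$ for $\delta$ small. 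Discarding $L_D$ to form $R'_D$ only decreases loads, so non-critical packing constraints survive. For critical $i$: the residual $(\rd)_i \le \delta$ is minuscule, and every element $\ell$ with $\bP_{i,\ell} \ge \beta (\rd)_i$ is in $L_D$ and hence removed; so the only surviving elements have $\bP_{i,\ell} < \beta(\rd)_i$, and one needs at least $1/\beta$ of them to be chosen to overflow. Again concentration (now the expectation may itself be small, but the combinatorial "need $\ge 1/\beta$ successes among small-probability events'' argument, or a direct Chernoff bound on the count) yields failure probability $\exp(-\Omega(1/\delta))$; combined with $E_1$ feasibility ($\bP\one_{E_1} \le \one_p$ trivially when $D$ is consistent) the total load $\bP\one_{S_D} = \bP\one_{E_1} + \tilde\bP\one_{R'_D}$ stays $\le \one_p$. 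Union bound over all $p$ packing constraints costs at most $p \cdot \exp(-\Omega(1/\delta)) \le 1/10$.

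**Feasibility of the covering constraints and the objective.** For covering constraint $j$: if $j$ is non-critical, consistency gives $C_D = \emptyset$, so each remaining element contributes $< \alpha(\sd)_j$, and $\E[\tilde\bC\one_{R_D}]_j = (\tilde\bC\bar\x)_j \ge (\sd)_j/(1+\delta) \ge (1-\delta)(\sd)_j$; concentration gives $(\tilde\bC\one_{R_D})_j \ge (1-2\delta)(\sd)_j$ except with probability $\exp(-\Omega(1/\delta))$. Then I must absorb the loss from removing $L_D$: the key sub-claim is that $\E[|R_D \cap L_D|]$ is small — elements of $L_D$ are large in \emph{some critical} packing constraint, whose residual budget is $\le \delta$, so a counting/packing argument bounds $|L_D|$ (or rather the total fractional mass on $L_D$) by $O((p)\cdot \delta / \beta)\cdot$(something), and hence with good probability $R_D \cap L_D$ is so small that its total $\bC_{j,\cdot}$-weight is $\le \delta (\sd)_j$; combined with the $\delta$-slack from $\bc'$'s approximation in Definition~\ref{Definition:correct}(4) and $(\sd)_j \le \bc'_j \le \mathbf{C}\one_O)_j$, one gets $(\tilde\bC\one_{R'_D} + \bC\one_{E_1})_j \ge (1-\eps)(\mathbf{C}\one_O)_j \ge (1-\eps)$. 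If $j$ is critical, then $(\sd)_j \le \delta\bc'_j$, meaning $E_1$ alone already covers $j$ to within a $(1-\delta)$-fraction of $\bc'_j$, which (via Definition~\ref{Definition:correct}(4) and $\bc' \ge \one_c$) is $\ge (1-\eps)$ deterministically, so nothing random is needed. Union bound over $c$ covering constraints: another $\le 1/10$. Finally, the objective: Lemma~\ref{lem:os-quality-frac} gives $G(\bar\x) \ge (1-\nicefrac{1}{e}-\delta)f(O) - f(E_1)$; by a standard argument (e.g. Chernoff-type lower tail for the multilinear extension, or Markov on the deficit, as in \cite{CCPV11,chekuriVZ10-rand-exch}) the random $g(R_D) = f(E_1 \cup R_D) - f(E_1)$ concentrates around $G(\bar\x)$ from below, and removing $L_D$ loses at most $g$-value proportional to $|R_D \cap L_D|$, which is small with high probability; monotonicity of $f$ and the $\delta$-to-$\eps$ slack close the gap to $f(S_D) = f(E_1) + g(R'_D) \ge (1-\nicefrac{1}{e}-\eps)f(O)$, failing with probability $\le 1/10$ (here I'd want a constant-probability lower-tail bound, which may require first passing to $\E[g(R_D)] = G(\bar\x)$ and a reverse-Markov/Chebyshev step). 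A final union bound over the $O(p+c+1)$ bad events leaves success probability $\ge 1 - 3/10 > 1/2$.

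**Main obstacle.** The delicate part is the interplay between the $L_D$-removal post-processing and the covering constraints / objective: removing $L_D$ is what saves the critical \emph{packing} constraints, but it is exactly the operation that threatens covering feasibility and $f$-value, and there is no independence to exploit since $L_D$ is a fixed set. The crux is therefore the quantitative bound on $|R_D \cap L_D|$ (or its weighted versions) — one must show that the total residual packing budget across critical constraints is $O(\delta)$, that $\beta$-largeness then caps $|L_D|$, and that $\bar\x$ places little mass there, so that with constant probability $R_D \cap L_D$ is negligible for \emph{all} covering constraints and for $g$ simultaneously. Getting the constants $\alpha=\delta^3$, $\beta = \delta^2/(3b)$, $\gamma = 1/\delta^3$ to make every one of these estimates beat its share of the union bound — particularly reconciling the $\exp(-\Omega(1/\delta))$ tail bounds with the polynomially-many ($p+c$) constraints and the $\delta < \eps/(2+30(p+c)^2)$ requirement — is the bookkeeping heart of the proof, but it is routine once the structural claims above are in place.
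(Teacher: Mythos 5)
Your proposal is correct and follows essentially the same structure as the paper's proof: the same critical/non-critical split, the same scaling-and-Chernoff argument for non-critical constraints, the same use of Lemma~\ref{lem:os-quality-frac} plus a Chernoff-type lower tail for the objective, and the same recognition that the whole proof hinges on bounding $|R_D \cap L_D|$ and then charging the covering/objective loss element-by-element to $\alpha(\sd)_j$ and to $\gamma^{-1}f(O)$ respectively. One small comparison worth making: where you propose to bound $\E[|R_D\cap L_D|]$ directly from the fractional budget and $\beta$-largeness, the paper instead applies Markov's inequality to the realized load $\sum_{\ell\in R_D}\bP_{i,\ell}$ on each critical packing constraint (Claim~\ref{apx-clm:OneSidedProbTinyPack}, part 2), showing it is at most $10b\cdot(\rd)_i$ with probability $1-1/(10b(1+\delta))$, and then reads off $|R_D\cap L_D^i|\le 10b/\beta$ from $\beta$-largeness; the two routes are equivalent in spirit and yield the same $O(b/\beta)$ cap. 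Two cautionary notes on your informal bookkeeping: (a) $\beta$-largeness does not cap $|L_D|$ itself (which can be large); it caps how many $L_D$-elements can appear in $R_D$ given the load bound, which is why the paper detours through the realized load; and (b) with $\beta=\delta^2/(3b)$, the Chernoff tail for the small-element load on a critical packing constraint is $e^{-\delta^2/(3\beta)}=e^{-b}$, not $e^{-\Omega(1/\delta)}$ as your sketch suggests; the union bound $2b\,e^{-1/(3\delta)}+b\,e^{-b}+1/10+e^{-1/(2\delta)}\le 1/2$ still closes (with $b\,e^{-b}$ as the dominant constant term), but this is tighter than your sketch implies and is precisely why $\delta<1/(15(p+c))$ is imposed.
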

The above lemma suffices to prove Theorem~\ref{thrm:MainApproximation}, as it immediately implies it.

\section{Greedy Dynamic Programming}\label{sec:greedy-dp}
In this section, we present a novel algorithmic approach for submodular maximization that leads to \emph{deterministic} and considerably faster approximation algorithms in several settings. 
Perhaps the most notable application of our approach is Theorem~\ref{thm:submodular-order1-pack}. To the best of our knowledge, it provides the first {\em deterministic} non-trivial approximation algorithm for maximizing a monotone submodular function subject to packing constraints.
To highlight the core idea of our approach, we first present a vanilla version of the greedy dynamic programming approach applied to \PCSM that gives a constant-factor approximation and satisfies the packing constraints, but violates the covering constraints by a factor of $2$ and works in pseudo-polynomial time.

\noindent {\bf{Vanilla Greedy Dynamic Programming:}}
Let us start with a sketch of the algorithm's definition and analysis.
For simplicity of presentation, we assume in the current discussion relating to pseudo-polynomial time algorithms that $\bC \in \N_+^{c \times n}$ and $\bP \in \N_+^{p\times n}$.
Let $\p \in \N_+^p$ and $ \bc \in \N_+^c$ be the packing and covering requirements, respectively.
A solution $S\subseteq \cN$ is feasible if and only if $\bC \cdot \one_S \geq \bc$ and $\bP \cdot \one_S \leq \p$.
We also use the following notations: $c_{\max}=\|\bc\|_{\infty}$, $p_{\max}=\|\p\|_{\infty}$, and $[s]_0=\{0,\dots,s\}$ for every integer $s$.

We define our dynamic programming as follows:
for every $\ind\in[n]_0$, $\bc'\in[n \cdot c_{\max}]_0^c$, and $\p'\in[p_{\max}]_0^p$ a table entry $T[q,\bc',\p']$ is defined and it stores an approximate solution $S$ of cardinality $\ind$ with $\bC \cdot \one_S = \bc'$ and $\bP \cdot \one_S = \p'$.
\footnote{We introduce a dummy solution $\bot$ for denoting undefined table entries, and initialize the entire table with $\bot$.
For the exact details we refer to Appendix~\ref{subsec:vanilla-greedy-dp}.}
For the base case, we set $T[0,\zero_c,\zero_p] \gets \varnothing$.
For populating $T[q,\bc',\p']$ when $q>0$, we examine every set of the form $T[q-1,\bc'-\bC_\ell,\p'-\bP_\ell] \cup \{\ell\}$, where $\ell$ satisfies $\ell\in\cN\setminus T[q-1,\bc'-\bC_\ell,\p'-\bP_\ell]$, $\bc'-\bC_\ell \geq 0$, and $\p'-\bP_\ell \geq 0$.
Out of all these sets, we assign the most valuable one to $T[q,\bc',\p']$.
Note that this operation stores a {\em greedy approximate} solution in the table entry $T[q,\bc',\p']$.
The output of our algorithm is the best of the solutions $T[q,\bc',\p']$, for $1 \leq q \leq n$, $\bc' \geq \bc/2$ and $\p' \leq \p$.

Let us now sketch the analysis of the above algorithm.  Consider an
optimal solution $O$ and appropriately assign to each $o\in O$ a
marginal value $g(o)$ such that $f(O)=\sum_{o\in O}g(o)$. We then
inductively construct an order $o_1,\dots,o_k$ of $O$ with the
intention of upper bounding for every prefix $O_q=\{o_1,\dots,o_q\}$
the value $g(O_q)$ in terms of the value $f(S_q)$ of the table entry
$S_q:=T[q,\bC\one_{O_q}, \bP\one_{O_q}]$ \emph{corresponding to
  $O_q$}. The construction of the sequence $o_1,\dots,o_k$ divides
$[k]$ into $m$ phases where $m$ is a positive integer parameter. A
(possibly empty) phase $i\in [m]$ is characterized by the following
property. Consider a prefix $O_q$ and its corresponding table entry
$S_q$. If $q$ is in phase $i$ then there exists an element
$o_{q+1}\in O\setminus O_{q}$ such that adding $o_{q+1}$ to $S_q$
increases $f$ by at least an amount of $(1-i/m)g(o_{q+1})$. We set
$O_{q+1}=O_q\cup\{o_{q+1}\}$. Thus, in earlier phases we make more
progress in the corresponding dynamic programming solution $S_q$
relative to $g(O_q)$ than in later phases. Additionally, we can prove
a complementing inequality. At the end of phase $i\in[m]$ all elements
in $O \setminus O_{q}$ increase $f$ by no more than
$(1-i/m)g(o_{q+1})$. We prove that this implies that $f(S_q)$ is at
least $i/m\cdot g(O\setminus O_q)$ and thus large relatively to the
\emph{complement} of $O_q$. We set up a factor-revealing linear
program that constructs the worst distribution of the marginal values
over the phases that satisfy the above inequalities. This linear
program gives for every $m$ a lower bound on the approximation
ratio. Analytically, we can show that if $m$ tends to infinity the
optimum value of the LP converges to~$\nicefrac{1}{\e}$. This leads to
the following lemma (for its proof refer to Appendix~\ref{apx-subsec:factor-rev-lp}).

\begin{lemma}
 \label{lem:greedy-dp-vanilla}
Assuming $p$ and $c$ are constants, the vanilla greedy dynamic programming algorithm for \PCSM runs in pseudo-polynomial time $O(n^2p_{\max}c_{\max})$ and outputs a solution $S\subseteq \cN$ that satisfies: $(1)$ $f(S) \geq \left(\nicefrac[]{1}{\e}\right)\cdot f(O)$, $(2)$ $\bP \mathbf{1}_S\leq \p$ and $\bC \mathbf{1}_S\geq \nicefrac{1}{2}\cdot \bc$.
\end{lemma}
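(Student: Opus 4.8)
The plan is to formalize the sketch given in the text by analyzing the table $T$ and constructing a particular ordering of an optimal solution $O$ together with a "marginal credit" assignment $g$. First I would establish the running time: the table has $O(n \cdot (n c_{\max})^c \cdot p_{\max}^p)$ entries, and populating each entry examines $O(n)$ candidate elements with a constant number of oracle queries, so since $p,c$ are constants the total time is $O(n^2 p_{\max} c_{\max})$ up to polynomial factors in the constants (the stated bound absorbs the $c,p$ exponents into the constant). The feasibility claim $\bP\one_S \le \p$, $\bC\one_S \ge \nicefrac12\,\bc$ is immediate from how the output is selected (only entries with $\p' \le \p$ are stored, so $\bP\one_S = \p' \le \p$; the output ranges over $\bc' \ge \bc/2$, so $\bC\one_S = \bc' \ge \bc/2$). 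One must also check that a nonempty such entry exists, which follows because the instance is feasible: any feasible $O$ with $\bC\one_O \ge \bc$ yields, by induction along any ordering of $O$, a defined (nonempty) table entry $T[|O|, \bC\one_O, \bP\one_O]$, and $\bC\one_O \ge \bc \ge \bc/2$, $\bP\one_O \le \p$.

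The heart of the argument is the approximation ratio. I would fix an optimal $O = \{o_1,\dots,o_k\}$ in an order to be chosen, set $O_q = \{o_1,\dots,o_q\}$, and assign marginal credits $g(o_i) = f(O_i) - f(O_{i-1})$ so that $\sum_i g(o_i) = f(O)$ telescopically; note submodularity makes $g(o_1) \ge g(o_2) \ge \cdots$ under the greedy order, but for this argument we instead build the order adaptively. Let $S_q := T[q, \bC\one_{O_q}, \bP\one_{O_q}]$ be the table entry "corresponding" to the prefix $O_q$; by the greedy update rule, $S_q$ is well-defined (nonempty) whenever $S_{q-1}$ is and $o_q \notin S_{q-1}$, and $f(S_q) \ge f(S_{q-1} \cup \{o_q\}) \ge f(S_{q-1})$ since $o_q$ is one admissible extension — but more usefully, $f(S_q) \ge f(S_{q-1} \cup \{e\})$ for the \emph{best} admissible $e$, and $S_q$ has the same packing/covering footprint as $O_q$. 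Divide $[k]$ into $m$ phases: at step $q$ in phase $i$, pick $o_{q+1} \in O \setminus O_q$ maximizing $f_{S_q}(o_{q+1})$; if this marginal is at least $(1-i/m)g(o_{q+1})$ we stay in phase $i$, otherwise we move to phase $i+1$. One shows (a) within a phase, $f(S_q)$ grows by at least $(1-i/m)$ times the credit added to $g(O_q)$, and (b) at the moment phase $i$ ends, \emph{every} $o \in O \setminus O_q$ has $f_{S_q}(o) < (1-i/m)g(o)$, hence by submodularity $f_{S_q}(O \setminus O_q) \le \sum_{o \in O\setminus O_q} f_{S_q}(o) \le (1-i/m) g(O \setminus O_q)$, which combined with monotonicity $f(O) \le f(S_q \cup O) = f(S_q) + f_{S_q}(O\setminus O_q)$ gives $f(S_q) \ge (i/m) g(O\setminus O_q) = (i/m)(f(O) - g(O_q))$. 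I would then phrase the worst case as a linear program: variables are the total credit $x_i$ accumulated during phase $i$ and the value $y_i = f(S_q)$ at the phase boundary; the constraints $y_i \ge y_{i-1} + (1-i/m) x_i$ (progress) and $y_i \ge (i/m)(1 - \sum_{j\le i} x_j)$ (complement bound), normalized by $f(O)=1$, give a lower bound on $y_m = f(S_k)$; maximizing the adversary's freedom, the LP optimum is a function of $m$ that I claim tends to $1/\e$ as $m \to \infty$ (this is the factor-revealing LP referenced, whose solution I would verify separately or cite from the appendix).

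The main obstacle I anticipate is making the adaptive construction of the order $o_1,\dots,o_k$ and the phase boundaries fully rigorous while ensuring the "corresponding" table entries $S_q$ are all actually populated — i.e., that $o_{q+1}$ is never already in $S_q$ (so that $S_q \cup \{o_{q+1}\}$ is a legitimate candidate for $T[q+1, \cdot, \cdot]$) and that the covering/packing coordinates stay in range. This requires care: if the greedily chosen $o_{q+1}$ happens to lie in $S_q$, one must argue $f_{S_q}(o_{q+1}) = 0$ and handle it (e.g., the progress inequality holds trivially, or one re-selects among $O \setminus (O_q \cup S_q)$ and shows the complement bound still goes through). The second delicate point is the clean derivation of the complement inequality (b): it uses submodularity to bound $f_{S_q}(O\setminus O_q)$ by the sum of singleton marginals, and then monotonicity of $f$, and one must make sure the credits $g$ are defined consistently with whatever order is ultimately produced — I would resolve this by defining $g$ \emph{after} the order is fixed, via the telescoping $g(o_i) = f(O_i) - f(O_{i-1})$, and checking that both inequalities only ever reference $g$ through such prefix sums. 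Everything else — the LP analysis, the running time, the feasibility bookkeeping — is routine once these structural points are nailed down.
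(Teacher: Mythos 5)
Your plan follows the same route as the paper: the same DP table, an adaptively constructed order of $O$ split into $m$ phases, a per-phase progress inequality plus a complement bound at phase ends, and a factor-revealing LP whose optimum tends to $\nicefrac{1}{\e}$ (the paper pins this down by exhibiting matching primal and dual solutions of value $(1-1/m)^m$; you defer that part, which is acceptable). However, two steps of your argument do not hold as written. First, your complement bound is derived from ``$f(O)\le f(S_q\cup O)=f(S_q)+f_{S_q}(O\setminus O_q)$''. This identity is false in general: $S_q$ only shares the covering/packing coordinates of $O_q$, it need not contain $O_q$, so $f(S_q\cup O)-f(S_q)=f_{S_q}(O)$ also includes the uncontrolled marginals of the elements of $O_q\setminus S_q$, which the phase-end condition says nothing about. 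The paper never touches $f(O)$ here; it bounds $f(A_i)\ \ge\ f(A_i\cup(O\setminus Q_i))-(1-i/m)\,g(O\setminus Q_i)\ \ge\ f(O\setminus Q_i)-(1-i/m)\,g(O\setminus Q_i)$ and then uses a separate lemma stating $f(S)\ge g(S)$ for every $S\subseteq O$ (for the suffix $O\setminus O_q$ this also follows directly from submodularity) to reach $f(A_i)\ge (i/m)\,g(O\setminus Q_i)$. Your target inequality is true, but it needs this detour, not the identity you wrote.

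Second, the covering guarantee and the bookkeeping of which table entries are actually populated. You assume the adaptive sequence consumes all of $O$, ending at $T[k,\bC\one_O,\bP\one_O]$, and separately claim that feasibility of $O$ populates $T[|O|,\bC\one_O,\bP\one_O]$ ``by induction along any ordering of $O$''. Neither is guaranteed: the DP extends an entry only by an element \emph{not} in the stored set, and the stored set at the coordinates of $O_q$ is generally not $O_q$, so the next optimal element may already lie inside it and the next prefix entry may never be populated along that route. This is not a technicality; it is exactly where the factor $\nicefrac{1}{2}$ in property $(2)$ comes from. In the paper, every added element is certified to lie outside the current entry (a strict marginal $>(1-i/m)g(o)\ge 0$ forces this), and the last phase only adds elements of $O$ that are neither used so far nor already in $A_m$, stopping when none remain; at that point $O\setminus Q_m\subseteq A_m$ while $\bC\one_{A_m}=\bC\one_{Q_m}$, so $2\,\bC\one_{A_m}\ge\bC\one_O\ge\bc$, i.e.\ the analyzed entry itself has covering value at least $\bc/2$ and packing at most $\p$, which is what makes the output-selection rule applicable to it. You flag the ``element already in $S_q$'' problem and suggest re-selecting among $O\setminus(O_q\cup S_q)$, which is the right move, but you do not carry it through to this stopping case, and without it the claimed covering bound (and even the existence of an eligible populated entry) is unsupported.
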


\noindent {\bf{Applications and Extensions of Greedy Dynamic Programming Approach}}

We briefly explain the applications of the approach to the various specific settings and the required tailored algorithmic extensions to the vanilla version of the algorithm.

\noindent {\bf\em{Scaling, guessing and post-processing for packing constraints}}
An immediate consequence of Lemma~\ref{lem:greedy-dp-vanilla} is a
{\em deterministic} $\left( \nicefrac[]{1}{e}\right)$-approximation for the case of constantly many packing constraints that runs in pseudo-polynomial time. We can apply standard scaling techniques to achieve truly polynomial time. This may, however, introduce a violation of the constraints within a factor of $(1+\varepsilon)$.
To avoid this violation, we can apply a pre-processing and post-processing by Kulik et al.~\cite{KST13} to achieve Theorem~\ref{thm:submodular-order1-pack}.

\noindent {\bf\em{Forbidden sets for a single packing and a single covering constraints.}}
In this setting we are able to ensure a $(1-\eps)$-violation of the covering constraints by using the concept of \emph{forbidden sets}.
Intuitively, we exclude the elements of these set from being included to the dynamic programming table in order to be able to complete the table entries to solutions with only small violation.

Fix some $\eps>0$. By guessing we assume that we know the set $G$ of all, at most $1/\eps$ elements $\ell$ from the optimum solution with $\bP_\ell>\eps\cdot \p$. We can guess $G$ using brute force in $n^{O(1/\eps)}$ time. This allows us to remove all elements with $\bP_\ell \geq \eps\cdot \p$ from the instance. Let $\cN'$ be the rest of the elements. (For consistency reasons, we use bold-face vector notation here also for dimension one.)

Fix an order of $\cN'$ in which the elements are sorted in a non-increasing order of $\bC_\ell/\bP_\ell$ values, breaking ties arbitrarily. Let $\cN_i$ be the set of the first $i$ elements in this order. For any $\p'\leq \p$, let $F_{\p'}$ be the smallest set $\cN_i$ with $\bP\one_{\cN_i}\geq \p-\p'$. Note that the profit of $F_{\p'}$ is at least the profit of any subset of $\cN'$ with packing value at most $\p-\p'$ and that the packing value of $F_{\p'}$ is no larger than $(1+\eps)\p-\p'$. Also note that for any $0 \leq \p'\leq \p'' \leq \p$, it holds that $F_{\p''} \subseteq F_{\p'}$.

Now we explain the modified Greedy-DP that incorporates the guessing and the forbidden sets ideas. Let $G$ be the set of the guessed big elements as described above.
For the base case, we set $T[\bC \one_G, \bP \one_G] = G$ and $T[\bc', \p']=\bot$ for all table entries with $\bc' \neq \bC \one_G$ or $\p' \neq \bP \one_G$.

In order to compute $T[\bc', \p']$, we look at every set of the form $T[\bc'-\bC_\ell,\p'-\bP_\ell] \cup \{\ell\}$, where $\ell\in\cN \setminus (T[\bc'-\bC_\ell,\p'-\bP_\ell] \cup F_{\p'})$, $\bc'- \bC_\ell \geq 0$, and $\p'- \bP_\ell \geq 0$. Notice that we forbid elements belonging to $F_{\p'}$ to be included in any table entry of the form $T[\bc', \p']$. Now out of all these sets, we assign the most valuable set to $T[\bc', \p']$. The output of our algorithm is the best of the solutions $T[\bc', \p'] \cup F_{\p'}$, such that $\bc' + \bC \one_{F_{\p'}} \geq \bc$.

By means of a more sophisticated factor-revealing LP, we obtain
Theorem~\ref{thm:submodular-one}. Finally, if the packing constraint
is actually a cardinality constraint we can assume that
$\eps<1/\p$. Hence, there will be no violation of the cardinality
constraint and also guessing can be avoided.


\section{Extensions: Matroid Independence and Multi-Objective}
Refer to the Appendix~\ref{App:Extensions} for the extensions that deal with a matroid independence constraint and with multiple objectives.

\subsubsection*{Acknowledgement} Joachim Spoerhase and Sumedha Uniyal thank an anonymous reviewer for pointing them to the fact that Theorem~\ref{thrm:ContGreedy} also applies to polytopes that are not down-closed, which makes it possible to apply a randomized rounding approach.
\newpage

\bibliography{Bibliography}

\newpage
\appendix

\section{Algorithms for the \PCSM Problem}  \label{App:OneSidedFeasibility}

Below, we give a full technical description of the proof of our main Theorem~\ref{thrm:MainApproximation}. We first describe a pre-processing step followed by a multilinear relaxation based randomized rounding algorithm which includes a post-processing step in the end. We refer to our techniques in Section~\ref{sec:Introduction} and Section~\ref{sec:multi-linear-relaxation} for a comprehensive intuitive exposition.

\subsection{Preprocessing: Enumeration with Mixed Constraints}
\label{apx-subsec:one-sided-preprocess}

We define a {\em guess} $D$ to be a triplet $(E_0,E_1,\mathbf{c}')$, where $E_0\subseteq \mathcal{N}$ denotes elements that are discarded, $E_1\subseteq \mathcal{N}$ denotes elements that are chosen, and $\mathbf{c}'\in \mathbb{R}_+^c$ represents a rough estimate (up to a factor of $1+\varepsilon$) of how much an optimal solution $O$ covers each of the covering constraints, {\em i.e.}, $\mathbf{C} \mathbf{1}_O$.
Let us denote by $\tilde{\mathcal{N}} \triangleq \mathcal{N} \setminus (E_0\cup E_1)$ the remaining undetermined elements with respect to guess $D$.

We would like to define when a given fixed guess $D=(E_0,E_1,\mathbf{c}')$ is {\em consistent}, and to this end we introduce the notion of {\em critical} constraints.
For the $i$\textsuperscript{th} packing constraint the residual value that can still be packed is: $(\mathbf{r}_D)_i\triangleq 1-\sum _{\ell \in E_1}\mathbf{P}_{i,\ell}$, where $\mathbf{r}_D\in \mathbb{R}^p$.
For the $j$\textsuperscript{th} covering constraint the residual value that still needs to be covered is: $(\mathbf{s}_D)_j \triangleq \max \left\{ 0,\mathbf{c}'_j-\sum _{\ell \in E_1}\mathbf{C} _{j,\ell}\right\}$, where $\mathbf{s}_D\in \mathbb{R}^c$.
A packing constraint $i$ is called {\em critical} if $(\mathbf{r}_D)_i\leq \delta$, and a covering constraint $j$ is called {\em critical} if $(\mathbf{s}_D)_j\leq \delta \mathbf{c}' _j$ ($\delta \in (0,1)$ is a parameter to be chosen later).
Thus, the collections of critical packing and covering constraints, for a given guess $D$, are given by:
$ Y_D \triangleq \{ i=1,\ldots,p: (\mathbf{r}_D)_i \leq \delta\}$ and $Z_D \triangleq \{ j=1,\ldots,c: (\mathbf{s}_D)_j\leq \delta \mathbf{c}' _j\}$.
Moreover, elements are considered {\em large} if their size is at least some factor $\alpha$ of the residual value of some non-critical constraint ($\alpha \in (0,1)$ is a parameter to be chosen later).
Formally, the collection of large elements with respect to the packing constraints is defined as $P_D\triangleq \{ \ell \in \tilde{\mathcal{N}}: \exists i\ \notin Y_D \text{ s.t. } \mathbf{P}_{i,\ell} \geq \alpha (\mathbf{r}_D)_i\}$, and the collection of large elements with respect to the covering constraints is defined as $C_D=\{ \ell \in \tilde{\mathcal{N}}:\exists j\notin Z_D \text{ s.t. } \mathbf{C}_{j,\ell}\geq \alpha (\mathbf{s}_D)_j\}$.
It is important to note, as previously mentioned, that the notion of a large element is with respect to the residual constraint, as opposed to previous works \cite{chekuriVZ10-rand-exch,KST13} where the definition is with respect to the original constraint.
Let us now formally define when a guess $D$ is called {\em consistent}.
\begin{defn}\label{apx-def:OneSidedConsistentGuess}
	A guess $D=(E_0,E_1, \bc')$ is {\em consistent} if:
	\begin{enumerate}
		\item\label{apx-item:oneSidedConsistentDisjoint} $E_0\cap E_1 = \varnothing$.
		\item\label{apx-item:oneSidedConsistentGuessCover}   $\bc'\geq\one_c$.
		\item\label{apx-item:oneSidedConsistentGuessPack} $\bP\one_{E_1} \leq \one_p$.
		\item\label{apx-item:oneSidedConsistentGuessNoBig} $P_D = C_D = \varnothing$.
	\end{enumerate}
\end{defn}

Intuitively, requirement $(1)$ states that a variable cannot be both chosen and discarded, $(2)$ states that the each covering constraint is satisfied by an optimal solution $O$, $(3)$ states the chosen elements $E_1$ do not violate the packing constraints, and $(4)$ states that no large elements remain in any non-critical constraint.

\paragraph*{Correct Guesses}
Finally, we need to define when a consistent guess is {\em correct}.
Assume without loss of generality that $O=\{ o_1,\ldots,o_k\}$ and the elements of $O$ are ordered greedily: $ f_{\{ o_1,\ldots,o_i\}}( o_{i+1}) \leq f_{ \{o_1,\ldots,o_{i-1}\}} ( o_i)$ for every $i=1,\ldots,k-1$. In the following definition $\gamma$ is a parameter to be chosen later.

\begin{defn}\label{apx-def:OneSidedCorrectGuess}
  A consistent guess $D=(E_0,E_1,\bc')$ is called \emph{correct} with respect to $O$ if:
	\begin{enumerate}
		\item\label{apx-item:corr-os-e1} $E_1\subseteq O$,
		\item\label{apx-item:corr-os-e0} $E_0\subseteq \overline{O}$,
		\item\label{apx-item:corr-os-value} $\{o_1,\dots, o_\gamma\}\subseteq E_1$,
		\item\label{apx-item:corr-os-cov} $\one_c\leq\bc'\leq\bC\one_{O} < (1+\delta)\bc'$.
	\end{enumerate}
\end{defn}

Intuitively, requirement $(1)$ states that the chosen elements $E_1$ are indeed elements of $O$, $(2)$ states that no element of $O$ is discarded, $(3)$ states that the $\gamma$ elements of largest marginal value are all chosen, and $(4)$ states that $\mathbf{c}'$ represents (up to a factor of $1+\delta$) how much $O$ actually covers each of the covering constraints.

We are now ready to present our preprocessing algorithm (Algorithm \ref{apx-alg:OneSidedGuessesEnumeration}), which produces a list $\mathcal{L}$ of consistent guesses that is guaranteed to contain at least one guess that is also correct with respect to $O$.
Lemma \ref{apx-lem:os-correct-guess} summarizes this, its proof appears in the appendix.

\begin{algorithm*}[H] \label{apx-alg:OneSidedGuessesEnumeration}
	\caption{Preprocessing}
	$\cL \leftarrow \varnothing$ \\
	\ForEach{$j_1,\dots,j_c\in\{0,1,\dots,\lceil\log_{1+\delta} n\rceil\}$}{
		Let $\bc' = ((1+\delta)^{j_1},\dots,(1+\delta)^{j_c})$\\
		\ForEach{$E_1\subseteq \cN$ such that $|E_1| \leq \gamma + \nicefrac{(p+c)}{(\alpha \delta)}$}{
			Let $H = (\varnothing, E_1, \bc')$\\
			Let $E_0 = \{\ell \in \cN \setminus E_1: f_{E_1}(\ell) > (\gamma^{-1}) f(E_1)\} \cup P_H \cup C_H$\label{apx-alg:E_0}\\
		        Set $D=(E_0, E_1, \bc')$\\
		        If $D$ is consistent according to Definition \ref{Definition:consistent} add it to $\cL$.
		}
	}
	Output $\cL$.
\end{algorithm*}
\begin{lemma}\label{apx-lem:os-correct-guess}
The output $\mathcal{L}$ of Algorithm~\ref{apx-alg:OneSidedGuessesEnumeration} contains at least one guess $D$ that is correct with respect to some optimal solution $O$.
\end{lemma}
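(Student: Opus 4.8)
The plan is to fix an arbitrary optimal solution $O$ and exhibit a single iteration of Algorithm~\ref{apx-alg:OneSidedGuessesEnumeration} whose output guess $D=(E_0,E_1,\bc')$ is both consistent (Definition~\ref{apx-def:OneSidedConsistentGuess}) and correct with respect to $O$ (Definition~\ref{apx-def:OneSidedCorrectGuess}). First I would choose the iteration: since $\bC\one_O$ is a non-negative vector with entries at most $n$ (assuming, after a trivial scaling, that the covering coefficients are normalized so that a single element contributes at most~$1$ and at least one feasible set exists), there is a grid point $\bc'=((1+\delta)^{j_1},\dots,(1+\delta)^{j_c})$ with $\one_c\le\bc'\le \bC\one_O<(1+\delta)\bc'$, which is exactly property~(4) of correctness; fix this $\bc'$. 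Next I would specify $E_1$: set $O_\gamma\triangleq\{o_1,\dots,o_\gamma\}$, the $\gamma$ greedily-largest-marginal elements of $O$, and $O_L$ the ``globally large'' elements of $O$ defined in~\eqref{definition:LargeOPT}, i.e. those $\ell\in O$ with $\bP_{i,\ell}\ge\alpha\delta$ for some $i$ or $\bC_{j,\ell}\ge\alpha\delta\,\bc'_j$ for some $j$. Take $E_1\triangleq O_\gamma\cup O_L$. A counting argument bounds $|O_L|$: because $O$ is feasible, $\sum_{\ell\in O}\bP_{i,\ell}\le 1$ for each $i$, so at most $1/(\alpha\delta)$ elements of $O$ are large in packing constraint~$i$; a symmetric argument using $\bC\one_O<(1+\delta)\bc'\le 2\bc'$ bounds the covering-large elements of $O$ by $2/(\alpha\delta)$ per constraint. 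Summing gives $|O_L|\le (p+c)/(\alpha\delta)$ (up to adjusting constants), hence $|E_1|\le\gamma+(p+c)/(\alpha\delta)$, so $H=(\varnothing,E_1,\bc')$ — and therefore the resulting $D$ — is indeed among the guesses the algorithm enumerates.

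With the iteration fixed, the core of the proof is the two observations already flagged in the excerpt, which together show no element of $O$ is placed into $E_0$. For Observation~1 ($f_{E_1}(\ell)\le\gamma^{-1}f(E_1)$ for all $\ell\in O\cup(\cN\setminus E_0)$): elements outside $E_0$ satisfy it by the definition of $E_0$ in line~\ref{apx-alg:E_0}; elements of $O_\gamma\subseteq E_1$ have $f_{E_1}(\ell)=0$; and for $\ell\in O\setminus O_\gamma$ one uses diminishing returns to get $f_{E_1}(\ell)\le f_{O_\gamma}(\ell)$, then the greedy ordering of $O$ — the marginals $a_i=f_{\{o_1,\dots,o_{i-1}\}}(o_i)$ are non-increasing and non-negative, sum to $f(O_\gamma)$ over $i\le\gamma$, so each later marginal is at most the average $\gamma^{-1}f(O_\gamma)$ — and finally monotonicity plus $O_\gamma\subseteq E_1$ to replace $f(O_\gamma)$ by $f(E_1)$. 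For Observation~2 ($O\cap P_H=O\cap C_H=\varnothing$): suppose $\ell\in O\cap P_H$. Then $\ell\in\tilde\cN=\cN\setminus E_1$, so $\ell\notin O_L$, meaning $\bP_{i,\ell}<\alpha\delta$ for every packing constraint~$i$; but $\ell\in P_H$ gives some non-critical $i$ with $\bP_{i,\ell}\ge\alpha(\mathbf r_H)_i$, hence $(\mathbf r_H)_i\le\bP_{i,\ell}/\alpha<\delta$, making $i$ critical — a contradiction. The covering case is symmetric, using $\bc'_j>0$ to divide.

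From these two observations correctness follows quickly: properties~(1) ($E_1\subseteq O$) and~(3) ($O_\gamma\subseteq E_1$) are immediate from the construction of $E_1$; property~(4) is how $\bc'$ was chosen; and property~(2) ($E_0\subseteq\overline O$) holds because $E_0$ is the union of $\{\ell\in\cN\setminus E_1:f_{E_1}(\ell)>\gamma^{-1}f(E_1)\}$ — which by Observation~1 is disjoint from $O$ — with $P_H\cup C_H$, disjoint from $O$ by Observation~2. Consistency is then essentially free: property~(1) follows from~(1) and~(2) of correctness; property~(2) from the choice of $\bc'\ge\one_c$; property~(3) from feasibility of $O$ together with $E_1\subseteq O$; and property~(4), $P_D=C_D=\varnothing$, from the monotonicity $P_D\subseteq P_H$ (shrinking the undetermined set and using that $E_1$ only grew) combined with $P_H\subseteq E_0$, so every element of $P_D$ was already discarded and hence is not in $\tilde{\mathcal N}$; likewise for $C_D$.

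I expect the main obstacle to be the bookkeeping in the counting bound on $|O_L|$ and, relatedly, making sure the ``large element'' thresholds are coherent between the \emph{global} notion used to define $O_L$ (thresholds $\alpha\delta$ and $\alpha\delta\,\bc'_j$, measured against the original RHS~$1$ and against $\bc'_j$) and the \emph{residual} notion used in $P_H,C_H$ (thresholds $\alpha(\mathbf r_H)_i$ and $\alpha(\mathbf s_H)_j$). The proof of Observation~2 is exactly the bridge between these two notions, and its delicacy is that it must go through the criticality dichotomy: an element that is globally small but residually large can only be residually large in a constraint whose residual has already shrunk below~$\delta$, i.e. a critical one, which is precisely the class of constraints excluded from the definition of $P_H$ and $C_H$. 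Everything else is a routine but careful unwinding of the definitions.
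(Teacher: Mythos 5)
Your proposal is correct and follows essentially the same route as the paper's own proof: you fix the same $\bc'$, build $E_1=O_\gamma\cup O_L$ with the identical definition of $O_L$, bound $|E_1|$ by the same counting argument, prove the same two observations (the averaging bound over the greedy prefix, and the criticality contradiction bridging the global threshold $\alpha\delta$ with the residual threshold $\alpha(\mathbf r_H)_i$), and then read off consistency and correctness exactly as the paper does. Your remark that the covering-side counting gives $(1+\delta)/(\alpha\delta)$ rather than exactly $1/(\alpha\delta)$ per constraint is a fair observation, but it is harmless since the cutoff in the enumeration can absorb the constant.
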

\begin{proof}
Fix any optimal solution $O$.
At least one of the vectors $\bc'$ enumerated by Algorithm~\ref{apx-alg:OneSidedGuessesEnumeration} satisfies property~$(4)$ in Definition~\ref{Definition:correct} with respect to $O$.
Let us fix an iteration in which such a $\bc'$ is enumerated.
Define the ``large'' elements $O$ has with respect to this $\bc'$:
\begin{align}
O_L \triangleq \left\{ \ell \in O:\exists i \text{ s.t. } \mathbf{P}_{i,\ell}\geq \alpha \delta\right\} \cup \left\{ \ell \in O:\exists j \text{ s.t. } \mathbf{C}_{j,\ell}\geq \alpha \delta\bc' _j\right\} .\label{definition:LargeOPT}
\end{align}
Denote by $O_{\gamma}\triangleq \{ o_1,\ldots,o_{\gamma}\}$ the $\gamma$ elements of $O$ with the largest marginal (recall the ordering of $O$ satisfies: $ f_{\{ o_1,\ldots,o_i\}}( o_{i+1}) \leq f_{ \{o_1,\ldots,o_{i-1}\}} ( o_i)$).
Let us fix $E_1\triangleq O_{\gamma}\cup O_L$ and choose $ H\triangleq (\emptyset, E_1,\bc')$.
Clearly, $|E_1|\leq \gamma + \nicefrac[]{(p+c)}{(\alpha \delta)}$ since $|O_{\gamma}|=\gamma$ and $|O_L|\leq \nicefrac[]{(p+c)}{(\alpha\delta)}$.
Hence, we can conclude that $H$ is considered by Algorithm~\ref{apx-alg:OneSidedGuessesEnumeration}.

We fix the iteration in which the above $H$ is considered and show that the resulting $D=(E_0,E_1,\bc')$ of this iteration is correct and consistent (recall that Algorithm~\ref{apx-alg:OneSidedGuessesEnumeration} chooses $E_0=\{\ell \in \cN \setminus E_1: f_{E_1}(\ell) > (\gamma^{-1}) f(E_1)\} \cup P_H \cup C_H$).
The following two observations suffice to complete the proof:

\noindent {\em Observation 1:} $\forall \ell \in O \cup (\mathcal{N}\setminus E_0)$: $ f_{E_1}(\ell)\leq \gamma ^{-1}f(E_1)$.

\noindent {\em Observation 2:} $ O\cap P_H = \emptyset$ and $O\cap C_H=\emptyset $.

\noindent Clearly properties $(1)$ and $(3)$ of Definition~\ref{Definition:correct} are satisfied by construction of $E_1$, $H$, and subsequently $D$.
Property $(2)$ of Definition~\ref{Definition:correct} requires the above two observations, which together imply that no element of $O$ is added to $E_0$ by Algorithm~\ref{apx-alg:OneSidedGuessesEnumeration}.
Thus, all four properties of Definition~\ref{Definition:correct} are satisfied, and we focus on showing that the above $D$ is consistent according to Definition~\ref{Definition:consistent}.
Property $(1)$ of Definition~\ref{Definition:consistent} follows from properties $(1)$ and $(2)$ of Definition~\ref{Definition:correct}.
Property $(2)$ of Definition~\ref{Definition:consistent} follows from the choice of $\bc'$.
Property $(3)$ of Definition~\ref{Definition:consistent} follows from the feasibility of $O$ and property $(1)$ of Definition~\ref{Definition:correct}.
Lastly, property $(4)$ of Definition~\ref{Definition:consistent} follows from the fact that $P_D\subseteq P_H$ and that $P_H\subseteq E_0$, implying that $P_D=\emptyset$ (the same argument applies to $C_D$).
We are left with proving the above two observations.

We start with proving the first observation.
Let $\ell \in O\cup (\mathcal{N}\setminus E_0)$.
If $ \ell \in \mathcal{N}\setminus E_0$ then the observation follows by the construction of $E_0$ in Algorithm~\ref{apx-alg:OneSidedGuessesEnumeration}.
Otherwise, $\ell \in O$.
If $ \ell \in O_{\gamma}$ then we have that $f_{E_1}(\ell)=0$ since $O_{\gamma}\subseteq E_1$.
Otherwise $\ell \in O\setminus O_{\gamma}$.
Note: $$ f_{E_1}(\ell) \leq f_{O_{\gamma}}(\ell)\leq \gamma ^{-1}f(O_{\gamma})\leq \gamma ^{-1}f(E_1).$$
The first inequality follows from diminishing returns and $O_{\gamma}\subseteq E_1$.
The third and last inequality follows from the monotonicity of $f$ and $O_{\gamma}\subseteq E_1$.
Let us focus on the second inequality, and denote $O=\{ o_1,\ldots,o_k\}$ and the sequence $a_i\triangleq f_{\{ o_1,\ldots,o_{i-1}\}}(o_i)$.
The sequence of $a_i$s is monotone non-increasing by the ordering of $O$ and the monotonicity of $f$ implies that all $a_i$s are non-negative.
Note that $a_1+\ldots+a_{\gamma}=f(O_{\gamma})$, thus implying that $ f_{O_{\gamma}}(\ell)\leq \gamma ^{-1}f(O_{\gamma})$ for every $ \ell \in \{ o_{\gamma +1},\ldots,o_k\}$ (otherwise $a_1+\ldots + a_{\gamma} > f(O_{\gamma})$).
The second inequality above, {\em i.e.}, $ f_{O_{\gamma}}(\ell)\leq \gamma ^{-1}f(O_{\gamma})$, now follows since $\ell \in O\setminus O_{\gamma} = \{ o_{\gamma +1},\ldots,o_k\}$.

Let us now focus on proving the second observation.
Let us assume on the contrary that there is an element $\ell$ such that $\ell\in O\cap P_H$.
Recall that $P_H=\{ \ell\in \mathcal{N}\setminus E_1:\exists i\notin Y_H \text{ s.t. } \mathbf{P}_{i,\ell}\geq \alpha (\mathbf{r}_H)_i\}$ where $Y_H=\{ i:(\mathbf{r}_H)_i\leq \delta\}$.
This implies that $ \ell \in O\setminus E_1$, namely that $\ell \notin O_L$, from which we derive that for {\em all} packing constraint $i$ we have that $\mathbf{P}_{i,\ell}<\alpha \delta$.
Since $\ell \in P_H$ we conclude that there exists a packing constraint $i$ for which $ (\mathbf{r}_H)_i\leq \mathbf{P}_{i,\ell}/\alpha$.
Combining the last two bounds we conclude that $(\mathbf{r}_H)_i < \delta$, which implies that the $i$\textsuperscript{th} packing constraint is critical, {\em i.e.}, $i\in Y_H$.
This is a contradiction, and hence $O\cap P_H=\emptyset$.
A similar proof applies to $C_H$ and the covering constraints. \hfill $\square$
\end{proof}

\subsection{Algorithm}
\label{apx-subsec:one-sided-algo}

Before presenting our main rounding algorithm, let us define the residual problem we are required to solve given a consistent guess $D$.
First, the residual objective $g:2^{\tilde{\mathcal{N}}}\rightarrow \mathbb{R}_+$ is defined as: $g(S)\triangleq f(S\cup E_1) -f(E_1)$ for every $S\subseteq \tilde{\mathcal{N}}$.
Clearly, $g$ is submodular, non-negative, and monotone.
Second, let us focus on the feasible domain and denote by $\tilde{\mathbf{P}}$ ($\tilde{\mathbf{C}}$) the submatrix of $\mathbf{P}$ ($\mathbf{C}$) obtained by choosing all the columns in $\tilde{\mathcal{N}}$.
Hence, given $D=(E_0,E_1,\mathbf{c}')$ the residual problem is:
\begin{align}
\max \{ g(S) + f(E_1) : S\subseteq \tilde{\mathcal{N}}, \tilde{\mathbf{P}}\mathbf{1}_S\leq \mathbf{r}_D, \tilde{\mathbf{C}}\mathbf{1}_S\geq \mathbf{s}_D\}.\label{apx-OneSidedObjectiveResidual}
\end{align}
In order to formulate the multilinear relaxation of (\ref{apx-OneSidedObjectiveResidual}), consider the following two polytopes:
$\mathcal{P}\triangleq \{ \bx \in [0,1]^{\tilde{\mathcal{N}}}:\tilde{\mathbf{P}}\bx \leq \mathbf{r}_D\}$ and $ \mathcal{C}\triangleq \{ \bx \in [0,1]^{\tilde{\mathcal{N}}}:\tilde{\mathbf{P}}\bx \geq \mathbf{s}_D\}$.
Let $G:[0,1]^{\tilde{\mathcal{N}}}\rightarrow \mathbb{R}_+$ be the multilinear extension of $g$.
Thus, the continuous multilinear relaxation of (\ref{apx-OneSidedObjectiveResidual}) is:
\begin{align}
\max\left\{ f(E_1)+G(\bx):\bx\in [0,1]^{\tilde{\mathcal{N}}}, \bx \in \cP\cap \cC\right\}.\label{apx-ObjectiveContinuous}
\end{align}

Our algorithm performs randomized rounding of a fractional solution to the above relaxation (\ref{apx-ObjectiveContinuous}).
However, this is not enough to obtain our main result and an additional post-processing step is required in which additional elements are discarded.
Since covering constraints are present, one needs to perform the post-processing step in great care.
To this end we denote by $L_D$ the collection of large elements with respect to some critical packing constraint: $L_D\triangleq \{ \ell \in \tilde{\mathcal{N}}:\exists i\in Y_D \text{ s.t. } \mathbf{P}_{i,\ell}\geq \beta \mathbf{r}_D\}$ ($\beta \in (0,1)$ is a parameter to be chosen later).
Intuitively, we would like to discard elements in $L_D$ since choosing any one of those will incur a violation of a packing constraint.
We are now ready to present our rounding algorithm (Algorithm~\ref{apx-alg:OneSidedMainAlgorithm}).

\begin{algorithm*}[H]
	\caption{$(f,\cN,\bP,\bC)$}\label{apx-alg:OneSidedMainAlgorithm}
	Use Algorithm~\ref{apx-alg:OneSidedGuessesEnumeration} to obtain a list of guesses $\cL$. \\
	\ForEach{$D=(E_0, E_1,\bc') \in \cL$}{
          Use Theorem~\ref{thrm:ContGreedy} to compute an approximate solution $\bx^*$ to problem (\ref{apx-ObjectiveContinuous}).\label{apx-alg:OneSidedMainAlgorithm-ContGreedy}\\
		Scale down $\x^*$ to $\bar{\x} = \x^*/(1+\delta)$\\
		Let $R_D$ be such that for every $\ell \in \tilde{\mathcal{N}}$ independently: $\Pr \left[ \ell \in R_D\right] = \bar{\x}_{\ell}$.\\
		Let $R'_D = R_D \setminus L_D$.\\
		$S_D \leftarrow E_1 \cup R'_D$.
		}
	$S_{alg} \leftarrow \argmax \left\{f(S_D) : D\in \cL, \bP \cdot \one_{S_D} \leq \one_p, \bC \cdot \one_{S_D} \geq (1-\eps)\one _c \right\}$
\end{algorithm*}
We note that Line $6$ of Algorithm~\ref{apx-alg:OneSidedMainAlgorithm} is the post-processing step where all elements of $L_D$ are discarded.
Our analysis of Algorithm~\ref{apx-alg:OneSidedMainAlgorithm} shows that in an iteration a correct guess $D$ is examined, with a constant probability, $S_D$ satisfies the packing constraints, violates the covering constraint by only a fraction of $\varepsilon$, and $f(S_D)$ is sufficiently high.

The following lemma gives a lower bound on the value of the fractional solution $\bar{\mathbf{x}}$ computed by Algorithm~\ref{apx-alg:OneSidedMainAlgorithm}.

\begin{lemma}\label{apx-lem:os-quality-frac}
  If $D\in \mathcal{L}$ is correct then in the iteration of Algorithm~\ref{apx-alg:OneSidedMainAlgorithm} it is examined the resulting $\bar{\x}$ satisfies: $ G(\bar{\x})\geq (1-\nicefrac{1}{\e}-\delta)\opt-f(E_1)$.
\end{lemma}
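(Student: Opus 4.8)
The plan is to reduce the statement to three ingredients: (i) exhibiting an explicit feasible point of the continuous relaxation~(\ref{apx-ObjectiveContinuous}) of value at least $\opt-f(E_1)$; (ii) invoking the continuous greedy guarantee of Theorem~\ref{thrm:ContGreedy}; and (iii) accounting for the scale-down in Line~4 of Algorithm~\ref{apx-alg:OneSidedMainAlgorithm} via a concavity property of the multilinear extension together with an elementary scalar inequality.

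For (i), let $D=(E_0,E_1,\bc')$ be correct and consider $\one_{O\setminus E_1}\in[0,1]^{\tilde{\mathcal{N}}}$. By properties~(1) and~(2) of Definition~\ref{apx-def:OneSidedCorrectGuess} we have $E_1\subseteq O$ and $E_0\cap O=\varnothing$, so $O\setminus E_1\subseteq\tilde{\mathcal{N}}$ and the vector is well defined. Using $E_1\subseteq O$ and feasibility of $O$, $\tilde{\mathbf{P}}\one_{O\setminus E_1}=\mathbf{P}\one_O-\mathbf{P}\one_{E_1}\leq\one_p-\mathbf{P}\one_{E_1}=\mathbf{r}_D$, so $\one_{O\setminus E_1}\in\mathcal{P}$. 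By property~(4) of Definition~\ref{apx-def:OneSidedCorrectGuess}, $\tilde{\mathbf{C}}\one_{O\setminus E_1}=\mathbf{C}\one_O-\mathbf{C}\one_{E_1}\geq\bc'-\mathbf{C}\one_{E_1}$, and this quantity is also $\geq\zero$ by nonnegativity of $\mathbf{C}$; coordinatewise this yields $\tilde{\mathbf{C}}\one_{O\setminus E_1}\geq\mathbf{s}_D$, so $\one_{O\setminus E_1}\in\mathcal{C}$. Finally $G(\one_{O\setminus E_1})=g(O\setminus E_1)=f((O\setminus E_1)\cup E_1)-f(E_1)=\opt-f(E_1)$, again using $E_1\subseteq O$.

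For (ii), $\mathcal{P}\cap\mathcal{C}$ is therefore nonempty, and it is a polytope cut out by $n+p+c$ linear inequalities, so linear optimization over it is a polynomial-size linear program; hence Theorem~\ref{thrm:ContGreedy} applies (note that $\mathcal{P}\cap\mathcal{C}$ need not be down-closed, since the covering part $\mathcal{C}$ is upward closed — this is the one genuinely delicate point, and it is exactly the observation credited in the acknowledgement). Thus the point $\bx^*$ computed in Line~\ref{apx-alg:OneSidedMainAlgorithm-ContGreedy} satisfies $G(\bx^*)\geq(1-\nicefrac[]{1}{\e})\max\{G(\by):\by\in\mathcal{P}\cap\mathcal{C}\}\geq(1-\nicefrac[]{1}{\e})(\opt-f(E_1))$.

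For (iii), since $g(\varnothing)=f(E_1)-f(E_1)=0$ we have $G(\zero)=0$, and $t\mapsto G(t\bx^*)$ is concave on $[0,1]$ (the standard fact that the second derivative of a multilinear extension of a monotone submodular function along a nonnegative direction is nonpositive). Hence $G(\bar\x)=G\left(\frac{1}{1+\delta}\bx^*\right)\geq\frac{1}{1+\delta}G(\bx^*)\geq\frac{1-\nicefrac[]{1}{\e}}{1+\delta}(\opt-f(E_1))$. Writing $A=\opt$, $B=f(E_1)$ and noting $0\leq B\leq A$ (monotonicity and $E_1\subseteq O$), the difference between $\frac{1-\nicefrac[]{1}{\e}}{1+\delta}(A-B)$ and $(1-\nicefrac[]{1}{\e}-\delta)A-B$ simplifies to $\left(1-\frac{1-\nicefrac[]{1}{\e}}{1+\delta}\right)(\delta A+B)$, which is nonnegative because $1-\nicefrac[]{1}{\e}<1+\delta$ and $\delta A+B\geq 0$. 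Therefore $G(\bar\x)\geq(1-\nicefrac[]{1}{\e}-\delta)\opt-f(E_1)$, which is the claim. Apart from verifying applicability of Theorem~\ref{thrm:ContGreedy} to the non-down-closed polytope, every step — the feasibility check, the ray-concavity bound, and the final scalar estimate — is routine, so I expect no further obstacles.
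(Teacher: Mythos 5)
Your proof is correct and follows essentially the same route as the paper's: exhibit $\one_{O\setminus E_1}$ as a feasible point of the relaxation with value $\opt-f(E_1)$, invoke Theorem~\ref{thrm:ContGreedy}, and absorb the $1/(1+\delta)$ scaling via concavity of $G$ along the ray through $\bx^*$ together with $G(\zero)=0$. The only (immaterial) difference is in step (ii): you apply the $(1-\nicefrac{1}{\e})$ factor to $G$ alone, i.e.\ implicitly run the continuous greedy on the residual function $g$, whereas the paper applies it to the full objective $f(E_1)+G(\bx)$ and obtains the slightly weaker bound $G(\bx^*)\geq(1-\nicefrac{1}{\e})\opt-f(E_1)$ --- either bound gives the lemma after the same final scalar estimate you carry out.
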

\begin{proof}
  Let $D=(E_0,E_1,\bc')$ be a correct guess with respect to $O$ and
  let $S'=O\setminus E_1$ and $\cN' = \residual$. Because of
  Properties~\ref{apx-item:corr-os-e1} and~\ref{apx-item:corr-os-e0}, we have
  that $S'$ satisfies $\bP_{\cN'}\one_{S'}\leq\br_{D}$ and
  $\bC_{\cN'}\one_{S'}\geq\bs_{D}$. If $\x^*$ is as computed by
  the continuous greedy algorithm, then by Theorem~\ref{thrm:ContGreedy}, we have
\begin{displaymath}
  G(\x^*)\geq
  (1-\nicefrac{1}{e})(f(E_1)+G(\one_{S'}))-f(E_1)=(1-\nicefrac{1}{e})\opt-f(E_1)\,.
\end{displaymath}
To complete the proof observe that $G$ is concave along the
direction of the non-negative vector $\x^*$ (see \cite{CCPV11}) and thus
$G(\bar{\x})=G(\x^*/(1+\delta))\geq G(\x^*)/(1+\delta)$.
\end{proof}

\subsubsection{Main Lemma}
\label{apx-subsubsec:MainLemma}
Under the assumption that we are in the iteration in which the guessed $D=(E_0, E_1, \bc')$ is the \emph{correct guess}, in this section we prove our main lemma which directly implies Theorem~\ref{thrm:MainApproximation}.

To prove this we first write below some properties for the set $R_D$ outputted by running the independent rounding procedure on the vector $\bar{\x}$, which is the vector obtained by scaling the continuous greedy solution $\x^*$ by a factor $1+\delta$. 

Let $\cN' \triangleq \residual$ be the set of residual elements. Let $X_\ell$ be a random variable that indicates whether the element $\ell \in \cN'$ is in $R_D$ or not. Note that $X_\ell$ has been sampled independently according to $\bar{\x}$, \ie, $\Pr\left[X_\ell = 1 \right] = \bar{\x}_\ell = \nicefrac{\x^*_\ell}{(1+\delta)}$.

Since each element $\ell \in \cN'$ has been sampled independently according to $\bar{\x}$, hence $\Pr\left[\ell \in R_D \right] = \bar{\x}_\ell$. Using this and the properties of $\x^*$, it is easy to see that the following claim holds.

\begin{claim}
\label{apx-clm:exp-indep-round}
Following properties hold for the random set $R_D$.
\begin{enumerate}
\item\label{apx-item:exp-pack} $\E\left[ \sum_{\ell \in R_D} \bP_{i, \ell} \right] \leq \rdi/(1+\delta)$ for each $i \in [p]$
\item\label{apx-item:exp-cover} $\E\left[ \sum_{\ell \in R_D} \bC_{j, \ell} \right] \geq \sdj/(1+\delta)$ for each $j \in [c]$.
\end{enumerate}
\end{claim}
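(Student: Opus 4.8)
The plan is to prove both inequalities by a one-line application of linearity of expectation, relying only on the feasibility of the point $\x^*$ produced by the continuous greedy algorithm. Concretely, Theorem~\ref{thrm:ContGreedy} applied in Line~\ref{apx-alg:OneSidedMainAlgorithm-ContGreedy} to the (nonempty) polytope $\cP\cap\cC$ returns $\x^*\in\cP\cap\cC$, so that $\tilde{\mathbf{P}}\x^*\leq\br_D$ and $\tilde{\mathbf{C}}\x^*\geq\bs_D$ hold simultaneously; and the scaling in Line~$4$ divides $\x^*$ by the positive number $1+\delta$, which preserves the direction of each matrix inequality, giving $\tilde{\mathbf{P}}\bar{\x}\leq\br_D/(1+\delta)$ and $\tilde{\mathbf{C}}\bar{\x}\geq\bs_D/(1+\delta)$.

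For the first item, I would fix a packing index $i\in[p]$. Since the events $\{\ell\in R_D\}$ for $\ell\in\cN'$ are mutually independent and satisfy $\Pr[\ell\in R_D]=\bar{\x}_\ell=\x^*_\ell/(1+\delta)$, linearity of expectation gives
\[
  \E\Big[\sum_{\ell\in R_D}\bP_{i,\ell}\Big]
  \;=\;\sum_{\ell\in\cN'}\bP_{i,\ell}\,\Pr[\ell\in R_D]
  \;=\;\frac{1}{1+\delta}\,(\tilde{\mathbf{P}}\x^*)_i
  \;\leq\;\frac{\rdi}{1+\delta},
\]
where the final inequality is the $i$-th coordinate of $\tilde{\mathbf{P}}\x^*\leq\br_D$. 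For the second item, fixing a covering index $j\in[c]$, the identical computation yields $\E\big[\sum_{\ell\in R_D}\bC_{j,\ell}\big]=(1+\delta)^{-1}(\tilde{\mathbf{C}}\x^*)_j\geq\sdj/(1+\delta)$, this time invoking $\tilde{\mathbf{C}}\x^*\geq\bs_D$. Together these are exactly the two asserted bounds.

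I do not expect any real obstacle in this claim; it is a purely routine consequence of linearity of expectation and of the containment $\x^*\in\cP\cap\cC$. The only subtleties worth stating explicitly are that the continuous greedy solution actually lies in the constraint polytope (so that both feasibility inequalities are available prior to rounding) and that $1+\delta>0$, so that dividing through by $1+\delta$ flips neither inequality. This claim will then serve as the base estimate on top of which the concentration arguments for the non-critical packing constraints and for the covering constraints are built.
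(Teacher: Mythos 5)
Your proof is correct and matches the paper's (largely implicit) argument: the paper simply notes that each $\ell$ is included independently with probability $\bar{\x}_\ell = \x^*_\ell/(1+\delta)$ and that the claim follows from the feasibility of $\x^*$ in $\cP\cap\cC$, which is exactly the linearity-of-expectation computation you carry out explicitly. No gaps; your version just spells out what the paper declares ``easy to see.''
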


We know that all elements in the residual instance are small, if we ignore the critical constraints. Now we derive the probability for various types of constraints.

\begin{claim}\label{apx-clm:OneSidedProbNonTinyPackCover}
For any $i \in [p] \setminus Y_D$ and $j \in [c] \setminus Z_D$ that is not a critical constraint, 
\begin{enumerate}
    \item $\Pr\left[\sum_{\ell \in \sol} \bP_{i, \ell} > 1 \right] \leq \exp\left(-\frac{\delta^2}{3\alpha}\right)$.
    \item $\Pr\left[\sum_{\ell \in \sol} \bC_{j, \ell} < (1-2\delta) \bc'_j \right]  \leq \exp\left(-\frac{\delta^2}{3\alpha}\right)$.
\end{enumerate}
\end{claim}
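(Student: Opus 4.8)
The plan is to prove \Cref{apx-clm:OneSidedProbNonTinyPackCover} by a standard application of multiplicative Chernoff bounds to the sum $\sum_{\ell \in \sol}$, after isolating the contribution of $E_1$ (which is deterministic) and of the random set $R'_D = R_D \setminus L_D$. First I would note that for a non-critical packing constraint $i \notin Y_D$, the deterministic part contributes exactly $\sum_{\ell \in E_1} \bP_{i,\ell} = 1 - (\rd)_i$, so the event $\sum_{\ell \in \sol} \bP_{i,\ell} > 1$ is equivalent to $\sum_{\ell \in R'_D} \bP_{i,\ell} > (\rd)_i$. Since $R'_D \subseteq R_D$ and all entries are non-negative, it suffices to bound $\Pr[\sum_{\ell \in R_D} \bP_{i,\ell} > (\rd)_i]$.

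Next I would set up the Chernoff bound. Define the random variables $\bP_{i,\ell} X_\ell$ for $\ell \in \tilde{\cN}$; rescaling by $1/(\rd)_i$ gives variables in $[0,1]$ — here is exactly where consistency of $D$ is used: since $D$ is consistent, $P_D = \varnothing$, so for the non-critical constraint $i$ every remaining element satisfies $\bP_{i,\ell} < \alpha (\rd)_i$, hence each rescaled variable lies in $[0,\alpha)$. By \Cref{apx-clm:exp-indep-round}\eqref{apx-item:exp-pack}, the expectation $\mu_i \triangleq \E[\sum_{\ell \in R_D} \bP_{i,\ell}/(\rd)_i] \leq 1/(1+\delta)$. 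We want to bound the probability that this sum exceeds $1 = (1+\delta)\mu_i \cdot \frac{1}{(1+\delta)\mu_i}$; since $\mu_i \le 1/(1+\delta)$, we have $1 \ge (1+\delta)\mu_i$, so the deviation ratio is at least $(1+\delta)$, i.e.\ we are bounding $\Pr[\text{sum} > (1+\delta')\mu_i]$ for some $\delta' \ge \delta$. The scaled Chernoff bound for sums of independent $[0,\alpha]$-valued variables gives $\Pr[\text{sum} > (1+\delta')\mu_i] \le \exp(-\delta'^2 \mu_i / (3\alpha))$; plugging in $\mu_i \ge (\rd)_i / (1+\delta)$ and using that $i$ being non-critical means $(\rd)_i > \delta$, together with $\delta' \ge \delta$, yields an exponent of roughly $-\delta^2/(3\alpha)$ after absorbing the $1/(1+\delta)$ factors into the constant. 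The covering case is symmetric: the deterministic part contributes $\sum_{\ell \in E_1}\bC_{j,\ell}$, and using $C_D = \varnothing$ each remaining element has $\bC_{j,\ell} < \alpha (\sd)_j$, so the lower-tail Chernoff bound $\Pr[\text{sum} < (1-\delta')\mu_j] \le \exp(-\delta'^2 \mu_j/(2\alpha))$ applies; combined with $(\sd)_j > \delta \bc'_j$ (non-criticality) and $\mu_j \ge (\sd)_j/(1+\delta)$, and accounting for the extra $\delta$ slack from the scaling factor $1/(1+\delta)$ when comparing to the target $(1-2\delta)\bc'_j$, this gives the claimed bound.

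The main obstacle — really a bookkeeping subtlety rather than a deep difficulty — is tracking the two separate sources of slack: the scaling factor $1/(1+\delta)$ applied to $\x^*$, and the Chernoff deviation parameter, and making sure that after restricting from $R_D$ to $R'_D$ (for packing) or accounting for the loss of $R_D \cap L_D$ (for covering), the inequalities still go the right way. For the packing bound this is easy because dropping elements only helps; for the covering bound one must be careful that the statement here only concerns $\sol = E_1 \cup R_D$ (not $R'_D$), so the $L_D$ removal is handled separately (presumably in a later claim), and here we genuinely work with $R_D$. One also needs the standard form of the Chernoff bound that applies to independent random variables taking values in a bounded interval $[0,a]$ rather than $\{0,1\}$; I would state this as a cited fact. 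Modulo these careful constant-chasing steps, the proof is a direct concentration argument, and the $\exp(-\delta^2/(3\alpha))$ form of the bound is exactly what one gets once $\mu \ge \delta/(1+\delta)$ and the per-element size bound is $\alpha \mu$.
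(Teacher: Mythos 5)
Your proposal follows essentially the same route as the paper's proof: reduce the event on $E_1\cup R_D$ to the random sum over $R_D$, use consistency ($P_D=C_D=\varnothing$) so that every remaining element has weight at most $\alpha$ times the residual of the non-critical constraint, combine the expectation bounds of Claim~\ref{apx-clm:exp-indep-round} (coming from the scaled-down fractional solution) with a Chernoff bound for bounded weights at relative deviation $\approx\delta$, and absorb the $(1+\delta)$ factors. The only blemishes are cosmetic and create no gap: the lower bound ``$\mu_i\geq \rdi/(1+\delta)$'' for the packing upper tail is neither available nor needed (the upper bound $1/(1+\delta)$ on the normalized mean suffices, exactly as you state earlier), and non-criticality enters only to make $P_D=C_D=\varnothing$ applicable to constraint $i$ or $j$ and, for covering, to compare $\sdj-2\delta\bc'_j$ with $(1-2\delta)\sdj$ via $\bc'_j\geq\sdj$, not quantitatively via $\rdi>\delta$ or $\sdj>\delta\bc'_j$.
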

\begin{proof}
Now for any packing constraint $i \in [p] \setminus Y_D$ and for each $\ell \in \cN'$, let us define the scaled matrix $\tP$ such that $\tP_{i, \ell} = \bP_{i, \ell}/(\alpha \rdi) \leq 1$. The last inequality follows from Defn.~\ref{apx-def:OneSidedCorrectGuess}.\ref{apx-item:oneSidedConsistentGuessNoBig}. Notice that $\E[\sum_{\ell \in R_D} \tP_{i, \ell}] \leq \nicefrac{\rdi}{(1+\delta) \alpha \rdi} = \nicefrac{1}{(1+\delta)\alpha}$ by Claim~\ref{apx-clm:exp-indep-round}.\ref{apx-item:exp-pack}. Now, applying a generalization of Chernoff bound (Theorem3.3, ~\cite{CL06-Chernoff}) with $X=\sum_{\ell \in \cN'} \tP_{i, \ell} X_\ell$, we obtain
\begin{align*}
\Pr\left[ \sum_{\ell \in \sol} \bP_{i, \ell} > 1\right] & = \Pr\left[\sum_{\ell \in R_D} \bP_{i, \ell} > \rdi \right] \\ 
			    &= \Pr\left[\sum_{\ell \in R_D} \tP_{i, \ell} > \nicefrac{1}{\alpha}\right]\\
                             & = \Pr\left[\sum_{\ell \in R_D} \tP_{i, \ell} > (1+\delta)\frac{1}{(1+\delta)\alpha}\right]\\
                             & \leq \exp\left(-\frac{1}{(1+\delta)\alpha} \cdot (\delta^2)/3\right)\\
                             & \leq \exp\left(-\frac{\delta^2}{3\alpha}\right)\,.
\end{align*}

Similarly, for each covering constraint $j \in [c] \setminus Z_D$, and each $\ell \in \cN'$, we define the scaled matrix $\tC$ such that $\tC_{j, \ell} = \bC_{j, \ell}/(\alpha \sdj) \leq 1$. $\E[\sum_{\ell \in R_D} \tC_{j, \ell}] \geq \nicefrac{\sdj}{(1+\delta) \alpha \sdj} = \nicefrac{1}{(1+\delta)\alpha}$ by Claim~\ref{apx-clm:exp-indep-round}.\ref{apx-item:exp-cover}. Again, applying Theorem3.3, ~\cite{CL06-Chernoff} with $X=\sum_{\ell \in \cN'} \tC_{j, \ell} X_\ell$, we obtain

\begin{align*}
\Pr\left[ \sum_{\ell \in \sol} \bC_{j, \ell} < (1 - 2 \delta) \bc'_j \right] & = \Pr\left[\sum_{\ell \in R_D} \bC_{j, \ell} <(1 - 2 \delta) \sdj \right] \\ 
			    &= \Pr\left[\sum_{\ell \in R_D} \tC_{j, \ell} < \nicefrac{(1 - 2 \delta)}{\alpha}\right]\\
                             & = \Pr\left[\sum_{\ell \in R_D} \tC_{j, \ell} < (1 - \delta)\frac{1}{(1+\delta)\alpha}\right]\\
                             & \leq \exp\left(-\frac{1}{(1+\delta)\alpha} \cdot (\delta^2)/2\right)\\
                             & \leq \exp\left(-\frac{\delta^2}{3\alpha}\right)\,.
\end{align*}

\end{proof}

For any $i \in Y_D$, let $S_D^i, L_D^i \subseteq \cN'$ be the set of small, large elements respectively, such that 

\[S_D^i \triangleq \left\{ \ell \in \cN': \bP_{i, \ell} < \beta \cdot \rdi \right\} \]
\[L_D^i \triangleq \cN'\setminus S_D^i \]

Note that for every $i \in Y_D$, $L_D^i \subseteq L_D$ and $L_D = \bigcup_{i\in Y_D} L_D^i$. Now using the same calculations as the previous claim, we get the following claim.

\begin{claim}\label{apx-clm:OneSidedProbTinyPack}
For any critical packing constraint $i\in Y_D$,
\begin{enumerate}
\item $\Pr\left[\sum_{\ell \in E_1 \cup (R_D \cap S_D^i)} \bP_{i, \ell} > 1 \right] \leq \exp\left(-\frac{\delta^2}{3\beta}\right)$.
\item $\Pr[\sum_{\ell \in R_D} \bP_{i, \ell} > 10b \cdot \rdi] \leq \frac{1}{(1+\delta) 10b}$, for any constant.
\end{enumerate}
\end{claim}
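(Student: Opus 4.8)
The plan is to prove the two bounds separately: the first by re-running the scaling-plus-Chernoff argument of Claim~\ref{apx-clm:OneSidedProbNonTinyPackCover}, now with the parameter $\beta$ in place of $\alpha$ and restricted to the small elements of the critical constraint, and the second by a one-line application of Markov's inequality. In both cases the only input beyond elementary estimates is the expectation bound of Claim~\ref{apx-clm:exp-indep-round}.\ref{apx-item:exp-pack}, namely $\E\left[\sum_{\ell\in R_D}\bP_{i,\ell}\right]\le\rdi/(1+\delta)$.

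For part~$(1)$, first I would rewrite the event in terms of the residual budget. Since $R_D\subseteq\cN'$ is disjoint from $E_1$ and $\sum_{\ell\in E_1}\bP_{i,\ell}=1-\rdi$ by the definition of $\rd$, the event $\{\sum_{\ell\in E_1\cup(R_D\cap S_D^i)}\bP_{i,\ell}>1\}$ is exactly $\{\sum_{\ell\in R_D\cap S_D^i}\bP_{i,\ell}>\rdi\}$; if $\rdi=0$ then $S_D^i=\varnothing$ and there is nothing to prove, so assume $\rdi>0$. By the definition of $S_D^i$ every $\ell\in S_D^i$ satisfies $\bP_{i,\ell}<\beta\rdi$, so I would rescale by setting $\tP_{i,\ell}\triangleq\bP_{i,\ell}/(\beta\rdi)\in[0,1)$ and consider $X\triangleq\sum_{\ell\in\cN'\cap S_D^i}\tP_{i,\ell}X_\ell$. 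Because all terms are non-negative, Claim~\ref{apx-clm:exp-indep-round}.\ref{apx-item:exp-pack} gives $\E[X]\le\frac{1}{\beta\rdi}\cdot\frac{\rdi}{1+\delta}=\frac{1}{(1+\delta)\beta}$, and then the generalized Chernoff bound (Theorem~3.3 of~\cite{CL06-Chernoff}), applied verbatim as in Claim~\ref{apx-clm:OneSidedProbNonTinyPackCover} with deviation factor $1+\delta$, yields
\[
\Pr\left[\sum_{\ell\in R_D\cap S_D^i}\bP_{i,\ell}>\rdi\right]=\Pr\left[X>\tfrac{1}{\beta}\right]=\Pr\left[X>(1+\delta)\tfrac{1}{(1+\delta)\beta}\right]\le\exp\left(-\tfrac{\delta^2}{3\beta}\right),
\]
which is the desired bound. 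The one place that needs care is that the sum must be restricted to $S_D^i$ before rescaling, so that the rescaled coefficients lie in $[0,1]$; this is precisely the point where a critical packing constraint differs from a non-critical one, since for $i\in Y_D$ the consistency of $D$ gives no control on $\bP_{i,\ell}$ relative to $\rdi$, and the large elements $L_D^i$ are instead dealt with by the post-processing step that removes $L_D$.

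For part~$(2)$, I would simply invoke Markov's inequality: since $\E\left[\sum_{\ell\in R_D}\bP_{i,\ell}\right]\le\rdi/(1+\delta)$ by Claim~\ref{apx-clm:exp-indep-round}.\ref{apx-item:exp-pack} (and, as above, the degenerate case $\rdi=0$ being trivial),
\[
\Pr\left[\sum_{\ell\in R_D}\bP_{i,\ell}>10b\,\rdi\right]\le\frac{\E\left[\sum_{\ell\in R_D}\bP_{i,\ell}\right]}{10b\,\rdi}\le\frac{1}{(1+\delta)\,10b},
\]
which holds for any value of the constant $b$. I do not anticipate a genuine obstacle here: the whole claim is a routine specialization of the concentration computation already carried out for non-critical constraints, together with a Markov bound, and the only subtlety worth spelling out in the final write-up is the identity $\sum_{\ell\in E_1}\bP_{i,\ell}=1-\rdi$ and the trivial handling of $\rdi=0$.
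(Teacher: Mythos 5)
Your proposal is correct and follows essentially the same route as the paper: for part $(1)$ you rescale by $\beta\rdi$, restrict to the small elements $S_D^i$ so the scaled entries lie in $[0,1]$, use the expectation bound of Claim~\ref{apx-clm:exp-indep-round}.\ref{apx-item:exp-pack}, and apply the same generalized Chernoff bound as in Claim~\ref{apx-clm:OneSidedProbNonTinyPackCover}; for part $(2)$ you apply Markov's inequality exactly as the paper does. Your explicit remarks that the event rewrites via $\sum_{\ell\in E_1}\bP_{i,\ell}=1-\rdi$ and that boundedness of the scaled entries comes from the definition of $S_D^i$ (rather than from consistency of $D$) are welcome clarifications, but they do not change the argument.
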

\begin{proof}
For any critical packing constraint $i \in Y_D$ and for each $\ell \in S_D^i$, we again define the scaled matrix $\tP$ such that $\tP_{i, \ell} = \bP_{i, \ell}/(\beta \rdi) \leq 1$. The last inequality follows from Defn.~\ref{apx-def:OneSidedCorrectGuess}.\ref{apx-item:oneSidedConsistentGuessNoBig}. Notice that $\E[\sum_{\ell \in R_D \cap S_D^i} \tP_{i, \ell}] \leq \E[\sum_{\ell \in R_D} \tP_{i, \ell}] \leq \nicefrac{\rdi}{(1+\delta) \beta \rdi} = \nicefrac{1}{(1+\delta)\beta}$ by Claim~\ref{apx-clm:exp-indep-round}.\ref{apx-item:exp-pack}. Applying Chernoff bound with $X=\sum_{\ell \in S_D^i} \tP_{i, \ell} X_\ell$, we obtain
\begin{align*}
\Pr\left[ \sum_{\ell \in E_0 \cup (R_D \cap S_D^i)} \bP_{i, \ell} > 1\right] & = \Pr\left[\sum_{\ell \in R_D \cap S_D^i} \bP_{i, \ell} > \rdi \right] \\ 
			    &= \Pr\left[\sum_{\ell \in R_D \cap S_D^i} \tP_{i, \ell} > \nicefrac{1}{\beta}\right]\\
                             & = \Pr\left[\sum_{\ell \in R_D \cap R_D} \tP_{i, \ell} > (1+\delta)\frac{1}{(1+\delta)\beta}\right]\\
                             & \leq \exp\left(-\frac{1}{(1+\delta)\beta} \cdot (\delta^2)/3\right)\\
                             & \leq \exp\left(-\frac{\delta^2}{3\beta}\right)\,.
\end{align*}

For the second part, using Markov's inequality and Claim~\ref{apx-clm:exp-indep-round}.\ref{apx-item:exp-pack},  we get
\begin{displaymath}
\Pr\left[\sum_{\ell \in R_D} \bP_{i, \ell} > 10b \rdi\right] \stackrel{\text{Markov Ineq.}}\leq E\left[ \sum_{\ell \in R_D} \bP_{i, \ell} \right]/(10b \rdi) \leq \frac{1}{ 10b (1+\delta)} \,.
\end{displaymath}

\end{proof}

Since there are at most $b$ critical constraints with probability at most $1/10$ there is some critical constraint that is violated by more than a factor of $10b$.

For any covering constraint $j \in Z_D$, the fact that $\sdj \leq \delta \bc'_j$ gives the following claim.

\begin{claim}\label{apx-clm:OneSidedTinyCover}
For any covering constraint $j \in Z_D$, $\sum_{\ell \in E_1} \bC_{j, \ell} \geq (1-\delta) \bc'_j$
\end{claim}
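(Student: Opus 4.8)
The claim to prove is Claim~\ref{apx-clm:OneSidedTinyCover}: for any critical covering constraint $j \in Z_D$ we have $\sum_{\ell \in E_1} \bC_{j,\ell} \geq (1-\delta)\bc'_j$. This should follow directly from unwinding the definition of what it means for a covering constraint to be critical, combined with the correctness of the guess $D$. First I would recall that $j\in Z_D$ means $(\bs_D)_j \leq \delta \bc'_j$, and that by definition $(\bs_D)_j = \max\{0, \bc'_j - \sum_{\ell \in E_1} \bC_{j,\ell}\}$.

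The argument splits into two trivial cases according to which term attains the maximum. If $(\bs_D)_j = 0$, then $\bc'_j - \sum_{\ell \in E_1}\bC_{j,\ell} \leq 0$, hence $\sum_{\ell\in E_1}\bC_{j,\ell} \geq \bc'_j \geq (1-\delta)\bc'_j$ since $\bc'_j \geq 0$ and $\delta \in (0,1)$. If instead $(\bs_D)_j = \bc'_j - \sum_{\ell\in E_1}\bC_{j,\ell} > 0$, then the criticality condition $(\bs_D)_j \leq \delta\bc'_j$ rearranges immediately to $\sum_{\ell \in E_1}\bC_{j,\ell} = \bc'_j - (\bs_D)_j \geq \bc'_j - \delta\bc'_j = (1-\delta)\bc'_j$. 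In both cases the desired inequality holds, which completes the proof; note that correctness of the guess is not even needed here, only consistency (via $\bc' \geq \one_c > 0$) and the definition of $Z_D$.

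There is essentially no obstacle: the statement is a one-line consequence of the definitions of $(\bs_D)_j$ and of a critical covering constraint. The only mild care needed is the case distinction coming from the $\max\{0,\cdot\}$ in the definition of the residual covering requirement, to make sure the bound holds even when $E_1$ already over-covers constraint $j$. I would present the two cases in a couple of sentences and conclude. The role of this claim in the larger argument is that, for critical covering constraints, the already-chosen elements $E_1$ essentially cover the requirement up to a $(1-\delta)$ factor, so the randomized rounding on $\tilde{\cN}$ need not worry about them — this is what lets the final union bound avoid charging a failure probability to critical covering constraints.
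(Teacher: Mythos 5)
Your proof is correct and matches the paper's (implicit) argument: the paper simply notes that $(\bs_D)_j \leq \delta\bc'_j$ immediately gives the claim, which is exactly your rearrangement, and your careful case split on the $\max\{0,\cdot\}$ in the definition of $(\bs_D)_j$ only makes the same one-line deduction explicit.
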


By Lemma~\ref{apx-lem:os-quality-frac}, we have that $G(\bar{\x})\geq(1-\nicefrac{1}{\e}-\delta)\opt-f(E_1)$ where $\bar{\x}$ is the fractional solution computed in line~\ref{apx-alg:OneSidedMainAlgorithm-ContGreedy} of Algorithm~\ref{apx-alg:OneSidedMainAlgorithm} in the iteration when the algorithm enumerates $D$.

\begin{claim}\label{apx-clm:OneSidedProbObj}
$\Pr[f(\sol)<(1-\nicefrac{1}{\e}-2\delta)\opt] \leq  \exp\left(-\frac{\gamma\delta^2}{2}\right)$.
\end{claim}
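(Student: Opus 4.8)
$\Pr[f(\sol)<(1-\nicefrac{1}{\e}-2\delta)\opt] \leq \exp\left(-\frac{\gamma\delta^2}{2}\right)$.

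The plan is to reduce the statement to a Chernoff-type lower-tail concentration inequality for the monotone submodular residual objective $g$ under independent rounding, and then to pin down the constant in the exponent by a careful choice of the ``reference mean'' fed into that inequality. Recall that $f(\sol)=f(E_1)+g(R_D)$, that $g$ is monotone, submodular and non-negative, and that $R_D$ is obtained from $\bar{\x}$ by \emph{independent} rounding, so $\E[g(R_D)]=G(\bar{\x})$. Hence the event $f(\sol)<(1-\nicefrac{1}{\e}-2\delta)\opt$ is exactly the event $g(R_D)<(1-\nicefrac{1}{\e}-2\delta)\opt-f(E_1)$, and it suffices to control the lower tail of $g(R_D)$ around $G(\bar{\x})$. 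For this I will invoke the standard concentration bound for monotone submodular functions under independent rounding (see, e.g., \cite{chekuriVZ10-rand-exch}): if all per-element marginals of $g$ are bounded by $c$ and $\mu_0\le G(\bar{\x})$, then $\Pr[g(R_D)\le(1-\theta)\mu_0]\le\exp(-\mu_0\theta^2/(2c))$ for every $\theta\in(0,1]$.

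Two inputs are needed. For the mean, Lemma~\ref{apx-lem:os-quality-frac} already gives $G(\bar{\x})\ge\mu_0:=(1-\nicefrac{1}{\e}-\delta)\opt-f(E_1)$, so I will take precisely this $\mu_0$ as the reference mean. For the marginals, note that every $\ell\in\cN'\subseteq\cN\setminus E_0$ survives the filtering in line~\ref{apx-alg:E_0} of Algorithm~\ref{apx-alg:OneSidedGuessesEnumeration}, hence $g(\{\ell\})=f_{E_1}(\ell)\le\gamma^{-1}f(E_1)$; moreover, since $D$ is correct we have $E_1\subseteq O$ (Definition~\ref{apx-def:OneSidedCorrectGuess}) and therefore $f(E_1)\le f(O)=\opt$ by monotonicity. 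Submodularity of $g$ then bounds every marginal $g_A(\ell)$ by $g(\{\ell\})\le\gamma^{-1}\opt=:c$.

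With these in hand I will choose $\theta:=\delta\opt/\mu_0$, so that $(1-\theta)\mu_0=\mu_0-\delta\opt=(1-\nicefrac{1}{\e}-2\delta)\opt-f(E_1)$ is exactly the threshold of interest; plugging $\theta$ and $c=\gamma^{-1}\opt$ into the concentration bound makes the exponent equal to $\mu_0\theta^2/(2c)=\gamma\delta^2\opt/(2\mu_0)$, and since $\mu_0\le(1-\nicefrac{1}{\e}-\delta)\opt<\opt$ this is at least $\gamma\delta^2/2$, yielding the claimed bound $\exp(-\gamma\delta^2/2)$. The degenerate cases are immediate: if $\mu_0\le\delta\opt$ (in particular if $\mu_0\le 0$) then the threshold $(1-\nicefrac{1}{\e}-2\delta)\opt-f(E_1)=\mu_0-\delta\opt$ is non-positive while $g(R_D)\ge 0$ always, so the probability is $0$.

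The main obstacle is getting the constant in the exponent exactly right rather than with an extra multiplicative factor. The key point is that one must \emph{not} use the true expectation $G(\bar{\x})$ as the reference mean --- it can in principle exceed $\opt$, because $R_D$ need not be feasible --- but rather the specific lower bound $\mu_0=(1-\nicefrac{1}{\e}-\delta)\opt-f(E_1)$, which is conveniently at most $\opt$; combined with the marginal bound $\gamma^{-1}\opt$ (which crucially uses $E_1\subseteq O$ via correctness), the inequality $\mu_0\le\opt$ is precisely what converts the generic exponent $\mu_0\theta^2/(2c)$ into $\gamma\delta^2/2$. Everything else is routine arithmetic.
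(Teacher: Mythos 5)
Your proposal is correct and follows the same route as the paper's proof: decompose $f(\sol)=f(E_1)+g(R_D)$, invoke the CVZ10 lower-tail concentration bound for the monotone submodular $g$ under independent rounding with the per-element marginal bound $g(\ell)\le\gamma^{-1}\opt$ (using both the filter in line~\ref{apx-alg:E_0} of Algorithm~\ref{apx-alg:OneSidedGuessesEnumeration} and, via correctness, $E_1\subseteq O$), and pick $\theta$ so the $(1-\theta)\cdot(\text{reference mean})$ threshold equals the quantity of interest. The one genuine difference is the reference mean you feed to the Chernoff machinery. The paper uses $G(\bar{\x})$ itself with $\theta=\delta\opt/G(\bar{\x})$, producing the exponent $-\gamma\delta^2\opt/(2G(\bar{\x}))$; its final step, $\le\exp(-\gamma\delta^2/2)$, therefore needs $G(\bar{\x})\le\opt$, which is never argued and is not immediate (the multilinear extension at a fractional point of $\cP\cap\cC$ can exceed the integer optimum). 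You instead plug in the lower bound $\mu_0=(1-\nicefrac{1}{\e}-\delta)\opt-f(E_1)\le G(\bar{\x})$ from Lemma~\ref{apx-lem:os-quality-frac}, which is trivially $\le\opt$, and observe that the Chernoff bound is monotone in the reference mean; the identical algebra then concludes cleanly. You also explicitly dispatch the degenerate case $\mu_0\le\delta\opt$ (where $\theta>1$), which the paper glosses over. In short, your argument matches the paper's plan while at the same time patching a small gap in the written proof.
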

\begin{proof}
We have by Theorem 1.3, Chekuri et al.~\cite{CVZ10-arxiv} and using $g(\ell)=f_{E_1}(\ell)\leq\gamma^{-1}\opt$ for all $\ell\in\cN'$
\begin{align*}
    \Pr\left[f(\sol)<\left(1-\frac{1}{\e}-2\delta\right)\opt\right] & = \Pr\left[g(R_D)<\left(1-\frac{1}{\e}-2\delta\right)\opt-f(E_1)\right]\\
     & \leq \Pr\left[g(R_D)<G(\bar{\x})-\delta\opt\right]\\
     & = \Pr\left[\frac{g(R_D)}{\gamma^{-1}\opt}<\frac{G(\bar{\x})}{\gamma^{-1}\opt}-\gamma\delta\right]\\
     & = \Pr\left[\frac{g(R_D)}{\gamma^{-1}\opt}<\left(1-\frac{\delta\opt}{G(\bar{\x})}\right)\cdot \frac{G(\bar{\x})}{\gamma^{-1}\opt}\right]\\
     & \leq \exp\left(\frac{-G(\bar{\x})}{2\gamma^{-1}\opt}\cdot \left(\frac{\delta\opt}{G(\bar{\x})}\right)^2\right)\\
     & \leq  \exp\left(-\frac{\gamma\delta^2}{2}\right)\,.
\end{align*}

\end{proof}


Now we fix the parameter $\alpha = \delta^3$, $\beta = \nicefrac{\delta^2}{3b}$ and $\gamma = \nicefrac{1}{\delta^3}$ and get the following claim.
\begin{claim}\label{apx-clm:OneSidedProperties}
For any positive $\delta \leq \nicefrac{1}{15b^3}$, with probability at least $\nicefrac{1}{2}$ we get the following properties for the intermediate solution $\sol$.
\begin{enumerate}
\item\label{apx-item:pack-non-tiny} For all $i \in [p] \setminus Y_D$, we have $\sum_{\ell \in \sol} \bP_{i, \ell} \leq 1$. 
\item\label{apx-item:pack-tiny}  For all $i \in Y_D$, we have $\sum_{\ell \in R_D} \bP_{i, \ell} \leq 10b \rdi$ and $\sum_{\ell \in E_1 \cup (R_D \cap S_D^i)} \bP_{i, \ell} \leq 1$. 
\item\label{apx-item:cover-non-tiny}  For all $j \in [c] \setminus Z_D$, we have $\sum_{\ell \in \sol} \bC_{j, \ell} \geq (1 - 2 \delta) \bc'_j$.
\item\label{apx-item:cover-tiny}  For all $j \in Z_D$, we have $\sum_{\ell \in E_1} \bC_{j, \ell} \geq (1 - \delta) \bc'_j$.
\item\label{apx-item:objective}  $f(\sol) \geq (1-\nicefrac{1}{\e}-2\delta)\opt$.
\end{enumerate}
\end{claim}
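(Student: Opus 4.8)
The plan is to invoke the five preceding claims, each of which bounds the failure probability of one ``bad event,'' and then take a union bound, choosing the parameters $\alpha=\delta^3$, $\beta=\nicefrac{\delta^2}{3b}$, $\gamma=\nicefrac{1}{\delta^3}$ so that each failure probability is at most a small constant. Concretely, I would first define the following bad events: for each non-critical packing constraint $i\in[p]\setminus Y_D$ the event that $\sum_{\ell\in\sol}\bP_{i,\ell}>1$; for each non-critical covering constraint $j\in[c]\setminus Z_D$ the event that $\sum_{\ell\in\sol}\bC_{j,\ell}<(1-2\delta)\bc'_j$; for each critical packing constraint $i\in Y_D$ the event that $\sum_{\ell\in E_1\cup(R_D\cap S_D^i)}\bP_{i,\ell}>1$; the event that some critical packing constraint has $\sum_{\ell\in R_D}\bP_{i,\ell}>10b\,\rdi$; and the event that $f(\sol)<(1-\nicefrac1\e-2\delta)\opt$. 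Properties~(4) of the claim (covering on critical constraints) holds deterministically by Claim~\ref{apx-clm:OneSidedTinyCover}, so it contributes nothing to the union bound.

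Next I would plug in the parameter choices. With $\alpha=\delta^3$, Claims~\ref{apx-clm:OneSidedProbNonTinyPackCover} bound each non-critical failure probability by $\exp(-\nicefrac{1}{3\delta})$; with $\beta=\nicefrac{\delta^2}{3b}$, the first part of Claim~\ref{apx-clm:OneSidedProbTinyPack} bounds each critical-packing ``small part overflows'' probability by $\exp(-\nicefrac{b}{\delta})$; with $\gamma=\nicefrac{1}{\delta^3}$, Claim~\ref{apx-clm:OneSidedProbObj} bounds the objective-failure probability by $\exp(-\nicefrac{1}{2\delta})$. The remaining ``$10b$-overflow on some critical constraint'' event is handled by the second part of Claim~\ref{apx-clm:OneSidedProbTinyPack} together with a union bound over the at most $b$ critical constraints: its probability is at most $b\cdot\frac{1}{10b(1+\delta)}\leq\nicefrac{1}{10}$. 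For all the exponentially-small terms, since there are at most $p+c\leq b$ of each type, the total contribution is at most (a constant times $b$) $\cdot\exp(-\Omega(1/\delta))$, which for $\delta\leq\nicefrac{1}{15b^3}$ (or the slightly stronger bound on $\delta$ stated in Lemma~\ref{lem:RoundingFinal}) is at most, say, $\nicefrac{1}{10}$. Summing: the total failure probability is at most $\nicefrac{1}{10}+\nicefrac{1}{10}<\nicefrac{1}{2}$, so with probability at least $\nicefrac{1}{2}$ all five properties hold simultaneously.

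The only mild subtlety — and the place I would be most careful — is bookkeeping the number of constraints of each type inside the union bound: one must make sure that the polynomial prefactor ($O(b)$ or $O(b^2)$ if one is sloppy about counting pairs) is dominated by the exponentially small $\exp(-\Omega(1/\delta))$ factor for the stated range of $\delta$; this is exactly why the hypothesis $\delta\leq\nicefrac{1}{15b^3}$ (with room to spare) is imposed. A second minor point is consistency of the events' phrasing with the claims as stated: Claim~\ref{apx-clm:OneSidedProbTinyPack} is phrased with $E_1\cup(R_D\cap S_D^i)$ while the displayed proof momentarily writes $E_0$; I would simply use the $E_1$ version, which is the one needed since $E_1$ is what ends up in $\sol$. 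No deeper idea is required: this claim is purely the ``collect and union-bound'' step, and the real content lives in the preceding concentration claims.
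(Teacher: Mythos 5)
Your proposal follows exactly the paper's route: a union bound over the failure events of Claims~\ref{apx-clm:OneSidedProbNonTinyPackCover}, \ref{apx-clm:OneSidedProbTinyPack} and \ref{apx-clm:OneSidedProbObj}, with Claim~\ref{apx-clm:OneSidedTinyCover} handling the critical covering constraints deterministically, and the same parameter substitutions $\alpha=\delta^3$, $\beta=\nicefrac{\delta^2}{3b}$, $\gamma=\nicefrac{1}{\delta^3}$. So structurally you are in agreement with the paper.

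One piece of your bookkeeping is off, though, and it is worth fixing because it is the only place where the constants are actually tight. With $\beta=\nicefrac{\delta^2}{3b}$ you get $\delta^2/(3\beta)=b$, not $b/\delta$, so the first part of Claim~\ref{apx-clm:OneSidedProbTinyPack} contributes $b\cdot \e^{-b}$ to the union bound. This term is \emph{not} exponentially small in $\nicefrac{1}{\delta}$: for $b=1$ it equals $\nicefrac{1}{\e}\approx 0.37$, which already exceeds the $\nicefrac{1}{10}$ budget you allotted to all the ``exponentially small'' terms. The claim survives anyway, but only by the tighter accounting the paper does: the total failure probability is $2b\,\e^{-1/(3\delta)}+b\,\e^{-b}+\nicefrac{1}{10}+\e^{-1/(2\delta)}$, and since $b\,\e^{-b}\leq\nicefrac{1}{\e}$ and the two genuinely exponential terms are tiny for $\delta\leq\nicefrac{1}{15b^3}$, the sum is about $0.48<\nicefrac{1}{2}$. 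So replace your ``everything exponential is below $\nicefrac{1}{10}$'' step with this explicit sum; otherwise the argument, including your observation that the $E_0$ appearing in the displayed proof of Claim~\ref{apx-clm:OneSidedProbTinyPack} should be read as $E_1$, matches the paper.
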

\begin{proof}
Using the union bound on probability bounds from Claims~\ref{apx-clm:OneSidedProbNonTinyPackCover}--~\ref{apx-clm:OneSidedProbObj}, the probability that any of the Properties~\ref{apx-item:pack-non-tiny}--~\ref{apx-item:objective} does not hold, for any $1 \leq b$, is at most
\begin{displaymath}
  2b\cdot e^{-\nicefrac{\delta^2}{(3\alpha)}}+b\cdot e^{-\nicefrac{\delta^2}{(3\beta)}}+b\cdot 1/(10b)+e^{\nicefrac{-\gamma\delta^2}{2}}= 2b\cdot e^{-\nicefrac{1}{(3\delta)}}+b\cdot e^{-b}+e^{-1/(2\delta)}+\frac{1}{10} \leq \frac{1}{2}\,.
\end{displaymath}
\end{proof}

Recall that $R_D'$ arises from $R_D$ by removing $L_D$ from $R_D$ which contain elements from $\cN'$ such that $\bP_{i, \ell} > \beta \rdi$ for some critical constraint $i \in Y_D$. The condition from item~\ref{apx-item:pack-tiny} imply that for every such constraint $i \in Y_D$,  $|L_D^i| \leq 10b/\beta$, thus overall, $|L_D| \leq 10b^2/\beta$.

It is easy to see (by Claim~\ref{apx-clm:OneSidedProperties}) that after this step all packing constraints are satisfied. For any critical covering constraint $j \in Z_D$, the set $E_1$ itself has cover value $\geq (1-\delta)  \bc'_j$ (Claim~\ref{apx-clm:OneSidedProperties}.\ref{apx-item:cover-tiny}). For each non-critical covering constraint $j \in [c] \setminus Z_D$, we get the following bound on the loss in covering value.

\begin{claim} \label{apx-clm:OneSidedFinalCoverNonTiny} For each non-critical covering constraint $j \in [c] \setminus Z_D$, \[\sum_{\ell \in E_1 \cup R'_D} \bC_{j, \ell} \geq (1 - (30b^3 + 2) \delta) \cdot  \bc'_j\]
\end{claim}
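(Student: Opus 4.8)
The goal is to lower bound the covering value of the final solution $E_1 \cup R'_D$ on a non-critical covering constraint $j$, where $R'_D = R_D \setminus L_D$. I would start from the already-established bound of Claim~\ref{apx-clm:OneSidedProperties}.\ref{apx-item:cover-non-tiny}, which says that with probability at least $1/2$ the \emph{intermediate} solution $\sol = E_1 \cup R_D$ satisfies $\sum_{\ell \in E_1 \cup R_D} \bC_{j,\ell} \geq (1-2\delta)\bc'_j$. The only damage done in the post-processing step is the removal of $L_D \cap R_D$ from $R_D$, so it suffices to bound $\sum_{\ell \in L_D \cap R_D} \bC_{j,\ell}$ from above and subtract it off.

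\textbf{Key steps.} First I would invoke the size bound on $L_D$ derived right before the claim: conditioned on the good event of Claim~\ref{apx-clm:OneSidedProperties}, $|L_D| \leq 10b^2/\beta$ (coming from $|L_D^i| \leq 10b/\beta$ for each of the at most $b$ critical packing constraints, via item~\ref{apx-item:pack-tiny}). Second, I would use that $j$ is \emph{non-critical}, so no large elements remain with respect to it — by consistency of $D$ (Definition~\ref{Definition:consistent}, property $(4)$: $C_D = \emptyset$) every $\ell \in \tilde{\cN}$ has $\bC_{j,\ell} < \alpha (\bs_D)_j \leq \alpha \bc'_j$. Combining, $\sum_{\ell \in L_D \cap R_D} \bC_{j,\ell} \leq |L_D| \cdot \alpha \bc'_j \leq (10b^2/\beta)\cdot \alpha \bc'_j$. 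Plugging in the parameter choices $\alpha = \delta^3$ and $\beta = \delta^2/(3b)$ gives $(10b^2/\beta)\cdot \alpha = 10b^2 \cdot (3b/\delta^2)\cdot \delta^3 = 30b^3\delta$. Hence
\begin{displaymath}
\sum_{\ell \in E_1 \cup R'_D}\bC_{j,\ell} \;\geq\; \sum_{\ell \in E_1\cup R_D}\bC_{j,\ell} \;-\; \sum_{\ell \in L_D\cap R_D}\bC_{j,\ell} \;\geq\; (1-2\delta)\bc'_j - 30b^3\delta\,\bc'_j \;=\; \bigl(1-(30b^3+2)\delta\bigr)\bc'_j\,.
\end{displaymath}

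\textbf{Main obstacle.} The routine calculation is trivial; the real care is in making sure the two facts being combined hold \emph{simultaneously under the same good event}. The bound $|L_D| \leq 10b^2/\beta$ is itself a consequence of Claim~\ref{apx-clm:OneSidedProperties}.\ref{apx-item:pack-tiny}, which holds with probability $\geq 1/2$, and the bound $\sum_{\ell\in E_1\cup R_D}\bC_{j,\ell}\geq(1-2\delta)\bc'_j$ is Claim~\ref{apx-clm:OneSidedProperties}.\ref{apx-item:cover-non-tiny}, part of the same event — so there is no loss: everything is conditioned on the single probability-$1/2$ event already fixed. The one subtlety I would flag is the deterministic use of $C_D=\emptyset$ and $(\bs_D)_j \leq \bc'_j$ (which follows from $(\bs_D)_j = \max\{0, \bc'_j - \sum_{\ell\in E_1}\bC_{j,\ell}\}$), ensuring the per-element covering coefficient on $\tilde{\cN}$ is genuinely below $\alpha\bc'_j$ rather than merely below $\alpha(\bs_D)_j$; this is what lets me bound the removed mass purely in terms of $\bc'_j$. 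With that in hand the claim follows immediately.
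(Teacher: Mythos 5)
Your proof is correct and matches the paper's argument essentially step for step: both bound the covering loss from removing $L_D$ by $|L_D|\cdot\alpha(\bs_D)_j \leq (10b^2/\beta)\,\alpha\,\bc'_j = 30b^3\delta\,\bc'_j$ (using $C_D=\varnothing$ and the parameter choices) and subtract it from the $(1-2\delta)\bc'_j$ bound of Claim~\ref{apx-clm:OneSidedProperties}.\ref{apx-item:cover-non-tiny}. Your explicit remarks about conditioning on the single good event and about $(\bs_D)_j \leq \bc'_j$ are sound and only make explicit what the paper leaves implicit.
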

\begin{proof}
For any non-critical covering constraint $j \in [c] \setminus Z_D$, since the cover value for each element $\ell \in \cN'$ is at most $\alpha \cdot \sdj $, the loss in cover value after removing $L_D$ is at most $(\nicefrac{10b^2}{\beta}) \cdot \alpha \cdot \sdj \leq (\nicefrac{10b^2\alpha}{\beta}) \cdot  \bc'_j =  (30b^3\delta) \cdot  \bc'_j$. Combining this bound with the bound in Claim~\ref{apx-clm:OneSidedProperties}.\ref{apx-item:cover-non-tiny}, we get  our claim. 
\end{proof}

Finally we get the following bound on the objective function value for the solution $E_1 \cup R'_D$.
\begin{claim} \label{apx-clm:OneSidedFinalObj}
$f(E_1 \cup R'_D) \geq (1 - (30b^3 + 2) \delta) \opt$.
\end{claim}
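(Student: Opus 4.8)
The plan is to read off the bound from the value guarantee already proved for the intermediate set $\sol=E_1\cup R_D$, subtracting the small amount of value destroyed by the post-processing step that replaces $\sol$ by $E_1\cup R'_D=E_1\cup(R_D\setminus L_D)$. I would argue inside the iteration in which $D=(E_0,E_1,\bc')$ is the correct guess, and condition on the probability-at-least-$\nicefrac{1}{2}$ event of Claim~\ref{apx-clm:OneSidedProperties}, so that Properties~\ref{apx-item:pack-non-tiny}--\ref{apx-item:objective} hold simultaneously; in particular the lower bound on $f(\sol)$ recorded in Claim~\ref{apx-clm:OneSidedProperties}.\ref{apx-item:objective} is available.

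First I would bound the number of removed elements. For a critical packing constraint $i\in Y_D$, Property~\ref{apx-item:pack-tiny} gives $\sum_{\ell\in R_D}\mathbf{P}_{i,\ell}\le 10b\,(\mathbf{r}_D)_i$, whereas each $\ell\in R_D\cap L_D^i$, with $L_D^i=\{\ell\in\cN':\mathbf{P}_{i,\ell}\ge\beta(\mathbf{r}_D)_i\}$, contributes at least $\beta(\mathbf{r}_D)_i$ to that sum; hence $|R_D\cap L_D^i|\le 10b/\beta$, and since there are at most $b$ critical packing constraints and $R_D\cap L_D=\bigcup_{i\in Y_D}(R_D\cap L_D^i)$ we obtain $|R_D\cap L_D|\le 10b^2/\beta$. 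Next I would bound the value lost: because $R_D$ is the disjoint union of $R'_D$ and $R_D\cap L_D$, submodularity yields
\begin{align*}
f(\sol)-f(E_1\cup R'_D)=f_{E_1\cup R'_D}(R_D\cap L_D)\le\sum_{\ell\in R_D\cap L_D}f_{E_1}(\ell),
\end{align*}
and every $\ell\in R_D\subseteq\cN'=\cN\setminus(E_0\cup E_1)$ satisfies $f_{E_1}(\ell)\le\gamma^{-1}f(E_1)$ by the definition of $E_0$ in Algorithm~\ref{apx-alg:OneSidedGuessesEnumeration}, while $f(E_1)\le\opt$ by monotonicity together with $E_1\subseteq O$ (Property~\ref{apx-item:corr-os-e1} of Definition~\ref{apx-def:OneSidedCorrectGuess}). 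Substituting the parameter choices $\alpha=\delta^3$, $\beta=\nicefrac{\delta^2}{(3b)}$, $\gamma=\nicefrac{1}{\delta^3}$ turns these two estimates into $f(\sol)-f(E_1\cup R'_D)\le (10b^2/\beta)\cdot\gamma^{-1}\,\opt=30b^3\delta\,\opt$.

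Combining this with the lower bound on $f(\sol)$ from Claim~\ref{apx-clm:OneSidedProperties}.\ref{apx-item:objective} gives $f(E_1\cup R'_D)\ge f(\sol)-30b^3\delta\,\opt\ge(1-(30b^3+2)\delta)\,\opt$, which is exactly the claim. The part I expect to be the main obstacle is the first step, the bound on $|R_D\cap L_D|$: it rests on the fact that the scaled independent rounding overshoots every \emph{critical} packing constraint by no more than the constant factor $10b$ (Property~\ref{apx-item:pack-tiny}, itself obtained by a Markov-inequality argument in Claim~\ref{apx-clm:OneSidedProbTinyPack}), for otherwise the post-processing could delete arbitrarily many elements that are large in some critical constraint and thereby erase a constant fraction of $\opt$; granting that control, everything else is a routine combination of submodularity, monotonicity, and the chosen values of $\alpha,\beta,\gamma$.
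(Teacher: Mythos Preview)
Your proposal is correct and follows essentially the same approach as the paper: bound $|R_D\cap L_D|\le 10b^2/\beta$ via Property~\ref{apx-item:pack-tiny}, use submodularity together with $f_{E_1}(\ell)\le\gamma^{-1}f(E_1)\le\gamma^{-1}\opt$ to control the value lost in the post-processing, and then combine with Claim~\ref{apx-clm:OneSidedProperties}.\ref{apx-item:objective}. (Note that the final combination actually yields $(1-\nicefrac{1}{\e}-(30b^3+2)\delta)\opt$; the missing $\nicefrac{1}{\e}$ in the displayed bound is a typo in the paper's own statement that you have faithfully reproduced.)
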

\begin{proof}
By Observation~$1$ in proof for Lemma~\ref{apx-lem:os-correct-guess}, for any residual element $\ell \in \cN'$ we have $f(\ell)_{E_1} \leq \gamma^{-1} f(E_1)$. Since $E_1 \subseteq O$ and by the fact that $f$ is a monotone submodular functions, we get $f_{E_1 \cup R'_D}(\ell) \leq \gamma^{-1} \opt$. Let $\ell_1, \ell_2 \dots \ell_m$ be any arbitrary ordering of elements of $L_D \cap R_D$ and let $L_q \triangleq \{\ell_1, \ell_2 \dots \ell_q\}$. Notice that $m \leq |L_D| \leq \nicefrac{10b^2}{\beta}$. Overall, we get
\begin{align*}
f(E_1 \cup R_D) - f(E_1 \cup R'_D) 	& \leq \sum_{q=1}^m f(E_1 \cup R'_D \cup L_q) - f(E_1 \cup R'_D \cup L_{q-1}) \\
			  				& \leq \sum_{q=1}^m f(E_1 \cup R'_D \cup \ell_q) - f(E_1 \cup R'_D \cup \ell_{q-1}) \\
			  				& \leq \sum_{q=1}^m \gamma^{-1} \opt \\
			  				& \leq (\nicefrac{10b^3}{\beta\gamma}) \opt = (30b^3 \delta) \opt
\end{align*}
Combining it with the bound in Claim~\ref{apx-clm:OneSidedProperties}.\ref{apx-item:objective} we get the claim.
\end{proof}

Overall, we get our Main Lemma.

\begin{lemma}
\label{apx-lem:MainApproximation}
For any fixed $0 < \eps$, if we choose $\alpha = \delta^3$, $\beta = \nicefrac{\delta^2}{3b}$, $\gamma = \nicefrac{1}{\delta^3}$ and $\delta <\min\left\{\nicefrac{1}{(15b)}, \nicefrac{\eps}{(30b^3 + 2)} \right\}$, with probability at least $\nicefrac{1}{2}$, the Algorithm~\ref{apx-alg:OneSidedMainAlgorithm} outputs a solution $S \subseteq \cN$ in time $n^{\text{poly}(1/\varepsilon)}$, such that 
\begin{enumerate}
\item \label{apx-item:alg-pack} $\bP \one_{E_1 \cup R_D'} \leq \one_p$,
\item \label{apx-item:alg-cover}  $\bC \one_{E_1 \cup R_D'} \geq (1 - \eps) \one_c$.
\item \label{apx-item:alg-obj}  $f(E_1 \cup R_D' ) \geq (1-\nicefrac{1}{\e}- \eps) \opt$.
\end{enumerate}
\end{lemma}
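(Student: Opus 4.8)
The plan is to derive the statement by instantiating the enumeration at a \emph{correct} guess and chaining together the probabilistic claims established above. First I would invoke Lemma~\ref{apx-lem:os-correct-guess} to fix an optimal solution $O$ together with a guess $D=(E_0,E_1,\bc')\in\cL$ that is correct with respect to $O$ in the sense of Definition~\ref{apx-def:OneSidedCorrectGuess}; everything that follows concerns the single iteration of Algorithm~\ref{apx-alg:OneSidedMainAlgorithm} that processes this $D$. In that iteration Lemma~\ref{apx-lem:os-quality-frac} supplies the hypothesis $G(\bar{\x})\geq(1-\nicefrac{1}{\e}-\delta)\opt-f(E_1)$ needed downstream, and with the parameter setting $\alpha=\delta^3$, $\beta=\nicefrac{\delta^2}{(3b)}$, $\gamma=\nicefrac{1}{\delta^3}$ and $\delta<\min\{\nicefrac{1}{(15b)},\nicefrac{\eps}{(30b^3+2)}\}$, Claim~\ref{apx-clm:OneSidedProperties} tells us that its five events hold simultaneously with probability at least $\nicefrac{1}{2}$. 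Conditioning on this event, the remaining work is to pass from the intermediate set $\sol$ to the post-processed set $E_1\cup R'_D$ and to check the three asserted inequalities; since $S_{alg}$ is, by construction, the most valuable output among all guesses whose solution is packing-feasible and $(1-\eps)$-covering-feasible, proving these inequalities for $E_1\cup R'_D$ immediately transfers the guarantee to $S_{alg}$.

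For the packing constraints: a non-critical constraint $i\notin Y_D$ is already satisfied by $\sol$, hence by its subset $E_1\cup R'_D$, by Claim~\ref{apx-clm:OneSidedProperties}.\ref{apx-item:pack-non-tiny}. For a critical constraint $i\in Y_D$, the point is that $R'_D=R_D\setminus L_D$ while $L_D^i\subseteq L_D$, so $R'_D$ contains no element that is large for constraint~$i$, \ie\ $E_1\cup R'_D\subseteq E_1\cup(R_D\cap S_D^i)$; then Claim~\ref{apx-clm:OneSidedProperties}.\ref{apx-item:pack-tiny} gives $\bP_i\one_{E_1\cup R'_D}\leq1$. This establishes property~\ref{apx-item:alg-pack}.

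For the covering constraints and the objective I would combine the remaining claims with $\bc'\geq\one_c$. A critical covering constraint $j\in Z_D$ is already covered to within $(1-\delta)\bc'_j\geq1-\delta$ by $E_1$ alone (Claim~\ref{apx-clm:OneSidedProperties}.\ref{apx-item:cover-tiny}), and a non-critical one to within $(1-(30b^3+2)\delta)\bc'_j\geq1-(30b^3+2)\delta$ by $E_1\cup R'_D$ (Claim~\ref{apx-clm:OneSidedFinalCoverNonTiny}); since $\delta<\nicefrac{\eps}{(30b^3+2)}$ both quantities are at least $1-\eps$, which is property~\ref{apx-item:alg-cover}. For the objective, Claim~\ref{apx-clm:OneSidedFinalObj} gives $f(E_1\cup R'_D)\geq(1-\nicefrac{1}{\e}-(30b^3+2)\delta)\opt\geq(1-\nicefrac{1}{\e}-\eps)\opt$, which is property~\ref{apx-item:alg-obj}. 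Finally, for the running time I would note that Algorithm~\ref{apx-alg:OneSidedGuessesEnumeration} enumerates $(\lceil\log_{1+\delta}n\rceil+1)^c=\mathrm{poly}(n)$ vectors $\bc'$ and $n^{O(\gamma+(p+c)/(\alpha\delta))}=n^{\mathrm{poly}(1/\eps)}$ sets $E_1$, that each call to the continuous greedy algorithm runs in polynomial time by Theorem~\ref{thrm:ContGreedy} — the polytope $\cP\cap\cC$ is nonempty since $\one_{O\setminus E_1}$ lies in it, and linear optimization over it is a linear program — and that the rounding and post-processing steps are trivially polynomial.

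The main obstacle is precisely the tension handled by Claims~\ref{apx-clm:OneSidedProbTinyPack}--\ref{apx-clm:OneSidedFinalObj}: the removal of $L_D$ in the post-processing is what forces the critical packing constraints to become feasible after rounding, but it simultaneously risks destroying both the covering value and the objective value. The two facts that resolve this, and which I would emphasize, are (i) the second, Markov-based part of Claim~\ref{apx-clm:OneSidedProbTinyPack}, which with probability $1-\nicefrac{1}{10}$ bounds the total packing mass of $R_D$ on each critical constraint by $10b\,\rdi$ and hence gives $|L_D|\leq\nicefrac{10b^2}{\beta}$; and (ii) the fact that, by consistency of $D$, every residual element is small — of $\bC$-size at most $\alpha\,\sdj$ — with respect to each non-critical covering constraint, and of $f$-marginal at most $\gamma^{-1}\opt$ (Observation~1 in the proof of Lemma~\ref{apx-lem:os-correct-guess}), so discarding the $O(b^2/\beta)$ elements of $L_D$ costs at most $\nicefrac{10b^2\alpha}{\beta}\bc'_j=30b^3\delta\,\bc'_j$ in coverage and at most $30b^3\delta\,\opt$ in objective. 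With $\alpha=\delta^3$ and $\beta=\Theta(\delta^2)$ all these error terms are $O(\delta)$, so choosing $\delta$ a small enough constant depending on $\eps$ and $b$ absorbs them into $\eps$.
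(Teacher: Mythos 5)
Your proposal is correct and follows essentially the same route as the paper: fix the correct guess from Lemma~\ref{apx-lem:os-correct-guess}, condition on the event of Claim~\ref{apx-clm:OneSidedProperties}, use the subset relation $E_1\cup R'_D\subseteq E_1\cup(R_D\cap S_D^i)$ for critical packing constraints, and invoke Claims~\ref{apx-clm:OneSidedFinalCoverNonTiny} and~\ref{apx-clm:OneSidedFinalObj} (whose losses are controlled by the $|L_D|\leq\nicefrac{10b^2}{\beta}$ bound from the Markov part of Claim~\ref{apx-clm:OneSidedProbTinyPack}) together with the choice of $\delta$ to absorb all error terms into $\eps$. In fact your write-up makes the final assembly, the use of $\bc'\geq\one_c$, and the running-time accounting more explicit than the paper, which simply states that the lemma follows from the preceding claims.
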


The above lemma suffices to prove Theorem~\ref{thrm:MainApproximation}, as it immediately implies it.



\section{Greedy Dynamic Programming} \label{App:GreedyDP}
\subsection{Vanilla Greedy DP}\label{subsec:vanilla-greedy-dp}

To highlight the core idea of our approach, we first present a vanilla version of the greedy dynamic programming approach applied to \PCSM that gives a constant-factor approximation and satisfies the packing constraints, but violates the covering constraints by a factor of $2$ and works in pseudo-polynomial time. See Lemma~\ref{lem:greedy-dp-vanilla}

\subsubsection{Algorithm}\label{subsubsec:forbiddenAlgorithm}

For simplicity of presentation, we assume in the current discussion relating to pseudo-polynomial time algorithms that $\bC \in \N_+^{c \times n}$ and $\bP \in \N_+^{p\times n}$.
Let $\p \in \N_+^p$ and $ \bc \in \N_+^c$ be the packing and covering requirements, respectively.
A solution $S\subseteq \cN$ is feasible if and only if $\bC \one_S \geq \bc$ and $\bP \one_S \leq \p$.
We also use the following notations: $c_{\max}=\|\bc\|_{\infty}$, $p_{\max}=\|\p\|_{\infty}$, and $[s]_0=\{0,\dots,s\}$ for every integer $s$.

We define our dynamic programming as follows:
for every $\ind\in[n]_0$, $\bc'\in[n \cdot c_{\max}]_0^c$, and $\p'\in[p_{\max}]_0^p$ a table entry $T[q,\bc',\p']$ is defined and it stores an approximate solution $S$ of cardinality $\ind$ with $\bC \one_S = \bc'$ and $\bP \one_S = \p'$. We introduce a dummy solution $\bot$ for denoting undefined table entries, and initialize the entire table with $\bot$. We work with the convention that $f(\bot)=-\infty$ and that $S\cup\bot=\bot$ for every set $S\subseteq\cN$. For brevity, we define $\bP_\ell \triangleq \bP \one_{\ell}$ and $\bC_\ell \triangleq \bC \one_{\ell}$ for any element $\ell \in \cN$.

For the base case, we set $T[0,\zero_c,\zero_p] \gets \varnothing$.
For populating $T[q,\bc',\p']$ when $q>0$, we examine every set of the form $T[q-1,\bc'-\bC_\ell,\p'-\bP_\ell] \cup \{\ell\}$, where $\ell$ satisfies $\ell\in\cN\setminus T[q-1,\bc'-\bC_\ell,\p'-\bP_\ell]$, $\bc'-\bC_\ell \geq 0$, and $\p'-\bP_\ell \geq 0$.
Out of all these sets, we assign the most valuable one to $T[q,\bc',\p']$.
Note that this operation stores a {\em greedy approximate} solution in the table entry $T[q,\bc',\p']$.
The output of our algorithm is the best of the solutions $T[q,\bc',\p']$, for $1 \leq q \leq n$, $\bc' \geq \bc/2$ and $\p' \leq \p$. See Algorithm~\ref{alg:vanilla-greedydp} for pseudo code.
\begin{algorithm*}[H] \label{alg:vanilla-greedydp}
  \caption{Vanilla Greedy Dynamic Program}
  create a table $T\colon [n]_0\times[n \cdot c_{\max}]_0^c\times[p_{\max}]_0^p\rightarrow 2^{\cN}$ initialized with entries $\bot$\\
  $T[0,\zero_c,\zero_p] \gets \varnothing$\\
  \For{$\ind=0$ \KwTo $n$}{ \ForEach{$\bc'\in[n \cdot c_{\max}]_0^c$ and
      $\p'\in[p_{\max}]_0^p$}{
      \ForEach{$\ell\in\cN\setminus T[\ind,\bc',\p']$}{
        $\bc''\gets\bc'+\bC_{\ell}$, $\p''\gets\p'+\bP_{\ell}$\\
        $T[\ind+1,\bc'',\p'']\gets\arg\max\{f(T[\ind+1,\bc'',\p'']),f(T[\ind,\bc',\p']\cup\{\ell\})\}$\\
      }
    } }
  Output     $\displaystyle\argmax_{q,\bc' \geq \bc/2, \p' \leq \p} f(T[q,\bc',\p'])$.
  \end{algorithm*} 

\subsubsection{A Warmup Analysis}

Let $O$ be an optimal set solution. Let us consider an arbitrary permutation of $O$, say $\{o^1,o^2,\dots, o^k\}$. Let $O^i = \{o^1, \dots, o^i\}$ be the set of the first $i$ elements in this permutation. Let $O^0=\emptyset$. We introduce the function $g\colon O \rightarrow \R_+$ for denoting the marginal value of the elements in $O$. More precisely, let $g(o^i)=f_{O^{i-1}}(o^i)$. Note that $f(O)= \sum_{\ell \in O} g(\ell)$. Let for any subset $S \subseteq O$, $g(S) = \sum_{\ell \in S}g(\ell)$.

\begin{lemma}
\label{apx-lem:subset}
For every subset $S\subseteq O$, we have $f(S) \geq g(S)$.
\end{lemma}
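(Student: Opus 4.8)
The plan is to prove the inequality $f(S)\geq g(S)$ for every $S\subseteq O$ by exploiting submodularity together with the fact that $g$ is defined via marginals along a \emph{fixed} permutation of $O$. First I would write $S=\{o^{i_1},o^{i_2},\dots,o^{i_r}\}$ with $i_1<i_2<\dots<i_r$, i.e.\ list the elements of $S$ in the order induced by the chosen permutation $o^1,\dots,o^k$ of $O$. The key observation is that for each $o^{i_t}$ we have $g(o^{i_t})=f_{O^{i_t-1}}(o^{i_t})$, and the prefix $\{o^{i_1},\dots,o^{i_{t-1}}\}$ of $S$ already included is a \emph{subset} of $O^{i_t-1}$. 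By submodularity (diminishing returns), the marginal of $o^{i_t}$ with respect to the smaller set $\{o^{i_1},\dots,o^{i_{t-1}}\}$ is at least its marginal with respect to the larger set $O^{i_t-1}$, that is
\[
f_{\{o^{i_1},\dots,o^{i_{t-1}}\}}(o^{i_t}) \;\geq\; f_{O^{i_t-1}}(o^{i_t}) \;=\; g(o^{i_t}).
\]

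Second, I would telescope. Summing the above over $t=1,\dots,r$, the left-hand side telescopes to $f(S)-f(\varnothing)\leq f(S)$ (using $f\geq 0$, or simply $f(\varnothing)\geq 0$), while the right-hand side is exactly $\sum_{t=1}^r g(o^{i_t})=g(S)$. This yields $f(S)\geq g(S)$ as desired. Concretely, with $S_t\triangleq\{o^{i_1},\dots,o^{i_t}\}$ and $S_0=\varnothing$, we have $f(S)=\sum_{t=1}^r\bigl(f(S_t)-f(S_{t-1})\bigr)=\sum_{t=1}^r f_{S_{t-1}}(o^{i_t})\geq\sum_{t=1}^r g(o^{i_t})=g(S)$, where the inequality is the per-term submodularity bound above, valid because $S_{t-1}\subseteq O^{i_t-1}$ and $o^{i_t}\notin O^{i_t-1}$.

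There is no real obstacle here; the only point requiring a little care is the containment $S_{t-1}\subseteq O^{i_t-1}$, which holds precisely because we enumerated $S$ in the permutation order, so every element of $S_{t-1}$ has index strictly less than $i_t$ and hence lies in $O^{i_t-1}=\{o^1,\dots,o^{i_t-1}\}$. If $S$ were listed in an arbitrary order this containment would fail and submodularity would point the wrong way, so the ordering is the crucial (if elementary) ingredient. I would also note at the outset that $g$ and the quantity $f(O)=\sum_{\ell\in O}g(\ell)$ are well defined regardless of which permutation of $O$ was fixed, so the lemma applies to whatever permutation the surrounding argument has chosen to work with.
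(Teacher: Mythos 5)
Your proof is correct and follows essentially the same argument as the paper: enumerate $S$ in the order induced by the fixed permutation of $O$, telescope $f$ along the prefixes of $S$, and bound each marginal from below by the corresponding marginal along $O$ using submodularity together with the containment $S_{t-1}\subseteq O^{i_t-1}$. The only cosmetic difference is that you explicitly note $f(\varnothing)\geq 0$ in the telescoping step, which the paper leaves implicit.
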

\begin{proof}
Let $o^{i_1},\dots, o^{i_p}$ with $i_1<i_2<\dots<i_p$ be the elements of $S$ in the order as they appear in $O$. Let $S_j$ be the set of the first $j$ elements in $S$, that is, $S^j=S \cap O^{i_j}$ for $j=1,\dots,p$, and let $S^0=\varnothing$. By submodularity of $f$ and $S^{j-1}\subseteq O^{i_j-1}$ we have that
\begin{displaymath}
f(S) = \sum_{j=1}^p f(S^j)-f(S^{j-1}) \geq \sum_{j=1}^p f(O^{i_j})-f(O^{i_j-1}) = \sum_{j=1}^p g(o^{i_j}) = g(S).
\end{displaymath}
\end{proof}

\begin{lemma}
\label{apx-lem:submodular-const}
There exists a table entry $T[q,\bc_q, \p_q]$ for some $0\leq q \leq k$, such that $f(T[q,\bc_q, \p_q]) \geq \frac14 f(O)$ and such that there exists a $q$-subset $O_q=\{o_1,\dots , o_q \} \subseteq O$ with packing value equal to $\p_q \leq \p$ and covering value equal to $\bc_q \geq \bc / 2$.
\end{lemma}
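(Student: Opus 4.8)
The plan is to build the permutation $o^1,\dots,o^k$ of $O$ greedily and then track the corresponding diagonal of table entries. First I would construct the order inductively: having chosen $O^{i}=\{o^1,\dots,o^i\}$, pick $o^{i+1}\in O\setminus O^i$ to be the element of \emph{largest} marginal value $f_{O^i}(o^{i+1})$ among the remaining elements of $O$; equivalently, since $g(o^j)=f_{O^{j-1}}(o^j)$, this makes the sequence $g(o^1)\geq g(o^2)\geq\cdots$ non-increasing along the order actually used (note $g$ here is defined \emph{relative to the chosen order}, so this is consistent). With this order fixed, for each prefix $O^i$ consider the ``corresponding'' table entry $T[i,\bC\one_{O^i},\bP\one_{O^i}]$. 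I would prove by induction on $i$ that this entry is not $\bot$ and that $f\big(T[i,\bC\one_{O^i},\bP\one_{O^i}]\big)$ is at least the best of the ``greedy'' values the algorithm could have accumulated; the induction step is immediate from the update rule, since the algorithm, when processing the entry for $O^{i}$, considers adding $o^{i+1}$ and therefore the entry for $O^{i+1}$ dominates $f\big(T[i,\cdot,\cdot]\cup\{o^{i+1}\}\big)$.

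Next I would set up the key quantitative comparison. Let $S_i\triangleq T[i,\bC\one_{O^i},\bP\one_{O^i}]$. From the update rule and monotonicity/submodularity of $f$ we get the recursive bound $f(S_{i+1})\geq f(S_i)+f_{S_i}(o^{i+1})$. The difficulty is that $f_{S_i}(o^{i+1})$ need not be comparable to $g(o^{i+1})$ directly. The standard trick: either the marginal $f_{S_i}(o^{i+1})$ is ``large'' (at least, say, half of $g(o^{i+1})$), in which case the greedy solution is making good progress; or it is small, which by submodularity forces $f(S_i)$ to already be large relative to $f(O\setminus O^i)=g(O\setminus O^i)$ (using Lemma~\ref{apx-lem:subset} applied to $O\setminus O^i$ together with $S_i\supseteq$-domination arguments). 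Pick $q$ to be the first index at which the running sum $\sum_{j\le q} f_{S_{j-1}}(o^j)$ of realized marginals reaches $\tfrac14 f(O)$, or, if that never happens, argue via the small-marginal case at the end. In either case one obtains a prefix index $q$ with $f(S_q)\geq\tfrac14 f(O)$.

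Finally, I would check the two side conditions on $q$. The packing value of $O_q\triangleq O^q$ is $\bP\one_{O^q}\le \bP\one_O\le \p$ by feasibility of $O$, so the packing side is automatic. For the covering side I need $\bC\one_{O^q}\ge \bc/2$, which is \emph{not} automatic for an arbitrary prefix; this is where the greedy ordering must be used more carefully, or the stopping index $q$ chosen to also respect coverage. Concretely I would argue: if the chosen $q$ already has $\bC\one_{O^q}\ge\bc/2$ we are done; otherwise $O\setminus O^q$ still supplies more than half the total coverage, and one re-runs the ``small marginal'' dichotomy to push $q$ forward until coverage crosses $\bc/2$, which can only help the objective lower bound (monotonicity of $f(S_i)$ along the diagonal). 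The main obstacle I anticipate is exactly this simultaneous control of the objective bound \emph{and} the $\bc/2$ coverage threshold with a single stopping index — making the two-phase argument (progress vs.\ leftover-is-large) interact cleanly with the coverage constraint, rather than just with $f(O)$, is the delicate part; the $\tfrac14$ constant presumably emerges from the factor-revealing LP with $m$ small, so I would keep the bookkeeping loose enough that it matches whatever the LP delivers.
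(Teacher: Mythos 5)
There is a genuine gap, and it sits exactly where you flagged the argument as ``immediate.'' Your plan fixes a permutation of $O$ up front and claims $f(S_{i+1})\geq f(S_i)+f_{S_i}(o^{i+1})$ along the diagonal of entries indexed by prefixes $O^{i}$. But the DP only extends an entry $T[i,\cdot,\cdot]$ by elements $\ell\notin T[i,\cdot,\cdot]$, and the stored sets are arbitrary greedy sets, not subsets of $O$; so if $o^{i+1}\in S_i$ the transition you invoke is not available, the entry $T[i+1,\bC\one_{O^{i+1}},\bP\one_{O^{i+1}}]$ need not be reachable from $S_i$ (it may even be $\bot$), and there is no monotonicity of values along your diagonal. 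The same unjustified monotonicity is what your fallback for the coverage condition (``push $q$ forward until coverage crosses $\bc/2$, which can only help'') relies on, so both halves of the plan break at this point. The paper avoids this by \emph{not} fixing the order of $O$: the set $O_q$ is built adaptively, and the next element is either taken from $O\setminus(O_{q-1}\cup S_{q-1})$ or is an element whose marginal with respect to $S_{q-1}$ exceeds $\frac12 g(\cdot)$ (which forces it to lie outside $S_{q-1}$), so the DP transition used in the induction is always legal.

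The second missing idea is where the $\bc/2$ bound actually comes from. In the paper's proof (by contradiction) the induction maintains the disjunction ``$f(T[q,\bc_q,\p_q])\geq\frac14 f(O)$ or $f(T[q,\bc_q,\p_q])\geq\frac12 g(O_q)$.'' If the value bound $\frac14 f(O)$ already holds and you get \emph{stuck}, i.e.\ $O\setminus O_{q-1}\subseteq S_{q-1}$, then the DP invariant $\bC\one_{S_{q-1}}=\bc_{q-1}=\bC\one_{O_{q-1}}$ together with $\bC\one_{S_{q-1}}\geq\bC\one_{O\setminus O_{q-1}}$ gives $2\,\bC\one_{S_{q-1}}\geq\bC\one_O\geq\bc$, so that very entry certifies coverage $\geq\bc/2$; otherwise the induction runs to $q=k$, where $O_k=O$ has coverage $\geq\bc$ and either branch of the disjunction yields $\frac14 f(O)$ (the small-marginal case gives $f(S_{q-1})\geq\frac12 g(O\setminus O_{q-1})$ via submodularity and Lemma~\ref{apx-lem:subset}, which added to $f(S_{q-1})\geq\frac12 g(O_{q-1})$ produces the $\frac14$). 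So the coverage threshold is not obtained by extending a prefix until its own coverage crosses $\bc/2$, but from the stuck case; without this mechanism (and without the adaptive choice of the next element) your outline does not close. Your large/small-marginal dichotomy and the use of Lemma~\ref{apx-lem:subset} are the right ingredients, but the bookkeeping that makes them compatible with the DP's ``no re-adding'' rule and with the covering requirement is precisely what needs the paper's two-property induction.
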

\begin{proof}
Let us assume that the statement of the lemma is not true. We prove below by induction on $\ind$ that under this assumption the following even stronger claim holds thereby leading to a contradiction.
\begin{claim}\label{apx-claim:sequence-const}
For every $\ind \in[k]_0$ there is an $\ind$-subset $O_\ind=\{o_1,\dots,o_\ind\} \subseteq O$ with packing value $\p_\ind \leq \p$ and covering value $\bc_\ind$, such that one of the following holds
\begin{enumerate}[(i)]
\item \label{apx-item:1} $f(T[\ind, \bc_\ind, \p_\ind])\geq \frac14 f(O)$,
\item \label{apx-item:2} $f(T[\ind, \bc_\ind, \p_\ind])\geq \frac12 g(O_\ind)$.
\end{enumerate}
\end{claim}
Note that if this claim is true then for $q=k$ we directly get a contradiction.

For the base case $q = 0$, Property~\ref{apx-item:2} is trivially true for $O_\ind=\varnothing$, $\bc_\ind = \mathbf{0}$ and $\p_\ind=\mathbf{0}$.

For the inductive step let $\ind \geq 1$ and assume that the claim already holds for $\ind-1$. To this end, let $O_{\ind-1}$, $\bc_{\ind-1}$ and $\p_{\ind-1}$ be as in this claim. Let $S_{\ind-1}=T[\ind-1,\bc_{\ind-1}, \p_{\ind-1}]$.

Now, we distinguish the two cases where $S_{\ind-1}$ satisfies Property~\ref{apx-item:1} or Property~\ref{apx-item:2}, respectively.
First, assume that $f(S_{\ind-1}) \geq \frac14 f(O)$. Let  $\widetilde{O}= O \setminus (O_{\ind-1} \cup S_{\ind-1})$. If $\widetilde{O} \neq \varnothing$ then pick $o_{\ind} \in \widetilde{O}$ and let $O_\ind = O_{\ind-1} \cup \{o_{\ind}\}$, $\bc_{\ind} = \bC \one_{O_\ind}$, and $\p_{\ind} = \bP\one_{O_\ind}$. Moreover, $f(T[\ind, \bc_{\ind}, \p_{\ind}]) \geq f(S_{\ind-1} \cup \{o_{\ind}\}) \geq f(S_{\ind-1}) \geq \frac14 f(O)$ completing the inductive step. On the other hand, if $\widetilde{O} = \varnothing$ then $ O\setminus O_{\ind-1}\subseteq S_{\ind-1}$. Hence $\bc_{\ind-1} \geq \bC\one_{O\setminus O_{\ind-1}} $. Combining this with $\bC\one_{S_{\ind-1}}=\bc_{\ind-1}=\bC\one_{O_{\ind-1}}$ and $\bC\one_{O}\geq\bc$ we get $\bC\one_{S_{\ind-1}}\geq \bc/2$. This contradicts our assumption that the statement of the lemma is not true.

In the case when $S_{\ind-1}$ satisfies Property~\ref{apx-item:2}, we have $f(O_{\ind-1})\geq \frac12 g(O_{\ind-1})$. We can also assume w.\,l.\,o.\,g.\ that Property~\ref{apx-item:1} does \emph{not} hold for $\ind-1$. Now we distinguish two sub-cases. In the first sub-case there exists some $o_\ind \in O \setminus O_{\ind-1}$ such that $f_{S_{\ind-1}}(o_\ind) > \frac12 g(o_\ind)$.
Note that $o_\ind \notin S_{\ind-1}$, since otherwise the left hand side of the inequality would be zero while $g(o_\ind)$ is non-negative, which would be a contradiction. Now, let $O_{\ind} = O_{\ind - 1} \cup \{o_\ind\}$, $\bc_\ind = \bc_{\ind-1} + \bC_{o_\ind}$ and $\p_\ind=\p_{\ind-1}+\bP_{o_\ind}$. Hence the DP could potentially add this element $o_\ind$ to the entry $T[\ind-1,\bc_{\ind-1}, \p_{\ind-1}]$ to get $T[\ind,\bc_\ind, \p_\ind]$. Now verify that
\begin{align*}
f(T[\ind,\bc_\ind, \p_\ind]) & \geq f( S_{\ind-1}\cup\{o_\ind\}) \geq f( S_{\ind-1})+\frac12 g(o_\ind)\\
                     & \geq \frac12 g(O_{\ind-1})+\frac12 g(o_{\ind})\\
                     & = \frac12 \sum_{i=1}^{\ind} g(o_i)\\
                     & = \frac12 g(O_\ind)\,.
\end{align*}

In the second sub-case, for all $a \in O \setminus O_{\ind-1}$, we have $f_{S_{\ind-1}}(a) \leq \frac12 g(a)$. We derive below a contradiction to our assumption that the lemma is not true.

Let $O\setminus O_{\ind-1} = \{a_1,\dots,a_m\}$. By submodularity of~$f$ we have $f(S_{\ind-1} \cup \{a_1, a_2, \dots a_j\}) - f(S_{\ind-1} \cup \{a_1, a_2, \dots a_{j-1}\}) \leq f_{S_{\ind-1}}(a_j) \leq \frac12 g(a_j)$ for all $j=1,2, \dots m$. Adding all these inequalities, we get
\begin{align*}
f(S_{\ind-1} \cup (O \setminus O_{\ind-1}))-f(S_{\ind-1}) &= \sum_{j=1}^m [f(S_{\ind-1}\cup \{a_1,\dots,a_j\}) \\
& \qquad\qquad -f(S_{\ind-1}\cup\{a_1,\dots,a_{j-1}\})]\\
& \leq \frac12\sum_{j=1}^m g(a_j)\\
& = \frac12 g(O \setminus O_{\ind-1})\,.
\end{align*}

Rearranging, we get
\begin{align*}
f(S_{\ind-1}) & \geq f(S_{\ind-1} \cup (O \setminus O_{\ind-1})) - \frac12 g(O\setminus O_{\ind-1})\\
 & \geq f(O \setminus O_{\ind-1}) - \frac12 g(O \setminus O_{\ind-1})\\
 & \stackrel{\text{Lem.~\ref{apx-lem:subset}}}{\geq} \frac12 g(O \setminus O_{\ind-1})\,.
\end{align*}

Adding this to $f(S_{\ind-1}) \geq \frac12 g(O_{\ind-1})$ , we get $4 f(S_{\ind-1}) \geq g(O)=f(O)$. This contradicts the assumption that the claim of the lemma is not true.
\end{proof}

Now, the following lemma follows directly from Lemma~\ref{apx-lem:submodular-const}.
 \begin{lemma}
 \label{apx-lem:submodular-const-0.25}
  There is an algorithm for \PCSM that outputs in pseudo-polynomial time \\ $O(n^2c_{\max}p_{\max})$ a $0.25$-approximate solution with covering value at least $\bc/2$ and packing value $\p$.
 \end{lemma}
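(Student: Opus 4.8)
The plan is to obtain Lemma~\ref{apx-lem:submodular-const-0.25} as an essentially immediate corollary of Lemma~\ref{apx-lem:submodular-const}, the claimed algorithm being the Vanilla Greedy Dynamic Program (Algorithm~\ref{alg:vanilla-greedydp}). Recalling that the instance is feasible, fix an optimal solution $O$. Beyond invoking Lemma~\ref{apx-lem:submodular-const} there are only three routine points to settle: (i) that a non-$\bot$ table cell always stores a concrete set realizing its packing and covering coordinates exactly; (ii) that the cell selected in the last line of the algorithm is non-$\bot$ and hence yields a genuine output set meeting the stated feasibility and value bounds; and (iii) the running-time accounting.

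For (i), I would prove, by induction on the order in which cells are populated, the invariant that whenever $T[\ind,\bc',\p']\neq\bot$ it is a set $S\subseteq\cN$ with $|S|=\ind$, $\bC\one_S=\bc'$, and $\bP\one_S=\p'$. The base case $T[0,\zero_c,\zero_p]=\varnothing$ is immediate. In the update step $T[\ind+1,\bc'',\p'']$ is overwritten only by a set of the form $T[\ind,\bc',\p']\cup\{\ell\}$ with $\ell\notin T[\ind,\bc',\p']$, $\bc''=\bc'+\bC_\ell$, and $\p''=\p'+\bP_\ell$; the conventions $f(\bot)=-\infty$ and $S\cup\bot=\bot$ ensure a $\bot$ predecessor can never produce a non-$\bot$ successor, so the inductive hypothesis applies to $T[\ind,\bc',\p']$ and the new set has cardinality $\ind+1$, covering value $\bc''$, and packing value $\p''$.

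For (ii), Lemma~\ref{apx-lem:submodular-const} supplies an index $q$ and coordinates $(\bc_q,\p_q)$ with $\p_q\le\p$, $\bc_q\ge\bc/2$, and $f(T[q,\bc_q,\p_q])\ge\tfrac14 f(O)$; since $f\ge 0$ we get $f(T[q,\bc_q,\p_q])\ge\tfrac14 f(O)\ge 0>-\infty=f(\bot)$, so $T[q,\bc_q,\p_q]\neq\bot$. Hence $(q,\bc_q,\p_q)$ is one of the triples over which the final line of the algorithm maximizes, so the output $S=T[q',\bc',\p']$ is non-$\bot$, has $\bc'\ge\bc/2$ and $\p'\le\p$, and satisfies $f(S)\ge f(T[q,\bc_q,\p_q])\ge\tfrac14 f(O)$. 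Applying the invariant of (i) to $S$ yields $\bC\one_S=\bc'\ge\bc/2$ and $\bP\one_S=\p'\le\p$; that is, covering value at least $\bc/2$ and packing value at most $\p$, which together with the value bound is exactly the statement.

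For (iii), I would bound the number of table cells by $O(n)\cdot O((n\,c_{\max})^c)\cdot O(p_{\max}^p)$ and note that filling a cell costs $O(n)$ oracle queries and arithmetic operations (one per candidate $\ell$); for constant $p$ and $c$ this is pseudo-polynomial, in line with the bound recorded in the statement. I do not anticipate a genuine obstacle: all the substance already lives in Lemma~\ref{apx-lem:submodular-const} (and ultimately in the case analysis of its proof), and the present claim is bookkeeping — the only mildly delicate point is threading the exact-coordinate invariant of (i) through, so that \emph{feasibility} of the output, and not merely its objective value, can be read off directly.
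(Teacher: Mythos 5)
Your proposal is correct and follows exactly the paper's route: the paper derives this lemma as an immediate corollary of Lemma~\ref{apx-lem:submodular-const} applied to the output of the vanilla greedy DP, which is precisely what you do, with your points (i)--(iii) just making explicit the bookkeeping (table invariant, non-$\bot$ output cell, cell-count times per-cell work) that the paper leaves implicit.
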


\subsubsection{Factor-Revealing LP}
\label{apx-subsec:factor-rev-lp}

In this section, we develop a factor-revealing LP for an improved analysis of the approximation ratio of the above-described greedy DP. Note that in the previous analysis we looked at only \emph{one phase} in which we account for our gain based on whether or not the current element gives us a marginal value of more than $1/2$ times the marginal value that it contributes to the optimal solution. But in reality for the elements added in the beginning, we gain almost the same value as in the optimal solution. The ratio of gain decreases until we gain zero value when adding any element from the optimal solution that is still not in our approximate solution. In this section, we analyze our DP using a factor-revealing LP by discretization of the marginal value ratios to $1-\frac{1}{m}, 1-\frac{2}{m}, \dots, \frac{1}{m}$ and get lower bounds for the partial solutions at the end of each phase. Then we embed these inequalities into a factor-revealing LP and show that for the worst distribution of the optimal solution among the phases, the approximate solution is at most a factor $1/\e$ away from the optimum solution.

The $i$-th phase corresponds to the phase in which we will gain at least a $(1-i/m)$-fraction of marginal value if we add an element from the optimal solution during that iteration. We keep on adding these elements to $O_i$, until no such element remains. $A_i$ corresponds to the solution at the end of the $i$-th phase. $A_m$ is the solution at the end of this procedure. Now we estimate the value of the approximate solution $A_m$ as compared to the optimal solution.

For the purpose of analysis, by scaling, we assume that $f(O) = \sum_{o \in O} g(o)=1$. The following lemma is the basis for the \emph{factor-revealing} LP below.

\begin{lemma}
  \label{apx-lem:submodular-const-multi-phase}
  Let $m\geq 1$ be an integral parameter.  We can pick for each $i \in [m]$ a set $O_i = \{o^i_1, o^i_2, \dots, o^i_{\ind_i}\} \subseteq O$ (possibly empty) such that the following holds. For $i\neq j$, we have that $O_i \cap O_j = \varnothing$. Let $L_i = \sum_{j=1}^i \ind_j$, $Q_i = \cup_{j=1}^iO_j$,  $\bc_i=\bC \one_{Q_i}$, $\p_i=\bP \one_{Q_i}$ and let $A_i := T[L_i, \bc_i, \p_i]$ be the corresponding DP cell. Then $\bC \one_{A_m} \geq \bc/2$ and the following inequalities hold.
\begin{enumerate}
\item \label{apx-ineq:t0} $f(A_0) =  g(O_0)$ where $A_0 = O_0 = \varnothing$,
\item \label{apx-ineq:t1} $f(A_i) \geq f(A_{i-1}) +  (1-i/m) g(O_i) \,\, \forall i \in [m]$ and
\item  \label{apx-ineq:t2} $f(A_i) \geq \frac{i}{m} \left(1- \sum_{j \leq i} g(O_j)\right) \,\, \forall i \in \{0\} \cup [m]$.
\end{enumerate}
\end{lemma}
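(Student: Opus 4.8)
The plan is to construct the sets $O_1,\dots,O_m$ phase by phase, mimicking the warm-up argument of Lemma~\ref{apx-lem:submodular-const} but with the threshold $1/2$ replaced by the phase-dependent threshold $1-i/m$. Formally, I would proceed inductively: having already committed to disjoint sets $O_1,\dots,O_{i-1}$ with prefix $Q_{i-1}$, cardinality counter $L_{i-1}$, and corresponding DP cell $A_{i-1}=T[L_{i-1},\bc_{i-1},\p_{i-1}]$, I build $O_i$ greedily. Starting from $O_i=\varnothing$, as long as there exists an element $o\in O\setminus (Q_{i-1}\cup O_i)$ with $f_{A'}(o) > (1-i/m)\,g(o)$ — where $A'$ denotes the DP cell associated with the current prefix $Q_{i-1}\cup O_i$ — I add such an $o$ to $O_i$. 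The DP recurrence guarantees that the updated cell $T[L',\bC\one_{Q_{i-1}\cup O_i},\bP\one_{Q_{i-1}\cup O_i}]$ has value at least $f(A') + (1-i/m)g(o)$, since the DP examines adding $o$ to the previous cell and keeps the best option (and $o\notin A'$ because otherwise $f_{A'}(o)=0\leq (1-i/m)g(o)$, a contradiction). Summing these gains telescopes to inequality~\eqref{apx-ineq:t1}: $f(A_i)\geq f(A_{i-1}) + (1-i/m)g(O_i)$. Inequality~\eqref{apx-ineq:t0} is just the base case $A_0=O_0=\varnothing$.

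Next I would prove~\eqref{apx-ineq:t2}. The phase $i$ terminates precisely when \emph{every} remaining element $a\in O\setminus Q_i$ satisfies $f_{A_i}(a)\leq (1-i/m)g(a)$. Enumerating $O\setminus Q_i=\{a_1,\dots,a_r\}$ and using submodularity of $f$, each marginal $f(A_i\cup\{a_1,\dots,a_j\}) - f(A_i\cup\{a_1,\dots,a_{j-1}\})$ is at most $f_{A_i}(a_j)\leq (1-i/m)g(a_j)$; summing gives $f(A_i\cup (O\setminus Q_i)) - f(A_i)\leq (1-i/m)\,g(O\setminus Q_i)$. By monotonicity $f(A_i\cup(O\setminus Q_i))\geq f(O\setminus Q_i)$, and by Lemma~\ref{apx-lem:subset} applied to $S=O\setminus Q_i\subseteq O$ we have $f(O\setminus Q_i)\geq g(O\setminus Q_i)$. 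Chaining these inequalities yields
\begin{displaymath}
f(A_i) \;\geq\; f(O\setminus Q_i) - (1-i/m)\,g(O\setminus Q_i) \;\geq\; \frac{i}{m}\,g(O\setminus Q_i) \;=\; \frac{i}{m}\Bigl(1 - \sum_{j\leq i} g(O_j)\Bigr),
\end{displaymath}
where the last equality uses the normalization $\sum_{o\in O}g(o)=f(O)=1$ and the disjointness of the $O_j$. This gives~\eqref{apx-ineq:t2}. One subtlety to address: for this argument we need the DP cell $A_i=T[L_i,\bc_i,\p_i]$ indexed by $Q_i$ to actually be a set attaining value $\geq$ the greedy bound (not $\bot$); this follows because all intermediate prefixes have packing value $\bP\one_{Q_i}\leq \bP\one_O\leq \p$ and thus are legal table indices reachable by the DP.

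Finally, the covering claim $\bC\one_{A_m}\geq \bc/2$: if at the end of phase $m$ we had $Q_m\subsetneq O$ with some leftover $a\in O\setminus Q_m$, the phase-$m$ termination condition is $f_{A_m}(a)\leq 0$ for all such $a$, i.e. $O\setminus Q_m\subseteq A_m$ (using monotonicity and that marginals are nonnegative, $f_{A_m}(a)=0$ forces — after a small argument — that effectively the cover value is already accounted for). More directly, I would mirror the warm-up: either some leftover element had positive marginal, contradicting termination, or $O\setminus Q_m\subseteq A_m$, in which case $\bc_m=\bC\one_{Q_m}$ together with $\bC\one_{A_m}\supseteq$-coverage of $O\setminus Q_m$ and $\bC\one_O\geq\bc$ gives $\bC\one_{A_m}\geq \bC\one_{Q_m} + \bC\one_{O\setminus Q_m} \geq \bC\one_O \geq \bc \geq \bc/2$ — actually since $A_m$ has covering value exactly $\bc_m=\bC\one_{Q_m}$ by the DP indexing, the correct split is that we may always \emph{stop} a phase early enough (or the prefix $Q_m$ itself satisfies $\bC\one_{Q_m}\geq\bc/2$) exactly as in the proof of Lemma~\ref{apx-lem:submodular-const}. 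I expect the main obstacle to be precisely this bookkeeping: ensuring the indices $(\bc_i,\p_i)$ stay within the table range and handling the boundary case where the greedy phase exhausts $O$ before the covering target $\bc/2$ is reached — both are resolved by the same case analysis as in Lemma~\ref{apx-lem:submodular-const}, just carried through all $m$ phases.
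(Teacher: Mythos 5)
Your construction of the phases and your proofs of inequalities (1)--(3) match the paper's argument essentially verbatim (greedy extension with threshold $(1-i/m)g(o)$, telescoping for inequality (2), and the submodularity--monotonicity--Lemma~\ref{apx-lem:subset} chain for inequality (3)). The gap is in the covering claim $\bC\one_{A_m}\geq\bc/2$. Your stated argument is a non sequitur: the termination condition of an \emph{unmodified} greedy phase $m$ only says $f_{A_m}(a)\leq (1-m/m)g(a)=0$, i.e.\ $f_{A_m}(a)=0$, for all $a\in O\setminus Q_m$, and a zero marginal does \emph{not} imply $a\in A_m$ (take $f$ that saturates after a few elements: all leftover optimum elements have zero marginal yet lie outside $A_m$, and then $\bC\one_{A_m}=\bC\one_{Q_m}$ can be far below $\bc/2$, since by the DP indexing the cell's covering value is exactly $\bc_m=\bC\one_{Q_m}$). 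Your fallback dichotomy (``either some leftover element had positive marginal \dots or the prefix $Q_m$ itself satisfies $\bC\one_{Q_m}\geq\bc/2$'') is also not valid: no such case split is available, and for vector-valued covering constraints one cannot even argue that one of $\bC\one_{Q_m}$, $\bC\one_{O\setminus Q_m}$ is componentwise at least $\bc/2$.

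The missing idea, which the paper supplies, is a \emph{modified construction of the last phase}: since the coefficient $1-m/m=0$ makes inequality (2) for $i=m$ vacuous, in phase $m$ you may keep adding to $O_m$ (and advancing the DP cell) \emph{any} element $o\in O\setminus(A_m\cup Q_m)$, irrespective of its marginal gain, until $O\setminus(A_m\cup Q_m)=\varnothing$. At termination every element of $O$ lies in $A_m$ or in $Q_m$, so $\bC\one_{A_m}\geq\bC\one_{O\setminus Q_m}$, while the DP indexing gives $\bC\one_{A_m}=\bC\one_{Q_m}$; adding these yields $2\,\bC\one_{A_m}\geq\bC\one_O\geq\bc$ componentwise, i.e.\ $\bC\one_{A_m}\geq\bc/2$, and inequality (3) for $i=m$ still holds because $O\setminus Q_m\subseteq A_m$ gives $f(A_m)\geq f(O\setminus Q_m)\geq g(O\setminus Q_m)$ by monotonicity and Lemma~\ref{apx-lem:subset}. (This mirrors the case $\widetilde O=\varnothing$ in the warm-up Lemma~\ref{apx-lem:submodular-const}, where the conclusion likewise comes from combining the two bounds on $\bC\one_{S_{\ind-1}}$, not from a case split.) Without this modification the covering guarantee genuinely fails, so your proof as written is incomplete at exactly this point.
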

\begin{proof}
We prove this by using induction on $i$. For $i=0$, both inequalities \ref{apx-ineq:t0} and \ref{apx-ineq:t2} are trivially true.

For inductive step let $i \geq 1$ and  assume that both inequalities \ref{apx-ineq:t1} and \ref{apx-ineq:t2} are true for every $j < i$. We start by defining $O_i :=  \varnothing$, $\ind_i :=  0$ and $A_i := A_{i-1}$. Note that inequality \ref{apx-ineq:t1} is true for this choice of $O_i$ and $A_i$. Now, if there is an $o \in O \setminus (\cup_{j=1}^{i} O_j)$ such that $f_{A_i} (o) > (1-i/m) g(o)$, then it implies that $f(T[L_i + 1, \bC \one_{A_i\cup \{ o \}}, \bP\one_{A_i\cup \{ o \}}]) \geq f(A_i \cup \{ o \}) > f(A_i) + (1-i/m) g(o) \geq f(A_{i-1}) + (1-i/m)g(O_i) + (1-i/m) g(o) =  f(A_{i-1}) + (1-i/m)g(O_i\cup\{ o \})$. Note that $o \notin A_i$, since otherwise the left hand side of the inequality would be zero while $g(o)$ is a non-negative quantity. Hence, we can extend $O_i \rightarrow O_i \cup \{ o \}$, $\ind_i \rightarrow \ind_i + 1$, $Q_i'  \triangleq \cup_{j=1}^{i} O_{j} \cup \{ o \}$ and $A_i \rightarrow f(T[L_i + 1, \bC \one_{Q_i'}, \bP \one_{Q_i'}])$. Hence, inequality \ref{apx-ineq:t1} remains true after performing this operation on $O_i$ and $A_i$. We keep on doing this until for all $o \in O \setminus (\cup_{j=1}^{i} O_j)$,  $f_{A_i} (o) \leq (1-i/m) g(o)$.

Now we prove the inequality \ref{apx-ineq:t2}. Let $O \setminus (\cup_{j=1}^i O_j) = \{ a_1, a_2, \dots, a_q\}$. By submodularity we have that $f(A_i \cup \{a_1, \dots, a_j\})-f(A_i \cup \{ a_1, \dots, a_{j-1}\})\leq f_{A_i} (a_j) \leq (1-i/m) g(a_j)$ for each $j=1, \dots, q$. Adding up these inequalities for all $j=1, \dots, q$ we get that
\begin{align*}
f(A_i \cup O \setminus \cup_{j=1}^i O_j)-f(A_i) &= \sum_{j=1}^q [f(A_i \cup \{a_1,\dots,a_j\})\\
& \qquad\qquad - f(A_i \cup \{ a_1, \dots, a_{j-1}\})]\\
& \leq \left(1-\frac{i}{m}\right)  \sum_{j=1}^q g(a_j)\\
& = \left(1-\frac{i}{m}\right) g(O \setminus \cup_{j=1}^i O_j)\,.
\end{align*}
Rearranging, we get
\begin{align*}
f(A_i) & \geq f(A_i \cup (O \setminus \cup_{j=1}^i O_j)) - \left(1-\frac{i}{m}\right) g(O \setminus \cup_{j=1}^i O_j)\\
             & \geq f(O \setminus \cup_{j=1}^i O_j) - \left(1-\frac{i}{m}\right) g(O \setminus \cup_{j=1}^i O_j)\\
             & \stackrel{\text{Lem.~\ref{apx-lem:subset}}}{\geq} \frac{i}{m} g(O \setminus \cup_{j=1}^i O_j) \\
             & = \frac{i}{m} (1 - g(\cup_{j=1}^i O_j))\,.
\end{align*}
Hence inequality \ref{apx-ineq:t2} is also true for this $A_i$ and $O_i$ and hence the induction follows.

To ensure $\bC\one_{A_m}\geq\bc/2$ we use a modified construction of the last phase. In particular, for constructing $A_m$ we again start with $A_m = A_{m-1}$ and $O_m = \varnothing$. In the iterative process we keep on adding elements $o \in O\setminus (A_m \cup(\cup_{i=1}^m O_i))$ to $A_m$ and $O_m$ until $O\setminus (A_m \cup(\cup_{i=1}^m O_i)) = \varnothing$. For the set $A_m$ thereby constructed, both inequalities are trivially true. The process also implies that $\bC \one_{A_m} \geq \bC \one_{O\setminus (\cup_{i=1}^m O_i)}$. Combining this with the fact that $\bC \one_{A_m} = \bC \one_{\cup_{i=1}^m O_i}$ we get $\bC \one_{A_m} \geq \bc/2$, which proves the lemma.
\end{proof}

Below we describe a \emph{factor-revealing} LP that captures the above-described multi-phase analysis for the greedy DP algorithm. The idea is to introduce variables for the quantities in the inequalities in the previous lemma and determining the minimum ratio that can be guaranteed by these inequalities.


\begin{align}
\min \,\, & \textstyle a_m 							      && \mbox{s.t.} &&  \hspace{16em}  \mbox{(LP)} \nonumber \\
           & \textstyle a_1 \geq \left(1-\frac{1}{m}\right) o_1;           	      &&~\label{apx-lp:c0} \\
           & \textstyle a_i \geq a_{i-1} +  \left(1-\frac{i}{m}\right) o_i            && \forall i \in [m] \setminus \{1\};~\label{apx-lp:c1} \\
           & \textstyle  a_i \geq \frac{i}{m} \left(1- \sum_{j \leq i} o_j\right)   && \forall i \in [m];~\label{apx-lp:c2} \\
           & \textstyle a_i \geq 0, \,\, o_i \geq 0 			     && \forall i \in [m].
\end{align}

The variable $o_i$ corresponds to the marginal value $g(O_i)$ for the set $O_i$ in our analysis. Variables $a_i$ correspond to the quantities $f(A_i)$ for the approximate solution $A_i$ for each phase $i=1, 2, \dots, m$. We add all the inequalities we proved in Lemma~\ref{apx-lem:submodular-const-multi-phase} as the constraints for this LP. Note that since $f(O)=1$, the minimum possible value of $a_m$ will correspond to a lower bound on the approximation ratio of our algorithm.

The following is the dual for the above LP.
\begin{align}
\max \,\,             & \textstyle \sum_{i = 1}^{m} \frac{i}{m} y_i                                    && \mbox{s.t.} && \hspace{13em} \mbox{(DP)}  \nonumber \\
		       & \textstyle x_i + y_i - x_{i+1} \leq 0                                               && \forall i \in [m-1];~\label{apx-du:ai} \\
                        & \textstyle x_m + y_m \leq 1; 				   		         &&~\label{apx-du:an} \\
                        & \textstyle \sum_{j \geq i} \frac{j}{m}y_j - (1-\frac{i}{m}) x_i \leq 0	 && \forall i \in [m];~\label{apx-du:oi} \\
                        & \textstyle x_i \geq 0, \,\, y_i \geq 0 			    			 && \forall i \in [m].
\end{align}

\paragraph{Upper and Lower Bounds for the Factor-Revealing LP}
\label{apx-subsec:LP-bounds}
We can analytically prove that the of optimum value of the LP converges to $1/\e$. For this we show that the optimal value of the LP is at least $\left(1 - \frac{1}{m}\right)^m$. To this end, we will show a feasible dual solution with the same value.  In particular, we let $x_i = \left(1-\frac{1}{m}\right)^{m-i}$ for $i \in [m]$, $y_i = \frac{1}{m}\left(1-\frac{1}{m}\right)^{m-i-1}$ for $i \in [m-1]$ and $y_m = 0$ and show the following lemma.
\begin{lemma}
\label{apx-lem:submodular-const-dual}
The above solution is a feasible solution for (DP). Moreover, the value of this solution is $(1-1/m)^m$.
\end{lemma}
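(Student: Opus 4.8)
The plan is to verify directly that the stated pair $(x,y)$ is dual feasible and then to read off its objective value; the guiding observation is that \emph{all} dual constraints will turn out to hold with equality, which is precisely why this pair is the optimal certificate. Nonnegativity is immediate: since $m\ge 1$ we have $0\le 1-\nicefrac[]{1}{m}<1$, so $x_i=(1-\nicefrac[]{1}{m})^{m-i}\ge 0$ and $y_i=\tfrac1m(1-\nicefrac[]{1}{m})^{m-i-1}\ge 0$ for $i\in[m-1]$, while $y_m=0$.

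For the constraint family~\eqref{apx-du:ai}, fix $i\in[m-1]$ and factor $(1-\tfrac1m)^{m-i-1}$ out of $x_i+y_i$:
\[
x_i+y_i=(1-\tfrac1m)^{m-i-1}\Bigl[(1-\tfrac1m)+\tfrac1m\Bigr]=(1-\tfrac1m)^{m-i-1}=x_{i+1},
\]
so \eqref{apx-du:ai} is tight. The normalization constraint~\eqref{apx-du:an} is $x_m+y_m=(1-\tfrac1m)^0+0=1\le 1$, again tight.

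The one genuinely computational step is the family~\eqref{apx-du:oi}: for each $i\in[m]$ one must show $\sum_{j\ge i}\tfrac{j}{m}y_j\le(1-\tfrac{i}{m})x_i$. Setting $q\triangleq 1-\nicefrac[]{1}{m}$ (so $\tfrac1m=1-q$) and reindexing by $k=m-j$ — the term $j=m$ drops since $y_m=0$ — the left-hand side becomes $(1-q)\sum_{k=1}^{N}\bigl(1-\tfrac{k}{m}\bigr)q^{k-1}$ with $N=m-i$. I would split this into $(1-q)\sum_{k=1}^{N}q^{k-1}=1-q^{N}$ and $\tfrac{1-q}{m}\sum_{k=1}^{N}kq^{k-1}$, plug in the closed form $\sum_{k=1}^{N}kq^{k-1}=\frac{1-(N+1)q^{N}+Nq^{N+1}}{(1-q)^{2}}$, and use $m(1-q)=1$ to reduce the second piece to $1-(N+1)q^{N}+Nq^{N+1}$. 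Subtracting, everything cancels except $Nq^{N}(1-q)=\tfrac{N}{m}q^{N}=(1-\tfrac{i}{m})x_i$, so~\eqref{apx-du:oi} is tight as well. The only real obstacle is the bookkeeping of this weighted geometric series; a slicker alternative that avoids the differentiation identity is a descending induction on $i$ proving the equality $\sum_{j\ge i}\tfrac{j}{m}y_j=(1-\tfrac{i}{m})x_i$ directly, with base case $i=m$ and inductive step using the already-established $x_{i+1}=x_i+y_i$ together with $q\,y_i=\tfrac1m x_i$ (both valid for $i\in[m-1]$).

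Finally, the objective $\sum_{i=1}^{m}\tfrac{i}{m}y_i$ is precisely the left-hand side of~\eqref{apx-du:oi} evaluated at $i=1$, hence by the computation above it equals $\tfrac{m-1}{m}q^{m-1}=q\cdot q^{m-1}=q^{m}=(1-\nicefrac[]{1}{m})^{m}$, which proves the lemma. By weak LP duality this also certifies that the optimum of (LP) — i.e.\ the approximation ratio guaranteed by the greedy dynamic program — is at least $(1-\nicefrac[]{1}{m})^{m}$, and letting $m\to\infty$ this tends to $\nicefrac[]{1}{\e}$.
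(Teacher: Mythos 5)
Your proof is correct and takes essentially the same route as the paper: a direct verification that the stated dual solution satisfies every constraint of (DP) (in fact with equality), by evaluating the same arithmetic--geometric sum, and then reading off the objective value $(1-\nicefrac[]{1}{m})^m$. The only cosmetic differences are that you use the closed-form identity for $\sum_{k=1}^{N} k q^{k-1}$ (with $q=1-\nicefrac[]{1}{m}$) where the paper uses the multiply-by-$(1-\nicefrac[]{1}{m})$-and-subtract trick, and that you obtain the objective for free as the left-hand side of constraint~(\ref{apx-du:oi}) at $i=1$ instead of recomputing the series separately.
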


\begin{proof}
The constraint $x_m + y_m \leq 1$ is trivially satisfied since $x_m=1$ and $y_m=0$.

Now let us consider a constraint of the form $x_i + y_i - x_{i+1} \leq 0$. For these constraints the left hand side is equal to
\[
\left(1-\frac{1}{m}\right)^{m-i} + \frac{1}{m}\left(1-\frac{1}{m}\right)^{m-i-1} - \left(1-\frac{1}{m}\right)^{m-i-1}  = \left(1-\frac{1}{m}\right)^{m-i} - \left(1-\frac{1}{m}\right)^{m-i-1}\left(1 - \frac{1}{m}\right) = 0
\]
hence they are satisfied.

For the second kind of constraints $\sum_{j \geq i} \frac{j}{m}y_j - \left(1-\frac{i}{m}\right) x_i \leq 0$, the left hand side is equal to
\[
\sum_{j = i}^{m-1} \frac{j}{m} \frac{1}{m}\left(1-\frac{1}{m}\right)^{m-j-1} - \left(1-\frac{i}{m}\right) \left(1-\frac{1}{m}\right)^{m-i}
\]
First, we simplify the first term
$T := \sum_{j=i}^{m-1} \frac{j}{m} \frac{1}{m}(1-\frac{1}{m})^{m-j-1} = \frac{1}{m^2}\sum_{j = i}^{m-1} j \left(1-\frac{1}{m}\right)^{m-j-1}$,
which is an arithmetic-geometric progression. Now we multiply both sides by $(1-1/m)$ to get,
$(1-\frac{1}{m})T = \frac{1}{m^2}\sum_{j = i}^{m-1} j (1-\frac{1}{m})^{m-j}$. Subtracting the second equality from first we get,
\begin{align*}
\frac{T}{m} &= \frac{1}{m^2} \left( (m-1) -  \sum_{j=1}^{m-i-1} \left(1-\frac{1}{m}\right)^j - i \left(1-\frac{1}{m}\right)^{m-i} \right) \\
			&= \frac{1}{m^2} \left( (m-1) - \left(1-\frac{1}{m}\right) \frac{1 - \left(1-\frac{1}{m}\right)^{m-i-1}}{\frac{1}{m}} - i \left(1-\frac{1}{m}\right)^{m-i} \right) \\
            &= \frac{1}{m^2} \left( (m-1) - m \left(1-\frac{1}{m}\right) + m \left(1-\frac{1}{m}\right)^{m-i} - i \left(1-\frac{1}{m}\right)^{m-i} \right) \\
            &= \frac{1}{m} \left( \left(1-\frac{1}{m}\right)^{m-i} - \frac{i}{m} \left(1-\frac{1}{m}\right)^{m-i} \right) \\
            &= \frac{1}{m} \left( \left(1-\frac{i}{m}\right)\left(1-\frac{1}{m}\right)^{m-i} \right)
\end{align*}
Which implies that the left hand side of these constraints becomes $0$ and hence they are satisfied. All the $x_i$'s and $y_i$'s are trivially non-negative, which proves the feasibility.

Now we will show that the objective value for this solution is $(1-\frac{1}{m})^m$. Let objective function value corresponding to this solution be $S$. Hence,
\[
S = \sum_{i = 1}^{m} \frac{i}{m} y_i = y_m +\sum_{i = 1}^{m-1} \frac{i}{m} \frac{1}{m}\left(1-\frac{1}{m}\right)^{m-i-1} = \frac{1}{m^2}\sum_{i = 1}^{m-1} i \left(1-\frac{1}{m}\right)^{m-i-1}
\]

This is an arithmetic-geometric progression. Hence we multiply both sides by $\left(1-\frac{1}{m}\right)$ to get,
\[
\left(1-\frac{1}{m}\right)S = \left(1-\frac{1}{m}\right) \frac{1}{m^2}\sum_{i = 1}^{m-1} i \left(1-\frac{1}{m}\right)^{m-i-1} = \frac{1}{m^2}\sum_{i = 1}^{m-1} i \left(1-\frac{1}{m}\right)^{m-i}
\]

Subtracting the second equality from first we get,
\begin{align*}
\frac{S}{m} &=  -\frac{1}{m^2} \left(1-\frac{1}{m}\right)^{m-1} + \frac{m-1}{m^2} - \frac{1}{m^2} \sum_{i = 2}^{m-1} \left(1-\frac{1}{m}\right)^{m-i} \\
            &= \frac{1}{m} \left( -\frac{1}{m} \left(1-\frac{1}{m}\right)^{m-1} +\left(1-\frac{1}{m}\right) - \frac{1}{m}\left(1-\frac{1}{m}\right)  \sum_{i = 0}^{m-3} \left(1-\frac{1}{m}\right)^{i} \right) \\
            &= \frac{1}{m} \left( -\frac{1}{m} \left(1-\frac{1}{n}\right)^{m-1} +\left(1-\frac{1}{m}\right) - \frac{1}{m} \left(1-\frac{1}{m}\right) \frac{1-\left(1-\frac{1}{m}\right)^{m-2}}{1/m} \right) \\
            & = \frac{1}{n} \left( -\frac{1}{m} \left(1-\frac{1}{m}\right)^{m-1} + \left(1-\frac{1}{m}\right)  - \left(1-\frac{1}{m}\right) + \left(1-\frac{1}{m}\right)^{m-1} \right) \\
            &= \frac{1}{m} \left(1-\frac{1}{m}\right)^m
\end{align*}
Hence $S=\left(1-\frac{1}{m}\right)^m$, which proves the lemma.
\end{proof}

Now we show that there is a primal feasible solution with matching value to show that the solution is indeed optimal. Let $a_i = \frac{i}{m}\left(1-\frac{1}{m}\right)^{i}$ for $i \in [m]$, $o_i = \frac{1}{m}\left(1-\frac{1}{m}\right)^{i-1}$ for $i \in [m-1]$ and $o_m = 1- \sum_{i=1}^{m-1} o_i$.

\begin{lemma}
\label{apx-lem:submodular-const-lp}
The above solution is a feasible solution the factor-revealing LP. Moreover, the value of this solution is $\left(1-\frac{1}{m}\right)^m$.
\end{lemma}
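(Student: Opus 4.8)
The plan is to verify directly that the proposed point is feasible for (LP), read off its objective value, and then invoke weak duality together with Lemma~\ref{apx-lem:submodular-const-dual} to conclude optimality; the limit $\nicefrac{1}{\e}$ then follows from $\left(1-\frac{1}{m}\right)^m\to\nicefrac{1}{\e}$.

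The first step is to record the single geometric-series identity that drives everything: for $1\le i\le m-1$,
\[
\sum_{j=1}^{i} o_j \;=\; \sum_{j=1}^{i}\frac{1}{m}\left(1-\frac{1}{m}\right)^{j-1}\;=\;1-\left(1-\frac{1}{m}\right)^{i}.
\]
Taking $i=m-1$ gives $o_m = 1-\sum_{j=1}^{m-1}o_j = \left(1-\frac{1}{m}\right)^{m-1}\ge 0$, which in particular shows that all variables are nonnegative. Next I would check the three families of constraints. For \eqref{apx-lp:c0} both sides equal $\frac{1}{m}\left(1-\frac{1}{m}\right)$. For \eqref{apx-lp:c1} with $2\le i\le m-1$, factoring $\frac{1}{m}\left(1-\frac{1}{m}\right)^{i-1}$ out of $a_i-a_{i-1}$ gives $a_i-a_{i-1}=\left(1-\frac{i}{m}\right)o_i$, i.e.\ equality; and for $i=m$ the constraint reduces to $a_m\ge a_{m-1}$, where both sides equal $\left(1-\frac{1}{m}\right)^m$. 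For \eqref{apx-lp:c2} with $i\le m-1$, the identity above yields $1-\sum_{j\le i}o_j=\left(1-\frac{1}{m}\right)^i$, so the right-hand side is exactly $\frac{i}{m}\left(1-\frac{1}{m}\right)^i=a_i$; and for $i=m$ we have $\sum_{j\le m}o_j=1$, so the right-hand side is $0\le a_m$. Finally $a_m=\frac{m}{m}\left(1-\frac{1}{m}\right)^m=\left(1-\frac{1}{m}\right)^m$, the claimed value. Combining this with Lemma~\ref{apx-lem:submodular-const-dual}, weak LP duality certifies that $\left(1-\frac{1}{m}\right)^m$ is the optimal value of (LP), and the monotone convergence $\left(1-\frac{1}{m}\right)^m\uparrow\nicefrac{1}{\e}$ completes the argument.

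The argument is essentially routine; the only point needing care — the ``main obstacle'', such as it is — is the boundary index $i=m$, where the coverage-correcting choice of $o_m$ departs from the geometric pattern $\frac{1}{m}\left(1-\frac{1}{m}\right)^{i-1}$. There constraint \eqref{apx-lp:c2} becomes slack rather than tight and \eqref{apx-lp:c1} degenerates to $a_m\ge a_{m-1}$, so one must treat this index separately from $1\le i\le m-1$; everything else is a single geometric summation verified with equality.
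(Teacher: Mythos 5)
Your verification is correct and follows essentially the same route as the paper: direct, constraint-by-constraint checking that the proposed $(a,o)$ satisfies \eqref{apx-lp:c0}--\eqref{apx-lp:c2} with the geometric-series identity $\sum_{j\leq i}o_j = 1-\left(1-\frac{1}{m}\right)^i$, noting $o_m=\left(1-\frac{1}{m}\right)^{m-1}\geq 0$, and reading off the objective value $a_m=\left(1-\frac{1}{m}\right)^m$ (your explicit treatment of the boundary index $i=m$ is a careful touch the paper handles implicitly via the vanishing coefficient and the trivial slack in \eqref{apx-lp:c2}). The concluding appeal to weak duality with Lemma~\ref{apx-lem:submodular-const-dual} matches the paper's surrounding discussion, even though the lemma statement itself only asserts feasibility and the value.
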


\begin{proof}
Note that $a_1 = \frac{1}{m}\left(1-\frac{1}{m}\right) = \left(1-\frac{1}{m}\right) o_1$. Now for any inequality of the type $a_i \geq a_{i-1} +  \left(1-\frac{i}{m}\right) o_i$, the left-hand side is $\frac{i}{n}\left(1-\frac{1}{m}\right)^{i}$. The right-hand side equals
\[ \frac{i-1}{m}\left(1-\frac{1}{m}\right)^{i-1} +  \left(1-\frac{i}{m}\right) \frac{1}{m}\left(1-\frac{1}{m}\right)^{i-1}  = \left(1-\frac{1}{m}\right)^{i-1} \left(\frac{i}{m} - \frac{1}{m} + \frac{1}{m} - \frac{i}{m^2} \right) = \frac{i}{m}  \left(1-\frac{1}{m}\right)^{i} \]
hence they are satisfied.

Now for any inequality of the type $a_i \geq \frac{i}{m} (1- \sum_{j \leq i} o_j)$ for $i \in [m-1]$, the left-hand side is $\frac{i}{m}(1-\frac{1}{m})^{i}$. The right-hand side equals
\[\frac{i}{m} \left(1- \sum_{j =1}^i \frac{1}{m}(1-\frac{1}{m})^{j-1} \right) = \frac{i}{m} \left(1- \frac{1}{m}\frac{1 - (1-\frac{1}{m})^i}{1/m} \right) =  \frac{i}{m} \left(1- 1 + (1-\frac{1}{m})^i \right) = \frac{i}{m}(1-\frac{1}{m})^{i} \]

Hence they are satisfied as well. Inequality $a_m \geq  (1- \sum_{j = 1}^m o_j)$ is trivially true by definition of $o_n$. All the $a_i$'s and $o_i$'s (except $o_m$) are trivially non-negative. Finally, $o_m= 1 - \sum_{j = 1}^{m-1} o_j = 1 - \sum_{j =1}^{m-1} \frac{1}{m}\left(1-\frac{1}{m}\right)^{j-1} = 1 - \frac{1}{m}\frac{1 - \left(1-\frac{1}{m}\right)^{m-1}}{1/m} = 1 - 1 + \left(1-\frac{1}{m}\right)^{m-1} \geq 0$, which proves the feasibility for the solution. The objective value for the LP is $a_m=\left(1-\frac{1}{m}\right)^m$, hence the lemma follows.
\end{proof}

From the above to lemmas it follows that the the bounded provided by (LP) converges to $1/\e$ for $m\rightarrow\infty$, which is the approximation ratio of the greedy DP under the above simplifying assumption.



We directly get Lemma~\ref{lem:greedy-dp-vanilla}, using Lemma~\ref{apx-lem:submodular-const-multi-phase},~\ref{apx-lem:submodular-const-dual} and~\ref{apx-lem:submodular-const-lp}.


\subsection{Forbidden sets for a single packing and a single covering constraints}\label{subsec:submodular-one}
In this setting we are able to ensure a $(1-\eps)$-violation of the covering constraints by using the concept of \emph{forbidden sets}.
Intuitively, we exclude the elements of these set from being included to the dynamic programming table in order to be able to complete the table entries to solutions with only small violation.

\subsubsection{Algorithm}\label{subsubsec:forbiddenAlgorithm}

\noindent {\bf\em{Guessing:}} Fix some $\eps>0$. By guessing we assume that we know the set $G$ of all, at most $1/\eps$ elements $\ell$ from the optimum solution with $\bP_\ell\geq \eps\p$. (For consistency reasons, we use bold-face vector notation here also for dimension one.) We can guess $G$ using brute force in $n^{O(1/\eps)}$ time. This allows us to remove all elements with $\bP_\ell \geq \eps\p$ from the instance. Let $\cN'$ be the rest of the elements.

\noindent {\bf\em{Forbidden Sets:}} Fix an order of $\cN'$ in which the elements are sorted in a non-increasing order of $\bC_\ell/\bP_\ell$ values, breaking ties arbitrarily. Let $\cN_i$ be the set of the first $i$ elements in this order. For any $\p'\leq \p$, let $F_{\p'}$ be the smallest set $\cN_i$ with $\bP\one_{\cN_i}\geq \p-\p'$. Note that the profit of $F_{\p'}$ is at least the profit of any subset of $\cN'$ with packing value at most $\p-\p'$ and that the packing value of $F_{\p'}$ is no larger than $(1+\eps)\p-\p'$. Also note that for any $0 \leq \p'\leq \p'' \leq \p$, it holds that $F_{\p''} \subseteq F_{\p'}$.

\noindent {\bf\em{Greedy-DP algorithm with guessing and forbidden sets:}} Let $G$ be the set of the guessed big elements as described above.
For the base case, we set $T[\bC \one_G, \bP \one_G] = G$ and $T[\bc', \p']=\bot$ for all table entries with $\bc' \neq \bC \one_G$ or $\p' \neq \bP \one_G$.

In order to compute $T[\bc', \p']$, we look at every set of the form $T[\bc'-\bC_\ell,\p'-\bP_\ell] \cup \{\ell\}$, where $\ell\in\cN \setminus (T[\bc'-\bC_\ell,\p'-\bP_\ell] \cup F_{\p'})$, such that $\bc'- \bC_\ell \geq 0$ and $\p'- \bP_\ell \geq 0$. Notice that we forbid elements belonging to $F_{\p'}$ to be included in any table entry of the form $T[\bc', \p']$. Now out of all these sets, we assign the most valuable set to $T[\bc', \p']$. The output of our algorithm is the best of the solutions $T[\bc', \p'] \cup F_{\p'}$, such that $\bc' + \bC \one_{F_{\p'}} \geq \bc$.

The pseudo-code of the algorithm can be found below\footnote{For the sake of readability of the pseudo code, we use array notation $F[\p]$ rather than $F_{\p}$ for forbidden sets.}.

  \begin{algorithm*}[H] \label{alg:GreedyDPForbid}
  \caption{Greedy Dynamic Program with Guessing and Forbidden Sets}
  create a table $T\colon [n \cdot c_{\max}]_0 \times[p_{\max}]_0\rightarrow 2^{\cN}$ initialized with entries $\bot$\label{algline:init}\\
  sort the ground set so that $\cN=\{\ell_1, \ell_2, \dots, \ell_n\}$ and $\bC_{\ell_i}/\bP_{\ell_i} \geq \bC_{\ell_{i+1}}/\bP_{\ell_{i+1}}$ for all $i \in [n-1]$\\
  let $\cN_i=\{\ell_1,\dots,\ell_i\}$ for all $i\in[n]$ and let $\cN_0=\emptyset$\\
  create an array $F\colon  [\p]_0 \rightarrow 2^{\cN}$\\
  set $F[\p'] = \cN_i$, where $i$ is the smallest index in $[n]_0$ such that $\bP\one_{\cN_i}\geq \p-\p'$\\
  $S \gets \varnothing$ \\
  \ForEach{$G \subseteq \cN$, such that $|G| \leq 1/\eps, \bP\one_{G}\leq \p$}{
    $T[\bC\one_{G},\bP\one_{G}]\gets G$\\
    \ForEach{$\bc'\in [n \cdot c_{\max}]_0^c$ and $\p' \in [p_{\max}]_0^p$}{
        \ForEach{$\ell\in\cN \setminus (T[\bc',\p'] \cup F[\p'])$\label{algline:forEachElement}}{
            $\bc''\gets\bc'+\bC_{\ell}$, $\p''\gets\p'+\bP_{\ell}$\\
            $T[\bc'',\p''] \gets \arg\max\{f(T[\bc'',\p'']),f(T[\bc',\p']\cup\{\ell\})\}$\label{algline:greedyExtend}\\
        }
    }
    $\displaystyle S \gets \argmax\{f(S), \max_{\substack{\bc', \p':\\ \bc' + \bC \one_{F[\p']} \geq \bc}} \{ f(T[\bc',\p']\cup F[\p'])\}\}$\\
  }
  Output $S$.
  \end{algorithm*}

\subsubsection{Analysis}\label{subsubsec:forbiddenAnalysis}

\paragraph{A Warmup}
As for the vanilla version of the algorithm, we start by
giving a combinatorial proof that gives already a ratio of
$0.25$. Again, the proof contains some of the ideas and technical
ingredients used in the factor-revealing LP.

We first prove the following simple but crucial observation.
\begin{lemma}
Any table entry $T[\bc', \p']\neq\bot$ with $0\leq \p'\leq \p$ and $0\leq \bc'\leq nc_{\max}$ is disjoint from $F_{\p'}$. 
\end{lemma}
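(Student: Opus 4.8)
The plan is to prove the statement by induction on the sequence of write operations the algorithm performs on $T$ (for a fixed guessed set $G$), relying on two facts already recorded in the text: the monotonicity $F_{\p''}\subseteq F_{\p'}$ for all $0\le\p'\le\p''\le\p$, and the fact that, by construction, $G$ consists only of elements $\ell$ with $\bP_\ell\ge\eps\p$, hence $G\subseteq\cN\setminus\cN'$; since each $F_{\p'}$ is a prefix $\cN_i$ of the sorted order of $\cN'$, we have $F_{\p'}\subseteq\cN'$ and therefore $G\cap F_{\p'}=\varnothing$ for every legal $\p'$. I also note that the index conditions $0\le\p'\le\p$ and $0\le\bc'\le n c_{\max}$ in the statement merely assert that $(\bc',\p')$ is a legal table position — which every entry the algorithm touches satisfies — so $F_{\p'}$ is always defined.

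For the base case: immediately after initialization the only entry different from $\bot$ is $T[\bC\one_G,\bP\one_G]=G$, and $\bP\one_G\le\p$ by the loop guard, so this is a legal position; the remark above then gives $G\cap F_{\bP\one_G}=\varnothing$.

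For the inductive step: consider an update that replaces $T[\bc'',\p'']$ by $\arg\max\{f(T[\bc'',\p'']),\,f(T[\bc',\p']\cup\{\ell\})\}$ with $\bc''=\bc'+\bC_\ell$, $\p''=\p'+\bP_\ell$, where the loop guard guarantees $\ell\in\cN\setminus(T[\bc',\p']\cup F_{\p'})$. If the update keeps the old value of $T[\bc'',\p'']$, disjointness from $F_{\p''}$ is immediate from the induction hypothesis. Otherwise the new value is $T[\bc',\p']\cup\{\ell\}$; since $\bP_\ell\ge 0$ we have $\p'\le\p''\le\p$, so $F_{\p''}\subseteq F_{\p'}$ by monotonicity. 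Hence $\ell\notin F_{\p'}\supseteq F_{\p''}$, and the induction hypothesis yields $T[\bc',\p']\cap F_{\p'}=\varnothing$, so also $T[\bc',\p']\cap F_{\p''}=\varnothing$; combining the two, $(T[\bc',\p']\cup\{\ell\})\cap F_{\p''}=\varnothing$. Since the final value of every non-$\bot$ entry is produced by this process starting from the base case, the lemma follows.

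The one place that needs genuine care is the base case, where it is essential that $G$ is a set of \emph{big} elements (lying entirely outside $\cN'$) rather than an arbitrary small subset of $\cN$; the remainder is a routine "monotone forbidden sets plus induction" argument. In particular it is irrelevant whether the loop guard forbids the forbidden set of the source or of the target packing value, since the two differ by the nonnegative quantity $\bP_\ell$ and $F$ is monotone, so either convention delivers $\ell\notin F_{\p''}$.
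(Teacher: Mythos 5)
Your proof is correct and takes essentially the same route as the paper's: the base case uses that $G$ consists only of large elements and is therefore disjoint from $\cN'\supseteq F_{\p'}$, and the inductive step combines the loop guard with the monotonicity $F_{\p''}\subseteq F_{\p'}$ for $\p'\leq\p''$. The only cosmetic difference is that you induct on the sequence of table writes, whereas the paper inducts on the packing coordinate $\p'$ of the entry.
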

\begin{proof}
We prove the claim for all entries $T[\bc', \p']\neq\bot$ by induction on $\p'$. Hence we consider only entries $\bc'\geq \bC \one_G$ and $\p'\geq \bP \one_G$. 

The claim clearly holds for $T[\bC \one_G, \bP \one_G] = G$ since $G$ is disjoint from $\cN'$ and since $F_{\bP \one_G}\subseteq \cN'$.

Note that any table entry $T[\bc', \bP \one_G]$ with $\bc'>\bC \one_G$ is $\bot$ as there are no zero-weight elements in $\cN'$. Now consider an entry $T[\bc', \p']\neq\bot$ with $\p'>\bP \one_G$. Let $s^*$ be the element used by the DP for computing this entry, that is, $T[\bc',\p']= \mathrm{pred}(s^*)\cup\{s^*\}$, where $\mathrm{pred}(s^*)=T[\bc'',\p'']$, $\bc''=\bc'-\bC_{s^*}$, and $\p''=\p'-\bP_{s^*}$. By line~\ref{algline:forEachElement} Algorithm~\ref{alg:GreedyDPForbid}, we have that $s^* \notin F_{\p'}$. By the inductive hypothesis, we have on the other hand that $T[\bc'',\p'']$ is disjoint from $F_{\p''}$. The claim now follows for $T[\bc',\p']$ because $F_{\p'}\subseteq F_{\p''}$.
\end{proof}

Let $O$ be an optimal set solution. As for the vanilla
version fix an arbitrary permutation of $O$ and define a
$g\colon O \rightarrow \mathbf{N}$ of marginal values with respect to
this permutation.


\begin{lemma}
\label{lem:submodular-one}
There are $\p',\bc' \geq 0$ such that $\p'\leq \p$, $\bc' + \bC\one_{F_{\p'}} \geq \bc$, and $f(T[\bc', \p']\cup F_{\p'})\geq\frac14 f(O)$.
\end{lemma}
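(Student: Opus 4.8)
The plan is to adapt the warmup argument for the vanilla algorithm (Lemma~\ref{apx-lem:submodular-const}) so that it copes with the forbidden sets. Fix an optimal solution $O$ and an arbitrary permutation of $O$ defining the marginal values $g$ exactly as in the vanilla proof, and work in the iteration of Algorithm~\ref{alg:GreedyDPForbid} in which the guessed set $G$ is the set of big elements of $O$, i.e.\ those $\ell\in O$ with $\bP_\ell\ge\eps\p$ (so $O\setminus G\subseteq\cN'$ and the base cell $T[\bC\one_G,\bP\one_G]$ is initialised to $G$). I would then build, adaptively over $q$, a chain of sets $G=O_0\subseteq O_1\subseteq\cdots\subseteq O$, writing $\bc_q=\bC\one_{O_q}$ and $\p_q=\bP\one_{O_q}\le\bP\one_O\le\p$, and letting $A_q:=T[\bc_q,\p_q]$ be the corresponding table cell, while maintaining the invariant that $A_q\neq\bot$ and $f(A_q)\ge\tfrac12 g(O_q)$. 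The base case holds because $f(A_0)\ge f(G)\ge g(G)$ by Lemma~\ref{apx-lem:subset}.

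Before running the induction I would observe that the covering requirement is automatically met by every cell in the chain. Since $G\subseteq O_q$, we have $O\setminus O_q\subseteq\cN'$ with $\bP\one_{O\setminus O_q}=\bP\one_O-\p_q\le\p-\p_q$; as $F_{\p_q}$ is the prefix of $\cN'$ in non-increasing $\bC_\ell/\bP_\ell$ order of packing value at least $\p-\p_q$, the usual exchange argument gives $\bC\one_{F_{\p_q}}\ge\bC\one_S$ for every $S\subseteq\cN'$ with $\bP\one_S\le\p-\p_q$, and in particular $\bC\one_{F_{\p_q}}\ge\bC\one_{O\setminus O_q}$. Hence $\bc_q+\bC\one_{F_{\p_q}}\ge\bC\one_O\ge\bc$, while $A_q\cap F_{\p_q}=\varnothing$ by the disjointness lemma above. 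Thus $(\bc_q,\p_q)$ is always an eligible index in the final maximization, and it suffices to reach some $q$ with $f(A_q\cup F_{\p_q})\ge\tfrac14 f(O)$.

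For the inductive step, given $A_{q-1}$ with $f(A_{q-1})\ge\tfrac12 g(O_{q-1})$, I would ask whether there is an element $o\in O\setminus O_{q-1}$ with $f_{A_{q-1}}(o)>\tfrac12 g(o)$ and $o\notin F_{\p_{q-1}}$. If so, then while processing cell $A_{q-1}$ Algorithm~\ref{alg:GreedyDPForbid} is permitted to extend it by $o$ (note $o\notin A_{q-1}$ since its marginal is positive) into the cell $(\bc_{q-1}+\bC_o,\p_{q-1}+\bP_o)$; setting $O_q:=O_{q-1}\cup\{o\}$, we get $f(A_q)\ge f(A_{q-1}\cup\{o\})\ge f(A_{q-1})+\tfrac12 g(o)\ge\tfrac12 g(O_q)$, so the invariant persists. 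This branch is taken only finitely often; if it exhausts $O$, then $f(A_q)\ge\tfrac12 g(O)=\tfrac12 f(O)$ and $\bc_q=\bC\one_O\ge\bc$, and $(\bc_q,\p_q)$ already witnesses the lemma. Otherwise we reach an index $q-1$ at which every $o\in O\setminus O_{q-1}$ has $f_{A_{q-1}}(o)\le\tfrac12 g(o)$ or $o\in F_{\p_{q-1}}$; put $H:=\{o\in O\setminus O_{q-1}:f_{A_{q-1}}(o)>\tfrac12 g(o)\}$ and $\bar H:=(O\setminus O_{q-1})\setminus H$, so that $H\subseteq F_{\p_{q-1}}$.

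To finish this stopping case I would bound $f(A_{q-1}\cup F_{\p_{q-1}})$ in two ways. First, $H\subseteq F_{\p_{q-1}}$ and $A_{q-1}\cup H\cup\bar H=A_{q-1}\cup(O\setminus O_{q-1})$, so by submodularity (subadditivity of marginals and shrinking of marginals) and the small-marginal property of $\bar H$, $f_{A_{q-1}\cup H}(\bar H)\le\sum_{o\in\bar H}f_{A_{q-1}}(o)\le\tfrac12 g(\bar H)$, while monotonicity and Lemma~\ref{apx-lem:subset} give $f(A_{q-1}\cup(O\setminus O_{q-1}))\ge g(O\setminus O_{q-1})=g(\bar H)+g(H)$; subtracting, $f(A_{q-1}\cup F_{\p_{q-1}})\ge f(A_{q-1}\cup H)\ge\tfrac12 g(\bar H)+g(H)$. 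Second, $f(A_{q-1}\cup F_{\p_{q-1}})\ge f(A_{q-1})\ge\tfrac12 g(O_{q-1})$. Averaging these,
\[
f(A_{q-1}\cup F_{\p_{q-1}})\ \ge\ \tfrac12\!\left(\tfrac12 g(O_{q-1})+\tfrac12 g(\bar H)+g(H)\right)=\tfrac14\bigl(g(O_{q-1})+g(\bar H)+2g(H)\bigr)\ \ge\ \tfrac14 f(O),
\]
using $g(O_{q-1})+g(\bar H)+g(H)=g(O)=f(O)$. Combined with the covering bound from the second paragraph and $\p_{q-1}\le\p$, the cell $(\bc_{q-1},\p_{q-1})$ witnesses the lemma. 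I expect the forbidden sets to be the only genuine obstacle, and this stopping case is exactly where they are dealt with: a profitable element of $O$ that the table is not allowed to select must belong to $F_{\p_{q-1}}$, and the display above shows that both its covering value and its contribution to $f$ are already recouped through $F_{\p_{q-1}}$. Two routine points remain, as in the vanilla analysis: the cells should be filled in an order consistent with the recursion (e.g.\ by non-decreasing covering value) so that $A_{q-1}$ is finalized when $A_q$ is computed, and degenerate cases such as $f(\varnothing)=0$ and zero-marginal elements are handled verbatim as in Lemma~\ref{apx-lem:submodular-const}.
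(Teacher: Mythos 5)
Your proof is correct and follows the same overall plan as the paper's proof of Lemma~\ref{lem:submodular-one}: build a chain $O_0=G\subseteq O_1\subseteq\cdots\subseteq O$, maintain the invariant $f(T[\bc_q,\p_q])\geq\tfrac12 g(O_q)$, and when the extension step stalls because every remaining profitable element of $O$ lies in the forbidden set, complete the current table cell with $F_{\p_{q-1}}$ and average two lower bounds. Within this common frame you deviate in two welcome ways. First, you spell out the covering verification $\bc_q+\bC\one_{F_{\p_q}}\geq\bc$ via the exchange property of the density-prefix $F_{\p_q}$ applied to $O\setminus O_q\subseteq\cN'$; the paper needs this for its contradiction but leaves it implicit. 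Second, in the stopping case you replace the paper's inner iterative construction (which repeatedly picks an element of $F_{\p_{q-1}}\cap O$ with high marginal against the \emph{evolving} set $S'$ and augments $S'$ and $O'$) by a one-shot split of $O\setminus O_{q-1}$ into $H$ (high $A_{q-1}$-marginal, hence necessarily in $F_{\p_{q-1}}$) and $\bar H$, and a single submodularity/monotonicity estimate for $f(A_{q-1}\cup H)$; this streamlines the argument and lets you prove the bound directly rather than by contradiction. Both routes yield the same $\tfrac14$ guarantee.
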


\begin{proof}
For the sake of a contradiction let us assume that there is no such table entry as stated in the lemma.

Let $q_0$ be the number of guessed elements in the guessing phase.
We will show that the following claim holds under the assumption that the lemma is not true.
\begin{claim}
For any $q$ with $q_0 \leq q \leq k$ there is a $q$-subset $O_q=\{o_1,\dots,o_q\}$ of $O$ with total packing value $\p_q$ and total covering value $\bc_q$ such that $O_q$ is disjoint from $F_{\p_q}$ and $f(T[\bc_q, \p_q])\geq \frac12 g(O_q)$ holds.
\end{claim}

Note that this claim already yields the desired contradiction by considering the case $q=k$. To this end, note that $O_k=O$, that by monotonicity we have $f(T[\bc_k, \p_k]\cup F_{\p_k})\geq f(T[\bc_k, \p_k])\geq\frac12g(O_k)=\frac12 f(O)$, and also $\bc_k+\bC \one_{F_{\p_k}})\geq \bc_k = \bC\one_O \geq \bc$.

We now prove the above claim by induction on $q$. The claim is true for $q=q_0$ since we can set $O_{q_0} = G$ to the set of guessed elements and $\bc_{q_0} = \bC\one_G$ and $\p_{q_0} = \bP\one_G$ . Then we have $f(T[\bc_{q_0},\p_{q_0}])=f(O_{q_0}) \stackrel{\text{Lem.~\ref{apx-lem:subset}}}\geq g(O_{q_0})$.

For the inductive step assume now that $q \geq q_0+1$ and assume that the claim already holds for $q-1$. To this end, let $O_{q-1}$, $\bc_{q-1}$ and $\p_{q-1}$ be as in this claim. Let $S=T[\bc_{q-1}, \p_{q-1}]$.

We distinguish between two cases. In the first case there is an $\ell_q \in O \setminus (O_{q-1}\cup F_{\p_{q-1}})$ such that $f(S \cup \{ \ell_q \}) - f(S) > \frac12 g(\ell_q)$. Note that $\ell_q \notin S$, since otherwise the left hand side of the inequality will be zero while $g(\ell_q)$ is non-negative, which is a contradiction. Now, let $O_{q} = O_{q - 1} \cup \{\ell_q\}$, $\bc_q = \bc_{q-1} + \bC_{\ell_q}$ and $\p_q=\p_{q-1}+\bP_{\ell_q}$. Note that $\ell_q\notin F_{\p_q}$ since $\ell_q\notin F_{\p_{q-1}}$ and $F_{\p_q}\subseteq F_{\p_{q-1}}$, hence the DP can also add this element $\ell_q$ to the entry $T[\bc_{q-1}, \p_{q-1}]$ to get $T[\bc_q, \p_q]$. Now verify that
\[ f(T[\bc_q, \p_q]) \geq f(S\cup\{\ell_q\}) \geq f(S)+\frac12 g(\ell_q) \geq \frac12 \sum_{\ell_i\in O_{q-1}}g(\ell_i)+\frac12 g(\ell_{q}) = \frac12 \sum_{\ell_i\in O_{q}} g(\ell_i). \]

In the second case, for all $a \in O \setminus (O_{q-1}\cup F_{\p_{q-1}})$, we have $f(S \cup \{ a \}) - f(S) \leq \frac12 g(a)$.  In this case we will arrive at a contradiction to our assumption that the lemma is not true.

For this let us define $S'=S$ and $O'=O_{q-1}$ and look at elements in $(F_{\p_{q-1}} \setminus O') \cap O$. If $\exists b \in (F_{\p_{q-1}} \setminus O') \cap O$, such that $f(S \cup \{b\}) - f(S) > \frac12 g(b)$, then we define $S'=S \cup \{b\}$ and $O'=O' \cup\{b\}$. We know by induction hypothesis that $f(S)\geq \frac12 g(O_{q-1}) = \frac12 g(O')$. These two things together imply that $f(S') \geq \frac12 g(O')$. Now we again check whether there $\exists b \in (F_{\p_{q-1}} \setminus O') \cap O$, such that $f(S' \cup \{b\}) - f(S') > \frac12 g(b)$. If yes, then we define $S'=S \cup \{b\}$ and $O'=O' \cup\{b\}$ and get $f(S') \geq \frac12 g(O')$. We iterate this process until $\forall b \in (F_{\p_{q-1}} \setminus O') \cap O$, $f(S' \cup \{b\}) - f(S') \leq \frac12 g(b)$. Note that $S'$ is a union of $S$ and a few elements from $F_{\p_{q-1}} \cap O$. Hence $S\subseteq S'$ and by submodularity of $f$ we get $f(S' \cup \{ a \}) - f(S') \leq f(S \cup \{ a \}) - f(S)$ for any $a \notin S'$. This in turn implies that for all $a \in O \setminus (O_{q-1}\cup F_{\p_{q-1}})$, we have $f(S' \cup \{ a \}) - f(S') \leq f(S \cup \{ a \}) - f(S) \leq \frac12 g(a)$. Note that $(F_{\p_{q-1}} \setminus O') \cap O$ and $ O \setminus (O_{q-1}\cup F_{\p_{q-1}})$ are disjoint and $((F_{\p_{q-1}} \setminus O') \cap O) \cup(O \setminus (O_{q-1}\cup F_{\p_{q-1}})) = O \setminus O'$. Let $O \setminus O' = \{a_1, a_2, \dots a_m\}$. By submodularity of $f$ we have, $f(S' \cup \{a_1, a_2, \dots a_j\}) - f(S' \cup \{a_1, a_2, \dots a_{j-1}\}) \leq f_{S'}(a_j)\leq \frac12 g(a_j)$ for all $j=1,2, \dots m$. Adding up all these inequalities, we get
\begin{align*}
f(S' \cup (O \setminus O'))-f(S') &= \sum_{j=1}^m [f(S'\cup \{a_1,\dots,a_j\}) \\
& \qquad\qquad -f(S'\cup\{a_1,\dots,a_{j-1}\})]\\
& \leq \frac12\sum_{j=1}^m g(a_j)\\
& = \frac12 g(O \setminus O')\,.
\end{align*}

Rearranging, we get
\begin{align*}
f(S') & \geq f(S' \cup (O \setminus O')) - \frac12 g(O\setminus O')\\
 & \geq f(O \setminus O') - \frac12 g(O \setminus O')\\
 & \stackrel{\text{Lem.~\ref{apx-lem:subset}}}{\geq} \frac12 g(O \setminus O')\,.
\end{align*}

Adding this inequality to the final inequality which we get after our iterative process, i.e. $f(S') \geq \frac12 g(O')$, we get $4 f(S') \geq g(O)$. Also, $f(S \cup F_{\p_{q-1}}) \geq f(S')$ by monotonicity of $f$, together with the previous inequality contradicts the assumption that the claim of the lemma is not true.
\end{proof}

The following lemma follows directly from Lemma~\ref{lem:submodular-one} for $q=k$. 

\begin{lemma}\label{lem:submodular-one-algo}
 The above algorithm outputs for any $\eps>0$ in time $n^{O(1/\eps)}c_{\max}p_{\max}$ a $0.25$-approximate solution with covering value at least $\bc$ and with packing value at most $(1+\eps)\p$.
\end{lemma}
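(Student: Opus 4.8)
The plan is to read off the lemma as a direct corollary of Lemma~\ref{lem:submodular-one} together with the two structural facts recorded earlier for forbidden sets: the bound on their packing weight and the disjointness lemma asserting $T[\bc',\p']\cap F_{\p'}=\varnothing$. First I would isolate the ``lucky'' iteration of the outer loop. Since the loop ranges over all $G\subseteq\cN$ with $|G|\le 1/\eps$ and $\bP\one_G\le\p$, and since for a fixed optimum $O$ the set of its elements $\ell$ with $\bP_\ell\ge\eps\p$ has size at most $1/\eps$ and total packing weight at most $\bP\one_O\le\p$, this particular $G$ is indeed enumerated. In that iteration Lemma~\ref{lem:submodular-one} supplies a pair $(\bc',\p')$ with $\p'\le\p$, with $\bc'+\bC\one_{F_{\p'}}\ge\bc$, and with $f(T[\bc',\p']\cup F_{\p'})\ge\frac14 f(O)$. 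Since the algorithm returns the best set of exactly this shape over all iterations, the output $S$ satisfies $f(S)\ge\frac14 f(O)$, which is the claimed $0.25$-approximation.

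Second, I would verify that every candidate $T[\bc',\p']\cup F_{\p'}$ considered by the algorithm, and hence the output, is feasible in the relaxed sense stated. A nonempty table entry $T[\bc',\p']$ is assembled one element at a time, so $\bC\one_{T[\bc',\p']}=\bc'$ and $\bP\one_{T[\bc',\p']}=\p'$. By the disjointness lemma its elements are disjoint from $F_{\p'}$, so the covering value of the union equals $\bc'+\bC\one_{F_{\p'}}$, which is at least $\bc$ by the selection condition on $(\bc',\p')$, while its packing value equals $\p'+\bP\one_{F_{\p'}}$, which is at most $\p'+((1+\eps)\p-\p')=(1+\eps)\p$ by the previously established bound $\bP\one_{F_{\p'}}\le(1+\eps)\p-\p'$. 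Hence the output has covering value at least $\bc$ and packing value at most $(1+\eps)\p$.

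Third, for the running time I would simply multiply the three nested costs. The guessing loop has $\sum_{i\le 1/\eps}\binom{n}{i}=n^{O(1/\eps)}$ iterations; after one sort of $\cN$ by the ratios $\bC_\ell/\bP_\ell$ the array $F$ is filled in polynomial time; and each dynamic-programming pass fills a table indexed by $[n c_{\max}]_0\times[p_{\max}]_0$, scanning $O(n)$ ground-set elements per entry, that is $O(n^2 c_{\max}p_{\max})$ per guess (the closing $\argmax$ over table cells is cheaper). Multiplying and absorbing the polynomial-in-$n$ factors into the exponent $O(1/\eps)$ yields the stated bound $n^{O(1/\eps)}c_{\max}p_{\max}$.

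Since Lemma~\ref{lem:submodular-one} and the disjointness lemma already carry out the heavy lifting, I do not expect a genuine obstacle in this step. The only point requiring care is the packing accounting for the combined set $T[\bc',\p']\cup F_{\p'}$, namely that it neither reuses elements nor overshoots the budget by more than a $(1+\eps)$ factor; this is precisely why the nesting $F_{\p''}\subseteq F_{\p'}$ for $\p'\le\p''$ and the weight bound $\bP\one_{F_{\p'}}\le(1+\eps)\p-\p'$ were arranged beforehand.
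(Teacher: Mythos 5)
Your proposal is correct and takes essentially the same route as the paper: it cites Lemma~\ref{lem:submodular-one} for the existence of a table cell $(\bc',\p')$ with $\p'\le\p$, $\bc'+\bC\one_{F_{\p'}}\ge\bc$, and value at least $\frac14 f(O)$, uses the bound $\bP\one_{F_{\p'}}\le(1+\eps)\p-\p'$ for the packing side, and accounts for the $n^{O(1/\eps)}$ guessing loop times the $O(n^2 c_{\max}p_{\max})$ DP pass. You are a touch more explicit than the paper in invoking the disjointness lemma ($T[\bc',\p']\cap F_{\p'}=\varnothing$) before asserting that the covering value of the union is exactly $\bc'+\bC\one_{F_{\p'}}$ and the packing value exactly $\p'+\bP\one_{F_{\p'}}$ — a step the paper leaves implicit — but this is a clarification, not a departure.
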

\begin{proof}
 As shown in Lemma~\ref{lem:submodular-one}, the solution output by the algorithm is $0.25$-approximate and has covering value at least $\bc$. If $\p'$ is the weight of the table entry, then the total weight of the solution output is $\p'+\bP \one_{F_{\p'}}\leq \p'+(1+\eps)\p-\p'=(1+\eps)\p$ as claimed.

Observe that the table has $O(nc_{\max}p_{\max})$ entries (Line~\ref{algline:init}, Algorithm~\ref{alg:GreedyDPForbid}) and computing each of the entries takes $O(n)$ time (Line~\ref{algline:forEachElement}, Algorithm~\ref{alg:GreedyDPForbid}), hence we get the stated running time.
\end{proof}

Using standard scaling techniques, we can bring the running time down to polynomial at the expense of also violating the covering constraint by a factor $(1-\eps)$. 

\begin{lemma}
\label{lem:submodular-one-0.25}
There is an algorithm for maximizing a monotone submodular function subject to one covering and one packing constraint that outputs for any $\eps>0$ in $n^{O(1/\eps)}$ time a $0.25$-approximate solution with covering value at least $(1-\eps)\bc$ and with packing value at most $(1+\eps)\p$.
\end{lemma}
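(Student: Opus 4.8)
The plan is to derive Lemma~\ref{lem:submodular-one-0.25} from the pseudo-polynomial algorithm of Lemma~\ref{lem:submodular-one-algo} by a standard polynomial scaling argument. That algorithm runs in time $n^{O(1/\eps)}\,c_{\max}p_{\max}$, so the task reduces to replacing $c_{\max}=\|\bc\|_\infty$ and $p_{\max}=\|\p\|_\infty$ by quantities polynomial in $n$ and $1/\eps$, while paying only an additional $(1\pm\eps)$ slack on the two constraints.

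First I would preprocess the instance. Any element $\ell$ with $\bP_\ell>\p$ alone violates the packing constraint, hence lies in no feasible solution (in particular not in $O$), so I would delete all such elements. Then I would cap every covering coefficient, replacing $\bC_\ell$ by $\min\{\bC_\ell,\bc\}$; one checks that this does not change the family of covering-feasible sets at all (if some single capped term already equals $\bc$ the capped sum is $\ge\bc$, otherwise capping changes nothing), so $O$ stays an optimal solution. After these steps $\bP_\ell\le\p$ and $\bC_\ell\le\bc$ for every $\ell$.

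Next I would scale: put $\bP'_\ell=\lfloor \tfrac{n}{\eps\p}\bP_\ell\rfloor$, $\bC'_\ell=\lceil \tfrac{n}{\eps\bc}\bC_\ell\rceil$, $\p'=\bc'=\lfloor \tfrac{n}{\eps}\rfloor$, and run the algorithm of Lemma~\ref{lem:submodular-one-algo} on $(f,\cN,\bP',\bC')$ with requirements $\p',\bc'$ and parameter $\eps$. By the preprocessing, $\bP'_\ell\le n/\eps$ and $\bC'_\ell\le n/\eps+1$, so $p'_{\max},c'_{\max}=O(n/\eps)$ and the running time is $n^{O(1/\eps)}\cdot O(n/\eps)\cdot O(n/\eps)=n^{O(1/\eps)}$. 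The rounding directions are chosen so that $O$ survives the scaling: rounding packing coefficients \emph{down} gives $\bP'\one_O\le\tfrac{n}{\eps\p}\bP\one_O\le n/\eps$, hence $\bP'\one_O\le\p'$ since $\bP'\one_O$ is integral; rounding covering coefficients \emph{up} gives $\bC'\one_O\ge\tfrac{n}{\eps\bc}\bC\one_O\ge n/\eps\ge\bc'$. Thus the optimum of the scaled instance is at least $f(O)$, and the output $S$ of Lemma~\ref{lem:submodular-one-algo} satisfies $f(S)\ge\tfrac14 f(O)$, $\bC'\one_S\ge\bc'$, and $\bP'\one_S\le(1+\eps)\p'$. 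Translating back with $\bC_\ell>\tfrac{\eps\bc}{n}(\bC'_\ell-1)$, $\bP_\ell<\tfrac{\eps\p}{n}(\bP'_\ell+1)$, and $|S|\le n$ yields $\bC\one_S\ge(1-2\eps)\bc$ and $\bP\one_S\le(1+2\eps)\p$; running the whole procedure with $\eps/2$ in place of $\eps$ then gives exactly the claimed guarantees.

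I expect the only real difficulty to be the bookkeeping: making the rounding asymmetric in precisely the right way (floors for packing, ceilings for covering) so that $O$ is simultaneously packing- and covering-feasible after scaling, and absorbing the additive ``$\pm|S|\le\pm n$'' terms together with the built-in forbidden-set slack $(1+\eps)$ into a single $O(\eps)$ loss. A minor point to check is that scaling may create elements with $\bP'_\ell=0$; this is harmless both for the forbidden-set construction of Lemma~\ref{lem:submodular-one-algo} (such elements sit first in the order by $\bC'_\ell/\bP'_\ell$) and for the acyclicity of the dynamic program, because after the capping step the covering coefficients remain at least $1$, so adding any element strictly increases the covering coordinate of the table index.
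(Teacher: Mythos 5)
Your proposal is correct and follows essentially the same route as the paper's own proof: cap covering values at $\bc$, drop elements with packing value above $\p$, scale packing values down with floors and covering values up with ceilings so that $O$ stays feasible, run the pseudo-polynomial algorithm of Lemma~\ref{lem:submodular-one-algo} with a halved error parameter, and absorb the additive rounding error (at most $n$ units per constraint) into an $O(\eps)$ loss. The only difference is cosmetic (you scale by $\bc,\p$ rather than $c_{\max},p_{\max}$ and round the covering target down rather than up), and your remark about zero scaled packing weights is a point the paper glosses over rather than a flaw in your argument.
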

\begin{proof}
We will scale our instance and then apply Lemma~\ref{lem:submodular-one-algo}. First, we can assume that $c_{\max} \leq \bc$ because for all elements $\ell$ with covering value strictly larger than $\bc$ we can set $\bC_\ell=\bc$ thereby obtaining an equivalent instance with this property. Similarly, we can assume that $p_{\max}\leq \p$ by removing all elements $\ell$ with $\bP_\ell>\p$ from the instance.

Let us assume that $\eps\leq 1$. Let $K_\bc=\frac{\eps c_{\max}}{n}$ and $K_\p=\frac{\eps p_{\max}}{2n}$ be ``scaling factors'' and set $\bC_\ell'=\lceil \bC_\ell/K_\bc\rceil$ and $\bP_\ell'=\lfloor \bP_\ell/K_\p\rfloor$ for all $ \ell \in \cN$. Moreover, define new covering and packing bounds $\bc'=\lceil \bc/K_\bc\rceil$ and $\p'=\lfloor \p/K_\p\rfloor$.

Let $O$ be the optimum solution for the original, unscaled covering and packing values. Note that $\bC'\one_O\geq \bc'$ and $\bP'\one_O\leq \p'$ and hence $O$ is a feasible solution also with respect to the instance with scaled scaled covering and packing values $\bC_\ell', \bP_\ell'$ for $\ell\in\cN$ and bounds $\bc',\p'$.

Let $S$ be the solution output by the algorithm of Lemma~\ref{lem:submodular-one-algo} in the down-scaled instance where we use the error parameter $\eps'=\eps/2$. By the claim of the lemma, we have $f(S)\geq 0.25 f(O)$ since $O$ is also feasible in the scaled instance. Moreover, $\bC'\one_S\geq \bc'$ and $\bP'\one_S\leq (1+\eps/2)\p'$.

Now, we prove that the solution $S$ obeys the violation bounds in the \emph{original}, unscaled instance as claimed by the lemma. In fact, for any element $\ell \in\cN$ we have that $K_\bc \cdot \cC_\ell'\leq \bC_\ell+K_\bc$. Hence
\begin{align*}
\bC\one_S & \geq K_\bc\cdot \bC'\one_S - nK_\bc\\
 & = K_\bc\cdot \bC'\one_S - \eps c_{\max}\\
 & \geq K_\bc\cdot \bc'-\eps c_{\max}\\
 & \geq K_\bc \frac{\bc}{K_\bc}-\eps c_{\max}\\
 & \geq (1-\eps)\bc \, .
\end{align*}
Similarly, $K_\p\cdot \bP_\ell'\geq \bP_\ell-K_\p$ and thus
\begin{align*}
\bP\one_S & \leq K_\p\cdot \bP'\one_S +nK_\p\\
 & = K_\p\cdot \bP'\one_S + \frac{\eps}{2} p_{\max}\\
 & \leq K_\p\cdot \left(1+\frac{\eps}{2}\right)\p'+\frac{\eps}{2} p_{\max}\\
 & \leq K_\p \left(1+\frac{\eps}{2}\right)\frac{\p}{K_\p}+\frac{\eps}{2} p_{\max}\\
 & \leq (1+\eps)\p\, .
\end{align*}

By Lemma~\ref{lem:submodular-one-algo} the running time is $n^{O(1/\eps)}\frac{c_{\max}}{K_\bc}\frac{p_{\max}}{K_\p}=n^{O(1/\eps)}$.

\end{proof}

\paragraph{Factor-Revealing LP}
\label{sec:submodular-fact-rev-LP}
As in the vanilla version we extend the previous two-phase analysis to
a multi-phase analysis.

For the purpose of analysis, by scaling, we assume that $f(O) = \sum_{o \in O} g(o)=1$. We set the starting solution $A_0 \triangleq G$ to the set of guessed elements. For the analysis, we also assume that the guessed elements are the first ones in the permutation used to define the function $g$, hence $g(G) = f(G)$.

Now we extend the idea of the factor-revealing LP to forbidden sets. We first do our multi-phase analysis until we reach the solution $A_m = T[p(O\setminus F_m), w(O \setminus F_m)]$, where $F_m$ is the forbidden set corresponding to this table entry. Now we will show that the best of the solutions $A_i \cup F_i$ for $i \in [m]$ has the value at least a factor $0.353$ times the value $g(O) = f(O)$.

The following lemma is the basis of our factor-revealing LP.
\begin{lemma}
\label{lem:submodular-one-multi-phase}
Let $m\geq 1$ be a integral parameter denoting the number of phases. We can pick for each $i \in [m] \cup \{0\}$ a set $O_i = \{o^i_1, o^i_2, \dots o^i_{q_i}\} \subseteq O$ (possibly empty) and an $\alpha_i\in [0,1]$ such that the following holds. For any $i\neq j$, we have that $O_i \cap O_j = \emptyset$. Let $Q_i \triangleq\cup_{j=0}^iO_j$, $\bc_i \triangleq \bC \one_{Q_i}$ and $\p_i \triangleq \bP \one_{Q_i}$ and let $A_i \triangleq T[\bc_i, \p_i]$ be the corresponding entry in the DP table. Then set $O_i$ is disjoint from the forbidden set $F_i \triangleq F_{\p_i}$ for any $i$. Finally, the following inequalities hold:
\begin{enumerate}
\item \label{ineq:o0} $f(A_0) =  g(O_0)$,
\item \label{ineq:o1} $f(A_i) \geq f(A_{i-1}) +  \left(1-\frac{i}{m}\right) g(O_i) \qquad \forall i \in [m]$ and
\item \label{ineq:o2} $f(A_i \cup F_i) \geq f(A_i) + \alpha_i g(F_i \cap O) \qquad \forall i \in \{0\} \cup [m]$.
\item \label{ineq:o3} $f(A_i) \geq \frac{i}{m} \left(1- g(F_i \cap O) - \sum_{j=0}^i g(O_j)\right) + (1 - \alpha_i) g(F_i \cap O) \qquad \forall i \in \{0\} \cup [m]$.
\end{enumerate}
\end{lemma}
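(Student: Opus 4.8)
The plan is to prove the lemma by induction on $i$, building the sets $O_i$, the prefixes $Q_i=\bigcup_{j=0}^{i}O_j$, the DP cells $A_i=T[\bc_i,\p_i]$, and the scalars $\alpha_i$ one phase at a time, in the spirit of the vanilla multi-phase argument (Lemma~\ref{apx-lem:submodular-const-multi-phase}) but while also tracking the forbidden sets $F_i=F_{\p_i}$. For the base case $i=0$ I would take $O_0:=G$, so that $A_0=T[\bC\one_G,\bP\one_G]=G$ and inequality~\ref{ineq:o0} is immediate from the standing assumption $g(G)=f(G)$, while $O_0\cap F_0=\varnothing$ since $G$ avoids $\cN'$ whereas $F_0\subseteq\cN'$; inequalities~\ref{ineq:o2} and~\ref{ineq:o3} for $i=0$ then follow from $f(A_0)\ge 0$ and $f(A_0\cup F_0)\ge f(F_0\cap O)\ge g(F_0\cap O)$ (monotonicity and Lemma~\ref{apx-lem:subset}) by the same choice of $\alpha_0$ as in the inductive step below. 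For the inductive step at phase $i\ge 1$ I would run the greedy loop: starting from $\widehat Q:=Q_{i-1}$ and $O_i:=\varnothing$, repeatedly pick $o\in O\setminus\widehat Q$ with $o\notin F_{\bP\one_{\widehat Q}}$ and $f_{T[\bC\one_{\widehat Q},\bP\one_{\widehat Q}]}(o)>(1-i/m)\,g(o)$, moving it into both $O_i$ and $\widehat Q$, until no such element remains; then set $Q_i:=Q_{i-1}\cup O_i$, $A_i:=T[\bc_i,\p_i]$, $F_i:=F_{\p_i}$. Each move is a legal DP step --- the forbidden-set test $o\notin F_{\bP\one_{\widehat Q}}$ is precisely the condition in line~\ref{algline:forEachElement} of Algorithm~\ref{alg:GreedyDPForbid}, and $o$ is not already in the predecessor cell because its marginal there is positive --- so the resulting cell value grows by at least $(1-i/m)g(o)$; chaining these across phase $i$, starting from $A_{i-1}=T[\bc_{i-1},\p_{i-1}]$, gives inequality~\ref{ineq:o1}.

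Next I would check disjointness of $O_i$ from $F_i$: packing values only grow in phase $i$ and $F_{\p'}\supseteq F_{\p''}$ whenever $\p'\le\p''$, so each element added in phase $i$, being outside a forbidden set at packing value at most $\p_i$, is in particular outside $F_{\p_i}=F_i$; the same monotonicity gives $F_i\subseteq F_j$ for $j<i$, so $O_j\cap F_i\subseteq O_j\cap F_j=\varnothing$ by the inductive hypothesis, and altogether $Q_i\cap F_i=\varnothing$. The reduction that makes inequalities~\ref{ineq:o2},~\ref{ineq:o3} manageable is to put $B_i:=F_i\cap O$ and $C_i:=O\setminus(Q_i\cup F_i)$: because $Q_i\cap F_i=\varnothing$, the set $O\setminus Q_i$ is the disjoint union $B_i\cup C_i$, so $1-g(F_i\cap O)-\sum_{j\le i}g(O_j)=g(C_i)$, and inequality~\ref{ineq:o3} simplifies to $f(A_i)\ge\tfrac{i}{m}g(C_i)+(1-\alpha_i)g(B_i)$. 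Since the loop terminated, every $o\in C_i$ lies in $O\setminus Q_i$ and outside $F_i=F_{\bP\one_{Q_i}}$, i.e.\ was a candidate the loop rejected only for small marginal, so $f_{A_i}(o)\le(1-i/m)g(o)$ for the actual DP cell $A_i=T[\bc_i,\p_i]$. Telescoping the insertions of the elements of $C_i$ (into $A_i$, respectively into $A_i\cup F_i$), together with submodularity, monotonicity, and Lemma~\ref{apx-lem:subset} applied to subsets of $O$, would then yield the two ``global'' bounds
\[
 f(A_i)\ \ge\ \tfrac{i}{m}\,g(C_i)\qquad\text{and}\qquad f(A_i\cup F_i)\ \ge\ g(B_i)+\tfrac{i}{m}\,g(C_i),
\]
the first using $f(A_i\cup C_i)\ge f(C_i)\ge g(C_i)$, the second using $A_i\cup F_i\cup C_i\supseteq O\setminus Q_i$.

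I expect the main obstacle --- the genuinely new ingredient relative to the vanilla analysis --- to be exhibiting a single $\alpha_i\in[0,1]$ that simultaneously validates inequality~\ref{ineq:o2} (a lower bound on $f(A_i\cup F_i)$) and inequality~\ref{ineq:o3} (a lower bound on $f(A_i)$). The plan is to read these as a one-dimensional feasibility condition: \ref{ineq:o2} holds for every $\alpha_i\le\beta_i:=f_{A_i}(F_i)/g(B_i)$, and \ref{ineq:o3} holds for every $\alpha_i\ge\gamma_i:=1-(f(A_i)-\tfrac{i}{m}g(C_i))/g(B_i)$ (when $g(B_i)=0$ one takes $\alpha_i=0$ and both inequalities collapse to the two global bounds). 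One then checks that $[\max\{0,\gamma_i\},\min\{1,\beta_i\}]$ is nonempty: $\beta_i\ge 0$ by monotonicity; $\gamma_i\le 1$ is exactly the first global bound $f(A_i)\ge\tfrac{i}{m}g(C_i)$; and $\gamma_i\le\beta_i$ rearranges precisely to $g(B_i)+\tfrac{i}{m}g(C_i)\le f(A_i)+f_{A_i}(F_i)=f(A_i\cup F_i)$, i.e.\ the second global bound. Picking any $\alpha_i$ in this interval closes the induction. (Incidentally, the same construction also gives $\bC\one_{A_i\cup F_i}\ge\bc$ for every $i$, since $O\setminus Q_i\subseteq\cN'$ has packing value at most $\p-\p_i$ and $F_i=F_{\p_i}$ has at least the covering value of any such set, whence $\bC\one_{A_i\cup F_i}=\bc_i+\bC\one_{F_i}\ge\bC\one_O\ge\bc$; this feeds the subsequent factor-revealing LP though it is not part of the present statement.)
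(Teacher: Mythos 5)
Your proof is correct and follows essentially the same route as the paper's: the same greedy construction of $O_i$, the same termination condition $f_{A_i}(o)\le(1-i/m)g(o)$ on $C_i := O\setminus(Q_i\cup F_i)$, and the same telescoping-plus-Lemma~\ref{apx-lem:subset} derivation. The only presentational difference is that the paper directly sets $\alpha_i := \min\bigl\{1,\ (f(A_i\cup F_i)-f(A_i))/g(F_i\cap O)\bigr\}$ and then verifies inequalities~\ref{ineq:o2} and~\ref{ineq:o3} by a two-case analysis, whereas you derive the two ``global'' bounds first and then observe that the interval of admissible $\alpha_i$ is nonempty --- this is the same algebra packaged slightly more cleanly.
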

\begin{proof}
  We prove this by using induction on $i$. Let $G$ be the set of guessed elements from $O$ in the guessing phase. For $i=0$, inequality \ref{ineq:o0} is trivially true by picking $O_0=A_0=T[\bC \one_G, \bP \one_G] = G$. Let $F_0 = F_{\bP \one_G}$. If $g(F_0 \cap O) = 0$, then the inequalities~\ref{ineq:o2} and~\ref{ineq:o3} are trivially true. Otherwise, let $\alpha_0 = \min\{1, \frac{f(A_0 \cup F_0) - f(A_0)}{g(F_0 \cap O)}$ \}. For this choice inequality~\ref{ineq:o2} holds trivially. Also for $\alpha_0 =1$ the inequality~\ref{ineq:o3} holds trivially. Otherwise $f(A_0) = f(A_0 \cup F_0) - \alpha_0 g(F_0 \cap O) \geq f(F_0 \cap O) - \alpha_0 g(F_0 \cap O) \geq (1-\alpha_0) g(F_0 \cap O)$, which concludes the base case.

For the inductive step let $i \geq 1$ and assume that inequalities \ref{ineq:o1}, \ref{ineq:o2} and \ref{ineq:o3} are true for every $j < i$ (For $i=1$ inequality~\ref{ineq:o0} holds instead of inequality~\ref{ineq:o1}.). We start by defining $O_i \triangleq  \emptyset$, $A_i \triangleq A_{i-1}$ and $F_i = F_{i-1}$. Note that the inequality~\ref{ineq:o1} is true for this choice of $O_i$ and $A_i$. Now if there is an $o \in O \setminus F_i \cup (\cup_{j=0}^{i} O_j)$, such that $f(A_i \cup \{o\}) - f(A_i) > (1-\frac{i}{m}) g(o)$, then this implies that $T[\bC \one_{A_i\cup \{o\}}, \bP \one_{A_i\cup \{o\}}] \geq f(A_i \cup \{o\}) > f(A_i) + (1-\frac{i}{m}) g(o) \geq f(A_{i-1}) + (1-\frac{i}{m})g(O_i) + (1-\frac{i}{m}) g(o) =  f(A_{i-1}) + (1-\frac{i}{m})g(O_i\cup\{o\})$. Note that $o \notin A_i$, since otherwise the left hand side of the inequality will be zero and $g(o)$ is non-negative. Let $Q'_i \triangleq\cup_{j=0}^iO_j \cup \{o\}$. Hence, we can extend $O_i \rightarrow O_i \cup \{o\}$, $A_i \rightarrow  f(T[\bC\one_{Q'_i}, \bP\one_{Q'_i}])$ and $F_i \rightarrow F_{\bP\one_{Q'_i}}$. Hence, inequality \ref{ineq:o1} remains true after performing this operation on $O_i$ and $A_i$. We keep on doing this until for all $o \in O \setminus (F_i \cup (\cup_{j=0}^{i} O_j))$,  $f(A_i \cup \{o\}) - f(A_i) \leq (1-\frac{i}{m}) g(o)$. Then the construction of the set $O_i$ is completed.

Now we show that inequalities~\ref{ineq:o2} and~\ref{ineq:o3} hold as well. Let us define $\alpha_i \triangleq \min\left\{1, \frac{f(A_i \cup F_i) - f(A_i)}{g(F_i \cap O)}\right\}$ if $g(F_i \cap O) \neq 0$, and $\alpha_i \triangleq 1$ otherwise. This definition directly implies inequality~\ref{ineq:o2}.

Assume first that $\alpha_i < 1$. Let $O \setminus (F_i \cup (\cup_{j=0}^i O_j)) = \{ a_1, a_2, \dots, a_q\}$. By submodularity we have that $f(A_i \cup \{a_1, \dots, a_j\})-f(A_i \cup \{ a_1, \dots, a_{j-1}\})\leq f(A_i \cup \{a_j\})-f(A_i)\leq (1-i/m) g(a_j)$ for each $j=1, \dots, q$. We also have $f(A_i \cup F_i \cup (O \setminus (F_i\cup(\cup_{j=0}^i O_j)))) - f(A_i \cup (O \setminus (F_i\cup (\cup_{j=0}^i O_j)))) \leq f(A_i \cup F_i) - f(A_i) = \alpha_i g(F_i \cap O)$ as $\alpha_i<1$. By adding up the previous inequalities for all $j=1, \dots, q$ and the last inequality, we get that
\begin{align*}
f(A_i \cup F_i \cup (O \setminus \cup_{j=0}^i O_j))-f(A_i) &= \sum_{j=1}^q [f(A_i \cup \{a_1,\dots,a_j\})\\
& \qquad\qquad - f(A_i \cup \{ a_1, \dots, a_{j-1}\})] + \alpha_i g(F_i \cap O)\\
& \leq \left(1-\frac{i}{m}\right)  \sum_{j=1}^q g(a_j) + \alpha_i g(F_i \cap O)\\
& = \left(1-\frac{i}{m}\right) g(O \setminus ((\cup_{j=0}^i O_j) \cup F_i)) + \alpha_i g(F_i \cap O)\,.
\end{align*}
Rearranging, we get
\begin{align*}
f(A_i)    & \geq f(A_i \cup F_i \cup (O \setminus \cup_{j=0}^i O_j)) - \left(1-\frac{i}{m}\right) g(O \setminus (F_i \cup (\cup_{j=0}^i O_j))) - \alpha_i g(F_i \cap O) \\
             & \geq f((F_i \cap O) \cup (O \setminus (\cup_{j=0}^i O_j \cup F_i))) - \left(1-\frac{i}{m}\right) g(O \setminus (F_i \cup (\cup_{j=0}^i O_j))) - \alpha_i g(F_i \cap O)\\
             & \stackrel{\text{Lem.~\ref{apx-lem:subset}}}{\geq} \frac{i}{m} g(O \setminus (F_i \cup (\cup_{j=0}^i O_j))) + (1- \alpha_i) g(F_i \cap O) \\
             & = \frac{i}{m} \left(1 - g(F_i \cap O) - \sum_{j=0}^i g(O_j)\right) + (1- \alpha_i) g(F_i \cap O)\,.
\end{align*}

The case $\alpha_i = 1$ is similar to the above and in fact simpler. For the sake of completeness, we state it here. Let $O \setminus (F_i\cup(\cup_{j=0}^i O_j)) = \{ a_1, a_2, \dots, a_q\}$. By submodularity we have that $f(A_i \cup \{a_1, \dots, a_j\})-f(A_i \cup \{ a_1, \dots, a_{j-1}\})\leq f(A_i \cup \{a_j\})-f(A_i)\leq (1-i/m) g(a_j)$ for each $j=1, \dots, q$. Adding up these inequalities for all $j=1, \dots, q$ we get that
\begin{align*}
f(A_i \cup (O \setminus (F_i\cup (\cup_{j=0}^i O_j))))-f(A_i) &= \sum_{j=1}^q [f(A_i \cup \{a_1,\dots,a_j\})\\
& \qquad\qquad - f(A_i \cup \{ a_1, \dots, a_{j-1}\})]\\
& \leq \left(1-\frac{i}{n}\right)  \sum_{j=1}^q g(a_j)\\
                                                            & = \left(1-\frac{i}{m}\right) g(O \setminus (F_i\cup(\cup_{j=0}^i O_j)))\,.
\end{align*}
Rearranging, we get
\begin{align*}
  f(A_i) & \geq f(A_i \cup (O \setminus (F_i\cup(\cup_{j=0}^i O_j)))) - \left(1-\frac{i}{m}\right) g(O \setminus (F_i\cup (\cup_{j=0}^i O_j)))\\
             & \geq f(O \setminus (F_i\cup(\cup_{j=0}^i O_j))) - \left(1-\frac{i}{m}\right) g(O \setminus (F_i\cup(\cup_{j=0}^i O_j)))\\
             & \stackrel{\text{Lem.~\ref{apx-lem:subset}}}{\geq} \frac{i}{m} g(O \setminus (F_i\cup(\cup_{j=0}^i O_j))) \\
             & = \frac{i}{m} \left(1 - g(F_i\cap O)- \sum_{j=0}^i g(O_j)\right)\,.
\end{align*}
This implies that inequality \ref{ineq:o3} is also true for this $A_i$ and $O_i$ and hence the induction follows.

Hence inequality \ref{ineq:o3} is also true and hence the induction follows.
\end{proof}
Below, we state our \emph{factor-revealing} LP that captures the above-described multi-phase analysis with forbidden sets. The idea is to introduce variables for the quantities in the inequalities in the previous lemma and determining the minimum ratio that can be guaranteed by these inequalities.
\begin{align}
\min \,\, & \textstyle c 							      	   								&& \mbox{s.t.} &&  \hspace{11em}  \mbox{(LP-F)} \nonumber \\
	   & \textstyle a_0 = o_0;						     	   								&& \\
           & \textstyle a_i \geq a_{i-1} +  \left(1-\frac{i}{m}\right) o_i                   								&& \forall i \in [m];~\label{lp1:c1} \\ 
           & \textstyle b_i \geq a_i +  g_i			             										&& \forall i \in [m] \cup \{0\};~\label{lp1:c2} \\ 
           & \textstyle a_i \geq \frac{i}{m} (1- f_i - \sum_{j \leq i} o_j)  + f_i - g_i 						&& \forall i \in [m] \cup \{0\};~\label{lp1:c3} \\
           & \textstyle b_i \geq f_i				             										&& \forall i \in [m] \cup \{0\};~\label{lp1:c4} \\ 
           & \textstyle f_i \leq  f_{i-1}		          	      			      							&& \forall i \in [m];\\
           & \textstyle g_i \leq  f_i			          	      			      							&& \forall i \in [m] \cup \{0\};\\
           & \textstyle f_j + \sum_{i=0}^j o_i \leq 1;           	      			      						&& \forall j \in [m] \cup \{0\};\\ 
           & \textstyle c \geq b_i           	      			      										&& \forall i \in [m] \cup \{0\};~\label{lp1:c0} \\ 
           & \textstyle a_i \geq 0, \,\, o_i \geq 0 \,\, f_i \geq 0 \,\, g_i \geq 0	     						&& \forall i \in [m] \cup \{0\}.
\end{align}

The variable $o_i$ corresponds to the marginal value $g(O_i)$ for the set $O_i$ in our analysis. Variables $a_i$, $b_i$ and $f_i$ correspond to the quantities $f(A_i)$, $f(A_i \cup F_i)$ and $g(F_i)$ for the approximate solution $A_i$ and the corresponding forbidden set $F_i$ for each phase $i=1, 2, \dots, m$. The variable $g_i$ corresponds to the value $\alpha_i g(F_i \cap O)$. We add all the inequalities we proved in Lemma~\ref{lem:submodular-one-multi-phase} as the constraints for this LP (and additional obvious inequalities). Note that since $f(O)=1$, the minimum possible value of $\max_i b_i$ will correspond to a lower bound on the approximation ratio of our algorithm, which is captured by variable $c$ in the LP above.

\paragraph{Upper and Lower Bounds for the Factor-Revealing LP}
\label{subsec:lp_bounds}
For every positive integer $m$, the above factor-revealing LP provides a lower bound on the approximation ratio of our greedy DP described in Section~\ref{subsec:submodular-one}. Ideally, we would like to analytically determine the limit to which this bound converges when $m$ tends to infinity. Unfortunately, giving such a bound seems quite intricate due to the complexity of the LP.

Therefore, we shall first analyze the LP for the case that $f_i=0$ for all $i\in\{0\}\cup[m]$. This corresponds to the assumption that the forbidden sets do not contain elements from the optimum solution. Notice that the (LP) in Section~\ref{apx-subsec:factor-rev-lp} is precisely the LP which results in the case when $f_i=0$ for all $i\in\{0\}\cup[m]$. Under this assumption, we are know by Lemma~\ref{apx-lem:submodular-const-lp} and~\ref{apx-lem:submodular-const-dual} that this simplified LP converges to $1/\mathrm{e}$ for $m\rightarrow\infty$, which is the approximation ratio of the greedy DP under the above simplifying assumption. This raises the question if the optimum solution of (LP-F) tends to $1/\e$ for increasing $m$ as well. Below we show that the bound provided by (LP-F) actually remains below $1/\e-\eps_0$ for any $m$ and for some constant $\eps_0>0$.

Fix an arbitrary positive integer $m$. To show the above claim, we start with the solution $a,o$ described above, which is optimum for (LP). The solution used there is $a_0=0$, $a_i = \frac{i}{m}(1-\frac{1}{m})^{i}$ for $i \in [m]$, $o_i = \frac{1}{m}(1-\frac{1}{m})^{i-1}$ for $i \in [m-1]$ and $o_m = 1- \sum_{i=1}^{m-1} o_i$. Note that by setting $f=g=0$ and $b=a$, we obtain a feasible solution for (LP-F) as well with the same objective function value tending to $1/\e$ for $m\rightarrow\infty$. We now alter this solution so that $f$ attains positive values. This will give us some leverage to alter $o$ and $a$ suitably to actually decrease the objective value by some small positive amount.

We fix parameters $\alpha=0.625, \beta=0.0517, \gamma=0.0647$ More specifically, we set $f_i'=g_i'=\gamma$, $o_i'=o_i-\beta/m$ for $i=1,\dots,m/2-1$.  For $i=m/2$, we set $f_i'=g_i'=0$, $o_i'=o_i+\alpha\beta$. For $i=m/2+1,\dots,m$ we set $f_i'=g_i'=0$ and $o_i'=o_i$. As we specified the values for $o',f'$ and $g'$ the optimum values for $a'$ and $b'$ are ``determined'' in a straightforward way by the inequalities of (LP-F). We give the explicit values below.

The intuition why the above alteration of the solution decreases the objective function value is as follows. We slightly decrease the values of $o_i$ for $i<m/2$. This decreases the RHS of inequality~(\ref{lp1:c1}). We have to compensate for this decrease in two ways. First, by picking $f_i=g_i$ large enough, we ensure that also the RHS of (\ref{lp1:c3}) decreases. We set, however, $f_i=g_i=0$ for $i\geq m/2$ in order to avoid that RHS of~(\ref{lp1:c2}) comes too close to $1/\e$ with increasing $i$. To ensure that RHS of~(\ref{lp1:c3}) still decreases, we increase $o_{m/2}$ by \emph{strictly more} ($\alpha>1/2$) than the total decrease of the $o_i$ for $i<m/2$. By picking $\alpha$ not too large, on the other hand, there remains a decrease of the RHS of~(\ref{lp1:c1}) as in this inequality the coefficient of the $o_{m/2}$ is smaller than the coefficients of $o_i$ for $i<m/2$.

\begin{lemma}
\label{lem:upperBoundLPF}
For every $m > 2$, there is a feasible solution $a', b', o', f', g'$ for (LP-F), such that the corresponding objective value for the solution is $1/\e-\beta(\alpha-1/2)/2 + 3\beta/(4m) < 0.3647$.
\end{lemma}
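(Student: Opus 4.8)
This lemma is a pure \emph{verification}: we are handed an explicit candidate for (LP-F) and must certify feasibility and evaluate the objective. The plan is to build the candidate as a perturbation of the optimal solution of the simplified (LP) supplied by Lemma~\ref{apx-lem:submodular-const-lp}, namely $a_0=0$, $a_i=\frac im(1-\frac1m)^i$ for $i\in[m]$, $o_i=\frac1m(1-\frac1m)^{i-1}$ for $i\in[m-1]$, $o_m=(1-\frac1m)^{m-1}$ (objective $(1-\frac1m)^m$), extended to (LP-F) by $f=g\equiv 0$, $b=a$, $c=(1-\frac1m)^m$. We then apply the stated perturbation: $f'_i=g'_i=\gamma$, $o'_i=o_i-\beta/m$ for $1\le i\le m/2-1$; $f'_{m/2}=g'_{m/2}=0$, $o'_{m/2}=o_{m/2}+\alpha\beta$; $f'_i=g'_i=0$, $o'_i=o_i$ for $m/2<i<m$; $f'_0=g'_0=\gamma$, $a'_0=o'_0=0$; and $o'_m:=1-\sum_{i<m}o'_i=o_m-\bigl(\beta(\alpha-\tfrac12)+\beta/m\bigr)$, so that $\sum_i o'_i=1$ and the net mass moved is only $O(\beta)$. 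With $o',f',g'$ fixed, the remaining variables are forced by taking the defining inequalities tight: $a'_i=\max\{a'_{i-1}+(1-i/m)o'_i,\ \frac im(1-f'_i-\sum_{j\le i}o'_j)+f'_i-g'_i\}$ for $i\ge 1$, $b'_i=\max\{a'_i+\gamma,\gamma\}$ for $i<m/2$ and $b'_i=a'_i$ for $i\ge m/2$, and $c=\max_i b'_i$. The first task is to telescope (\ref{lp1:c1}); since in the baseline \emph{both} lower bounds defining $a'_i$ are simultaneously tight, the perturbation merely shifts each by an explicit amount, giving $a'_i=a_i-\Delta_i$ with $\Delta_i=\frac\beta m\sum_{1\le j\le\min(i,m/2-1)}(1-\frac jm)$ for $i<m/2$ and (using $\alpha=\tfrac58$) $\Delta_i=\frac{3\beta}8-\frac{3\beta}{4m}-\frac{\alpha\beta}2=\frac\beta2(\alpha-\tfrac12)-\frac{3\beta}{4m}$, a constant, for $i\ge m/2$.

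\textbf{Feasibility.} Next I would check each constraint family of (LP-F). Most hold by construction of $a',b'$ as maxima, or because $f'_i=g'_i$. The points needing an argument are: (i) $o'_i=o_i-\beta/m\ge0$ for $i\le m/2-1$, which reduces to $(1-\frac1m)^{i-1}\ge\beta$ and follows from $(1-\frac1m)^{m/2-2}\ge(1-\frac1m)^{m/2}>\tfrac12>\beta$ for $m>2$; (ii) $o'_m\ge 0$, since $\beta(\alpha-\tfrac12)+\beta/m$ is tiny while $o_m=(1-\frac1m)^{m-1}$ is bounded below; (iii) monotonicity $f'_i\le f'_{i-1}$, immediate since $(f'_i)_i=(\gamma,\dots,\gamma,0,\dots,0)$; (iv) $f'_j+\sum_{i\le j}o'_i\le1$, which for $j<m/2$ holds because the $o'_i$ were only decreased there, and for $j\ge m/2$ because $\sum_{i\le j}o'_i\le\sum_i o'_i=1$ and $f'_j=0$; and (v) non-negativity of $a',b'$, automatic from $a'_i\ge a'_{i-1}+(1-i/m)o'_i\ge0$ since $a'_0=0$ and $1-i/m\ge0$ for $i\le m$.

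\textbf{Objective value.} Finally I would evaluate $c=\max_i b'_i$ by splitting the index range. For $i<m/2$, $b'_i=a'_i+\gamma$; telescoping gives $a'_i\le a_i$, and the choice $\gamma=\tfrac54\beta$ makes the two lower bounds defining $a'_{m/2-1}$ coincide at $a_{m/2-1}-\tfrac{3\beta}8+O(\beta/m)$, so $\max_{i<m/2}b'_i\le a_{m/2-1}-\tfrac{3\beta}8+\gamma+O(\beta/m)<1/\e-\beta(\alpha-\tfrac12)/2$ using $a_{m/2-1}<\tfrac12\e^{-1/2}$ and the numeric values of $\beta,\gamma$. For $i\ge m/2$, one checks that (\ref{lp1:c1}) is the binding bound, whence $b'_i=a'_i=a_i-\beta(\alpha-\tfrac12)/2+\tfrac{3\beta}{4m}$, and the baseline maximum of $a_i$ is attained near $i=m$ and substantially exceeds $a_{m/2-1}+\gamma$, so this branch dominates and $c=(\max_i a_i)-\beta(\alpha-\tfrac12)/2+\tfrac{3\beta}{4m}$. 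Since $\max_i a_i<1/\e$ — indeed $\max_i a_i\le\e^{-1}\bigl(1-\tfrac1{2m+1}\bigr)$ by expanding the real maximiser $i^\ast=-1/\ln(1-\tfrac1m)$ — we get $c<1/\e-\beta(\alpha-\tfrac12)/2+\tfrac{3\beta}{4m}$, and plugging in $\alpha=0.625$, $\beta=0.0517$ this is $<0.3647$; the inequality holds for every $m>2$ because $\e^{-1}(2m+1)^{-1}>\tfrac{3\beta}{4m}$ and $\beta/16=\beta(\alpha-\tfrac12)/2>1/\e-0.3647$.

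\textbf{Main obstacle.} The delicate point is the simultaneous calibration of $\alpha,\beta,\gamma$. One needs $\alpha>\tfrac12$ large enough that the under-compensated shrinkage genuinely lowers $a'_i$ near $i=m$ by the fixed amount $\beta(\alpha-\tfrac12)/2$, yet small enough (and $\gamma$ small enough, while keeping $\sum_i o'_i\le1$) that the raised first-half entries $b'_i=a'_i+\gamma$ and the packing-type constraint stay below the target; and $\beta$ must be both small enough for $o'\ge0$ and large enough that $\beta(\alpha-\tfrac12)/2>1/\e-0.3647$. Tracking the $O(\beta/m)$ corrections correctly — in particular verifying that (\ref{lp1:c1}) rather than (\ref{lp1:c3}) is binding on $[m/2,m]$, locating the maximiser of $a'_i$, and confirming the $i\ge m/2$ branch dominates for all $m>2$ — is the part requiring genuine care, and it is exactly what certifies that the bound is bounded away from $1/\e$ by a positive constant.
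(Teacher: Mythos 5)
Your proposal follows the same construction as the paper's proof: start from the optimal (LP) solution of Lemma~\ref{apx-lem:submodular-const-lp}, embed it into (LP-F) with $f=g\equiv0$, apply the stated perturbation to $o,f,g$, take the $a'_i$ tight against constraint (\ref{lp1:c1}), identify the binding cells among $b'_{m/2-1}$ and $b'_m$, and compute the shift. The overall route is the same one the paper takes.

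That said, your version is actually \emph{more} careful than the paper's on two points, and both fixes are genuinely needed.

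First, the paper as written sets $o'_i = o_i$ for all $i>m/2$, hence $o'_m=o_m$. This gives $\sum_{i=0}^{m}o'_i = 1 + \beta(\alpha-\tfrac12)+\beta/m > 1$, which violates the constraint $f_m + \sum_{i\le m}o_i\le 1$ (since $f'_m=0$). Your redefinition $o'_m:= 1 - \sum_{i<m}o'_i$ restores this constraint, and you correctly note that this change is harmless because $o'_m$ enters (\ref{lp1:c1}) at $i=m$ with coefficient $1-m/m=0$, enters (\ref{lp1:c3}) at $i=m$ only to give the trivial lower bound $0$, and remains non-negative since $o_m=(1-\tfrac1m)^{m-1}$ comfortably exceeds $\beta(\alpha-\tfrac12)+\beta/m$.

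Second, the paper states the objective value as $1/\e-\beta(\alpha-\tfrac12)/2+3\beta/(4m)$ and claims this is $<0.3647$ for every $m>2$. As a literal inequality on that expression this is false for moderate $m$: plugging in $\beta=0.0517$, $\alpha=0.625$ gives $1/\e - \beta/16 \approx 0.364649$, and $3\beta/(4m) > 0.000051$ for all $m\lesssim 760$. What is true is that the actual objective equals $(1-\tfrac1m)^m - \beta/16 + 3\beta/(4m)$, since $\max_i a_i = a_m = (1-\tfrac1m)^m$, and this is strictly below $1/\e - \beta/16$ by a margin that dominates $3\beta/(4m)$. Your bound $(1-\tfrac1m)^m \le \e^{-1}\bigl(1-\tfrac1{2m+1}\bigr)$, obtained from $e^{-1/(2m)} \le \tfrac{2m}{2m+1}$, together with $\e^{-1}/(2m+1) > 3\beta/(4m)$ and $\beta/16 > 1/\e - 0.3647$, is exactly the missing chain that certifies $c<0.3647$ for \emph{all} $m>2$. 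So you have replaced the paper's imprecise numerical claim with a correct proof of the lemma's conclusion.

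In short: same approach as the paper, but your version repairs a constraint violation in the paper's explicit construction and supplies the genuine justification for the final numerical inequality (which the paper's as-stated bound does not). Both fixes are worth retaining.
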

\begin{proof}
We now describe how we pick $a'$ and $b'$. In particular, we set $a_i'=a_i-\beta/ m\sum_{j=1}^i(1-j/m)$, as determined by~(\ref{lp1:c1}), and $b_i'=a_i'+\gamma$ for $i<m/2$. Note that the bound for $a'_{m/2-1}$ by ~(\ref{lp1:c3}) is $a_{m/2-1} - (1/2-1/m)(\gamma - \beta/m(m/2-1))  <   a_{m/2-1} - 0.01942 + 0.0517/m^2 + 0.013/m$ which is smaller than the bound from ~(\ref{lp1:c1}), i.e. $a_{m/2-1} -\beta/ m\sum_{j=1}^{m/2-1}(1-j/m) = a_{m/2-1} - 3\beta/8 + 3\beta/4m >  a_{m/2-1} -0.01939 + 0.038775/m$, for $m\geq 2$. 

For $i=m/2$, we set $a_{m/2}'=a_{m/2}-\beta(\alpha-1/2)/2+3\beta/(4m)$, determined by~(\ref{lp1:c1}). The bound by ~(\ref{lp1:c3}) is $a_{m/2}-\beta(3/4-\alpha)/2- \beta/(2m)$ which is smaller that the previous bound. This, in turn, determines $b_{m/2}'=a_{m/2}'$.  

Finally, we can set $a_i'=a_i-\beta(\alpha-1/2)/2 + 3\beta/(4m)$ and $b_i'=a_i'$ for $i>m/2$ as $o_i'=o_i$ for these values of $i$. These settings yield a feasible solution to (LP-F). The two potential candidates which can determine the objective value are $b'_{m/2-1}$ and $b'_m$. Out of these $b'_m$ is the larger one, which gives an objective value of $1/\e-\beta(\alpha-1/2)/2 + 3\beta/(4m) < 0.3647$, for every $m > 2$.
\end{proof}

Unfortunately, we are not able to analytically determine the approximation ratio to which (LP-F) converges. By computational experiments, we can however show that the answer is not too far from the above upper bound. The following table and the plot shows these bounds for some specific values of $m$ obtained by solving the LP to optimality by means of an LP solver. 

\begin{table}[ht]
\begin{minipage}[b]{0.4\linewidth}
\centering
 \begin{tabular}{ | c | c | }
    \hline
    $m$ & $\min c$ \\
    \hline
    	  2 & 0.25 \\
    	  5 & 0.31727598 \\
        10 & 0.33592079 \\
        50 & 0.34990649 \\
      100 & 0.35160444 \\ 
      500 & 0.35295534 \\ 
    1000 & 0.35312374 \\ 
    2000 & 0.35320790 \\
    5000 & 0.35325839 \\
    \hline
  \end{tabular}
  \caption{Solutions to the (LP-F)}
    \label{tab:opt_lpf}
\end{minipage}\hfill
\begin{minipage}[b]{0.6\linewidth}
\centering
    \includegraphics[width=0.8\textwidth]{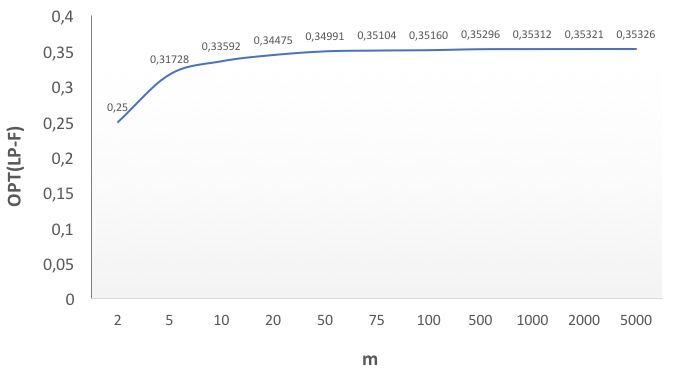}
    \captionof{figure}{Growth for the OPT (LP-F) solution}
\label{fig:opt_lpf}
\end{minipage}
\end{table}

Since each of the above (LP-F) values is a valid approximation factor for our algorithm, the following lemma follows. \begin{lemma}
 \label{lem:greedy-dp-forbidden-pseudopoly}
Assuming $p=c=1$ are constants, the Algorithm~\ref{alg:GreedyDPForbid} for \PCSM runs in pseudo-polynomial time $O(n^2p_{\max}c_{\max})$ and outputs a solution $S\subseteq \cN$ that satisfies: $(1)$ $f(S) >  0.353 \cdot f(O)$, $(2)$ $\bP \mathbf{1}_S\leq (1+\eps)\p$ and $\bC \mathbf{1}_S\geq \bc$.
\end{lemma}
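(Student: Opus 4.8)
The plan is to obtain Lemma~\ref{lem:greedy-dp-forbidden-pseudopoly} directly from the machinery already in place: the structural statement of Lemma~\ref{lem:submodular-one-multi-phase}, the factor-revealing program (LP-F), and the table-size accounting behind Lemma~\ref{lem:submodular-one-algo}. Fix an integer $m\ge 1$, to be chosen at the end, and normalise so that $f(O)=1$. I would first apply Lemma~\ref{lem:submodular-one-multi-phase} to produce the pairwise disjoint sets $O_0,\dots,O_m\subseteq O$, the coefficients $\alpha_0,\dots,\alpha_m\in[0,1]$, the dynamic-programming cells $A_i=T[\bc_i,\p_i]$, and the forbidden sets $F_i=F_{\p_i}$, satisfying its inequalities (1)--(4). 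I would then read these quantities into the variables of (LP-F) by setting $a_i:=f(A_i)$, $b_i:=f(A_i\cup F_i)$, $o_i:=g(O_i)$, $f_i:=g(F_i\cap O)$, $g_i:=\alpha_i\,g(F_i\cap O)$, and $c:=\max_i b_i$. The key observation is that this tuple is feasible for (LP-F): constraints~(\ref{lp1:c1})--(\ref{lp1:c4}) are exactly items (1)--(4) of Lemma~\ref{lem:submodular-one-multi-phase}; the nesting $F_{\p'}\subseteq F_{\p''}$ for $\p'\ge\p''$ (and the fact that $\p_i$ is non-decreasing along the construction) yields $f_i\le f_{i-1}$; $g_i\le f_i$ holds since $\alpha_i\le 1$; the budget constraint $f_j+\sum_{i\le j}o_i\le 1$ holds because $F_j\cap O$ and $O_0,\dots,O_j$ are pairwise disjoint subsets of $O$ (using $F_j\subseteq F_i$ and $O_i\cap F_i=\varnothing$ for $i\le j$) and $g(O)=f(O)=1$; nonnegativity is immediate. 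Hence $\max_i b_i\ge c^*_m$, the optimum of (LP-F). As in the warmup Lemma~\ref{lem:submodular-one}, the cell attaining this maximum satisfies $\bc_i+\bC\one_{F_i}\ge\bc$, so it is an eligible output of Algorithm~\ref{alg:GreedyDPForbid}; therefore the returned set $S$ has $f(S)\ge\max_i b_i\ge c^*_m\cdot f(O)$.

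Next I would pin down $m$. Since (LP-F) has only $O(m)$ variables and constraints, solving it exactly with an LP solver is cheap; by Table~\ref{tab:opt_lpf}, already $m=1000$ gives $c^*_{1000}=0.35312\ldots>0.353$, so choosing $m=1000$ establishes $f(S)>0.353\,f(O)$. (It is worth noting via Lemma~\ref{lem:upperBoundLPF} that this LP cannot be driven past about $0.3647$, and that in the degenerate regime $f_i\equiv 0$ it collapses to the program of Section~\ref{apx-subsec:factor-rev-lp}, which provably converges to $1/\e$.)

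The feasibility of $S$ is quick to check. The covering bound $\bC\one_S\ge\bc$ is built into Algorithm~\ref{alg:GreedyDPForbid}, which only ever returns a set $T[\bc',\p']\cup F_{\p'}$ with $\bc'+\bC\one_{F_{\p'}}\ge\bc$. For the packing bound, recall that the guessing step removes from the instance every element $\ell$ with $\bP_\ell\ge\eps\p$, so all surviving elements have $\bP_\ell<\eps\p$; since $F_{\p'}=\cN_i$ for the smallest $i$ with $\bP\one_{\cN_i}\ge\p-\p'$, its last element overshoots by less than $\eps\p$, so $\bP\one_{F_{\p'}}<(\p-\p')+\eps\p$, and hence $\bP\one_{T[\bc',\p']\cup F_{\p'}}\le\p'+\bP\one_{F_{\p'}}<(1+\eps)\p$. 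For the running time, with $p=c=1$ the table $T$ is indexed by $[n\cdot c_{\max}]_0\times[p_{\max}]_0$, so it has $O(n\,c_{\max}p_{\max})$ cells, each filled by scanning $O(n)$ candidate elements (Line~\ref{algline:forEachElement}); after an $O(n\log n)$ sort of $\cN$ by $\bC_\ell/\bP_\ell$ the forbidden sets are precomputed in $O(n+p_{\max})$ time. Thus the dynamic program runs in $O(n^2c_{\max}p_{\max})$ time.

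I expect the only genuinely delicate point to be certifying $c^*_m>0.353$: unlike the forbidden-set-free LP of Section~\ref{apx-subsec:factor-rev-lp}, whose optimum provably tends to $1/\e$, the limit of (LP-F) seems analytically out of reach, so we must rely on an explicit primal/dual certificate for a single fixed $m$ (or on solving the LP numerically, as above). Everything else is a routine matter of matching the inequalities of Lemma~\ref{lem:submodular-one-multi-phase} with the constraints of (LP-F) and counting table entries.
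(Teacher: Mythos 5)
Your proposal follows the paper's proof strategy almost verbatim: invoke Lemma~\ref{lem:submodular-one-multi-phase} to produce a feasible point for (LP-F), conclude that $\max_i f(A_i\cup F_i)\ge c^*_m f(O)$, and then cite Table~\ref{tab:opt_lpf} (the numerically solved LP) to pin $c^*_m>0.353$ for, e.g., $m=1000$, with feasibility and running time handled exactly as in Lemma~\ref{lem:submodular-one-algo}. Two small inaccuracies are worth flagging. First, the indexing of the LP-F constraints is slightly off: the labeled constraints (\ref{lp1:c1})–(\ref{lp1:c3}) correspond to items (2)–(4) of Lemma~\ref{lem:submodular-one-multi-phase} (with the unlabeled constraint $a_0=o_0$ being item (1)), while constraint (\ref{lp1:c4}), namely $b_i\ge f_i$, is \emph{not} one of the lemma's four items and must be justified separately by monotonicity together with Lemma~\ref{apx-lem:subset}: $f(A_i\cup F_i)\ge f(F_i\cap O)\ge g(F_i\cap O)$. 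Second, and more substantively, your justification ``as in the warmup Lemma~\ref{lem:submodular-one}'' for the eligibility $\bc_i+\bC\one_{F_i}\ge\bc$ does not carry over: that warmup argument certifies eligibility only at $q=k$ via $O_k=O$ (so that $\bc_k=\bC\one_O\ge\bc$), whereas the multi-phase construction stops at $m$ phases with $\cup_{j\le m}O_j$ potentially a proper subset of $O$. The correct (and simpler) argument is the profit-dominance property stated when the forbidden sets are introduced: since $O\setminus Q_i\subseteq\cN'$ has packing value $\le\p-\p_i$, and $F_{\p_i}$ is the efficiency-sorted prefix of $\cN'$ with packing value at least $\p-\p_i$, we have $\bC\one_{F_{\p_i}}\ge\bC\one_{O\setminus Q_i}$; hence $\bc_i+\bC\one_{F_i}\ge\bC\one_{Q_i}+\bC\one_{O\setminus Q_i}=\bC\one_O\ge\bc$, so in fact \emph{every} cell $A_i\cup F_i$ is eligible, not merely the maximizer.
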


Finally, if the packing constraint is actually a cardinality constraint we can assume that $\eps<1/\p$. Hence, there will be no violation of the cardinality constraint and also guessing can be avoided. These observations leads to the following lemma.

\begin{lemma}\label{lem:submodular-card-pseudopoly}
For \PCSM where $p=c=1$ and $\bP=\mathbf{1}_n^{\intercal}$, the Algorithm~\ref{alg:GreedyDPForbid} without the guessing step, outputs for any $\eps>0$ in time $O(n^3c_{\max})$ a $0.353$-approximate solution with covering value at least $\bc$ and cardinality $k$.
\end{lemma}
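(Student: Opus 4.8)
The plan is to specialize Algorithm~\ref{alg:GreedyDPForbid} to the case $\bP=\mathbf{1}_n^{\intercal}$ with packing bound $\p=k$, and to observe that in this regime the forbidden-set mechanism alone — with the guessing phase removed (equivalently, with the single outer iteration $G=\varnothing$, base case $T[\zero,\zero]=\varnothing$ and $\bot$ elsewhere) — already delivers \emph{exact} feasibility for the (now cardinality) constraint, while the approximation analysis of Lemma~\ref{lem:submodular-one-multi-phase} together with the factor-revealing LP~(LP-F) carries over essentially verbatim. So the proof is: (i) check feasibility is exact, (ii) inherit the ratio from the LP-F analysis, (iii) recount the running time once the guessing loop is dropped.

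For feasibility, note that with unit packing weights we have $\bC_\ell/\bP_\ell=\bC_\ell$, so the order defining the forbidden sets is simply a non-increasing order of coverage values and $\bP\one_{\cN_i}=i$. Hence for every integer $0\le\p'\le k$ the forbidden set $F_{\p'}$ equals \emph{exactly} $\cN_{k-\p'}$, the $k-\p'$ elements of largest coverage, with packing value precisely $k-\p'$ and no overshoot whatsoever (we may assume $k\le n$, as otherwise the cardinality constraint is vacuous and $S=\cN$ is a $1$-approximation by monotonicity and feasibility of the instance). The nesting $F_{\p''}\subseteq F_{\p'}$ for $\p'\le\p''$ still holds, so the disjointness lemma (that $T[\bc',\p']$ is disjoint from $F_{\p'}$) goes through with $G=\varnothing$; consequently each candidate output $T[\bc',\p']\cup F_{\p'}$ has cardinality $\p'+(k-\p')=k$ and covering value $\bc'+\bC\one_{F_{\p'}}\ge\bc$ by the output rule. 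Thus \emph{neither} constraint is violated, which is exactly why guessing is unnecessary here: its sole role in the general algorithm is to discard the ``big'' elements so that forbidden sets overshoot the packing bound by at most a factor $1+\eps$, and with unit weights there is no overshoot — equivalently one may take $\eps<1/k$ and the $(1+\eps)\p$ bound collapses to $k$.

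For the approximation ratio, I would check that Lemma~\ref{lem:submodular-one-multi-phase} survives the removal of guessing. With $G=\varnothing$ we have $A_0=O_0=\varnothing$ and $g(G)=f(G)=0$, and all base-case inequalities~\ref{ineq:o0}--\ref{ineq:o3} hold trivially (using Lemma~\ref{apx-lem:subset} and monotonicity for the $\alpha_0<1$ sub-case exactly as in the general proof); the inductive step of that lemma never references the guessing set, so it is unchanged. Hence Lemma~\ref{lem:submodular-one-multi-phase} applies, its inequalities are precisely the constraints of~(LP-F), and — as in the proof of Lemma~\ref{lem:greedy-dp-forbidden-pseudopoly} — each optimum value of~(LP-F) is a valid lower bound on the algorithm's ratio; by Table~\ref{tab:opt_lpf} this value exceeds $0.353$ for sufficiently large $m$, so $f(S)>0.353\,f(O)$.

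Finally, the running time: the table has $O(nc_{\max})$ covering indices and $O(p_{\max})=O(k)\subseteq O(n)$ packing indices, hence $O(n^2c_{\max})$ entries, each computed by scanning the $O(n)$ ground-set elements in line~\ref{algline:forEachElement}; precomputing the sorted order and the nested chain $F_k\subseteq F_{k-1}\subseteq\cdots\subseteq F_0$ costs $O(n\log n)$ extra, and the final selection of the best feasible $T[\bc',\p']\cup F_{\p'}$ costs another $O(n^2c_{\max})$ oracle queries. Since the guessing loop — the only source of the $n^{O(1/\eps)}$ factor — is absent, the total is $O(n^3c_{\max})$. I expect the only point genuinely requiring care to be the bookkeeping of step~(i): verifying that emptying the guessing set and using all of $\cN$ (rather than $\cN'$) still preserves the forbidden-set nesting and the disjointness lemma, since every downstream ingredient (the multi-phase analysis and~(LP-F)) is built on top of exactly those two facts; once they are in place the rest is a direct specialization of results already established in the paper.
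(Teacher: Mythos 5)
Your proposal is correct and follows essentially the route the paper itself takes (the paper only sketches it in the two sentences preceding the lemma): with unit packing weights the forbidden sets are exact prefixes with zero overshoot, so guessing is unnecessary and both constraints are met exactly, while the disjointness lemma, the multi-phase analysis of Lemma~\ref{lem:submodular-one-multi-phase} with $G=\varnothing$, and the (LP-F) bound carry over unchanged to give the $0.353$ factor and the $O(n^3c_{\max})$ running time. Your write-up just makes explicit the details the paper leaves implicit.
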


Hence the results for one covering and one packing constraint case stated in Theorem~\ref{thm:submodular-one} follows from Lemma~\ref{lem:greedy-dp-forbidden-pseudopoly} and~\ref{lem:submodular-card-pseudopoly} in combination with the scaling argument used in Lemma~\ref{lem:submodular-one-0.25}.





\subsubsection{A Connection to the Capacitated $k$-Median Problem}\label{subsec:cap-k-med}

In this section, we point out an interesting connection of $p=c=1$ case of \PCSM where packing constraint is a cardinality constraint, to the
well-studied capacitated $k$-median problem and give a non-trivial
approximability result without any violation for a special metrics.
We consider the well-studied $k$-median problem with non-uniform and
hard capacities\footnote{The term ``hard capacities'' refers to the
  restriction that each facility $i \in F$ can be opened at most
  once. In soft-capacitated versions, this restriction is relaxed by
  allowing multiple copies to be opened.}, which we refer to as
Capacitated $k$-Median. In this problem we are given a set of
potential facilities $F$, capacity $u_i \in \mathbf{N}^+$ for each
facility $i \in F$, a set of clients $C$, a metric distance function
$d\colon C \times F\rightarrow\mathbf{R}_{\geq 0}$ on $C \times F$,
and an integer $k$. The goal is to find a subset $F' \subseteq F$ of
$k$ facilities to open and an assignment
$\sigma \colon C \rightarrow F'$ of clients to the open facilities
such that $|\sigma^{-1}(i)| \leq u_i$ for every $i \in F'$, so as to
minimize the connection cost $\sum_{j \in C} d(j,
\sigma(j))$. Obtaining a constant-factor approximation algorithm for
this problem is one of the central open questions in the area of
approximation algorithms. So far only algorithms that either violate
the cardinality or the capacity constraints are known~\cite{dl16, bru16, Li16}. We can use our techniques to get a non-trivial
approximation algorithm under a special metric case of Capacitated
$k$-Median. More precisely, we obtain an approximation ratio of
$2.295$ (improving the trivial ratio of~$3$) for the special case where
the underlying metric space has only two possible distances between
clients and facilities (say $a, b \in \mathbf{R}_{\geq 0}$)
\emph{without violating any constraint}. In cases where $b>3a$ the
problem decomposes into separate clusters and can be easily solved
efficiently by dynamic programming. The most interest case is when
$b=3a$ and can thus be thought of having only distances one and three
between clients and facilities. Interestingly enough, this seemingly
special case of two distances provides the best known
inapproximability bound of $\approx 1.736$~\cite{jms02} for the
\emph{general} problem and also for several related facility location
problems. The only other result on Capacitated $k$-Median problem
under special metrics, that we are aware of, concerns tree metrics
where the problem can be solved exactly using a dynamic program. It
would be interesting to see if also more general metrics can be
tackled with our approach or if the lower bound is actually tight for
two-distance metrics.

To achieve the above, we reduce these two-distance instances to the
monotone submodular function maximization problem subject to one
covering and one packing constraint. We observe that the reduced
instances are instances of \PCSM subject to one cardinality constraint and one polynomially bounded covering
constraint. Hence we can apply Lemma~\ref{lem:submodular-card-pseudopoly} to these instances to prove the following theorem.

\begin{theorem}\label{thm:cap-k-med}
  There is a $2.294$-approximation algorithm running in $O(n^4)$ time
  for the $k$-median problem with non-uniform and hard capacities if
  the underlying metric space has only two possible distances between
  clients and facilities.
\end{theorem}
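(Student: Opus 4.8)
The plan is to reduce the two‑distance capacitated $k$‑median instance to a single instance of \PCSM with one cardinality (packing) constraint and one polynomially bounded covering constraint, solve that instance with Lemma~\ref{lem:submodular-card-pseudopoly}, and then translate the $0.353$‑approximation for the submodular \emph{maximization} objective into a $2.294$‑approximation for the $k$‑median \emph{minimization} cost. First I would normalize the metric so the two client--facility distances are $1$ and $b$ with $b\geq 1$; the case $b>3$ is handled by the cluster‑decomposition‑plus‑dynamic‑program argument recalled above, so I may assume $1\leq b\leq 3$. For an opened set $F'\subseteq F$ of facilities, define $f(F')$ to be the maximum number of clients that can be assigned to $F'$ in the bipartite ``distance‑$1$'' graph $\{(j,i):d(j,i)=1\}$ subject to the capacities $u_i$; this is the value of a capacitated bipartite matching, hence polynomial‑time computable, which gives a value oracle for $f$.

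The two structural facts driving the reduction are: (i) if $\sum_{i\in F'}u_i\geq|C|$ then $F'$ admits a feasible assignment of \emph{all} clients, and the cheapest such assignment serves exactly $f(F')$ clients at distance $1$ and the remaining $|C|-f(F')$ clients at distance $b$ --- because, by maximality of the matching, any unmatched client is at distance $b$ from every facility that still has spare capacity --- so the best $k$‑median solution using $F'$ has cost $b|C|-(b-1)f(F')$; and (ii) $f$ is monotone and submodular, which follows by replacing each facility $i$ with $u_i$ parallel copies adjacent to $N(i)$, so that $f(F')$ equals the rank in the resulting transversal matroid of the union of the copies of $F'$, and matroid rank functions are monotone submodular (the copy‑expansion being a modular embedding into the larger ground set). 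Since feasibility of the $k$‑median instance forces $\sum_{i\in F^*}u_i\geq|C|$ for the optimal open set $F^*$, combining (i) with the observation that the cheapest solution over a fixed $F'$ is obtained by maximizing $f(F')$ yields
\[
\mathrm{OPT}=b|C|-(b-1)L^*,\qquad L^*:=\max\Bigl\{f(F'):|F'|\leq k,\ \textstyle\sum_{i\in F'}\min(u_i,|C|)\geq|C|\Bigr\},
\]
where capping the $u_i$ at $|C|$ is harmless (no facility ever serves more than $|C|$ clients) and makes the covering coefficients and right‑hand side bounded by $|C|=O(n)$.

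Next I would invoke Lemma~\ref{lem:submodular-card-pseudopoly} on the \PCSM instance with ground set $F$, objective $f$, cardinality constraint $|F'|\leq k$, and covering constraint $\sum_{i\in F'}\min(u_i,|C|)\geq|C|$. It returns, in time $O(|F|^3|C|)=O(n^4)$, a set $F_{\mathrm{alg}}$ of cardinality $k$ that satisfies the covering constraint exactly (hence $\sum_{i\in F_{\mathrm{alg}}}u_i\geq|C|$, with no violation, the packing part being a pure cardinality constraint) and with $f(F_{\mathrm{alg}})\geq 0.353\,L^*$. From $F_{\mathrm{alg}}$ I build the $k$‑median solution as in (i): a maximum matching in the distance‑$1$ graph restricted to $F_{\mathrm{alg}}$, then the remaining clients filled arbitrarily into the remaining capacity. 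Its cost is $b|C|-(b-1)f(F_{\mathrm{alg}})\leq b|C|-0.353(b-1)L^*$, so writing $x:=L^*/|C|\in[0,1]$ the approximation ratio is $(b-0.353(b-1)x)/(b-(b-1)x)$, which is increasing in $x$ (since $0<0.353<1$ and $b>1$) and hence maximized at $x=1$, where it equals $b-0.353(b-1)=0.647\,b+0.353\leq 0.647\cdot 3+0.353=2.294$.

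The main obstacle is not the final ratio computation, which is elementary, but making the reduction airtight: verifying that the capacitated ``close‑coverage'' function $f$ is genuinely monotone submodular via the transversal‑matroid viewpoint; checking that $\sum_{i\in F'}u_i\geq|C|$ is the \emph{only} feasibility requirement, so that minimizing the $k$‑median cost is \emph{exactly} equivalent to maximizing $f$ under a cardinality‑plus‑total‑capacity constraint (including the point that an unmatched client always pays $b$); and ensuring the covering constraint stays polynomially bounded via the capping step so that Lemma~\ref{lem:submodular-card-pseudopoly} indeed runs in $O(n^4)$ time. The regime $b>3$ is dispatched separately by the cluster/DP argument already noted before the theorem. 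Modulo these points, Lemma~\ref{lem:submodular-card-pseudopoly} plugs in directly and gives the claimed bound.
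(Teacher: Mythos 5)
Your proof follows the same reduction as the paper's: define $f(F')$ as the maximum number of clients servable at the short distance subject to capacities, cast the problem as a \PCSM instance with one cardinality and one covering constraint, invoke Lemma~\ref{lem:submodular-card-pseudopoly} to get a $0.353$-approximation for $f$, and translate this into a cost ratio $0.647b+0.353\leq 2.294$. Your ratio calculation (monotonicity in $x=L^*/|C|$ and evaluation at $x=1$) is algebraically equivalent to the paper's chain of inequalities. Two points of note. First, your transversal-matroid argument for monotone submodularity of $f$ (split each facility $i$ into $u_i$ parallel copies, pull back the matroid rank function along the modular copy-expansion map) is a self-contained alternative to the paper, which cites the capacitated set cover literature~\cite{cn02}; it is correct and arguably cleaner. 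Likewise, your explicit capping of $u_i$ at $|C|$ to keep the covering coefficients polynomially bounded is a useful detail the paper leaves implicit.

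There is one genuine (if small) omission: your normalization ``scale so the two distances are $1$ and $b$'' presupposes that the smaller distance is strictly positive. The paper treats $a=0$ as a separate case, where the bound $b\leq 3a$ from the triangle inequality collapses and $b$ is unbounded, so the $0.647b+0.353$ argument does not apply; instead the instance splits into zero-distance clusters and is solved exactly by a knapsack-style DP over clusters. You already note that $b>3$ needs the cluster-decomposition argument, and the $a=0$ case is dispatched by the same machinery, but as written your proof does not cover it. Adding a sentence handling $a=0$ would close the gap; otherwise the argument is sound and matches the paper.
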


\begin{proof}
  Consider an instance of Capacitated $k$-Median and let $a\leq b$ be
  the two distances between clients and facilities in our metric
  space.

  Let us first consider the case in which $a>0$. Then by scaling, we
  can assume that $a=1$. For any subset $F'\subseteq F$ of facilities
  let $f(F')$ be the maximum number of clients in $C$ that can be
  connected at a distance~$1$ to the facilities in $F'$. More
  precisely, let $C'\subset C$ be a largest set of clients such that
  there is an assignment $\tau\colon C'\rightarrow F'$ with
  $d(j,\tau(j))=1$ for all $j\in C'$ and $|\tau^{-1}(i)|\leq u_i$ for
  all $i\in F'$. Then we set $f(F')=|C'|$. It is not hard to see that
  for a given set $F'$ the function $f(F')$ can be computed in
  polynomial time using $b$-matching algorithms and that the function
  is monotone and submodular (equivalent to the setting of the
  capacitated set cover problem, where it is known~\cite{cn02} that
  this function is monotone and submodular).

  Our objective is to select a subset $F' \subseteq F$ such that
  $f(F')$ is maximized subject to the constraints that $|F'|\leq k$
  (cardinality constraint) and $\sum_{i\in F'}u_i\geq n$ (covering
  constraint). By Lemma~\ref{lem:submodular-card-pseudopoly}, we
  can find a $0.353$-approximate solution for this problem.
 
  Let $F'$ be the set of facilities output by the algorithm, let $C'$
  be the set of clients connected at a distance~$1$, and $\tau$ be the
  corresponding assignment. Because of our covering
  constraint~$\sum_{i\in F'}u_i\geq n$, it is straightforward to
  extend the assignment $\tau\colon C'\rightarrow F'$ to an assignment
  $\sigma\colon C\rightarrow F'$ for all clients by using the
  sufficiently large residual capacity of at least $n-|C'|=n-f(F')$
  for connecting the clients in $C \setminus C'$ to facilities in $F'$
  at distance~$b$. Hence $f(F')$ clients are assigned at a distance
  $1$ and $n-f(F')$ clients are assigned at a distance $b$. Note that
  in a similar way we can establish a reverse correspondence and find
  a solution for the above submodular maximization problem using a
  solution for the $k$-median problem. In particular, note that if
  $F' \subseteq F$ is an optimal solution for the submodular
  maximization problem, then $F'$ is also an optimum for the
  $k$-median problem. Now our algorithm for the $k$-median problem is
  to run our $0.353$-approximation algorithm of
  Lemma~\ref{lem:submodular-card-pseudopoly} on the submodular
  maximization instance and output the corresponding set of facilities
  $F'$ and the assignment $\sigma$. Note that this algorithm runs in
  polynomial time $O(n^4)$ for the instance constructed since
  $P_{\max}\leq n$.

  Let $\opt$ be the value of the optimal solution for the submodular
  maximization instance and $A=f(F')$ be the value of the approximate
  solution output by our algorithm. Hence, the optimal value $\opt'$
  for the $k$-median instance is
  $\opt + b (n - \opt) = b n - (b-1) \opt$ and the approximate
  solution value $A'$ is $b n - (b-1) A$. We know that,
  $A \geq 0.353\opt$. Hence,
\begin{align*}
A' &= b n - (b-1) A \\
    & \leq b n - 0.353(b-1)\opt \\
    & = b (n - \opt) + (0.647b+0.353) \opt \\
    & \leq (0.647b+0.353)(b (n - \opt) + \opt) \\
    &= (0.647b+0.353) \opt'
\end{align*}

Now for the case $b \leq 3$, the approximation factor becomes
$0.647b+0.353 \leq 2.294$. In the case $b > 3$ the instance decomposes
into a collection of complete bipartite graphs between client and
facilities with distance $1$ and thus this case can be easily solved
to optimality by a DP similarly to the case $a=0$ described below.

Finally, in the case $a=0$, there is no upper bound on $b$. However,
it is not hard to solve this case optimally in polynomial
time. Observe that the metric space is clustered into subsets of
points with zero distance among each other. By greedily assigning
clients to facilities with highest capacity one can find the maximum
number of clients that can be served within a given cluster using a
given number $i$ of facilities. Using a DP approach (similar to the
knapsack DP) one can compute the optimum distribution of facilities
among the clusters.
\end{proof}



\subsection{Allowing Duplicates for reducing violation in covering constraints}\label{subsec:duplicates-greedy-dp}

In this section, we show that the vanilla Greedy-DP can be extended to get an $1/\e$- approximation algorithm for \PCSM under the relaxed set constraint by allowing to add an additional copy for some elements to our solution. For this case, the violation factors for packing and covering constraints are $1\pm \eps$.

 To this end, the greedy DP algorithm~\ref{alg:vanilla-greedydp} will be accompanied by a \emph{completion phase}. In this phase, we try to \emph{complete} the solution for each DP cell to a feasible solution by adding a suitable subset. More precisely, for every entry $T[q, \bc_q, \p_q]$, we can find a subset of $\cN$ with covering value $\bc - \bc_q$ and packing value $\p-\p_q$ (if there exists such a set) by running a simple feasibility DP similar to the multi-dimensional knapsack DP~\cite{kpp04}. For any entry, if we can find such a completion set, we mark it as \emph{valid solution} otherwise we mark it as \emph{invalid}. Clearly, each element is added at most twice to any of the \emph{completed valid solutions} of the DP
table. Finally, we output the best of the valid solutions. This gives us the following pseudo-polynomial time algorithm without any violation.

 \begin{lemma}
 \label{apx-lem:submodular-const-0.25}
 There is an algorithm for \PCSM that outputs in pseudo-polynomial time \\ $n^{O(1)}c_{\max}p_{\max}$ a $1/\e$-approximate solution with covering value at least $\bc$ and packing value $\p$ if we allow to add up to two copies of any element in $\cN$ to the solution.
 \end{lemma}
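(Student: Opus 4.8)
The plan is to run the vanilla Greedy Dynamic Program of Algorithm~\ref{alg:vanilla-greedydp} to build the table $T[q,\bc',\p']$, and then add a \emph{completion phase} on top of it. In the completion phase, for each table entry $S_q = T[q,\bc_q,\p_q]\neq\bot$ we attempt to find a subset $W\subseteq\cN$ (possibly overlapping $S_q$) with $\bC\one_W \geq \bc-\bc_q$ and $\bP\one_W\leq\p-\p_q$; such a $W$ can be found, if it exists, by a standard multi-dimensional knapsack-style feasibility DP over the values in $[n\cdot c_{\max}]_0^c\times[p_{\max}]_0^p$, which runs in pseudo-polynomial time $n^{O(1)}c_{\max}p_{\max}$. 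We mark the entry \emph{valid} if such a $W$ exists (storing $S_q\cup W$ as a multiset where elements of $S_q\cap W$ appear twice), and \emph{invalid} otherwise. We output the best $f$-value over all valid completed solutions. Since $S_q$ uses each element at most once and $W$ uses each element at most once, each element appears at most twice, as required.

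The key steps, in order, are: (1) invoke Lemma~\ref{apx-lem:submodular-const-multi-phase} (and the matching LP bounds in Lemmas~\ref{apx-lem:submodular-const-dual} and~\ref{apx-lem:submodular-const-lp}, which we already know give a ratio tending to $1/\e$) to obtain, for the fixed optimal solution $O$, a chain of prefixes $O_m\subseteq O$ and corresponding table entries $A_m = T[L_m,\bc_m,\p_m]$ with $f(A_m)\geq (1-1/m)^m f(O)$, $\bP\one_{A_m}\leq\p$, and $\bC\one_{A_m}\geq\bc/2$; (2) argue that the entry $A_m$ is in fact \emph{valid}, because $O\setminus O_m$ itself is a legitimate completion set: it satisfies $\bC\one_{O\setminus O_m} = \bC\one_O - \bc_m \geq \bc - \bc_m$ (using $\bc_m = \bC\one_{O_m}$ since $O_m\subseteq O$) and $\bP\one_{O\setminus O_m} = \bP\one_O - \p_m \leq \p - \p_m$; hence the completion DP finds \emph{some} completion set $W$, and $A_m$ gets marked valid; (3) observe that the completed solution $A_m \cup W$ (as a multiset) has covering value at least $\bc_m + (\bc-\bc_m) = \bc$ and packing value at most $\p_m + (\p-\p_m) = \p$, so it is feasible with at most two copies per element; (4) conclude $f$-value: by monotonicity $f(A_m\cup W)\geq f(A_m)\geq (1-1/m)^m f(O)$, and taking $m\to\infty$ (which only affects the constant hidden in the exponent of the polynomial running time, since $m$ is a fixed constant parameter of the analysis, not of the algorithm) gives the claimed $1/\e$ factor. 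Strictly speaking, the algorithm itself does not depend on $m$; $m$ appears only in the analysis, and the approximation ratio we can certify is $\sup_m (1-1/m)^m = 1/\e$ in the limit, matching Lemma~\ref{lem:greedy-dp-vanilla}.

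I expect the main obstacle to be step (2)--(3): carefully checking that the completion phase does not ``double count'' the covering requirement in a harmful way, i.e., that when $W$ overlaps $S_q$ we still legitimately count $\bC\one_{S_q} + \bC\one_W \geq \bc$ as the covering value of the output multiset. This is fine under the relaxed set constraint where each element may appear twice, because the covering constraint on a multiset is evaluated with multiplicities, but one must state the model precisely (the ``relaxed set constraint'' mentioned in the lemma) and verify that the feasibility DP for $W$ correctly searches over residual requirements $\bc-\bc_q$ and $\p-\p_q$ rather than over disjoint subsets. A secondary subtlety is that the completion DP must be run for \emph{every} table entry so that in particular it is run for the (a priori unknown) good entry $A_m$; since the table has only $n^{O(1)}c_{\max}p_{\max}$ entries and each completion DP is itself pseudo-polynomial, the total running time remains $n^{O(1)}c_{\max}p_{\max}$, as claimed.
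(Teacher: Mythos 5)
Your proposal is correct and follows essentially the same route as the paper: run the vanilla greedy DP, add a feasibility-DP completion phase over the residual covering/packing values, use $O\setminus Q_m$ (the unpicked optimal elements) as the witness that the entry $A_m=T[L_m,\bc_m,\p_m]$ from Lemma~\ref{apx-lem:submodular-const-multi-phase} is marked valid, and inherit the $(1-1/m)^m\to 1/\e$ value bound from the factor-revealing LP (Lemmas~\ref{apx-lem:submodular-const-dual} and~\ref{apx-lem:submodular-const-lp}). Your write-up is in fact somewhat more careful than the paper's terse proof about the multiset semantics of the two-copy relaxation and about the $m$-only-in-the-analysis limit argument, but it is the same proof.
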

 \begin{proof}
Using Lemma~\ref{apx-lem:submodular-const-multi-phase},~\ref{apx-lem:submodular-const-dual} and~\ref{apx-lem:submodular-const-lp}, we get all the properties except the covering constraints feasibility. To argue this, let $O_m$ be as in Lemma~\ref{apx-lem:submodular-const-multi-phase}. Then the corresponding subset $O\setminus O_m \subseteq \cN$ is a witness to complete the solution $T[L_m, \bc_m, \p_m]$ to a valid feasible solution. Hence the feasibility DP in the completion phase will augment this entry to a feasible solution as well.
 \end{proof}
 
 By using standard scaling techniques this leads to the following polynomial time algorithm.
 
 \begin{corollary}
  There is an algorithm for \PCSM that outputs for any $\eps>0$ in $(n/\eps)^{O(1)}$ time a $1/\e$-approximate solution with covering value at least $(1-\eps)\bc$ and with packing value at most $(1+\eps)\p$, if we allow to add up to two copies of any element in $\cN$ to the solution.
 \end{corollary}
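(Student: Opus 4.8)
The plan is to obtain the polynomial running time from the pseudo-polynomial algorithm of Lemma~\ref{apx-lem:submodular-const-0.25} by a per-coordinate rounding of the packing and covering data, exactly in the spirit of the scaling argument already used in Lemma~\ref{lem:submodular-one-0.25}. First I would perform a harmless cleanup: for each covering coordinate $j$ replace $\bC_{j,\ell}$ by $\min\{\bC_{j,\ell},\bc_j\}$, obtaining an equivalent instance in which $\bC_{j,\ell}\le\bc_j$ for all $\ell$, and for each packing coordinate $i$ discard every element $\ell$ with $\bP_{i,\ell}>\p_i$, since no such element can lie in a feasible solution (not even as a single copy); afterwards $\bP_{i,\ell}\le\p_i$ for all $\ell$.

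Next I would introduce \emph{separate} scaling factors per coordinate, $K_{\bc,j}\triangleq\eps\bc_j/(2n)$ and $K_{\p,i}\triangleq\eps\p_i/(4n)$, and round so as to preserve feasibility in the correct direction: set $\bC'_{j,\ell}\triangleq\lceil\bC_{j,\ell}/K_{\bc,j}\rceil$, $\bc'_j\triangleq\lceil\bc_j/K_{\bc,j}\rceil$ (covering data rounded up), and $\bP'_{i,\ell}\triangleq\lfloor\bP_{i,\ell}/K_{\p,i}\rfloor$, $\p'_i\triangleq\lfloor\p_i/K_{\p,i}\rfloor$ (packing data rounded down). Then $\bc'_j\le\lceil 2n/\eps\rceil$ and $\p'_i\le 4n/\eps$, so with $p,c$ constant the pseudo-polynomial running time of Lemma~\ref{apx-lem:submodular-const-0.25} on the scaled instance (including its completion-phase feasibility DP) is $(n/\eps)^{O(1)}$. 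Because $\bC\one_O\ge\bc$ and $\bP\one_O\le\p$, the chosen rounding directions give $\bC'\one_O\ge\bc'$ and $\bP'\one_O\le\p'$, so the original optimum $O$ is still feasible for the scaled instance and the optimum value of the scaled instance is at least $\opt$. Running the algorithm of Lemma~\ref{apx-lem:submodular-const-0.25} on the scaled instance therefore returns a solution $S$ that uses each element at most twice, satisfies $\bC'\one_S\ge\bc'$ and $\bP'\one_S\le\p'$, and has $f(S)\ge(1/\e)\opt$.

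To translate the feasibility of $S$ back to the original instance I would use $K_{\bc,j}\bC'_{j,\ell}\le\bC_{j,\ell}+K_{\bc,j}$ and $K_{\p,i}\bP'_{i,\ell}\ge\bP_{i,\ell}-K_{\p,i}$ together with the fact that $S$ contains at most $2n$ element copies in total: this yields $\bC_j\one_S\ge K_{\bc,j}\bc'_j-2nK_{\bc,j}\ge\bc_j-\eps\bc_j=(1-\eps)\bc_j$ and $\bP_i\one_S\le K_{\p,i}\p'_i+2nK_{\p,i}\le\p_i+(\eps/2)\p_i\le(1+\eps)\p_i$, as required. Finally I would check that the completion phase inside the algorithm of Lemma~\ref{apx-lem:submodular-const-0.25} still certifies feasibility in the scaled instance: the witness $O\setminus O_m$ produced by the multi-phase analysis of Lemma~\ref{apx-lem:submodular-const-multi-phase} has scaled covering value at least $\bc'-\bC'\one_{O_m}$ and scaled packing value at most $\p'-\bP'\one_{O_m}$, so the feasibility DP does augment the corresponding table entry to a feasible solution.

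I expect the only delicate point to be the bookkeeping rather than any new idea: one must use a separate scaling factor per coordinate so that every scaled bound is $\Theta(n/\eps)$ (a single global factor could drive some coordinate's scaled bound below a constant), and one must carry the extra factor $2$ coming from the two allowed copies per element through the additive error terms; with the constants chosen above this in fact leaves a little slack on the packing side.
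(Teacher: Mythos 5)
Your proposal is correct and follows essentially the same route as the paper, which proves this corollary by applying ``standard scaling techniques'' to the pseudo-polynomial algorithm of Lemma~\ref{apx-lem:submodular-const-0.25}, exactly as detailed for the single-constraint case in Lemma~\ref{lem:submodular-one-0.25}. Your per-coordinate scaling factors and the explicit factor-$2$ accounting for duplicate copies are minor refinements of that same argument, and your verification that the optimum remains a witness for the completion-phase DP matches the paper's reasoning.
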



\section{Extensions: Matroid Independence and Multi-Objective}\label{App:Extensions}
\subsection{Matroid Independence}
We consider the \MPCSM problem as defined in Section \ref{sec:Introduction}. The main difficulty in proving Theorem \ref{thrm:MatroidExtend} is the following. $\bx^*$ is found by the continuous approach and is a convex combination of independent sets in the given matroid. However, it is {\em not} guaranteed that it is a fractional base.
Unfortunately, known rounding techniques such as randomized swap rounding \cite{chekuriVZ10-rand-exch}, require a fractional base. To this end, we present a simple extension of the swap rounding algorithm of \cite{chekuriVZ10-rand-exch} that works on independent sets as opposed to bases (see \cite{Jan18}).
It is crucial that this extension has all the concentration properties of the original swap rounding.
Algorithm \ref{alg:ExtendedSwapRounding} describes this extension of the swap rounding.
Intuitively, Algorithm \ref{alg:ExtendedSwapRounding} pads the ground set with dummy elements to obtain a fractional base.

\subsubsection{Extended Swap Rounding}

\begin{algorithm} 	
	\caption{Extended Swap Rounding: $(\cM(\cN,\cI)), x^* \in \cP(\cM)$}
	\label{alg:ExtendedSwapRounding}
	$r \leftarrow rank(\cM)$. \\
	Denote by $D$ a set of size $r$ of elements with value $0$. \\
	Define $\cN' = \cN \cup D$ and a matroid $\cM'$ whose bases are $\{S: S\setminus D \in \cI$ and $|S|=r\}$ \\
	Write $\bx^*$ as $\bx^* = \sum_i \alpha_i \textbf{1}_{S_i}$ where $\alpha_i \geq 0$, $\forall i$, $\sum_i \alpha_i = 1$, and $S_i \in \cI$, $\forall i$ (see Section 3 of \cite{chekuriVZ10-rand-exch}). \\
	For each $S_i$ define $S_i'$ s.t. $|S_i'| = r$ and $S_i' = S_i\cup Y$ (for some $Y \subseteq D$).\\
	Define $\tildx$ as: $\tildx = \sum_i \alpha_i \textbf{1}_{S'}$.\\
	Let $S'$ be the result of the swap rounding algorithm applied to $\tildx$ and $\cM'$.\\
	Output $S' \setminus D$.
\end{algorithm}
Given a matroid $\cM = (\cN,\cI)$, we define an extension matroid over $\cN' = \cN \cup D$ where $D$ is a set of dummy elements with a value of $0$. The bases are all sets $S'$ such that $S'\setminus D \in \cI$ and $|S'| = rank(\cM)$. Thus, the size of $D$ is $rank(\cM)$.
Note that $rank(\cM) = rank(\cM')$.
We define an auxiliary submodular objective function $f': 2^{\cN'} \rightarrow \R_+$ as $f'(S') \triangleq f(S'\setminus D)$. Clearly, $f'$ is a monotone submodular function.

We start by proving that the extension matroid $\cM'$ is, in fact, a matroid.

\begin{lemma} \label{lem:ExtensionMatroid}
Given a matroid $\cM = (\cN,\cI)$, the extension $\cM' = (\cN',\cI')$ is also a matroid.
\end{lemma}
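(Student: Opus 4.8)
The plan is to exhibit $\cM'$ as a standard matroid construction applied to $\cM$, or, if one prefers a self-contained argument, to verify the independence axioms directly. First I would pin down what the independent sets $\cI'$ of $\cM'$ are: since $\cM'$ is specified only through its bases $\{S \subseteq \cN' : S \setminus D \in \cI,\ |S| = r\}$, the family $\cI'$ must be the down-closure of this collection. I would check that this down-closure equals $\cI' := \{S \subseteq \cN' : S \setminus D \in \cI,\ |S| \le r\}$: any such $S$ can be padded with $r - |S|$ further elements of $D$ (there are $r - |S \cap D| \ge r - |S|$ of them available in $D\setminus S$) to reach size exactly $r$ without changing $S \setminus D$, hence lies inside a prescribed base; conversely every subset of a prescribed base clearly belongs to this family. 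In particular this also shows every maximal member of $\cI'$ has size exactly $r$, so the prescribed bases are indeed the bases of the set system $(\cN', \cI')$.

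Then I would verify the three matroid axioms for $\cI'$. Non-emptiness ($\varnothing \in \cI'$) and downward closure (if $T \subseteq S \in \cI'$ then $T \setminus D \subseteq S \setminus D \in \cI$ and $|T| \le |S| \le r$) are immediate from the corresponding properties of $\cI$. For the exchange axiom, take $S, T \in \cI'$ with $|S| < |T|$; since $|S| + 1 \le |T| \le r$, adding any single element keeps the cardinality bound, so it suffices to find $e \in T \setminus S$ with $(S \cup \{e\}) \setminus D \in \cI$. If $T \cap D \not\subseteq S$, pick $e \in (T \cap D) \setminus S$; then $(S \cup \{e\}) \setminus D = S \setminus D \in \cI$. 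Otherwise $T \cap D \subseteq S \cap D$, which forces $|T \setminus D| > |S \setminus D|$ by comparing cardinalities, and applying the exchange axiom of $\cM$ to $S \setminus D, T \setminus D \in \cI$ yields $e \in (T \setminus D) \setminus (S \setminus D)$ with $(S \setminus D) \cup \{e\} \in \cI$; this $e$ lies in $\cN \setminus S$, hence in $T \setminus S$, and $(S \cup \{e\}) \setminus D = (S \setminus D) \cup \{e\} \in \cI$.

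Alternatively, and more conceptually, I would note that $\cM'$ is exactly the rank-$r$ truncation of the direct sum $\cM \oplus F_D$, where $F_D$ is the free matroid on $D$ (every subset independent); both the direct sum of matroids and the truncation of a matroid to a fixed rank are classical matroid operations, and a one-line cardinality check shows that their composite has precisely the prescribed bases. The argument has essentially no hard step: the only point requiring a moment of care is the cardinality bookkeeping in the ``otherwise'' case of the exchange axiom (deducing $|T \setminus D| > |S \setminus D|$ from $|T| > |S|$ together with $T \cap D \subseteq S \cap D$) and checking that the exchanged element $e$ genuinely lies in $T \setminus S$ rather than merely in $T$.
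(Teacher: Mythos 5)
Your proof is correct and takes a genuinely different route from the paper's. You first make $\cI'$ explicit as the down-closure $\{S\subseteq\cN':S\setminus D\in\cI,\ |S|\le r\}$ of the prescribed bases (and verify that every maximal member has size $r$), then check the three independent-set axioms directly; the case analysis in the exchange step (does $T\cap D$ escape $S$ or not) and the cardinality bookkeeping are both handled correctly, and you also offer the one-line structural view of $\cM'$ as the rank-$r$ truncation of $\cM\oplus F_D$, which is arguably the cleanest argument. The paper instead works at the level of \emph{bases}: it takes two bases $A',B'$ of $\cM'$, an element $a\in A'\setminus B'$, and exhibits $b\in B'\setminus A'$ with $(A'\setminus\{a\})\cup\{b\}$ again a base, i.e.\ it verifies the (symmetric) basis-exchange axiom by case analysis on $|A'\cap\cN|$ versus $|B'\cap\cN|$. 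Both routes establish the result, but yours buys two things the paper's does not: (i) you pin down exactly what $\cI'$ is rather than leaving the independent sets implicit, and (ii) your alternative via direct sum plus truncation makes the lemma an instance of standard closure properties of matroids. Conversely, the paper's basis-exchange route is slightly shorter once one is willing to work with bases only, though as written the paper quantifies over ``every couple of sets $A',B'$'' where it must mean bases of $\cM'$ (otherwise $B'\setminus A'$ could be empty), a wording slip your axiom-level proof avoids entirely.
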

\begin{proof}
Consider two sets $A',B' \subseteq \cN'$. Denote by $A = A' \cap \cN$, $B = B' \cap \cN$.
We show that for every couple of sets $A'$,$B'$ and $a \in A'\setminus B'$ there exists $b \in B'\setminus A'$ such that $(A'\setminus \{a\}) \cup \{b\} \in \cI'$. We consider 3 cases:
\begin{itemize}
\item $|A| < |B|$
By the Independent set exchange property, there exists $b \in B \setminus A$ such that $A \cup \{b\} \in \cI$.
\item $|A| = |B|$
If $|A| = |B| = rank(\cM)$ then the basis exchange property holds.
If $|A| = |B| < rank(\cM)$ then we consider two cases. for the case where $A=B$, we can swap two dummy elements from $A',B'$. For the case where $A \neq B$, if $a \in A$ then $|B| > |A\setminus\{a\}|$ and the Independent set exchange property holds. if $a \notin A$ then $B'$ contains a dummy element $b$ that is not in $A'$ which can be swapped.
\item $|A| > |B|$
$B'$ contains a dummy element $b$ that is not in $A'$
\end{itemize}
Since the definition holds for all cases, this concludes the proof.
\end{proof}

The following theorem states that Algorithm \ref{alg:ExtendedSwapRounding} outputs an integral solution that is independent in $\cM$.

\begin{theorem} \label{thrm:ExtendedSwapRoundingIndependence}
	The output of Algorithm \ref{alg:ExtendedSwapRounding} is an independent set of $\cM$.
\end{theorem}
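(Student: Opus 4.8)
The statement to prove is that Algorithm~\ref{alg:ExtendedSwapRounding} outputs an independent set of $\cM$. The algorithm runs ordinary swap rounding on the fractional point $\tildx$ inside the extension matroid $\cM'$, obtains a base $S'$ of $\cM'$, and outputs $S'\setminus D$. So the plan is essentially a two-line reduction to known facts, once the setup is in place.

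First I would invoke Lemma~\ref{lem:ExtensionMatroid}, which guarantees that $\cM'=(\cN',\cI')$ is genuinely a matroid, so that swap rounding is well-defined on it. Next I would check that $\tildx$ is a valid input to swap rounding on $\cM'$, i.e.\ that it is a convex combination of \emph{bases} of $\cM'$: by construction each $S_i\in\cI$, and we extended it to $S_i'=S_i\cup Y_i$ with $Y_i\subseteq D$ chosen so that $|S_i'|=r=\mathrm{rank}(\cM)=\mathrm{rank}(\cM')$; since $S_i'\setminus D=S_i\in\cI$ and $|S_i'|=r$, each $S_i'$ is a base of $\cM'$ by definition of $\cM'$, and $\tildx=\sum_i\alpha_i\one_{S_i'}$ is the desired convex combination. (One should note in passing that such a completion $Y_i$ exists: since $S_i$ is independent in $\cM$ it can be extended to a base of $\cM$ of size $r$, and padding with enough dummy elements of $D$ always works because adding dummies never destroys independence in $\cM'$ — this follows from the exchange argument already used in Lemma~\ref{lem:ExtensionMatroid}.) Then, by the standard guarantee of swap rounding (Chekuri et al.~\cite{chekuriVZ10-rand-exch}), its output $S'$ is a base of $\cM'$.

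Finally I would unwind the definition of $\cM'$: $S'$ being a base of $\cM'$ means precisely that $S'\setminus D\in\cI$ (and $|S'|=r$). Hence the algorithm's output $S'\setminus D$ is independent in $\cM$, which is exactly the claim.

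The main obstacle — such as it is — is not in the independence claim itself (which is immediate from the definitions) but in making sure the completion step in line~5 of the algorithm is legitimate, i.e.\ that every $S_i$ can indeed be padded to size exactly $r$ using only elements of $D$. This needs the observation that $|D|=r$ and that in $\cM'$ any independent set can be extended to a base by adding dummies, which is where Lemma~\ref{lem:ExtensionMatroid} (or rather its proof technique) does the real work. Everything else is bookkeeping with the definition of the extension matroid.
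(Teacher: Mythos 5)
Your proposal is correct and follows essentially the same route as the paper's proof: invoke Lemma~\ref{lem:ExtensionMatroid} to confirm $\cM'$ is a matroid, observe that Steps 4--6 produce $\tildx$ as a convex combination of bases of $\cM'$, apply swap rounding to get a base $S'\in\cI'$, and unwind the definition of $\cM'$ to conclude $S'\setminus D\in\cI$. You are in fact somewhat more careful than the paper in spelling out why the padding in Step 5 is always feasible (namely $|S_i|\le r=|D|$ and $S_i\cap D=\varnothing$, so enough dummies are available), a detail the paper leaves implicit.
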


\begin{proof}
	From Lemma \ref{lem:ExtensionMatroid}, we conclude that $\cM'$ is a matroid. Note that Steps 4,5 and 6 of the algorithm converts $\bx^*$ to a convex combination of bases in $\cM'$. Thus, applying the swap rounding algorithm guarantees that $S' \in \cI'$. Additionally, by the definition of $\cM'$, it is guaranteed that $S'\setminus D \in \cI$.
\end{proof}

The following proposition shows the concentration properties of swap romdunding as proved in \cite{chekuriVZ10-rand-exch}.
\begin{prop} \label{prop:SwapRoundingProperties}
	Let $\bx^*$ be a fractional base in the base polytope of $\cM$, and $S$ be the output of the swap rounding when applied to $\bx^*$. Denote by $\bV \in \R^{\cN}_+$ a vector of non-negative weights. Then,
	\begin{enumerate}
		\item $\Pr[\ell \in S] = \bx^*_\ell$.
		\item For every $\mu \leq \bx^* \cdot \bV$ and $\delta < 1$: $\Pr[\textbf{1}_S \cdot \bV < (1-\delta) \bx^* \cdot \bV] \leq e^{-\mu\delta^2/2}$.
		\item For every $\mu \geq \bx^* \cdot \bV$: $\Pr[\textbf{1}_S \cdot \bV > (1+\delta) \bx^* \cdot \bV] \leq \left(\frac{e^\delta}{(1+\delta)^{1+\delta}}\right)^\mu$.
		\item For a monotone submodular function $f$ and its multilinear extension $F$ it is true that:\\ $\Pr[f(S) < (1-\delta)F(\bx^*)] \leq e^{-\mu\delta^2/8}$, where $\mu = F(\bx^*)$.
	\end{enumerate}
\end{prop}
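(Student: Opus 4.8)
The plan is to recall the randomized swap rounding procedure of Chekuri, Vondr\'ak and Zenklusen and to verify each of the four stated properties, essentially transcribing their analysis into our notation. Recall that, writing the fractional base as a convex combination $\bx^* = \sum_i \lambda_i \mathbf{1}_{B_i}$ of bases of $\cM$, swap rounding repeatedly \emph{merges} two bases: given $B_1,B_2$ with weights $\lambda_1,\lambda_2$, as long as $B_1\neq B_2$ it picks $a\in B_1\setminus B_2$ and, by the matroid exchange property, an element $b\in B_2\setminus B_1$ such that $B_1-a+b$ and $B_2-b+a$ are both bases; it then sets $B_1\leftarrow B_1-a+b$ with probability $\lambda_2/(\lambda_1+\lambda_2)$ and $B_2\leftarrow B_2-b+a$ otherwise. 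After at most $r=\mathrm{rank}(\cM)$ such elementary swaps $B_1$ and $B_2$ coincide; the two bases are thereby replaced by a single base carrying the combined weight $\lambda_1+\lambda_2$, and iterating the merge reduces the whole representation to one random base $S$.

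First I would prove Property~1 by induction on the number of merge steps. A single elementary swap replaces the pair $(\mathbf{1}_{B_1},\mathbf{1}_{B_2})$ by a random pair whose $\lambda$-weighted average membership probability of each element $\ell$ is unchanged: this is immediate for $\ell\notin\{a,b\}$ and a one-line computation for $\ell\in\{a,b\}$ using the swap probabilities $\lambda_2/(\lambda_1+\lambda_2)$ and $\lambda_1/(\lambda_1+\lambda_2)$. Hence $\E[\sum_i\lambda_i\mathbf{1}_{B_i}]$ is invariant along the whole process, and at termination $\E[\mathbf{1}_S]=\bx^*$, i.e.\ $\Pr[\ell\in S]=\bx^*_\ell$. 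For the linear tail bounds (Properties~2 and~3) I would observe that, reading the value $\bV\cdot\mathbf{1}_{B_1}+\bV\cdot\mathbf{1}_{B_2}$ after each elementary swap, one obtains a martingale: a swap changes this quantity by $\pm(V_b-V_a)$ with probabilities chosen exactly so that the conditional expected change is zero. Chekuri et al.\ show that this martingale in fact satisfies the hypotheses of a Chernoff-type bound rather than merely an Azuma bound, because conditioned on the history each swap scales the moment generating function of the deviation by no more than it would in the fully independent case; invoking their statement gives precisely the sub-Gaussian lower tail $e^{-\mu\delta^2/2}$ and the upper tail $(e^{\delta}/(1+\delta)^{1+\delta})^{\mu}$.

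The hard part will be Property~4, the concentration of the submodular value $f(S)$ around $F(\bx^*)$, and here I would again defer to Chekuri et al.: they prove that $F$ is non-decreasing in expectation along every elementary swap (using submodularity, together with the fact that the target point $\bx^*$ stays fixed), which yields $\E[f(S)]\geq F(\bx^*)$, and then upgrade this to the lower-tail bound $\Pr[f(S)<(1-\delta)F(\bx^*)]\leq e^{-\mu\delta^2/8}$ via a martingale argument that exploits the bounded change of $f$ under a single swap. Since the hypothesis of the proposition is exactly that $\bx^*$ is a genuine fractional base of $\cM$, all of these results apply without modification; the extension needed to round convex combinations of \emph{independent} sets (rather than bases) is handled separately through Algorithm~\ref{alg:ExtendedSwapRounding} and Theorem~\ref{thrm:ExtendedSwapRoundingIndependence}, which reduce that case to the present one over the padded matroid $\cM'$.
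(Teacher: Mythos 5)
Your proposal is correct and matches the paper's treatment: the paper does not prove this proposition at all but simply imports it from Chekuri, Vondr\'ak and Zenklusen \cite{chekuriVZ10-rand-exch}, exactly as you do, and your sketch of the merge-and-swap process, the expectation-preservation argument for Property~1, and the deferral of the martingale/Chernoff-type tail bounds and the submodular lower-tail bound to the cited work are all accurate.
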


We would like to prove that the extended swap rounding (Algorithm \ref{alg:ExtendedSwapRounding}) has the same properties as in the above proposition.
The following lemmas bound the upper and lower tail of a linear function using Algorithm \ref{alg:ExtendedSwapRounding} and the lower bound of a submodular function.

\begin{lemma}
Given a non-negative weight vector $W \in (\R_+)^\cN$, $\bx^* \in \cP(\cM)$ a fractional solution, denote by $S \in \cN$ the output of Algorithm \ref{alg:ExtendedSwapRounding} on $\bx^*$. Then,
$\Pr\left[W \cdot \mathbf{1}_{\bx^*} < (1-\delta)\mu\right] \leq e^{\mu \delta^2/2}$.
\end{lemma}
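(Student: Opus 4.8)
The plan is to reduce this lower‑tail bound for the \emph{extended} swap rounding to the corresponding bound for \emph{standard} swap rounding, i.e.\ Proposition~\ref{prop:SwapRoundingProperties}(2), applied to the padded matroid $\cM'$ and the padded fractional point $\tildx$ constructed inside Algorithm~\ref{alg:ExtendedSwapRounding}. (Reading the statement in its intended form: for $\mu\leq W\cdot\bx^*$ and $\delta<1$ we want $\Pr[W\cdot\mathbf{1}_S<(1-\delta)\mu]\leq e^{-\mu\delta^2/2}$, where $S$ is the output of Algorithm~\ref{alg:ExtendedSwapRounding}.)

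First I would verify that $\tildx$ is a fractional \emph{base} of $\cM'$, so that Proposition~\ref{prop:SwapRoundingProperties} is legitimately applicable. By construction $\tildx=\sum_i\alpha_i\mathbf{1}_{S_i'}$ with $\alpha_i\geq 0$, $\sum_i\alpha_i=1$, and each $S_i'=S_i\cup Y_i$ for some $Y_i\subseteq D$ with $|S_i'|=r=\mathrm{rank}(\cM)=\mathrm{rank}(\cM')$; since $|D|=r\geq r-|S_i|$, such a completion $Y_i$ always exists, and since $S_i\in\cI$ the set $S_i'$ is a base of $\cM'$ by Lemma~\ref{lem:ExtensionMatroid}. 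Hence $\tildx$ lies in the base polytope of $\cM'$ and $S'$ (the standard swap‑rounding output inside the algorithm) enjoys all four properties of Proposition~\ref{prop:SwapRoundingProperties}.

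Next I would lift the weights. Define $W'\in(\R_+)^{\cN'}$ by $W'_\ell=W_\ell$ for $\ell\in\cN$ and $W'_d=0$ for $d\in D$. Two identities follow at once because $W'$ vanishes on $D$: (a) for every $\ell\in\cN$ we have $(\tildx)_\ell=\sum_i\alpha_i[\ell\in S_i']=\sum_i\alpha_i[\ell\in S_i]=\bx^*_\ell$, so $W'\cdot\tildx=W\cdot\bx^*$; and (b) since $S'\subseteq\cN'$ and Algorithm~\ref{alg:ExtendedSwapRounding} outputs $S=S'\setminus D=S'\cap\cN$, we get $W'\cdot\mathbf{1}_{S'}=W\cdot\mathbf{1}_S$. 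Now apply Proposition~\ref{prop:SwapRoundingProperties}(2) with matroid $\cM'$, fractional base $\tildx$, weight vector $\bV=W'$, and the given $\mu\leq W\cdot\bx^*=W'\cdot\tildx$: this gives $\Pr[\mathbf{1}_{S'}\cdot W'<(1-\delta)\,W'\cdot\tildx]\leq e^{-\mu\delta^2/2}$. Since $W'\cdot\tildx\geq\mu$ implies $(1-\delta)\mu\leq(1-\delta)\,W'\cdot\tildx$, the event $\{\mathbf{1}_{S'}\cdot W'<(1-\delta)\mu\}$ is contained in $\{\mathbf{1}_{S'}\cdot W'<(1-\delta)\,W'\cdot\tildx\}$, so it too has probability at most $e^{-\mu\delta^2/2}$; rewriting through identity~(b) yields $\Pr[W\cdot\mathbf{1}_S<(1-\delta)\mu]\leq e^{-\mu\delta^2/2}$, as desired.

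The only genuinely delicate point — where I would be most careful — is the first step: Proposition~\ref{prop:SwapRoundingProperties} is stated for a fractional \emph{base} and for the decomposition used by swap rounding, so one must check that the particular convex decomposition of $\tildx$ produced by Algorithm~\ref{alg:ExtendedSwapRounding} is a convex combination of bases of $\cM'$ (not merely independent sets). This is exactly what the padding by the dummy set $D$ of size $\mathrm{rank}(\cM)$ achieves, together with Lemma~\ref{lem:ExtensionMatroid}; everything after that is routine bookkeeping about the coordinates in $D$ carrying zero weight. The same template, applied to the other items of Proposition~\ref{prop:SwapRoundingProperties}, will give the upper‑tail and the submodular lower‑tail analogues.
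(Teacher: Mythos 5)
Your proposal is correct and follows essentially the same route as the paper: both reduce to Proposition~\ref{prop:SwapRoundingProperties}(2) applied to the padded point $\tildx$ in $\cM'$ with the weights extended by zeros on the dummy set $D$, and then use $W\cdot\mathbf{1}_{S'}=W\cdot\mathbf{1}_{S}$ since $S'\setminus S\subseteq D$ carries zero weight. Your write-up is in fact a bit more careful than the paper's (explicitly checking that $\tildx$ is a convex combination of bases of $\cM'$ and handling $\mu\leq W\cdot\bx^*$ rather than silently setting $\mu=W\cdot\bx^*$), and you correctly read the intended statement through its typos.
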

\begin{proof}
First, note that from the properties of swap rounding (as shown in Proposition \ref{prop:SwapRoundingProperties}), $\E[W \cdot \textbf{1}_{S'}] = W \cdot \textbf{1}_{\bx^*}$. Also,
$W \cdot \textbf{1}_{S'}
= W \cdot \textbf{1}_{S'\setminus S} + W \cdot \textbf{1}_S
= W \cdot \textbf{1}_S$.
We conclude that:
\begin{align*}
	\Pr\left[W \cdot \textbf{1}_S < (1-\delta)\mu\right]
	&= 	\Pr\left[W \cdot \textbf{1}_{S'} < (1-\delta)\mu\right] \\
	&\leq e^{-\E[W \cdot \textbf{1}_{S'}]\delta^2/2}
	= e^{-\mu \delta^2/2}.
\end{align*}
\end{proof}
\begin{lemma}
Given a non-negative weight vector $W \in (\R_+)^\cN$, $\bx^* \in \cP(\cM)$ a fractional solution, denote by $S \in \cN$ the output of Algorithm \ref{alg:ExtendedSwapRounding} on $\bx^*$. Then,
$\Pr\left[W \cdot \mathbf{1}_S > (1+\delta)\mu\right] \leq \left(\frac{e^\delta}{(1+\delta)^{1+\delta}}\right)^\mu$.
\end{lemma}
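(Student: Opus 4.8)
The plan is to reduce to the upper-tail concentration of ordinary swap rounding (Proposition~\ref{prop:SwapRoundingProperties}, item~3) applied to the padded matroid $\cM'$, exactly mirroring the proof of the preceding lemma. Here $\mu$ is understood to be any value with $\mu \geq W\cdot\bx^*$ and $\delta>0$, and $S\subseteq\cN$ is the output of Algorithm~\ref{alg:ExtendedSwapRounding}.

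First I would extend the weight vector to the padded ground set $\cN'=\cN\cup D$ by defining $W'\in(\R_+)^{\cN'}$ with $W'_\ell=W_\ell$ for $\ell\in\cN$ and $W'_\ell=0$ for $\ell\in D$. Since $\tildx$ agrees with $\bx^*$ on the coordinates indexed by $\cN$ and is supported off $D$, the dummy coordinates contribute nothing to the inner product, so $W'\cdot\tildx=W\cdot\bx^*$. By Steps~4--6 of Algorithm~\ref{alg:ExtendedSwapRounding} together with Lemma~\ref{lem:ExtensionMatroid}, $\tildx$ is a convex combination of bases of $\cM'$, hence a fractional base in the base polytope of $\cM'$, so Proposition~\ref{prop:SwapRoundingProperties} applies to the swap rounding of $\tildx$ with respect to $\cM'$; call its output $S'$ (so that $S=S'\setminus D$).

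Next, because the elements of $D$ carry zero weight under $W'$, we get
\[
W'\cdot\mathbf{1}_{S'}=W'\cdot\mathbf{1}_{S'\setminus D}+W'\cdot\mathbf{1}_{S'\cap D}=W\cdot\mathbf{1}_{S}+0=W\cdot\mathbf{1}_{S}.
\]
Invoking item~3 of Proposition~\ref{prop:SwapRoundingProperties} with the weight vector $W'$ and with $\mu\geq W\cdot\bx^*=W'\cdot\tildx$ then yields $\Pr[\,W'\cdot\mathbf{1}_{S'}>(1+\delta)\mu\,]\leq\bigl(e^{\delta}/(1+\delta)^{1+\delta}\bigr)^{\mu}$, and substituting the identity above gives the claimed bound on $\Pr[W\cdot\mathbf{1}_{S}>(1+\delta)\mu]$.

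There is essentially no combinatorial obstacle here; the only thing that needs care is checking that the padding construction faithfully preserves the relevant quantities — namely that $\tildx$ is genuinely a fractional base of $\cM'$ (so Proposition~\ref{prop:SwapRoundingProperties} is applicable) and that passing to the zero-weight extension $W'$ alters neither the expectation $W'\cdot\tildx=W\cdot\bx^*$ nor the realized value $W'\cdot\mathbf{1}_{S'}=W\cdot\mathbf{1}_{S}$. A minor bookkeeping point is that Proposition~\ref{prop:SwapRoundingProperties}(3) is used here in its general multiplicative-Chernoff form, ``$\Pr[X>(1+\delta)\mu]\le(e^\delta/(1+\delta)^{1+\delta})^\mu$ for any $\mu\ge\E[X]$'', which is what allows the bound to be stated with $(1+\delta)\mu$ rather than $(1+\delta)(W\cdot\bx^*)$ on the left-hand side.
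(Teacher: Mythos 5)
Your proof is correct and follows essentially the same route as the paper: pad $W$ with zeros on the dummy elements, observe that the contribution of $D$ to both the expectation and the realized inner product vanishes, and invoke item~3 of Proposition~\ref{prop:SwapRoundingProperties} for the swap rounding of $\tildx$ in $\cM'$. You are simply more explicit than the paper's terse version (e.g.\ in defining $W'$ and in verifying that $\tildx$ is a fractional base), which is a good habit but not a different argument.
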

\begin{proof}
First, note that from the properties of swap rounding (as shown in Proposition \ref{prop:SwapRoundingProperties}), $\E[W \cdot \textbf{1}_{S'}] = W \cdot \textbf{1}_{\bx^*}$. Also,
$W \cdot \textbf{1}_{S'}
= W \cdot \textbf{1}_{S'\setminus S} + W \cdot \textbf{1}_{S}
= W \cdot \textbf{1}_{S}$.
We conclude that:
\begin{align*}
\Pr\left[W \cdot \textbf{1}_S > (1+\delta)\mu\right]
&= 	\Pr\left[W \cdot \textbf{1}_{S'} > (1+\delta)\mu\right]
\leq \left(\frac{e^\delta}{(1+\delta)^{1+\delta}}\right)^\mu
\end{align*}
\end{proof}
The next lemma proves that given a monotone submodular function, Algorithm \ref{alg:ExtendedSwapRounding} guarantees high concentration properties, as the original swap rounding.
\begin{lemma} \label{lem:ExtendedSwapRoundingObjective}
Given a monotone submodular function $f: 2^\cN \rightarrow R_+$, its multi-linear extension $F$ and a fractional solution $\bx^* \in \cP(\cM)$,
denote by $S$ the output of Algorithm \ref{alg:ExtendedSwapRounding} and $\mu = F(\bx^*)$. Then: $\Pr\left[f(S) \leq (1-\delta)\mu\right] \leq e^{-\mu \delta^2/8}$.
\end{lemma}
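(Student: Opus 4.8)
The plan is to reduce this concentration bound to the corresponding guarantee for ordinary swap rounding, namely Proposition~\ref{prop:SwapRoundingProperties}(4), applied to the extension matroid $\cM'$ and the auxiliary objective $f'$ built inside Algorithm~\ref{alg:ExtendedSwapRounding}. Recall that the algorithm writes $\bx^* = \sum_i \alpha_i \one_{S_i}$ with $S_i \in \cI$, pads each $S_i$ with dummy elements to a set $S_i'$ of size $r = \mathrm{rank}(\cM)$, forms $\tildx = \sum_i \alpha_i \one_{S_i'}$, runs the swap rounding of \cite{chekuriVZ10-rand-exch} on $\tildx$ in $\cM'$ to obtain a base $S'$, and outputs $S = S' \setminus D$. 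By Lemma~\ref{lem:ExtensionMatroid}, $\cM'$ is a matroid, and since each $S_i'$ satisfies $S_i' \setminus D = S_i \in \cI$ and $|S_i'| = r = \mathrm{rank}(\cM')$, every $S_i'$ is a base of $\cM'$; hence $\tildx$ lies in the base polytope of $\cM'$ and Proposition~\ref{prop:SwapRoundingProperties} is applicable to it.

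The one small identity I would verify first is $F'(\tildx) = F(\bx^*)$, where $F'$ denotes the multilinear extension of $f'$. This holds because $f'$ ignores the dummy coordinates: if $R'$ is obtained by including each $\ell \in \cN'$ independently with probability $\tildx_\ell$, then $f'(R') = f(R' \cap \cN)$, and since $\tildx_\ell = \bx^*_\ell$ for every $\ell \in \cN$, the set $R' \cap \cN$ is distributed exactly as the coordinatewise rounding of $\bx^*$. Taking expectations gives $F'(\tildx) = \E[f'(R')] = \E[f(R' \cap \cN)] = F(\bx^*) = \mu$.

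It then remains to invoke Proposition~\ref{prop:SwapRoundingProperties}(4) with the monotone submodular function $f'$, its multilinear extension $F'$, the fractional base $\tildx$ of $\cM'$, and the swap-rounding output $S'$, which yields $\Pr[f'(S') < (1-\delta) F'(\tildx)] \leq e^{-\mu \delta^2/8}$ with $\mu = F'(\tildx)$. Since $f'(S') = f(S' \setminus D) = f(S)$ and $F'(\tildx) = \mu$, the event on the left is exactly $\{f(S) < (1-\delta)\mu\}$, and the same bound carries over to the non-strict event $\{f(S) \leq (1-\delta)\mu\}$ appearing in the statement (the underlying Chernoff-type bound being valid with $\leq$ as well). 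There is no genuine obstacle in this proof; the only things requiring a sentence of care are that $\tildx$ really is a fractional base of $\cM'$ (immediate from Lemma~\ref{lem:ExtensionMatroid} and the construction) and the identity $F'(\tildx) = F(\bx^*)$ above, which is precisely where the choice of value $0$ for the dummy elements is used.
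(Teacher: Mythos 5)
Your proof is correct and follows the same route as the paper: pass to the extension matroid $\cM'$ and auxiliary objective $f'$, apply Proposition~\ref{prop:SwapRoundingProperties}(4) to the fractional base $\tildx$, and use $f'(S')=f(S)$ to transfer the bound back. You simply spell out two details that the paper leaves implicit (that $\tildx$ is indeed a fractional base of $\cM'$ and that $F'(\tildx)=F(\bx^*)=\mu$), both of which are needed and are argued correctly.
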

\begin{proof}
Recall that $f': 2^{\cN'} \rightarrow \R_+$ is defined as follows: $f'(S) \triangleq f(S \setminus D)$.
Apply Proposition \ref{prop:SwapRoundingProperties}, then we conclude that:
\begin{align*}
	\Pr\left[f(S) \leq (1-\delta) \mu\right]
	= \Pr\left[f'(S') \leq (1-\delta) \mu\right]
	\leq e^{-\mu \delta^2/8}.
\end{align*}
\end{proof}


\subsubsection{Proof of Theorem}

Let us now describe our algorithm for \MPCSM.
The algorithm is identical to our main result (Algorithm \ref{alg:OneSidedMainAlgorithm}), with the following changes:
(1) Given a guess $D = (E_0,E_1,c')$ we contract the matroid $\cM$ by $E_1$ (and then remove $E_0$ from the ground set as before, thus $\cM'$ is defined with respect to the contracted $\cM$);
(2) Instead of independently rounding each element, we apply the extended swap rounding to the contracted matroid. The rest of this section, along with all the lemmas in it, consists of the proof of Theorem \ref{thrm:MatroidExtend}.

\begin{algorithm*}[H]
	\caption{$(f,\cN,\bP,\bC)$}\label{alg:MatroidExtention}
	Use Algorithm \ref{alg:OneSidedGuessesEnumeration} to obtain a list of guesses $\cL$. \\
	\ForEach{$D=(E_0, E_1,\bc') \in \cL$}{
		Use Theorem~\ref{thrm:ContGreedy} to compute an approximate solution $\bx^*$ to problem (\ref{thrm:MatroidExtend}).\label{alg:OneSidedMainAlgorithm-ContGreedy}\\
		Scale down $\x^*$ to $\bar{\x} = \x^*/(1+\delta)$\\
		Let $R_D$ be the output of Algorithm \ref{alg:ExtendedSwapRounding} on $\bar{\x}$\\
		Let $R'_D = R_D \setminus L_D$.\\
		$S_D \leftarrow E_1 \cup R'_D$.
	}
	$S_{alg} \leftarrow \argmax \left\{f(S_D) : D\in \cL, \bP \cdot \one_{S_D} \leq \one_p, \bC \cdot \one_{S_D} \geq (1-\eps)\one _c \right\}$
\end{algorithm*}

Note that since the matroid is down-monotone, scaling down the fractional solution and removing elements from the solution will not violate the matroid constraint.

The following lemma bounds the probability of violating a packing constraint by more than $\varepsilon$, when using the extended swap rounding. As before, denote by $X_\ell$ the indicator for the event that element $\ell$ is chosen by the algorithm.

\begin{lemma} \label{lem:MatroidExt-packing}
For every $i = 1,...,p$: $\Pr\left[\sum_{\ell\in\cN\setminus(E_0 \cup E_1)}\bP_{i,\ell}X_\ell > \rdi \right] < \max\{e^{-\frac{\delta^2}{3\alpha}}, e^{-\frac{\delta^2}{3\beta}}\}$.
\end{lemma}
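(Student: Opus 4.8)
The plan is to follow the two probability estimates used in the analysis of the independent-rounding algorithm, namely Claim~\ref{apx-clm:OneSidedProbNonTinyPackCover} and the first item of Claim~\ref{apx-clm:OneSidedProbTinyPack}, replacing the multiplicative Chernoff bound for independent Bernoullis by the upper-tail concentration of the \emph{extended} swap rounding established earlier in this section. Fix $i\in[p]$. Recall that $\x^*$ is the continuous-greedy solution of Theorem~\ref{thrm:ContGreedy} on the contracted-and-padded matroid $\cM'$, that it satisfies $\tP\x^*\le\rd$, that $\bar{\x}=\x^*/(1+\delta)$ still lies in the matroid polytope of $\cM'$ (the polytope is down-closed, as noted for Algorithm~\ref{alg:MatroidExtention}), that $R_D$ is the output of Algorithm~\ref{alg:ExtendedSwapRounding} on $\bar{\x}$, and that the quantity to be bounded equals $\sum_{\ell\in R'_D}\bP_{i,\ell}$, where $R'_D=R_D\setminus L_D$ and $L_D$ is the (deterministic, given the guess $D$) set of residual elements that are $\beta$-large on some critical packing constraint.

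The first ingredient is the swap-rounding analogue of Claim~\ref{apx-clm:exp-indep-round}. By Lemma~\ref{lem:ExtensionMatroid}, Theorem~\ref{thrm:ExtendedSwapRoundingIndependence}, and the marginal-preservation property of swap rounding in Proposition~\ref{prop:SwapRoundingProperties} (which carries over to Algorithm~\ref{alg:ExtendedSwapRounding} because the dummy coordinates are removed at the end), we have $\Pr[\ell\in R_D]=\bar{\x}_\ell$ for every $\ell\in\tilde{\cN}$; hence for any non-negative weight vector $\bw$ supported on $\tilde{\cN}$, linearity of expectation gives $\E\bigl[\sum_{\ell\in R_D}\bw_\ell\bigr]=\sum_\ell\bw_\ell\bar{\x}_\ell$, and taking $\bw_\ell=\bP_{i,\ell}$ together with $\tP\x^*\le\rd$ shows this is at most $\rdi/(1+\delta)$.

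Now split on whether constraint $i$ is critical. If $i\notin Y_D$, then consistency of $D$ (the requirement $P_D=\varnothing$ in Definition~\ref{apx-def:OneSidedConsistentGuess}) forces $\bP_{i,\ell}<\alpha\rdi$ for every $\ell\in\tilde{\cN}$. Let $\bw$ be the vector with $\bw_\ell=\bP_{i,\ell}$ for $\ell\in\tilde{\cN}\setminus L_D$ and $\bw_\ell=0$ otherwise, so that $\sum_{\ell\in R'_D}\bP_{i,\ell}=\sum_{\ell\in R_D}\bw_\ell$; rescaling $\bw$ by $\alpha\rdi$ makes all entries lie in $[0,1]$ and the expectation at most $\mu:=1/\bigl((1+\delta)\alpha\bigr)$. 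Applying the upper-tail concentration bound for the extended swap rounding with threshold $(1+\delta)\mu=1/\alpha$ and using $\bigl(e^\delta/(1+\delta)^{1+\delta}\bigr)^\mu\le e^{-\mu\delta^2/3}\le e^{-\delta^2/(3\alpha)}$ (exactly as in Claim~\ref{apx-clm:OneSidedProbNonTinyPackCover}) gives $\Pr[\sum_{\ell\in R'_D}\bP_{i,\ell}>\rdi]\le e^{-\delta^2/(3\alpha)}$. If instead $i\in Y_D$, then every $\ell\in\tilde{\cN}$ with $\bP_{i,\ell}\ge\beta\rdi$ lies in $L_D^i\subseteq L_D$ and is therefore deleted in $R'_D$; hence the same vector $\bw$ (which vanishes on $L_D$) satisfies $\bw_\ell<\beta\rdi$ for all $\ell$, and repeating the previous argument with $\mu:=1/\bigl((1+\delta)\beta\bigr)$ and threshold $1/\beta$ yields $\Pr[\sum_{\ell\in R'_D}\bP_{i,\ell}>\rdi]\le e^{-\delta^2/(3\beta)}$, matching the first item of Claim~\ref{apx-clm:OneSidedProbTinyPack}. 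Taking the larger of the two bounds proves the lemma.

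The genuinely new point — and the only real obstacle — is that the coordinates of $R_D$ are no longer independent, so the ordinary Chernoff bound is unavailable; instead we must invoke the fact, proven earlier, that Algorithm~\ref{alg:ExtendedSwapRounding} applied to the contracted matroid still outputs an independent set of $\cM$ (Theorem~\ref{thrm:ExtendedSwapRoundingIndependence}) whose marginals are exact and whose linear statistics obey the \emph{same} one-sided Chernoff bound as bases swap rounding (the counterpart of Proposition~\ref{prop:SwapRoundingProperties} for Algorithm~\ref{alg:ExtendedSwapRounding}). Granting those facts, the computation above is word-for-word the independent-rounding analysis restricted to the residual ground set $\tilde{\cN}$ (the padding dummies carry weight $0$ and play no role), with the additional, purely bookkeeping observation that $L_D$ is fixed by the guess $D$, so truncating $\bP_{i,\cdot}$ to $\tilde{\cN}\setminus L_D$ produces a deterministic weight vector to which the concentration bound applies directly.
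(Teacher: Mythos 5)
Your proposal is correct and follows essentially the same route as the paper: split on whether constraint $i$ is critical, use the marginal-preservation property of the extended swap rounding to bound the expected packing load by $\rdi/(1+\delta)$, rescale by $\alpha\rdi$ (resp.\ $\beta\rdi$, after discarding the $\beta$-large elements in $L_D$) so the weights lie in $[0,1]$, and apply the upper-tail concentration bound for the extended swap rounding (Proposition~\ref{prop:SwapRoundingProperties}.3) to obtain $e^{-\delta^2/(3\alpha)}$ and $e^{-\delta^2/(3\beta)}$ respectively. The only cosmetic difference is that you zero out $L_D$ also for non-critical constraints, whereas the paper bounds the full sum over $R_D$ there; both suffice for the lemma and for its use downstream.
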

\begin{proof}

For any non-tiny packing constraint $i \in [p] \setminus Y_D$ and for each $\ell \in \residual$, let us define the scaled matrix $\tP$ such that $\tP_{i, \ell} = \bP_{i, \ell}/(\alpha \rdi) \leq 1$. The last inequality follows from Defn.~\ref{apx-def:OneSidedCorrectGuess}.\ref{apx-item:oneSidedConsistentGuessNoBig}. Notice that $\E[\sum_{\ell \in R_D} \tP_{i, \ell}] \leq \nicefrac{\rdi}{(1+\delta) \alpha \rdi} = \nicefrac{1}{(1+\delta)\alpha}$ by Claim~\ref{apx-clm:exp-indep-round}.\ref{apx-item:exp-pack}. Now, applying the concentration property of the extended swap rounding (\ref{prop:SwapRoundingProperties}.3) with $X=\sum_{\ell \in \residual} \tP_{i, \ell} X_\ell$, we obtain
\begin{align*}
	\Pr\left[ \sum_{\ell \in \sol} \bP_{i, \ell} > 1\right] & = \Pr\left[\sum_{\ell \in R_D} \bP_{i, \ell} > \rdi \right] \\ 
	&= \Pr\left[\sum_{\ell \in R_D} \tP_{i, \ell} > \nicefrac{1}{\alpha}\right]\\
	& = \Pr\left[\sum_{\ell \in R_D} \tP_{i, \ell} > (1+\delta)\frac{1}{(1+\delta)\alpha}\right]\\
	& \leq \exp\left(-\frac{1}{(1+\delta)\alpha} \cdot (\delta^2)/2\right)\\
	& \leq \exp\left(-\frac{1}{(1+\delta)\alpha} \cdot (\delta^2)/3\right)\\
	& \leq \exp\left(-\frac{\delta^2}{3\alpha}\right)\,.
\end{align*}

As for tiny packing constraints $i\in Y_D$ and for each $\ell \in S_D^i$, we again define the scaled matrix $\tP$ such that $\tP_{i, \ell} = \bP_{i, \ell}/(\beta \rdi) \leq 1$. The last inequality follows from Defn.~\ref{apx-def:OneSidedCorrectGuess}.\ref{apx-item:oneSidedConsistentGuessNoBig}. Notice that $\E[\sum_{\ell \in R_D \cap S_D^i} \tP_{i, \ell}] \leq \E[\sum_{\ell \in R_D} \tP_{i, \ell}] \leq \nicefrac{\rdi}{(1+\delta) \beta \rdi} = \nicefrac{1}{(1+\delta)\beta}$ by Claim~\ref{apx-clm:exp-indep-round}.\ref{apx-item:exp-pack}. Applying the concentration property of the extended swap rounding (\ref{prop:SwapRoundingProperties}.3) with $X=\sum_{\ell \in S_D^i} \tP_{i, \ell} X_\ell$, we obtain
	\begin{align*}
		\Pr\left[ \sum_{\ell \in E_0 \cup (R_D \cap S_D^i)} \bP_{i, \ell} > 1\right] & = \Pr\left[\sum_{\ell \in R_D \cap S_D^i} \bP_{i, \ell} > \rdi \right] \\ 
		&= \Pr\left[\sum_{\ell \in R_D \cap S_D^i} \tP_{i, \ell} > \nicefrac{1}{\beta}\right]\\
		& = \Pr\left[\sum_{\ell \in R_D \cap R_D} \tP_{i, \ell} > (1+\delta)\frac{1}{(1+\delta)\beta}\right]\\
		& \leq \exp\left(-\frac{1}{(1+\delta)\beta} \cdot (\delta^2)/2\right)\\
		& \leq \exp\left(-\frac{1}{(1+\delta)\beta} \cdot (\delta^2)/3\right)\\
		& \leq \exp\left(-\frac{\delta^2}{3\beta}\right)\,.
	\end{align*}
			
\end{proof}


Now, consider the covering constraint. The following lemma shows that for every non-tiny constraint, the output will not violate the constraint with high probability.

\begin{lemma} \label{lem:MatroixExt-covering}
For every non-tiny covering constraint $j: Pr\left[\sum_{\ell \in \cN \setminus (E_0 \cup E_1)}\bC_{j,\ell} X_\ell < \sdj - \varepsilon\right] < e^{-\frac{\delta^2}{2\alpha}}$.
\end{lemma}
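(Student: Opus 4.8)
The plan is to replay the argument of Lemma~\ref{lem:MatroidExt-packing} (equivalently, of Claim~\ref{apx-clm:OneSidedProbNonTinyPackCover}(2)) in the covering direction, using the lower-tail concentration bound for the extended swap rounding in place of the independent-rounding Chernoff bound. First I would record that Algorithm~\ref{alg:ExtendedSwapRounding} preserves marginals, i.e.\ $\Pr[\ell\in R_D]=\bar{\x}_\ell=\nicefrac{\x^*_\ell}{(1+\delta)}$ for every residual element $\ell$; this is Proposition~\ref{prop:SwapRoundingProperties}(1), transferred to the extended rounding through the zero-weight padding of the ground set. Consequently the expectation estimate of Claim~\ref{apx-clm:exp-indep-round}(2) holds verbatim, so that $\E\big[\sum_{\ell\in R_D}\bC_{j,\ell}\big]\ge\nicefrac{\sdj}{(1+\delta)}$.

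Next, fix a non-critical covering constraint $j\in[c]\setminus Z_D$. Consistency of the guess $D$ (property~\ref{apx-item:oneSidedConsistentGuessNoBig} of Definition~\ref{apx-def:OneSidedConsistentGuess}, i.e.\ $C_D=\varnothing$) gives $\bC_{j,\ell}<\alpha\sdj$ for every $\ell\in\residual$, so the rescaled weights $\tC_{j,\ell}\triangleq\nicefrac{\bC_{j,\ell}}{(\alpha\sdj)}$ lie in $[0,1)$ and satisfy $\E\big[\sum_{\ell\in R_D}\tC_{j,\ell}\big]\ge\nicefrac{1}{((1+\delta)\alpha)}$. Applying the lower-tail concentration property of the extended swap rounding to the weight vector $(\tC_{j,\ell})_{\ell\in\residual}$ with deviation $\delta$, and then undoing the rescaling with the same elementary manipulation as in Claim~\ref{apx-clm:OneSidedProbNonTinyPackCover}(2), yields
\begin{align*}
\Pr\Big[\textstyle\sum_{\ell\in\residual}\bC_{j,\ell}X_\ell<\sdj-\eps\Big]
&\le\Pr\Big[\textstyle\sum_{\ell\in R_D}\tC_{j,\ell}X_\ell<(1-\delta)\cdot\tfrac{1}{(1+\delta)\alpha}\Big]\\
&\le e^{-\nicefrac{\delta^2}{(2\alpha)}}\,,
\end{align*}
where for the first inequality one may assume $\sdj\ge\eps$ (otherwise the event is empty and the claim is trivial) and uses that $\sdj-\eps$, in rescaled coordinates, is dominated by $(1-\delta)\cdot\nicefrac{1}{((1+\delta)\alpha)}$ for the chosen small $\delta$.

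The only genuine content here is the transfer of the concentration inequality of Proposition~\ref{prop:SwapRoundingProperties} to Algorithm~\ref{alg:ExtendedSwapRounding} for an arbitrary non-negative weight vector supported on $\cN$; but this is exactly what the two lemmas immediately preceding the present one establish — padding $\cN$ with $\mathrm{rank}(\cM)$ zero-weight dummy elements turns $\bar{\x}$ into a fractional base of $\cM'$ without changing any weighted sum, since $W\cdot\one_{S'}=W\cdot\one_S$, so the original swap-rounding tail bound applies to $S'$ and hence to $R_D=S'\setminus D$. With that in hand the present lemma is a direct transcription of the packing case, the only bookkeeping point being to check the threshold domination invoked in the displayed chain. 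I therefore expect no real obstacle beyond this already-resolved transfer step.
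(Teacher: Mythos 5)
Your main line of argument coincides with the paper's proof of this lemma: the extended swap rounding preserves marginals, so the expectation bound of Claim~\ref{apx-clm:exp-indep-round}.\ref{apx-item:exp-cover} carries over; consistency ($C_D=\varnothing$) lets you rescale by $\alpha\sdj$ to weights in $[0,1]$; and the lower-tail property of Proposition~\ref{prop:SwapRoundingProperties}, transferred through the zero-weight dummy padding, gives the exponential bound. Up to constants in the exponent this is exactly what the paper does (its computation yields $\exp\bigl(-\nicefrac{\delta^2}{(2(1+\delta)\alpha)}\bigr)$, which it relaxes to $\exp\bigl(-\nicefrac{\delta^2}{(3\alpha)}\bigr)$; note this is in fact slightly weaker than the $e^{-\delta^2/(2\alpha)}$ claimed in the statement, a slippage present in the paper as well).

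The genuine problem is the bridging step you add in order to hit the literal additive threshold $\sdj-\eps$. The containment of events you invoke, $\bigl\{\sum_{\ell\in\residual}\bC_{j,\ell}X_\ell<\sdj-\eps\bigr\}\subseteq\bigl\{\sum_{\ell\in R_D}\tC_{j,\ell}<(1-\delta)\tfrac{1}{(1+\delta)\alpha}\bigr\}$, amounts to $\sdj-\eps\leq\tfrac{1-\delta}{1+\delta}\sdj$, i.e.\ $\sdj\leq\tfrac{(1+\delta)\eps}{2\delta}$. This is not implied by ``$\delta$ small'': $\sdj$ is not normalized to $O(1)$ --- it can be as large as $\bc'_j$, and the enumerated guesses $\bc'_j=(1+\delta)^{j_i}$ range up to order $n$; for a non-critical constraint one only knows the \emph{lower} bound $\sdj>\delta\bc'_j$. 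So for constant $\eps,\delta$ and large $\sdj$ the deviation the concentration bound tolerates ($\approx 2\delta\sdj$) far exceeds the additive slack $\eps$, and the claimed domination fails. The paper sidesteps this entirely: its proof bounds the multiplicative event $\Pr\bigl[\sum_{\ell\in\sol}\bC_{j,\ell}<(1-2\delta)\bc'_j\bigr]$, exactly as in Claim~\ref{apx-clm:OneSidedProbNonTinyPackCover}, and this multiplicative form is what is actually used downstream (Claim~\ref{apx-clm:OneSidedProperties}); the ``$\sdj-\eps$'' in the lemma statement is not what any Chernoff-type argument with this expectation can give. If you drop the threshold-domination sentence and prove the multiplicative bound instead, your argument is the paper's.
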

\begin{proof}

For each non-tiny covering constraint $j \in [c] \setminus Z_D$, and each $\ell \in \residual$, we define the scaled matrix $\tC$ such that $\tC_{j, \ell} = \bC_{j, \ell}/(\alpha \sdj) \leq 1$. $\E[\sum_{\ell \in R_D} \tC_{j, \ell}] \geq \nicefrac{\sdj}{(1+\delta) \alpha \sdj} = \nicefrac{1}{(1+\delta)\alpha}$ by Claim~\ref{apx-clm:exp-indep-round}.\ref{apx-item:exp-cover}. Again, applying the concentration property of the extended swap rounding, ((\ref{prop:SwapRoundingProperties}.2)) with $X=\sum_{\ell \in \residual} \tC_{j, \ell} X_\ell$, we obtain

\begin{align*}
	\Pr\left[ \sum_{\ell \in \sol} \bC_{j, \ell} < (1 - 2 \delta) \bc'_j \right] & = \Pr\left[\sum_{\ell \in R_D} \bC_{j, \ell} <(1 - 2 \delta) \sdj \right] \\ 
	&= \Pr\left[\sum_{\ell \in R_D} \tC_{j, \ell} < \nicefrac{(1 - 2 \delta)}{\alpha}\right]\\
	& = \Pr\left[\sum_{\ell \in R_D} \tC_{j, \ell} < (1 - \delta)\frac{1}{(1+\delta)\alpha}\right]\\
	& \leq \exp\left(-\frac{1}{(1+\delta)\alpha} \cdot (\delta^2)/2\right)\\
	& \leq \exp\left(-\frac{\delta^2}{3\alpha}\right)\,.
\end{align*}
\end{proof}

We apply Lemma \ref{lem:ExtendedSwapRoundingObjective}, similarly to the main result for \PCSM, to bound the probability that the value of the output is low.
Along with Lemmas \ref{lem:MatroidExt-packing}, \ref{lem:MatroixExt-covering} this concludes our extension \MPCSM. 
\subsection{Multi-Objective}
We now consider the \PCMSM problem. We are given $p$ packing constraints, $c$ covering constraints, and $t$ monotone submodular functions $f_1,...,f_t$:$n^\cN \rightarrow \R_+$. The goal is to maximize all $f_1(S),...,f_t(S)$ simultaneously given the mixed packing and covering constraints.
This problem is not well defined, hence we focus on the problem of finding a \textit{pareto set}: a collection of solutions such that for each $S$ in the collection there is no other solution $S'$ that $f_i(S') > f_i(S)$ and $f_j(S') \geq f_j(S)$ for some $i$ in $1,..,t$ and every $j \neq i$.

Since finding a pareto set is computationally hard, we settle for less demanding goals, finding an $\alpha$-\textit{approximate} pareto set: a collection of solutions $\{S: \bP \textbf{1}_S \leq \textbf{1}_p , \bC \textbf{1}_S \geq (1-\varepsilon)\textbf{1}_c\}$ such that for each $S$ in the collection there is no solution $S'$ satisfying $\bP \textbf{1}_{S'} \leq \textbf{1}_p$ and $\bC \textbf{1}_{S'} \geq \textbf{1}_c$ for which the following is true: $f_i(S') \geq (1+\alpha)f_i(S)$ for all $i$. Formally,
Papadimitriou and Yannakakis \cite{PY00} proved that there is a polynomial time algorithm for finding an $\alpha$ approximate pareto set if and only if there is an algorithm as described in Theorem \ref{thrm:MultiExtend} for the \PCMSM problem (see Section \ref{sec:Introduction}).
Thus, we focus on proving Theorem \ref{thrm:MultiExtend} (Note that in our case $\alpha = \frac{\nicefrac{1}{e} + \varepsilon}{1 - \nicefrac{1}{e} - \varepsilon}$).


\begin{proof}[Proof (Sketch)]
The preprocessing is almost identical to Algorithm \ref{alg:OneSidedGuessesEnumeration}. The difference is that the $\gamma$ most valuable elements are guessed for each $f_i$. For every such guess we apply the following:
According to Lemma 7.3 of \cite{chekuriVZ10-rand-exch} we can find a fractional solution $\bx^*$ such that for all $i$: $F_i(\bx^*) \geq (1-\nicefrac{1}{e})v_i$ or receive a certificate that there is no feasible solution for which the condition holds.
Now, we can use independent rounding to round the solution, and for a small enough $\varepsilon$, receive with positive probability a solution $S$ such that $f_i(S) \geq (1-\nicefrac{1}{e}-\varepsilon)v_i$ for all $i$ that violates the covering constraints by $\varepsilon$.
\end{proof}

\section{Running Time Lower Bounds for achieving One-Sided Feasibility}
\label{apx-sec:lower-bound-reduction}
In this section, we argue that the exponential dependence on
$\nicefrac{1}{\eps}$ in the running time is necessary when we insist on
one-sided feasibility. In particular, we show
\begin{lemma}
  Assume we are given an instance of \PCSM with three packing
  constraints and one (uniform) covering constraint for which there is
  a feasible solution to this instance. If there is an algorithm that
  computes every $\eps>0$ in time $n^{o(\nicefrac{1}{\eps})}$ a
  solution $S\subseteq\cN$ with $\bP\one_S\leq\one_p$ and
  $\bC\one_S\geq (1-\eps)\one_c$ then the exponential time hypothesis
  (ETH) fails.
\end{lemma}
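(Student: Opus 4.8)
The statement asserts a running-time lower bound, so the natural route is a reduction from a problem for which ETH is known to rule out subexponential algorithms. The cleanest candidate is \textsc{3-SAT} (or equivalently \textsc{3-Coloring} / \textsc{Independent Set} on bounded-degree graphs), for which ETH implies no $2^{o(n)}$-time algorithm. The plan is to encode a \textsc{3-SAT} instance $\varphi$ on $n$ variables and $\Theta(n)$ clauses as a \PCSM instance with three packing constraints and one uniform covering constraint, in such a way that: (i) there is always at least one feasible solution (so the hypothesis of the lemma is satisfied); and (ii) a solution that satisfies all packing constraints and covers a $(1-\eps)$-fraction of the covering requirement, for a suitably small but \emph{not} tiny $\eps = \Theta(1/n)$, would have to ``almost'' encode a satisfying assignment — and with the covering slack being only one unit when $\eps$ is chosen just below $1/c_{\max}$, ``almost satisfying'' collapses to ``satisfying''. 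Then an $n^{o(1/\eps)} = n^{o(n)} = 2^{o(n\log n)}$... — here one has to be slightly careful, since $n^{o(n)}$ is $2^{o(n\log n)}$, which is weaker than $2^{o(n)}$; so I would instead take the ground set to have size $N = \mathrm{poly}(n)$ and arrange $1/\eps = \Theta(N^{\rho})$ or, more simply, reduce from a problem on $N$ elements where ETH (or the exponential-time hypothesis in its standard ``no $2^{o(m)}$ for \textsc{3-SAT} with $m$ clauses'' form) gives the bound $2^{o(\sqrt{\cdot})}$ after accounting for the $\log$ factor. Concretely: choosing the instance so that $n$ (the ground-set size of the \PCSM instance) is polynomial in the \textsc{3-SAT} size $s$, say $n = s^{C}$, an $n^{o(1/\eps)}$ algorithm with $1/\eps = \Theta(s)$ runs in time $2^{o(s \log s)}$, still a hair too slow to contradict ETH directly; the standard fix is to pad so that $1/\eps = \Theta(s/\log s)$, or to invoke the sparsification lemma so that it suffices to beat $2^{o(s/\log s)}$. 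I would state this padding explicitly.

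\textbf{The gadget construction.} For each variable $x_i$ I would introduce two elements $u_i$ (``set $x_i$ true'') and $\bar u_i$ (``set $x_i$ false''). The first packing constraint enforces, for every $i$, that at most one of $u_i,\bar u_i$ is chosen — this is a ``matching''-type packing constraint, and with constant coefficients it fits into one row by scaling (all of $u_i,\bar u_i$ get weight $1/2$ under a single packing row only works for the global count, so in fact I would use the construction where the $i$-th consistency constraint... — to keep it to \emph{three} packing rows total, I would instead have each element $u_i$ carry weight $2^{-i}$-type encodings, or, cleaner, observe that ``at most one of each pair'' together with ``exactly one of each pair via covering'' can be bundled: put a single packing row $\sum_i (u_i+\bar u_i)\cdot(1/n) \le 1$ forcing at most $n$ chosen among the literal-elements, and a covering row demanding coverage of all $n$ variables where $u_i$ and $\bar u_i$ each contribute to a ``variable-$i$-covered'' counter — but one covering row must be \emph{uniform}, so I encode variable coverage by a single counter that increments by $1$ for each of the $2n$ literal elements and demand the count be $\ge n$, combined with the at-most-$n$ packing bound; equality then forces exactly one literal per variable once we also forbid picking $u_i$ without contributing). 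The remaining two packing rows encode the clauses: assign to each clause $C_j$ a ``penalty'' that is triggered when a literal falsifying... — more precisely, element $u_i$ has, in packing rows $2$ and $3$, coefficients that act as a unique-identifier encoding (a Bose–Chowla / Sidon-set trick) of the set of clauses that literal $\bar x_i$ falsifies, so that exceeding the budget in row $2$ or $3$ certifies that some clause has all three literals set false. This is the standard way to pack many ``forbidden pattern'' constraints into $O(1)$ packing rows at the cost of making coefficients have $\mathrm{poly}(s)$ bit-length. I expect this is the part that needs the most care.

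\textbf{Tying $\eps$ to the hardness.} After the construction, a fully feasible solution (all packing satisfied, covering $\ge \one_c$) exists iff $\varphi$ is satisfiable; and by choosing the covering requirement so that $c_{\max}$ is large enough that $(1-\eps)c_{\max} > c_{\max}-1$ for $\eps < 1/c_{\max}$, a $(1-\eps)$-feasible solution must actually be fully feasible on the (integral) covering constraint, hence encode a satisfying assignment. Meanwhile, a feasible \emph{witness} solution always exists trivially (e.g.\ pick all of $u_1,\dots,u_n$ — this may violate clause rows if $\varphi$ has an all-true-falsified clause, so instead pick a solution that satisfies all packing constraints but only covers $0$ of the covering requirement, e.g.\ the empty set, if the covering requirement is separated out; but the lemma wants a feasible solution to the \emph{original} instance, so I would instead guarantee feasibility by adding one ``escape'' element of huge covering weight and zero packing weight that single-handedly satisfies the covering constraint — then $\{$escape$\}$ is always feasible, and I re-tune $\eps$ and $c_{\max}$ so that the escape element alone does \emph{not} suffice to reach $(1-\eps)c_{\max}$, forcing any $(1-\eps)$-feasible solution to also engage the variable gadget). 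Finally, plugging in: an $n^{o(1/\eps)}$ algorithm with $\eps = \Theta(1/c_{\max}) = \Theta(1/\mathrm{poly}(s))$ and $n = \mathrm{poly}(s)$ solves \textsc{3-SAT} in time $n^{o(1/\eps)} = 2^{o(\mathrm{poly}(s))}$; with the exponents balanced (or after sparsification) this is $2^{o(s)}$, contradicting ETH. I would present the balancing of the polynomial exponents as the one genuinely fiddly bookkeeping step, and the $O(1)$-row encoding of the clause constraints via a Sidon-set / unique-subset-sum trick as the main structural obstacle; everything else is routine.
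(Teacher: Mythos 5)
Your proposal does not close, and the central obstruction is the running-time accounting you yourself flag. A hypothesized $n^{o(1/\eps)}$ algorithm only yields a contradiction with ETH if you reduce from a problem with a \emph{parameterized} lower bound of the form $n^{o(k)}$, with $1/\eps$ tied to the parameter $k$; reducing directly from \textsc{3-SAT} forces $1/\eps=\Theta(n)$ (your integrality argument needs $\eps<1/c_{\max}$ with $c_{\max}=\Theta(n)$ to cover all variables), and then $N^{o(1/\eps)}=2^{o(n\log n)}$, which ETH does not exclude. Your proposed fixes are inconsistent with the construction: padding to $1/\eps=\Theta(n/\log n)$ contradicts $\eps<1/c_{\max}$ with $c_{\max}=\Theta(n)$, and sparsification does not remove the $\log n$ factor in the exponent. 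The paper avoids this entirely by reducing from the $k$-sum problem, for which P\u{a}tra\c{s}cu and Williams prove an $n^{o(k)}$ lower bound under ETH; the three packing rows are simply $\mathbf{A}\one_S\leq t$, $(a_{\max}\one^{\intercal}-\mathbf{A})\one_S\leq k\,a_{\max}-t$, and $|S|\leq k$, the uniform covering row is $|S|\geq k$, and setting $\eps=1/(k+1)$ makes $(1-\eps)k>k-1$ force $|S'|=k$ and hence $\mathbf{A}\one_{S'}=t$ exactly, so $n^{o(1/\eps)}=n^{o(k)}$ immediately contradicts the $k$-sum bound.

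Two further parts of your sketch are also gaps rather than routine bookkeeping. First, packing the clause constraints into two rows via a Sidon-set/unique-sum encoding is unsubstantiated: packing rows are one-sided upper bounds, and ``some clause has all three literals falsified'' is a forbidden-pattern condition, not an overflow condition; the equality-forcing trick (pairing a row with its complement, as the paper does with $\mathbf{A}$ and $a_{\max}-\mathbf{A}$) needs a matching lower bound, but your single covering row must be uniform and is already spent on the variable-selection count. Second, the ``escape'' element is internally inconsistent: you need it to single-handedly satisfy the covering constraint (to guarantee feasibility when $\varphi$ is unsatisfiable) while simultaneously not reaching $(1-\eps)$ of it, which is impossible. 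The clean way to handle feasibility, as in the paper, is to use the promise only on YES instances: run the algorithm regardless, verify its output, and reject if it is not a valid witness.
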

\begin{proof}
  We provide a reduction from the $k$-sum problem~\cite{PW10}. In this
  problem we are given a ground set~$\cN$, a vector
  $\mathbf{A}\in\mathbb{Z}^{1\times n}$, a target value $t\in\N$, and
  cardinality $k\in\N$. We aim at deciding if there is a set
  $S\subseteq\cN$ such that $\mathbf{A}\one_S=t$. Let
  ${a_{\max}=\|\mathbf{A}\|_{\infty}}$ We can assume that
  $\mathbf{A}\in\N^{1\times n}$ by considering the modified instance
  $\mathbf{A}':=\mathbf{A}+a_{\max}\cdot\one_n^{\intercal}$
  and $t'=t+k\cdot a_{\max}$. It was shown by Pătrașcu
  and Williams~(Corrollary 5.1 in~\cite{PW10}) that if there is an
  algorithm that solves each $k$-sum instance whose number have
  $O(k\log n)$ bits in $n^{o(k)}$ time then ETH fails.

  We construct an instance of \PCSM as follows. The ground set $\cN$. To
  define the constraints let $\bP\in\N^{3\times n}$ such that for all
  $\ell\in\cN$ we have $\bP_{1,\ell}=\mathbf{A}_{1,\ell}$,
  $\bP_{2,\ell}=a_{\max}-\mathbf{A}_{1,\ell}$, and
  $\bP_{3,\ell}=1$. Moreover, $\bC\in\N^{1\times n}$ such that
  $\bC_{1,\ell}=1$. Finally, we set
  $\p=(t,k\cdot a_{\max}-t,k)^{\intercal}$ and $\bc=k$.

  Assume there is a feasible solution to the $k$-sum instance. Assume
  further there is an algorithm $\mathcal{A}$ outputting for any
  $\eps>0$ in $n^{o(\nicefrac{1}{\eps})}$ time a one-sided feasible
  solution.

  Now we run this algorithm on our \PCSM instance setting
  $\eps=\nicefrac{1}{(k+1)}$.  It is easy to verify that $S$ is a
  feasible solution to the \PCSM instance. Moreover, algorithm
  $\mathcal{A}$ will output a solution $S'\subseteq\cN$ with
  $\bP\one_{S'}\leq\p$ implying $|S'|\leq k$. Additionally,
  $\bC\one_{S'}\geq(1-\eps)\bc=(1-\nicefrac{1}{(k+1)})k>k-1$ implying
  $|S'|=k$. But then $S'$ must be a strictly \emph{feasible} solution
  to the $k$-sum instance because of $\bP\one_{S'}\leq\p$ and thus
  $\mathbf{A}\one_{S'}=t$. The total running time of the algorithm is
  $n^{o(1/(1/(k+1))}=n^{o(k)}$. Thus ETH fails.
\end{proof}


\end{document}